\def\commentflag{3}
\def\ct{3}
\newcommand{\mb}[1]{{\color{black} #1}}
\theoremstyle{plain}
\newtheorem{theorem}{Theorem}[section]
\newtheorem{lemma}[theorem]{Lemma}
\newtheorem{proposition}[theorem]{Proposition}
\newtheorem{corollary}[theorem]{Corollary}
\numberwithin{equation}{section}
\theoremstyle{remark}
\newtheorem{remark}[theorem]{Remark}
\theoremstyle{definition}
\newtheorem{definition}[theorem]{Definition}
\newcommand{\norm}[1]{\lVert #1 \rVert}
\def\q {\quad}
\def \qq {\qquad}
\def \l{\langle}
\def \r{\rangle}
\def\bb{\begin{equation}
  \left\{\ 
   \begin{aligned} }
\def\ee{   \end{aligned}
  \right.
  \end{equation}}
\def\mm{ \left[
 \begin{matrix}}
\def\nn{\end{matrix} \right] } 
\def\p{\partial}
\def \dd{\cdot}
\def \t{\times}
\def\n{\nu}
\def \w {\widetilde}
\def \h{\hat}
\def\d{\delta}
\def \vp{\varphi}
\def \na{\nabla}
\def \ep{\varepsilon}
\def \lad{\lambda}
\def \si {\sigma}
\def \Lad{\Lambda}
\def\re{\mathfrak{Re}}
\def\im{\mathfrak{Im}}
\def \curl{\nabla \times}
\def \ccurl{{\rm  curl}}
\def \ddiv {{\rm div}}
\def \sdiv {\sdiv_S}
\def \ms {\mathscr}
\def \mbb {\mathbb}
\def \td{\mathcal{T}_D^{\alpha,\ww}}
\def \kd{\mathcal{K}_D^{\alpha,\ww}}
\def \R{\mathbb{R}}
\def \C{\mathbb{C}}
\def \S{\mathbb{S}}
\def \P{\mathbb{P}}
\def \TT{\mathcal{T}}
\def \np{\mathcal{K}}
\def \M{\mathcal{M}}
\def \A{\mathcal{A}}
\def \L{\mathcal{L}}
\def \gr{\mathcal{G}}
\def \ww{\omega}
\def \di{{\rm{\bf d}}}
\def\D{\mathcal{D}}
\def \pd {\mathbb{P}_{{\rm d}}} 
\def \pw {\mathbb{P}_{{\rm w}}}
\def \hzz {{\bf H}_0(\ddiv 0, D)}
\def \ei {{\rm {\bf p}}^i}
\def \po {{\rm {\bf p}}}
\newcommand{\tb}[1]{\mathcal{T}_{D,#1}}
\newcommand{\kb}[1]{\mathcal{K}_{D,#1}}
\newcommand {\mc}[1]{\mathcal{#1}}
\def \sss {\scriptscriptstyle}
\DeclareMathOperator{\dom}{dom}
\DeclareMathOperator{\tr}{tr}
\newenvironment{myitemize}
{ \begin{itemize}
    \setlength{\itemsep}{0pt}
    \setlength{\parskip}{0pt}
    \setlength{\parsep}{0pt}     }
{ \end{itemize}                  }
\title{Fano resonances in all-dielectric electromagnetic metasurfaces}
\begin{document}
\author{
Habib Ammari\footnote{Department of Mathematics, ETH Z\"{u}rich, R\"{a}mistrasse 101, CH-8092 Z\"{u}rich, Switzerland. 
The work of this author is partially supported by the
Swiss National Science Foundation (SNSF) grant 200021–200307.
(habib.ammari@math.ethz.ch).}
\and Bowen Li\footnote{Department of Mathematics, Duke University, Durham, NC 27708, USA. (bowen.li@math.duke.edu).}
\and Hongjie Li\footnote{Department of Mathematics, The Chinese University of Hong Kong, Shatin, N.T., Hong Kong. (hjli@math.cuhk.edu.hk).}
\and Jun Zou\footnote{Department of Mathematics, The Chinese University of Hong Kong, Shatin, N.T., Hong Kong.
The work of this author was
substantially supported by Hong Kong RGC General Research Fund (projects 14306921 and 14306719) and NSFC/Hong Kong RGC Joint Research Scheme 2022/23 (project N\_CUHK465/22). 
(zou@math.cuhk.edu.hk).}
}
\date{}
\maketitle

\begin{abstract}
We are interested in the resonant electromagnetic (EM) scattering by all-dielectric metasurfaces made of a two-dimensional lattice of nanoparticles with high refractive indices. In [Ammari et al.,  Trans. AMS, 376 (2023), 39-90], it has been shown that a single high-index nanoresonator can couple with the incident wave and exhibit a strong magnetic dipole response. Recent physics experiments reveal that when the particles are arranged in certain periodic configurations, they may have different anomalous scattering effects in the macroscopic scale, compared to the single-particle case. In this work, we shall develop a rigorous mathematical framework for analyzing the resonant behaviors of all-dielectric metasurfaces. We start with the characterization of subwavelength scattering resonances in such a periodic setting and their asymptotic expansions in terms of the refractive index of the nanoparticles. Then we show that real resonances always exist below the essential spectrum of the periodic Maxwell operator and that they are the simple poles of the scattering resolvent with the exponentially decaying resonant modes. 
By using group theory, we discuss the implications of the symmetry of the metasurface on the subwavelength band functions and their associated eigenfunctions. For the symmetric metasurfaces with the normal incidence, we use a variational method to show the existence of embedded eigenvalues (i.e., real subwavelength resonances embedded in the continuous radiation spectrum). \mb{Furthermore, we break the configuration symmetry either by introducing a small deformation of particles or by slightly deviating from the normal incidence and prove that Fano-type reflection and transmission anomalies can arise in both of these scenarios.}

\medskip

\noindent \textbf{Keywords:} bound states in the continuum, Fano resonances, all-dielectric metasurface, scattering resonances, high-contrast nanoparticles

\medskip

\noindent  \textbf{Mathematics Subject Classification (MSC2000):} 35C20, 35J20, 35P20, 78M30
\end{abstract}

\section{Introduction}
The study of electromagnetic metamaterials has been a very topical subject in the field of nanophotonics over the last few years, due to their great potential in wavefront control, field concentration, and light focusing at subwavelength scales \cite{Li2019, zhang2016advances,chen2016review}. The plasmonic nanoresonators (e.g., metallic nanoparticles) are one of the widely used building blocks for novel optical metamaterials \cite{sarid2010modern}; see the monograph  \cite{ammari2018mathematical} and the references therein for mathematical treatment.  However, the practical applications of the plasmonic metamaterials are severely limited by the heat dissipation induced by the imaginary part of the electric permittivity in the visible light range. This motivates physicists and engineers to search for alternatives to the plasmonic elements. 
Recently, dielectric nanoresonators (e.g., silicon nanoparticles) have received considerable attention for their low intrinsic losses and high electric permittivity. It was experimentally demonstrated in \cite{evlyukhin2012demonstration,kuznetsov2016optically,garcia2011strong,luk2015optimum} that a single silicon nanoparticle can resonate in the optical realm with a high Q-factor (i.e., the so-called dielectric resonances), and the excited electric and magnetic dipole moments can have comparable order of magnitudes. Such properties are desirable and attractive in many applications in imaging science, material science, and wireless communications, and lead to a superior performance of the all-dielectric metamaterials over the lossy plasmonic devices \cite{staude2017metamaterial,jahani2016all,kuznetsov2016optically,krasnok2012all,ammari2020superresolution}. 
For the case of spherical nanoparticles, the dielectric resonance can be well understood by the Mie scattering theory \cite{tzarouchis2018light}, while for the TM or TE polarization case, the quasi-static dielectric resonance is equivalent to the eigenvalue problem for the Newtonian potential \cite{ammari2019subwavelength}; see also \cite{meklachi2018asymptotic} for the discussion on the nonlinear Kerr-type material. The complete mathematical theories for the cases of a single dielectric nanoparticle and a cluster of dielectric nanoparticles with the full Maxwell's equations have been achieved in \cite{ammari2020mathematical} and \cite{cao2022electromagnetic}, respectively. 

Metasurfaces are two-dimensional composite materials with thicknesses much smaller than the operating wavelength, which enable the complete control of the phase, amplitude, and polarization of the scattered wave \cite{ sun2019electromagnetic,huidobro2016graphene,leroy2015superabsorption}. We refer the readers to \cite{LLZ0527, ammari2017mathematical,ammari2020meta,ammari2020equivalent} for recent mathematical results on plasmonic metasurfaces and bubble meta-screens. Thanks to the unique optical properties of the resonant dielectric nanoparticles, great effort has also been made to develop all-dielectric metasurfaces for realizing high-efficiency light manipulation. The current work is mainly motivated by \cite{koshelev2018asymmetric,li2019symmetry}, where the authors designed a class of dielectric metasurfaces with broken in-plane symmetry that can support the bound states in the continuum and allow the Fano-type transmission and reflection. Fano resonance is a special resonant scattering phenomenon corresponding to the asymmetric resonance peaks in transmission spectra, initially investigated by Ugo Fano \cite{fano1961effects} in quantum mechanics. Such phenomenon has been extensively explored for various photonic periodic structures since the 1990s \cite{limonov2017fano,shipman2010resonant}. 
The basic physical origin of the Fano resonance lies in the interference between the continuous state in the background and the discrete resonant state from the scattering system, as pointed out by Fano in his original work \cite{fano1961effects}. 
Mathematically, it is closely related to the resonances embedded in the continuous spectrum of the underlying operators of the 
interested models and the associated bound states in the continuum (BICs), which are generally nonrobust with respect to the perturbations. In more detail, for the ideal system where the embedded eigenvalues exist, the BICs are the resonant states localized to the nanoresonators and decoupled from the far field. When the system is slightly perturbed, the embedded eigenvalue will be shifted into the lower complex half-plane with a small imaginary part, and the bound states become leaky and will interact with the broad background radiation, which gives rise to the sharp transmission peaks and dips. This mechanism has been utilized in many nanophotonic materials, e.g., the metallic slab with arrays of subwavelength holes \cite{hsu2016bound,zhen2014topological}, the plasmonic metasurface \cite{luk2010fano}, and the arrays of dielectric spheres and rods  \cite{bulgakov2019high,sidorenko2021observation,bulgakov2018optical}.

The existence of embedded eigenvalues (equivalently, BICs) for the EM grating of periodic conductors was first rigorously proved by Bonnet-Bendhia and Starling \cite{bonnet1994guided} under some symmetry assumptions. Their arguments are based on a variational formulation of the problem with some ideas borrowed from the earlier work by N\'{e}d\'{e}lec and Starling \cite{nedelec1991integral}. 
Later, Shipman and Volkov \cite{shipman2007guided} characterized the robust and nonrobust bound states for the homogeneous periodic slabs and gave sufficient conditions for the nonexistence of guided modes. Moreover, in \cite{shipman2010resonant,shipman2005resonant,shipman2012total,shipman2013resonant}, Shipman et al. considered the asymptotics of the embedded eigenvalues with respect to the perturbation of the Bloch wave vector (which breaks the symmetry of the grating) and analyzed the Fano-type transmission anomalies by the analytic perturbation theory. Recently, Lin et al. \cite{lin2020mathematical} quantitatively investigated the Fano resonance phenomenon and the field enhancement in the setting of a metallic grating with subwavelength slits by using layer potential techniques. 
In their subsequent work \cite{lin2021fano}, they focused on the case of the perfectly conducting grating with strongly coupled slits and analyzed the BICs and the corresponding Fano resonances. A similar configuration for the bubble meta-screen was discussed in \cite{ammari2021bound}, where Ammari et al. used the capacitance matrix to explicitly approximate the scattering matrix and justify the Fano-type anomalies. 
In all of these works \cite{lin2020mathematical,lin2021fano,ammari2021bound}, the authors showed that there exist two classes of subwavelength resonances, where one of them has a much smaller imaginary part than the other one. The discrete resonant state corresponds to the resonance with the larger imaginary part and gives a broad resonance peak, while the continuum state with the smaller imaginary part is perturbed from the bound state and has a narrow resonance peak. It is exactly the hybridization of these two states that gives rise to a Fano transmission phenomenon. However, the rigorous analysis for the full Maxwell system has not been well developed yet. We are aiming to fill this gap.

In this work, we will establish a unified mathematical framework for EM resonant scattering by asymmetric dielectric metasurfaces of nanoparticles with high refractive indices. For such structures, the BICs and the Fano resonances have been experimentally reported in \cite{koshelev2018asymmetric}. Our analysis is based on both the variational method and the Lippmann-Schwinger equation for the full Maxwell's equations. As in the case of the single dielectric nanoparticles \cite{ammari2020mathematical}, the dielectric resonances are defined as the complex poles of the scattering resolvent of the Maxwell operator, which are located in the lower complex plane $\{\ww \in \C\,;\ \im \ww \le 0\}$. We shall be particularly interested in the real subwavelength dielectric resonances. 
For convenience, we will use the words \emph{real resonances} and \emph{eigenvalues} interchangeably in what follows, if there is no confusion. In Section \ref{sec:general},  we start with the notion of the subwavelength scattering resonances in the periodic setting. We prove that the real resonances are the simple poles of the resolvent of the volume integral operator for the EM scattering problem and that the associated resonant mode exponentially decays in the far field and is decoupled from the incoming wave; see Propositions \ref{lem:expdecay} and \ref{prop:real}. Hence, we can relate the real scattering resonances to the eigenvalue problem of the unbounded Maxwell operator on the $L^2$-space. We proceed to characterize the essential spectrum of the periodic Maxwell operator, which turns out to be the continuous radiation spectrum, and show that there is always a discrete eigenvalue below the infimum of the essential spectrum; see Theorem \ref{thm:ess_discre}. However, the resonant states for the eigenvalue below the essential spectrum do not interact with the background radiation, and thus would not induce any Fano-type resonance phenomenon. 

As discussed above, the desired bound states in the continuum correspond to the real scattering resonances embedded in the continuous essential spectrum, the existence of which is a difficult task and usually depends on the additional structure of the configuration, e.g., 
the symmetry assumption. In this work, we mainly focus on the in-plane inversion symmetry \eqref{assp:sym} motivated by the physical setup \cite{koshelev2018asymmetric,li2019symmetry}. In Section \ref{sec:sym_band_multi}, we discuss the general symmetry properties of the dielectric metasurfaces by using the representation theory of the symmetry group and its consequences on the resonances and the resonant modes. With these preparations, in Section \ref{sec:exist_real}, we manage to show the existence of the real subwavelength resonances which are embedded in the continuous spectrum in the high contrast regime (i.e., the refractive indices of the nanoparticles are large enough), under the normal incidence and a mild assumption on the limiting problem; see Theorem \ref{main_real_2}. In doing so, we adopt the regularized variational formulation for the EM scattering problem \cite{bao1997variational,bao2000scattering} and follow the analysis framework of \cite{bonnet1994guided,shipman2010resonant} for the acoustic case. It is worth emphasizing that our arguments are much more involved with the main difficulty coming from the infinite-dimensional kernel of the curl operator. In addition, to justify that the real resonance is in the subwavelength regime for the large contrast, 
it is necessary to investigate the limiting eigenvalue problem, for which we exploit the techniques from \cite{simon1978canonical, hempel2000spectral} on the convergence of the monotone sesquilinear form. 

It is known that the bound states associated with such embedded eigenvalues are unstable with respect to the perturbation of the metasurface that destroys the symmetry \eqref{assp:sym}. In Section \ref{sec:fano_scattering}, we slightly perturb the normal incidence and the symmetric dielectric metasurface by a deformation field as in \cite{koshelev2018asymmetric,li2019symmetry}. In this case, the BICs become the quasi-modes with finite Q-factors. We shall show that for the perturbed asymmetric metasurface, the reflection energy could present a sharp asymmetric shape of Fano-type. For this, we first derive the asymptotic expansions of the subwavelength resonances with respect to the high contrast in Theorem \ref{asym:resonance}. Then, we calculate the shape derivative of the leading-order approximation of the subwavelength resonance; see Proposition \ref{lem:shapedev}. This allows us to find the asymptotics of the resonances for the perturbed metasurface in Corollary \ref{coro:appro_reso}, where we assume that the eigenspace of the leading-order operator is spanned by a symmetric field and an antisymmetric one for simplicity. To quantify the expected Fano resonance phenomenon, we calculate the quasi-static approximation of the scattered polarization vectors that uniformly holds for the incident frequencies near the resonances, which readily implies the asymptotic expansion of the reflection and transmission energy; see Corollary \ref{coro:approx_energy}. Thanks to the energy conservation property, we only focus on the reflection energy and demonstrate that it can be effectively described by the classical Fano formula \cite{koshelev2018asymmetric,shipman2010resonant}, meaning that there is a sharp asymmetric line shape in the reflection spectrum. 

\vspace{- 1 mm}
\subsection*{Notations}
\vspace{- 1 mm}
\begin{myitemize}
\item  We write $(x' ,x_3)$ for $x = (x_1, x_2, x_3) \in \R^3$
with $x' = (x_1, x_2) \in \R^2 $ and $ x_3 \in \R$. 
\item We \mb{employ} standard asymptotic notations: for $f$ in a normed vector space, we denote $f = O(\epsilon)$ if $\norm{f}\le C |\epsilon|$ holds for a constant $C > 0$ independent of $\epsilon$, and $f = o(\epsilon)$ if $\norm{f}/|\epsilon| \to 0$ as $\epsilon \to 0$. Moreover, for real $a,b >0$, we write $a \ll b$ if $a \le C b$ for a sufficiently small constant $C$ independent of $a,b$, and $a \gg b$ if $a \ge C b$ for large enough $C$ independent of $a,b$.
\item Let $\Lambda$ be a two-dimensional lattice generated by the vectors $v_1, v_2 \in \R^2$:
\begin{equation*}
  \Lambda := \big\{\gamma \in \R^2\,;\ \gamma = n_1 v_1 + n_2 v_2\,,\ n_i \in \mathbb{Z}\big\}\,,
\end{equation*} 
with the fundamental cell $Y$:
\begin{equation*}
Y := \big\{\gamma \in \R^2\,;\  \gamma = c_1 v_1 + c_2 v_2\,,\ c_i \in [-1/2,1/2]\big\}\,.
\end{equation*}
For simplicity, the area of $Y$ is assumed to be one, i.e., $|Y| = 1$.  We also define some fundamental cells in $\R^3$ by
\begin{align*}
    Y_\infty := Y \t \R\,,\q\  Y_h := Y \t [-h, h]\,,\q \text{for}\  h > 0\,, 
\end{align*} 
for later use. Moreover, we introduce the reciprocal lattice $\Lambda^*$ by
\begin{align} \label{def:dual_lattice}
\Lambda^* := \big\{q \in \R^2\,;\ q \dd \gamma \in 2 \pi \mathbb{Z} \q \text{for any}\ \gamma \in \Lambda\big\}\,,
\end{align} 
and the first Brillouin zone of $\Lambda^*$ by
\begin{align} \label{def:bri_zone}
    \mc{B} := \{\alpha \in \R^2\,;\ |\alpha| \le |\alpha - q|\q \forall q \in \Lambda^*\}\,.
\end{align} 
\item A function (or vector field) $f$ on $\R^3$ is $\alpha$-quasi-periodic \mb{($\alpha \in \R^2$) in variables $x_1$ and $x_2$} if 
\begin{align} \label{def:quasi-period}
    f(x + (\gamma, 0)) = e^{i\alpha \dd \gamma}f(x)\,,\q  \gamma \in \Lambda\,.
\end{align}
\item We use the standard Sobolev spaces on a three-dimensional domain $D \subset \R^3$ or a two-dimensional surface $\Sigma \subset \R^3$. The bold typeface is used to indicate the spaces of vector fields, e.g., ${\bf L}^2(D)$ is the space of $L^2$-integrable vector fields on $D$. We use the subscript $loc$ to denote the spaces of locally integrable functions (fields), and
the subscript $\alpha$ to denote the Sobolev spaces of $\alpha$-quasi-periodic functions (fields), e.g.,
\begin{align*}
    {\bf H}_{\alpha,loc}(\ccurl, Y_\infty): = \{\vp \in {\bf L}^2_{loc}(Y_\infty)\,;\ \ccurl \vp \in {\bf L}^2_{loc}(Y_\infty)\,,\ \vp \ \text{satisfies}\ \eqref{def:quasi-period}\}\,.
\end{align*}
In the case of $\alpha = 0$, we have the spaces of periodic functions and shall use the subscript $p$ to avoid confusion, e.g., the space ${\bf H}_p^1(Y_h)$ consists of periodic $H^1$-fields. 
\mb{In addition, we use} the subscript $T$ to denote the tangential vector fields on the surface $\Sigma$, e.g., ${\bf H}_T^{\sss -1/2}(\ccurl, \Sigma):= \{\vp \in {\bf H}^{\sss -1/2}_{T}(\Sigma)\,;\  \ccurl_{\Sigma} \vp \in H^{\sss -1/2}(\Sigma)\}$, where $\ccurl_{\Sigma}$ is the \mb{surface scalar curl}. We refer the readers to  \cite{nedelec2001acoustic} for the definition of surface differential operators. In particular, for the reader's convenience, we recall the space of divergence-free vector fields ${\bf H}(\ddiv0, D) = \{\vp \in {\bf L}^2(D)\,;\ \ddiv \vp = 0\}$ and its subspace \mb{of fields with vanishing normal traces}:
$${\bf H}_0(\ddiv0, D) = \{\vp \in {\bf H}(\ddiv0, D)\,;\ \n \dd \vp = 0\ \text{on}\ \p D\}\,.$$ 
We also need the weighted $L^2$-space: for a nonnegative function $\ep \in L^\infty(Y_\infty)$, $${\bf L}_{\alpha,\ep}^2(Y_\infty): = \{f\,; \ f\ \text{is a measurable vector field satisfying \eqref{def:quasi-period} and}\  \ep|f|^2 \in L^1(Y_\infty)\}\,.$$
Similarly, we define 
\begin{align} \label{def:diverzerospace}
    {\bf H}_\alpha(\ep; \ddiv0, Y_\infty): = \{f \in {\bf L}_{\alpha,\ep}^2(Y_\infty)\,;\ \ddiv(\ep f) = 0\}\,.
\end{align}
The tangential component trace for a $H(\ccurl)$-vector field is defined by
\begin{align} \label{eq:trace_tang}
    \pi_t(u):= (\n \t u)\t \n\,.
\end{align}
\item For two Banach spaces $X$ and $Y$, we denote by $\L(X,Y)$ the bounded linear operators from $X$ to $Y$, or simply by $\L(X)$ if $Y = X$. We write $\norm{\dd}_X$ for the norm on the space $X$, or simply, write $\norm{\dd}$, when no confusion is caused. In particular, for notation simplicity, we denote by $(\dd,\dd)_D$ the $L^2$-inner product on $D$, and by $(\dd,\dd)_{\ep,D}$ the weighted $L^2$-inner product on $D$, i.e., $(f,g)_{\ep, D} = \int_D \ep \bar{f} g\ dx$, while the standard complex dual pairing on the surface $\Sigma$ is denoted by $\l \dd,\dd \r_\Sigma$. 

\end{myitemize}

\section{Scattering resonances of all-dielectric metasurfaces} \label{sec:general}
In this section, we consider the resonant electromagnetic (EM) scattering by periodically distributed dielectric nanoparticles with high refractive indices.
We will first define the scattering resonances as the poles of the associated scattering resolvent and introduce the band dispersion functions. Then in Section \ref{sec:band}, we show that the real resonances are the simple poles of the resolvent and the corresponding resonant modes are exponentially decaying in the far field. We also prove that the real resonances always exist below the light line when the refractive index of the nanoparticle is larger than the background. In Section \ref{sec:sym_band_multi}, we provide some general results on the symmetry properties of the band function and the multiplicity of the scattering resonance based on the symmetry group of the metasurface. 

We begin with the formulation of the periodic electromagnetic scattering problem.  Suppose that the nanoparticles, \mb{denoted as $D$,} are contained inside $Y_h$ for some $h > 0$, where $D$ is a bounded open set with a smooth boundary $\p D$, \mb{a unit outer normal vector $\nu$,}
and a characteristic size of order one. The all-dielectric metasurface is defined as the collection of nanoparticles that are periodically distributed on the lattice $\Lambda$, which is denoted by  
\begin{align} \label{def:D}
\D := \bigcup_{\gamma \in \Lambda} \big(D + (\gamma,0)\big)\,. 
\end{align}
Moreover, we assume that the nanoparticles are non-magnetic (i.e., the magnetic permeability $\mu = 1$ on $\R^3$) and have the electric permittivity of the form:
\begin{align} \label{assp:high_constrast}
\ep := 1 + \tau \chi_{\mc{D}}\,, \q \tau \gg 1\,,\q \text{on} \ \R^3\,,
\end{align}
where $\chi_{\mc{D}}$ is the characteristic function of the set $\mc{D}$, and $\tau \in \R$ is the contrast parameter. \mb{From  the standard Floquet-Bloch theory \cite{kuchment2001mathematics,kuchment2016overview,kirsch2018limiting},
the EM scattering problem by the lattice of the nanoresonators $\mc{D}$ can be modeled by the following family of $\alpha$-quasi-periodic scattering problems. Letting $(E^i,H^i)$ be an  $\alpha$-quasi-periodic incident field satisfying the homogeneous Maxwell's equations, we consider 
\begin{equation}\label{model}
  \left\{  \begin{array}{ll}
  \curl E = i \ww  H       & \text{in} \ \R^3\backslash \partial \D\,,  \\
  \curl H = - i \ww \ep  E       & \text{in} \ \R^3\backslash \partial \D\,, \\
 \left[\n \t E \right]= \left[\n \t H\right] =  0 &  \text{on} \ \p \D\,, \\ 
(E,H) & \text{satisfies the $\alpha$-quasi-periodic condition \eqref{def:quasi-period}}\,,   \\
(E - E^i,H-H^i) & \text{satisfies the outgoing radiation condition as}\ x_3 \to \pm \infty\,,
\end{array} \right. 
\end{equation}
where the radiation condition is given} by the Rayleigh-Bloch expansion \cite{kirsch2018limiting,bonnet1994guided}:
\begin{align} \label{eq:RBexpansion}
  E(x',x_3) = 
    \left\{
    \begin{aligned}
  & \sum_{q \in \Lambda^*} E_q(h,\alpha) e^{i (\alpha + q) \dd x'} e^{i  \beta_q (x_3 - h)} \,, \q & & \text{for} \ x_3 > h\,, \\
  & \sum_{q \in \Lambda^*} E_q(-h, \alpha) e^{i (\alpha + q) \dd x'} e^{- i \beta_q (x_3 + h)}\,,\q  & & \text{for} \ x_3 < - h\,,
    \end{aligned}
    \right.
\end{align}
where for $q \in \Lambda^*$, 
\begin{align} \label{def:bq}
\beta_q := \sqrt{\ww^2- |\alpha + q|^2}\,,    
\end{align}
and
 $E_q (\pm h, \alpha)$ is the Fourier coefficient of $e^{-i\alpha \dd x'}E$ on $x_3 = \pm h$:
\begin{align} \label{def:coeff}
    E_q(\pm h,\alpha) := \frac{1}{(2 \pi)^2}\int_Y e^{-i\alpha \dd x'} E(x',\pm h) e^{-iq\dd x'} d x'\,.
\end{align}
Here and in what follows, the square root $\sqrt{\dd}$ is chosen with  positive real or positive imaginary parts, and we always assume that for a fixed $\alpha \in \mc{B}$, there holds 
\begin{equation*}
\ww^2 \notin Z(\alpha) := \big\{|\alpha + q|^2\,;\ q \in \Lambda^*\big\}\,,
\end{equation*}
which guarantees that $\beta_q$ is well defined and $\beta_q \neq 0$.

For our analysis of the scattering problem \eqref{model}, let us first introduce the quasi-periodic Green's function by 
    \begin{align} \label{eq:qpgreen}
        G^{\alpha, \ww}(x) 
      & = \frac{1}{2}\sum_{q \in \Lambda^*} i \beta_q^{-1} e^{i(\alpha+ q)\dd x'} e^{i \beta_q |x_3 |}\,,
    \end{align} 
which is the fundamental solution to the equation:
\begin{equation} \label{eq:qpeq}
    -(\Delta + \ww^2) G^{\alpha,\ww}(x) = \sum_{\gamma \in \Lambda}\d(x - (\gamma,0)) e^{i \gamma \dd \alpha}\,,
\end{equation}
satisfying the outgoing condition \eqref{eq:RBexpansion}. Then, we define the vector potentials: 
\begin{align}
    \kd[\varphi] = \int_D G^{\alpha,\ww}(x,y)\vp(y)\, dy:\q {\bf L}^2(D) \rightarrow {\bf H}^2_{\alpha,loc}(Y_\infty)\,, \label{def:vectorpotential}
    \end{align}
and 
\begin{equation} \label{def:electricpotential}
    \td[\varphi] = (\ww^2 + \nabla \ddiv) \kd[\varphi]:\q {\bf L}^2(D) \rightarrow {\bf H}_{\alpha,loc}(\ccurl,Y_\infty)\,.
\end{equation}
By definition, it is easy to check that $\TT^{\alpha,\ww}_{D} [\vp]$ is $\alpha$-quasi-periodic and satisfies  
the equation:
\begin{align} \label{eq:soltd}
    (\curl \curl - \ww^2)\TT^{\alpha,\ww}_{D} [\vp] = \ww^2 \vp \chi_D\q \text{on}\ \R^3\,,
\end{align} 
and the Rayleigh-Bloch expansion \eqref{eq:RBexpansion}. 
Then the Lippmann-Schwinger equation for \eqref{model} readily follows: 
\begin{equation} \label{eq:Lippmann-Schwinger}
    E^s := E - E^i = \tau \td[E]\q \text{on}\ \R^3\,.
\end{equation}

We recall the Helmholtz decomposition for $L^2$-vector fields \cite{amrouche1998vector,girault2012finite}:
\begin{align} \label{eq:helm}
    {\bf L}^2(D) = \na H_0^1(D) \oplus {\bf H}(\ddiv0,D) = \na H_0^1(D) \oplus {\bf H}_0(\ddiv0,D) \oplus W\,,
\end{align}
where $W$ is the space of the gradients of harmonic $H^1$-functions. It is known \cite{costabel2012essential,costabel2010volume,ammari2020mathematical} that \mb{for $\alpha \in \C^2$ and $\ww \in \C$} with $\ww^2 \notin Z(\alpha)$, $\na H_0^1(D)$ and ${\bf H}(\ddiv0,D)$ are invariant subspaces of $\TT_D^{\alpha,\ww}$ with
$
      \TT_D^{\alpha,\ww}[\na \phi] = - \na [\phi \chi_D]
$ for $\phi \in H_0^1(D)$.  Moreover, the spectrum $\sigma(\td)$ is a disjoint union of the essential spectrum $\sigma_{ess}(\td) = \{-1,0,-\frac{1}{2}\}$ and the discrete spectrum $\sigma_{disc}(\td)$. We then define
\begin{align} \label{def:lipschi_operator}
    \A_\tau(\alpha,\ww) := \tau^{-1} - \td: {\bf H}(\ddiv0,D) \to {\bf H}(\ddiv 0, D)\,,
\end{align}
which is an operator-valued analytic function on the set
$\{(\tau,\alpha,\ww)\in \C^4\,;\ |\tau| \gg 1\,,\ \ww^2 \notin Z(\alpha)\}$. By the multidimensional analytic Fredholm theorem \cite{kuchment2012floquet,stessin2011analyticity,MichaelTaylor2010s}, we have the following basic lemma. 

\begin{lemma} \label{lem:spec_td}
The set of poles of the resolvent $(\A_\tau(\alpha,\ww))^{-1}$:
\begin{align} \label{eq:resonances}
   \Omega(\A_\tau(\alpha,\ww)):=  \big\{(\tau,\alpha,\ww)\in \C^4\,;\ |\tau| \gg 1\,,\ \ww^2 \notin Z(\alpha)\,,\ \A_\tau(\alpha,\ww)\ \text{is not invertible}\big\}\,,
\end{align}
is either empty, or an analytic subset of $\C^4$ of codimension $1$ (that is, it is locally given by the zeros of an analytic function in the variables $\tau$, $\alpha$, and $\ww$). 
\end{lemma}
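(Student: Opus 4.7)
The plan is to invoke the multidimensional analytic Fredholm theorem applied to the operator-valued function $(\tau,\alpha,\ww) \mapsto \A_\tau(\alpha,\ww)$ on the open connected set $U := \{(\tau,\alpha,\ww) \in \C^4\,;\ |\tau| \gg 1\,,\ \ww^2 \notin Z(\alpha)\}$. The theorem requires three ingredients: (i) joint analyticity of $\A_\tau(\alpha,\ww)$ on $U$ with values in $\L({\bf H}(\ddiv0,D))$; (ii) Fredholm index zero at every point of $U$; and (iii) at least one point of $U$ where $\A_\tau(\alpha,\ww)$ is invertible. Once these are in place, the conclusion that $\Omega(\A_\tau(\alpha,\ww))$ is locally the zero set of a scalar holomorphic function in $(\tau,\alpha,\ww)$ follows directly. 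The connectedness of $U$ needed for the theorem is clear since its complement is a countable union of complex hypersurfaces $\{\ww^2 = |\alpha+q|^2\}$ in $\C^4$.

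Analyticity in $\tau$ is immediate because $\A_\tau = \tau^{-1} - \td$ is affine in $\tau^{-1}$, so the main work is joint analyticity in $(\alpha,\ww)$. By inspecting the Rayleigh-Bloch series \eqref{eq:qpgreen}, away from the dispersion set $\{\ww^2 \in Z(\alpha)\}$ the denominators $\beta_q$ stay locally bounded below, and the tail of the series decays uniformly in $(x,y) \in D \t D$; hence $G^{\alpha,\ww}(x-y)$ is jointly holomorphic in $(\alpha,\ww)$, and both $\kd$ and $\td$ inherit analyticity as bounded operators on ${\bf L}^2(D)$, in particular on the invariant subspace ${\bf H}(\ddiv0,D)$. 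For (ii), I would use the recalled fact that $\sigma_{ess}(\td) = \{-1,0,-1/2\}$ on ${\bf L}^2(D)$ together with the invariant decomposition \eqref{eq:helm}. Restricted to ${\bf H}(\ddiv0,D) = {\bf H}_0(\ddiv0,D) \oplus W$, only the essential values $\{0,-1/2\}$ can persist (the value $-1$ being confined to the discarded subspace $\na H_0^1(D)$). The condition $|\tau| \gg 1$ ensures $\tau^{-1}$ stays close to $0$ yet bounded away from the set $\{0,-1/2\}$, so $\A_\tau(\alpha,\ww)$ is a compact perturbation of an invertible operator, hence Fredholm of index zero.

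For (iii), I would fix any $\alpha_0 \in \mc{B}$ and choose $\ww_0$ with sufficiently large positive imaginary part: by \eqref{def:bq} each factor $e^{i\beta_q |x_3|}$ in \eqref{eq:qpgreen} then decays exponentially, uniformly for $x \in Y_h$ and $q \in \Lambda^*$, which forces $\norm{\td}_{\L({\bf L}^2(D))} \to 0$ as $\im\ww_0 \to +\infty$. Fixing large $|\tau_0|$ with $\tau_0^{-1}$ outside $\sigma(\td)$, a Neumann-series argument then produces a bounded inverse of $\A_{\tau_0}(\alpha_0,\ww_0)$. The main obstacle I anticipate is step (ii): namely, transferring the essential-spectrum description from the free-space volume potential to the $\alpha$-quasi-periodic one on $Y_\infty$. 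The standard remedy is to write $G^{\alpha,\ww}$ as the free-space Helmholtz Green's function plus a smooth lattice-sum remainder, which realizes the periodic correction as a smoothing, hence compact, perturbation and transfers both the invariant-subspace structure of \eqref{eq:helm} and the essential-spectrum description verbatim. With (i)--(iii) established, the multidimensional analytic Fredholm theorem yields the claimed codimension-one analytic structure of $\Omega(\A_\tau(\alpha,\ww))$.
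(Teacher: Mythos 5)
Your overall strategy coincides with the paper's: the lemma is proved there by directly invoking the multidimensional analytic Fredholm theorem, relying on the joint analyticity of $\td$ and the spectral decomposition $\sigma(\td)=\sigma_{ess}(\td)\cup\sigma_{disc}(\td)$ with $\sigma_{ess}(\td)=\{-1,0,-\tfrac12\}$ recalled just before the statement, and your steps (i) and (ii) supply essentially the details the paper leaves implicit.

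Step (iii), however, contains a genuine error as written: it is \emph{not} true that $\norm{\td}_{\L({\bf L}^2(D))}\to 0$ as $\im \ww_0\to+\infty$. While the vector potential $\kd$ does become small in that limit, $\td=(\ww^2+\na\ddiv)\kd$ contains the term $\na\ddiv\kd$, whose principal part is a zeroth-order (Calder\'on--Zygmund type) operator independent of $\ww$; this is precisely why $\sigma_{ess}(\td)=\{-1,0,-\tfrac12\}$ for \emph{every} admissible $(\alpha,\ww)$, so $\norm{\td}\ge\tfrac12$ on ${\bf H}(\ddiv0,D)$ uniformly in $\ww$. Consequently a Neumann series for $(\tau_0^{-1}-\td)^{-1}=\tau_0(1-\tau_0\td)^{-1}$ cannot converge when $|\tau_0|\gg1$. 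The repair is the observation you already half-state: since $\sigma(\td)$ consists of the three essential points together with a discrete set, one can choose $\tau_0$ with $|\tau_0|$ large and $\tau_0^{-1}\notin\sigma(\td)$, and then $\A_{\tau_0}(\alpha_0,\ww_0)$ is invertible by the very definition of the resolvent set --- no smallness of $\td$ is required. A more cosmetic imprecision: the Rayleigh--Bloch series does not converge uniformly on $D\times D$ because of the diagonal singularity of $G^{\alpha,\ww}$, so the free-space-plus-smooth-remainder splitting you invoke for step (ii) is in fact also what makes the analyticity claim of step (i), and the $L^2$-boundedness of $\na\ddiv\kd$, rigorous.
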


It is straightforward to follow \cite[Section 2]{ammari2020mathematical} to introduce the scattering resolvent for the problem \eqref{model} and show that its poles are the same as those of $\A_\tau(\alpha,\dd)^{-1}$, which we call the \emph{scattering resonances} of the dielectric metasurface. Then by Lemma \ref{lem:spec_td}, \mb{we see that} the slice of the set $\Omega(\A_\tau(\alpha,\ww))$ \eqref{eq:resonances} at $(\alpha,\tau)$, denoted by $\Omega(\A_\tau(\alpha,\ww))|_{(\alpha,\tau)}$, gives the discrete set of resonant frequencies $\ww_j$ for the problem \eqref{model}:
\begin{align*}
   \{\ww_j(\alpha,\tau)\}_{j \ge 0} =  \Omega(\A_\tau(\alpha,\ww))|_{(\alpha,\tau)}\,.
\end{align*}
For each $j$, the resonance $\ww_j(\alpha,\tau)$, as a function in $\alpha \in \mc{B}$, is Lipschitz continuous and known as the band dispersion function \cite{kuchment2016overview}. In this work, we are mainly interested in the resonances in the subwavelength regime, i.e., those $\ww_j \ll 1$. In analogy with the analysis in \cite[Section 3]{ammari2020mathematical}, we have the following asymptotic result for the subwavelength band function $\ww_j(\alpha,\tau)$. Let $\pd$ and $\pw$ be the orthogonal projections for the decomposition \eqref{eq:helm}:
\begin{align} \label{def:proj_helm}
    \pd: {\bf L}^2(D) \to {\bf H}_0(\ddiv0,D)\,,\q \pw: {\bf L}^2(D) \to W\,.
\end{align}
\begin{theorem} \label{thm:asym_to}
In the high contrast regime \eqref{assp:high_constrast}, the subwavelength resonances for the scattering problem \eqref{model} exist with the asymptotic form: as $\tau \to \infty$, 
\begin{align*}
    \ww_j(\alpha,\tau) = \frac{1}{\sqrt{\tau \lad_j^\alpha}} + O(\tau^{-1})\,,
\end{align*}
where the remainder term is complex, and 
$\lad_j^\alpha$ is an eigenvalue of the compact self-adjoint operator $\pd \mc{K}_D^{\alpha,0} \pd$. 
\end{theorem}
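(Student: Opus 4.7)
The plan is to reduce the resonance condition $\A_\tau(\alpha,\ww) = 0$ to a perturbed eigenvalue problem for a compact self-adjoint operator at $\ww = 0$, then extract the asymptotics via the generalized Rouch\'e/Gohberg-Sigal theorem, following the strategy of \cite[Section 3]{ammari2020mathematical}. First I would use the Helmholtz decomposition \eqref{eq:helm} together with the invariance identity $\td[\na\phi] = -\na[\phi\chi_D]$: on $\na H_0^1(D)$ the operator $\td$ reduces to the scalar $-1$, which cannot equal $\tau^{-1}$ for $|\tau|\gg 1$, so all subwavelength resonances come from the invariant subspace ${\bf H}(\ddiv 0, D) = {\bf H}_0(\ddiv 0, D) \oplus W$.

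The key observation is that for $\vp\in{\bf H}_0(\ddiv 0, D)$, extended by zero to $\R^3$, integration by parts in \eqref{def:vectorpotential} gives $\ddiv\kd[\vp]\equiv 0$ for every $\ww$, hence
\begin{equation*}
\td[\vp] = \ww^2\,\kd[\vp]\qquad\text{for all }\vp\in{\bf H}_0(\ddiv 0, D).
\end{equation*}
Since $\beta_q = \sqrt{\ww^2-|\alpha+q|^2}$ is analytic in $\ww^2$ away from $Z(\alpha)$, the Fourier series \eqref{eq:qpgreen} yields $\kd = \mc{K}_D^{\alpha,0} + \ww^2\mc{R}(\ww)$ with $\mc{R}(\ww)$ a bounded analytic family near $0$. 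Plugging in the ansatz $\tau\ww^2 = 1/\lad$, splitting $\vp = \vp_0 + w\in {\bf H}_0(\ddiv 0, D)\oplus W$, and projecting $\tau\td[\vp] = \vp$ by $\pd$ and $\pw$ produces a $2\times 2$ block system. The $W$-block is uniformly invertible for $\tau$ large and $\lad$ in a compact set (this is the technical point discussed below), so a Schur-complement elimination of $w$ yields a reduced pencil on ${\bf H}_0(\ddiv 0, D)$
\begin{equation*}
\mc{F}(\lad,\tau) = \pd\,\mc{K}_D^{\alpha,0}\,\pd - \lad + O(\tau^{-1}),
\end{equation*}
whose $\tau\to\infty$ limit is $\mc{K}_0 - \lad$ with $\mc{K}_0 := \pd\,\mc{K}_D^{\alpha,0}\,\pd$.

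The operator $\mc{K}_0$ is compact (weakly singular kernel in $\R^3$), self-adjoint (since $\overline{G^{\alpha,0}(y-x)} = G^{\alpha,0}(x-y)$ for real $\alpha$), and positive (from the Fourier representation of $G^{\alpha,0}$ at $\ww = 0$), so it has a discrete sequence of positive eigenvalues $\lad_j^\alpha\downarrow 0$. For each isolated $\lad_j^\alpha$, the generalized Rouch\'e theorem applied to $\mc{F}(\lad,\tau)$ on a small circle around $\lad_j^\alpha$ in the $\lad$-plane (on which $\mc{K}_0-\lad$ is invertible) produces a root $\lad(\tau) = \lad_j^\alpha + O(\tau^{-1})$. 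Inverting the ansatz gives $\ww_j(\alpha,\tau) = 1/\sqrt{\tau\,\lad(\tau)} = 1/\sqrt{\tau\lad_j^\alpha} + O(\tau^{-1})$, with a remainder that is complex in general since $\td$ fails to be self-adjoint for $\ww\neq 0$.

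The main obstacle is controlling the $W$-block. Unlike on ${\bf H}_0(\ddiv 0, D)$, the operator $\mc{T}_D^{\alpha,0}|_W$ is nontrivial and reduces to the gradient of a single-layer potential with Neumann-trace density, hence is of Neumann-Poincar\'e type whose spectrum can accumulate at $0$. One must show that, away from a discrete ``plasmonic'' set of $\tau$-values that does not accumulate at $\infty$, the block $\tau^{-1}-\pw\td\pw$ is boundedly invertible with norm $O(1)$ and depends analytically on $\lad$ near $\lad_j^\alpha$, so that the Schur complement genuinely yields the $O(\tau^{-1})$ correction (rather than merely $o(1)$). This parallels the single-particle analysis in \cite{ammari2020mathematical}, the main new input being the low-frequency expansion of the quasi-periodic Green's function in \eqref{eq:qpgreen}.
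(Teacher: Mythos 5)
Your proposal follows essentially the same route as the paper: Theorem \ref{thm:asym_to} is stated by direct analogy with \cite[Section 3]{ammari2020mathematical}, and the in-text proof of the refined version (Theorem \ref{asym:resonance}) is exactly your block reduction --- eliminate the $W$-component using the invertibility of $\mc{T}_D^{0,0}|_W$ from Lemma \ref{lem:op1} (which disposes of your worry about the Neumann--Poincar\'e spectrum, the more so since for real $\tau>0$ the value $\tau^{-1}$ is positive while $\sigma(\mc{T}_D^{0,0}|_W)\subset[-1,0)$), then read off the eigenvalue problem for $\pd\mc{K}_D^{\alpha,0}\pd$ and apply the generalized Rouch\'e theorem. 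One small correction: $\kd$ is \emph{not} analytic in $\ww^2$ when $\alpha=0$ (or when $\alpha$ scales with $\ww$), since the $q=0$ mode contributes a $\ww^{-1}$ singularity --- annihilated by $\pd$ because $\int_D\vp\,dx=0$ for $\vp\in{\bf H}_0(\ddiv0,D)$ --- together with odd powers of $\ww$ (see \eqref{eq:expgreengn}), and it is these odd-order terms that make the remainder genuinely $O(\tau^{-1})$ and complex rather than the $O(\tau^{-3/2})$ your $\ww^2$-expansion would suggest.
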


\subsection{Real resonances and Bound states} \label{sec:band}
It is clear that for a fixed $\alpha \in \mc{B}$, at each resonance $\ww_j(\alpha,\tau)$, there are nonzero solutions $(E,H)$ to \eqref{model} with $(E^i,H^i)=0$, which are called \emph{Bloch modes}. We will first prove in Proposition \ref{lem:expdecay} that when the resonant frequency
is real, the associated Bloch modes exponentially decay and hence do not couple to the far field. The argument is standard and based on the variational method.  
It follows that in this case, the resonant modes are bounded with respect to the
$L^2$-norm and usually referred to as the \emph{bound states}.

We \mb{start with the introduction} of the EM Dirichlet to Neumann (DtN) operators $\ms{T}$. \mb{We denote}
\begin{align*}
    \Sigma_{\pm h} := Y \t \{\pm h\}  = \{
 x \in \R^3\,;\ x' \in Y\,,\ x_3 = \pm h
 \}\,, \q \text{for}\ h > 0\,.
\end{align*}
and 
\begin{align*}
    \Sigma := \Sigma_h \cup \Sigma_{-h}\,.
\end{align*}

\begin{definition} \label{def:dtn_em}
Let $\vp \in{\bf H}_{\alpha,T}^{-1/2}(\ccurl,\Sigma_{\pm h})$ be an $\alpha$-quasi-periodic tangent field on $\Sigma_{\pm h}$ with the Fourier expansion: 
\begin{equation*}
    \vp(x',\pm h) := \sum_{q \in \Lambda^*} (\vp_q, 0)  e^{i(\alpha + q) \dd x'}\,, \q \vp_q := (\vp_{1,q},\vp_{2,q}) \in \R^2 \,.
 \end{equation*}
The DtN operator $\mathscr{T}:{\bf H}_{\alpha,T}^{-1/2}(\ccurl,\Sigma_{\pm h}) \to {\bf H}_{\alpha,T}^{-1/2}(\ddiv, \Sigma_{\pm h})
$ is defined by 
\begin{equation}\label{def:dtn_exp}
    \ms{T} \vp = \sum_{q \in \Lambda^*} \big((\ms{T} \vp)_q, 0 \big) e^{i(\alpha + q)\dd x'}\,,
\end{equation}
where the coefficients $(\ms{T} \vp)_q \in \R^2$, $q \in \Lambda^*$, are given by
\begin{align*}
(\ms{T} \vp)_q := \frac{1}{\omega\beta_q} \big\{\beta_q^2 \vp_q + ((\alpha + q)\dd \vp_q)(\alpha + q)\big\}\,.
\end{align*}
\end{definition}

With the help of the operator $\ms{T}$, one can readily obtain the variational formulation for the resonance problem: find $(\ww, E) \in \C \t  {\bf H}_\alpha(\ccurl, Y_h)$ with $\ww^2 \notin Z(\alpha)$ such that 
\begin{align} \label{eq:desa}
    B_{\alpha,\tau,\ww} (\vp, E) := (\ccurl \vp, \ccurl E)_{Y_h} - \ww^2 (\vp, \ep E)_{Y_h} - i \ww \l \pi_t(\vp),\ms{T} \pi_t(E)\r_{\Sigma} = 0\,,\q \forall \vp \in {\bf H}_\alpha(\ccurl,Y_h)\,,
\end{align}
where $\l\dd,\dd \r_\Sigma = \l \dd, \dd \r_{\Sigma_h} + \l \dd ,\dd \r_{\Sigma_{-h}}$, and the trace $\pi_t(\dd)$ is given in \eqref{eq:trace_tang}. We note from Definition \ref{def:dtn_em} that 
\begin{align}  \label{eq:cap_1}
    \l \vp, \ms{T} \psi\r_{\Sigma_{\pm h}} = \sum_{q \in \Lambda^*} \frac{1}{\ww \beta_q} \big\{ \beta_q^2 \psi_q \dd \overline{\vp_q} + ((\alpha + q)\dd \psi_q)(\alpha + q)\dd \overline{\vp_q} \big\}\,, \q \forall \psi, \vp \in{\bf H}_{\alpha,T}^{-1/2}(\ccurl,\Sigma_{\pm h})\,,
\end{align}
which implies
\begin{align} \label{eq:cap_2}
    \l\psi, \ms{T}_{\pm} \psi\r_{\Sigma_{\pm h}} 
    = &  \sum_{q \in \Lambda^*} \frac{1}{\ww \beta_q} \big\{ \beta_q^2 |\psi_q|^2 + |(\alpha + q) \dd \psi_q|^2  \big\}\,.
\end{align}
\mb{Given $\alpha \in \mc{B}$ and $\ww \in \R$,} we introduce the following subsets of the reciprocal lattice $\Lad^*$:
\begin{align} \label{eq:lattice_decp}
    \Lad^*_{\mc{P}} := \big\{q\in \Lad^*\,;\ \ww^2 > |\alpha + q|^2\big\}\,,\q   \Lad^*_{\mc{E}} = \big\{q\in \Lad^*\,;\ \ww^2 < |\alpha + q|^2\big\}\,,
\end{align}
and define the associated operators $\ms{T}^{\mc{P}}$ and $\ms{T}^{\mc{E}}$ as in \eqref{def:dtn_exp} but with the sums over $\Lad^*_{\mc{P}}$ and $ \Lad^*_{\mc{E}}$, respectively. By definition, we have that $\Lad^*_{\mc{P}}$ is a finite set with $\beta_q > 0$ for $q \in \Lad^*_{\mc{P}}$,  while for $q \in \Lad^*_{\mc{E}}$, $\beta_q$ is purely imaginary. Then, recalling the Rayleigh-Bloch expansion, it is clear that $\Lad^*_{\mc{P}}$ corresponds to the propagating modes, and $\Lad^*_{\mc{E}}$ is for the exponentially decaying modes. Note that equations $\ww^2 = |\alpha + q|^2$ with $q \in \Lad^*$ separate the $(\ww,\alpha)$ plane into regions where the wave has different radiation behaviors as $|x_3| \to \infty$. For $\alpha \in \mc{B}$, by definition \eqref{def:bri_zone}, we have that the minimal $|\alpha + q|$ is given by $|\alpha|$, which is the so-called \emph{light line} \cite{bulgakov2017bound,bulgakov2017propagating}, namely,
\begin{align} \label{eq:alphaetc}
|\alpha| = \inf_{q \in \Lambda^*}|\alpha + q|\,.
\end{align}
The following lemma is a simple consequence of \eqref{eq:cap_1} and \eqref{eq:cap_2}. 

\begin{lemma} \label{lem:capacity}
   Let $\ww \neq 0 \in \R$ and $\alpha \in \mc{B}$ with $\ww^2 \notin Z(\alpha)$.   For $\psi, \vp \in{\bf H}_{\alpha,T}^{-1/2}(\ccurl,\Sigma_{\pm h})$, it holds that 
\begin{align*}
    \l\vp, -i \ms{T}^{\mc{E}} \psi\r_{\Sigma_{\pm h}} = \l-i \ms{T}^{\mc{E}} \vp, \psi\r_{\Sigma_{\pm h}}\,,\q \l\ms{T}^{\mc{P}} \vp, \psi\r_{\Sigma_{\pm h}} =  \l\vp,  \ms{T}^{\mc{P}} \psi\r_{\Sigma_{\pm h}}\,.
    \end{align*}
Moreover, we have 
\begin{align*}
     \l\psi, \ms{T}^{\mc{P}} \psi\r_{\Sigma_{\pm h}} \ge 0\,.
\end{align*}
\end{lemma}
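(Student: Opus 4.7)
The plan is to derive all three identities directly from the explicit Fourier formula~\eqref{eq:cap_1} by splitting the sum over $\Lambda^*$ into the propagating indices $\Lambda^*_{\mc{P}}$ and the evanescent ones $\Lambda^*_{\mc{E}}$. The key observation is a simple reality check on the Fourier symbol: for $q\in\Lambda^*_{\mc{P}}$ one has $\beta_q>0$ by definition, so the prefactor $1/(\ww\beta_q)$ is real; for $q\in\Lambda^*_{\mc{E}}$ one has $\beta_q=i|\beta_q|_*$ with $|\beta_q|_*:=\sqrt{|\alpha+q|^2-\ww^2}>0$, so after multiplying the entire evanescent form by $-i$ the resulting coefficients $-i\beta_q/\ww=|\beta_q|_*/\ww$ and $-i/(\ww\beta_q)=-1/(\ww|\beta_q|_*)$ are also real. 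Thus in both regimes the relevant operator is a diagonal Fourier multiplier with a real, symmetric symbol.

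Once this is in hand, both symmetry identities follow immediately. Since $\l\cdot,\cdot\r_{\Sigma_{\pm h}}$ is the complex sesquilinear pairing satisfying $\l f,g\r=\overline{\l g,f\r}$, complex conjugation of the Fourier expansion
\[
\l\vp,\ms{T}^{\mc{P}}\psi\r_{\Sigma_{\pm h}}=\sum_{q\in\Lambda^*_{\mc{P}}}\tfrac{1}{\ww\beta_q}\bigl\{\beta_q^2\,\psi_q\dd\overline{\vp_q}+((\alpha+q)\dd\psi_q)\,\overline{(\alpha+q)\dd\vp_q}\bigr\}
\]
leaves the real symbol untouched and simply interchanges the roles of $\vp$ and $\psi$, giving $\overline{\l\vp,\ms{T}^{\mc{P}}\psi\r}=\l\psi,\ms{T}^{\mc{P}}\vp\r$, and hence $\l\ms{T}^{\mc{P}}\vp,\psi\r=\overline{\l\psi,\ms{T}^{\mc{P}}\vp\r}=\l\vp,\ms{T}^{\mc{P}}\psi\r$. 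The same conjugation trick, applied to the evanescent expansion after rewriting it with the real coefficients above, produces $\l\vp,-i\ms{T}^{\mc{E}}\psi\r=\l-i\ms{T}^{\mc{E}}\vp,\psi\r$.

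For the nonnegativity claim I would simply set $\vp=\psi$ in~\eqref{eq:cap_2} and restrict the sum to $\Lambda^*_{\mc{P}}$: each term $\beta_q^2|\psi_q|^2+|(\alpha+q)\dd\psi_q|^2$ is nonnegative, and $1/(\ww\beta_q)>0$ under the standard convention $\ww>0$ implicit in the scattering setup. There is no genuine obstacle anywhere in the argument—everything reduces to reading off the reality and sign of an explicit diagonal Fourier multiplier—so the only minor care needed is to confirm that the series converges as a duality pairing in ${\bf H}^{-1/2}_{\alpha,T}(\ccurl,\Sigma_{\pm h})$, which is ensured by the mild $O(|q|)$ growth of the symbol against the chosen trace-norm weights.
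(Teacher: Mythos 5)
Your proof is correct and is exactly the argument the paper intends: the lemma is presented there as a direct consequence of the Fourier formulas \eqref{eq:cap_1}--\eqref{eq:cap_2}, and your reality check on the diagonal symbol (real coefficients for $q\in\Lambda^*_{\mc{P}}$ since $\beta_q>0$, and real after multiplication by $-i$ for $q\in\Lambda^*_{\mc{E}}$ since $\beta_q$ is purely imaginary) combined with the conjugation symmetry of the sesquilinear pairing is precisely the intended route. Your observation that the nonnegativity claim uses the sign convention $\ww>0$ is a fair remark about the statement itself (for $\ww<0$ the propagating form would be $\le 0$, though each summand still vanishes iff $\psi_q=0$, which is all that is used later), not a gap in your argument.
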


\begin{proposition} \label{lem:expdecay}
Let $\ww$ be a resonant frequency for $\alpha \in \mc{B}$ with $\ww^2 \notin Z(\alpha)$, and $E \in {\bf H}_{\alpha,loc}(\ccurl, Y_\infty)$ be the associated Bloch mode with the expansion \eqref{eq:RBexpansion}. Then $\ww$ is real if and only if $E$ exponentially decays as $x_3 \to \infty$. In this case, we have $E \in {\bf H}_{\alpha}(\ccurl, Y_\infty)$.
\end{proposition}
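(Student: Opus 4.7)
The plan is to decouple the proposition into two independent arguments: the variational identity \eqref{eq:desa} drives the direction $\ww \in \R \Rightarrow$ exponential decay, while a plain energy identity over the infinite strip $Y_\infty$ handles the converse. In both parts, the starting point is that the Bloch mode $E$ satisfies $B_{\alpha,\tau,\ww}(E, E) = 0$ and, equivalently, the PDE $\ccurl \ccurl E = \ww^2 \ep E$ weakly on $Y_\infty$ together with the Rayleigh-Bloch expansion \eqref{eq:RBexpansion}.

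For the forward direction, I would assume $\ww \in \R$; I may take $\ww > 0$ without loss of generality (the sign $\ww<0$ is symmetric, and $\ww=0$ is ruled out by $\ww^2 \notin Z(\alpha)$ except in trivial cases). First I would take the imaginary part of $B_{\alpha,\tau,\ww}(E, E) = 0$: the two volume terms are automatically real since $\ep$ is real-valued, so the entire imaginary part is carried by the DtN boundary term. Splitting $\ms{T} = \ms{T}^{\mc{P}} + \ms{T}^{\mc{E}}$ via \eqref{eq:lattice_decp} and invoking Lemma \ref{lem:capacity}, which asserts that $\l \pi_t E, \ms{T}^{\mc{P}} \pi_t E\r_\Sigma$ is real and nonnegative while $\l \pi_t E, -i \ms{T}^{\mc{E}} \pi_t E\r_\Sigma$ is real, I would arrive at $\ww \l \pi_t E, \ms{T}^{\mc{P}} \pi_t E\r_\Sigma = 0$. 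By the explicit formula \eqref{eq:cap_2} restricted to the finite set $\Lad^*_{\mc{P}}$, whose entries have strictly positive coefficients $1/(\ww \beta_q)$, this forces the tangential Fourier coefficients $(\pi_t E)_q(\pm h, \alpha)$ to vanish for every $q \in \Lad^*_{\mc{P}}$. Then I would use the divergence-free Maxwell constraint $\ddiv E = 0$ in the vacuum region $\{|x_3|>h\}$, which imposes $(\alpha+q, \beta_q)\cdot E_q(\pm h,\alpha) = 0$, to promote the vanishing of the tangential Fourier coefficients to vanishing of the full vector coefficients. Only evanescent modes indexed by $\Lad^*_{\mc{E}}$ then survive in \eqref{eq:RBexpansion}, giving exponential decay of both $E$ and $\ccurl E$ and, in particular, $E \in {\bf H}_\alpha(\ccurl, Y_\infty)$.

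For the converse, assume $E$ decays exponentially. I would multiply $\ccurl \ccurl E = \ww^2 \ep E$ by $\overline E$ and integrate over $Y_\infty$: the lateral boundary terms cancel by $\alpha$-quasi-periodicity, and the contributions at $x_3 = \pm R$ vanish as $R \to \infty$ thanks to the decay hypothesis. This yields
\begin{equation*}
  \|\ccurl E\|^2_{{\bf L}^2(Y_\infty)} = \ww^2\, (E, \ep E)_{Y_\infty}.
\end{equation*}
Since $E \not\equiv 0$ and $\ep \ge 1$, the factor $(E, \ep E)_{Y_\infty}$ is strictly positive, so $\ww^2$ is a nonnegative real number, forcing $\ww \in \R$. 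Notably, this argument also rules out the possibility of a purely imaginary $\ww$, which one might worry could otherwise produce a decaying Bloch mode via the square-root convention for $\beta_q$.

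The hard part will be the first direction, where sign bookkeeping via Lemma \ref{lem:capacity} must carefully isolate the imaginary part of the DtN form, and the divergence-free Maxwell constraint outside $\mc{D}$ must be invoked to upgrade vanishing of tangential Fourier coefficients to vanishing of the full vector coefficients $E_q(\pm h, \alpha)$. The converse, by contrast, reduces to a clean energy identity once the decay hypothesis justifies integration by parts without boundary contributions at infinity.
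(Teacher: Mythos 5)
Your proposal is correct and follows essentially the same route as the paper's proof: taking the imaginary part of $B_{\alpha,\tau,\ww}(E,E)=0$, isolating the propagating part of the DtN form via Lemma \ref{lem:capacity} to kill the coefficients indexed by $\Lad^*_{\mc{P}}$, and using $\ddiv E=0$ outside $Y_h$ together with the Rayleigh--Bloch expansion for the decay; the converse is the same energy identity the paper obtains by letting $h\to\infty$ in the variational form. The only additions are cosmetic (the explicit remark that the divergence constraint upgrades tangential to full vector coefficients, and the observation that $\ww^2\ge 0$ excludes purely imaginary resonances), both of which are consistent with the paper's argument.
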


\begin{proof}
Suppose that $(\ww,E)$ solves the variational problem \eqref{eq:desa}. It implies
\begin{align*}
    \im\, B_{\alpha,\tau,\ww} (E, E) =  - \im \, i \ww \l \pi_t(E),\ms{T} \pi_t(E)\r_{\Sigma} = - \re \,\ww \l \pi_t(E),\ms{T} \pi_t(E)\r_{\Sigma}  = 0\,.
\end{align*}
We hence have $ \re \l \pi_t(E),\ms{T} \pi_t(E)\r_{\Sigma} = \l \pi_t(E),\ms{T}^{\mc{P}} \pi_t(E)\r_{\Sigma} = 0$ by formulas \eqref{def:bq} and \eqref{eq:cap_2} and the definition of $\ms{T}^{\mc{P}}$. 
Then it follows from Lemma \ref{lem:capacity} that the coefficients $(\pi_t(E))_q$ for $q \in \Lambda_\mc{P}^*$ on $\Sigma_{\pm h}$ vanish. Recalling the Rayleigh-Bloch expansion \eqref{eq:RBexpansion} and $\ddiv E = 0$ on $Y_\infty \backslash Y_h$, we obtain the exponential decay of the resonant mode $E$ when $x_3 \to \infty$. Conversely, suppose that the Bloch mode $E$ exponentially decays in the far field. Then, we have that for any suitably large $h$, the restriction of $E$ to $Y_h$ satisfies \eqref{eq:desa} with $\l \pi_t(E),\ms{T} \pi_t(E)\r_{\Sigma} \to 0$ as $h \to \infty$, by formulas \eqref{def:coeff} and \eqref{eq:cap_2}, and the exponential decay of $E$. Thus, there holds
$ (\ccurl E, \ccurl E)_{Y_\infty} - \ww^2(E, \ep E)_{Y_\infty} = 0
$, which readily gives $E \in {\bf H}_\alpha(\ccurl, Y_\infty)$ and $\ww \in \R$. 
\end{proof}

We proceed to prove that the real resonances are the simple poles of the scattering resolvent, which means that for a real resonance, its algebraic multiplicity equals to its geometric multiplicity \cite{dyatlov2019mathematical,gohberg1990classes,gokhberg1971operator}. For the reader's convenience, \mb{let us first recall some related concepts}, following \cite{gokhberg1971operator}. Suppose that $\A(\lad)$ is an operator-valued analytic function from a neighborhood $N_{\ep}(\lad_0)$ of $\lad_0 \in \C$ to $\mc{L}(\mc{H})$, where $\mc{H}$ is a Banach space. $\lad_0$ is a pole of $\A(\lad)^{-1}$ means that there exists an analytic function $\phi(\lad): N_{\ep}(\lad_0) \to \mc{H}$, called a root function for $\A(\lad)$ at $\lad_0$, such that $\phi(\lad_0) \neq 0$ and $\A(\lad_0) \phi(\lad_0) = 0$. It is clear that $\phi(\lad_0)$ is an eigenvector of $\A(\lad_0)$, and we define the geometric multiplicity of $\lad_0$ by $\dim \ker(\A(\lad_0))$. By analyticity, we have $\A(\lad)\phi(\lad) = (\lad - \lad_0)^m \psi(\lad)$ for some $m > 0$ with $\psi(\lad_0) \neq 0$ and call the number $m$ the multiplicity of the root function $\phi(\lad)$. We define the rank of an eigenvector $\phi_0$ by ${\rm rank}(\phi_0) := \{m(\phi)\,;\ \phi(\lad)\ \text{is a root function with}\ \phi(\lad_0) = \phi_0\}$. Let $\{\phi_0^j\}$ be an orthogonal set of eigenvectors that span $\ker(\A(\lad_0))$ with the property: ${\rm rank}(\phi_0^j)$ is the maximum of the ranks of all $\phi$ in the orthogonal complement of ${\rm span}\{\phi_0^1\,\ldots, \phi_0^{j-1}\} $ in $\ker (\A(\lad_0))$. We then define the algebraic multiplicity of $\lad_0$ by $\sum_{j} {\rm rank}(\phi_0^j)$. 

\begin{proposition} \label{prop:real}
For a fixed $\alpha \in \mc{B}$, the real resonances are the simple poles of $\mc{A}_\tau(\alpha,\ww)^{-1}$.
\end{proposition}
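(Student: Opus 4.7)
The plan is to argue by contradiction: if $\ww_0 \in \mathbb{R}$ (with $\ww_0 \neq 0$) is a resonance that is \emph{not} a simple pole, then the definitions of root function and rank force the existence of a Jordan pair $\phi_0, \phi_1 \in {\bf H}(\ddiv 0, D)$ with $\phi_0 \neq 0$, satisfying
\begin{equation*}
\A_\tau(\alpha,\ww_0)\phi_0 = 0, \qquad \A_\tau(\alpha,\ww_0)\phi_1 = \partial_\ww \td^{\alpha,\ww_0}[\phi_0].
\end{equation*}
I will lift this algebraic identity to a PDE system on $Y_\infty$, test it against the complex conjugate of the bound state, and exploit the exponential decay provided by Proposition \ref{lem:expdecay} to eliminate the boundary terms at infinity.

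\textbf{Step 1 (PDE lift).} Take the root function $\phi(\ww) := \phi_0 + (\ww - \ww_0)\phi_1$ and set $E(\ww) := \tau\,\td^{\alpha,\ww}[\phi(\ww)]$ on $Y_\infty$. By construction $\A_\tau(\alpha,\ww)\phi(\ww) = O((\ww-\ww_0)^2)$, hence $\tau\td^{\alpha,\ww}\phi(\ww) = \phi(\ww) + O((\ww-\ww_0)^2)$ on $D$. Combining this with the defining identity \eqref{eq:soltd} yields $(\curl\curl - \ww^2\ep)E(\ww) = \tau\ww^2\chi_D\bigl(\phi(\ww) - E(\ww)|_D\bigr) = O((\ww-\ww_0)^2)$ on $\R^3$. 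Differentiating at $\ww_0$, the fields $E_0 := E(\ww_0)$ and $E_1 := \partial_\ww E(\ww)|_{\ww_0}$ are $\alpha$-quasi-periodic, outgoing, and satisfy
\begin{equation*}
\curl\curl E_0 = \ww_0^2\ep E_0, \qquad (\curl\curl - \ww_0^2\ep)E_1 = 2\ww_0\ep E_0,
\end{equation*}
with $E_0|_D = \phi_0 \neq 0$. Proposition \ref{lem:expdecay} gives the exponential decay of $E_0$ as $|x_3| \to \infty$. Since $\ww_0, \ep \in \R$, the conjugate $\overline{E_0}$ is $(-\alpha)$-quasi-periodic, exponentially decaying, and solves the same homogeneous Maxwell equation.

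\textbf{Step 2 (Green's identity and vanishing at infinity).} Apply the second vector Green identity to $(\overline{E_0}, E_1)$ on the slab $Y_h$:
\begin{equation*}
\int_{Y_h}\!\bigl(\overline{E_0}\cdot\curl\curl E_1 - E_1\cdot\curl\curl\overline{E_0}\bigr)\,dx = -\int_{\partial Y_h}\!\bigl(\overline{E_0}\times\curl E_1 - E_1\times\curl\overline{E_0}\bigr)\cdot \n\,dS.
\end{equation*}
Because $\overline{E_0}$ is $(-\alpha)$- and $E_1$ is $\alpha$-quasi-periodic, the integrand is $\Lambda$-periodic; therefore the lateral boundary contributions cancel and only the integral over $\Sigma_h \cup \Sigma_{-h}$ survives. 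Substituting the PDEs from Step 1 into the left-hand side collapses it to $2\ww_0\int_{Y_h}\ep|E_0|^2\,dx$. On the right, $\overline{E_0}$ and $\curl\overline{E_0}$ decay exponentially on $\Sigma_{\pm h}$, while $E_1$ and $\curl E_1$ grow at most polynomially in $|x_3|$ (an $|x_3|$ prefactor appears when the propagating Rayleigh--Bloch modes pick up $\partial_\ww\beta_q$ from differentiating $e^{i\beta_q(\ww)|x_3|}$ in $\partial_\ww G^{\alpha,\ww}$). The $\Sigma_{\pm h}$ integrals are thus $O(h\,e^{-c h}) \to 0$.

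\textbf{Step 3 (Contradiction).} Passing to the limit $h \to \infty$ yields $2\ww_0\int_{Y_\infty}\ep|E_0|^2\,dx = 0$, which is incompatible with $\ww_0 \neq 0$ and $E_0|_D = \phi_0 \neq 0$. Hence no Jordan chain of length two can exist, and every real resonance is a simple pole. The expected main obstacle is Step 2: one must track the boundary decay carefully, since $E_1$ is only polynomially bounded (rather than decaying) at infinity, so the exponential decay of $E_0$ provided by Proposition \ref{lem:expdecay} — not merely the boundedness of the bound state — is what actually kills the surface integrals. A minor aside is the degenerate case $\ww_0 = 0$, which sits outside the subwavelength range of interest and can be handled separately via the self-adjointness of the static operator $\td^{\alpha,0}$.
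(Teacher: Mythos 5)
Your argument is correct, and it arrives at exactly the paper's contradiction -- $2\ww_0\int_{Y_\infty}\ep|E_0|^2\,dx=0$ -- by a genuinely different route. The paper never lifts the first-order term of the Jordan chain to a field on $Y_\infty$: it stays at the level of the volume integral operator, pairs $\td[\vp]$ (taken at a nearby frequency $\ww$) against the homogeneous Maxwell equation satisfied by the exponentially decaying mode $E_0$ to obtain the identity \eqref{auxeq_simpole}, then substitutes the root function $\vp=E(\ww)$, divides by $\ww-\ww_0$, and lets $\ww\to\ww_0$. Since every pairing there is against $E_0$, all integrals over $Y_\infty$ converge absolutely and the behavior of the generalized eigenfunction at infinity never has to be discussed; the frequency derivative is kept implicit in the difference quotient coming from the factor $\ww_0^2-\ww^2$. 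You instead convert the Jordan pair into the classical PDE system $\curl\curl E_0=\ww_0^2\ep E_0$, $(\curl\curl-\ww_0^2\ep)E_1=2\ww_0\ep E_0$ and apply the second Green identity to $(\overline{E_0},E_1)$ on $Y_h$, which obliges you to control the growth of $E_1$; you correctly identify this as the crux and handle it correctly, since $\partial_\ww G^{\alpha,\ww}$ contributes at most a linear-in-$|x_3|$ prefactor on the finitely many propagating modes, so the $\Sigma_{\pm h}$ integrals are $O(h\,e^{-ch})$ and vanish as $h\to\infty$. What your approach buys is transparency -- it is the standard ``no generalized eigenfunction at an embedded eigenvalue'' argument, with the roles of $E_0$ and $E_1$ explicit -- at the cost of the extra growth estimate on $E_1$ and a mild regularity discussion for the Green identity (note $E(\ww)=\tau\td[\phi(\ww)]$ is a priori only in ${\bf H}_{\alpha,loc}(\ccurl,Y_\infty)$, but $\curl\curl E_1\in{\bf L}^2_{loc}$ by the equation itself, so the identity holds in the usual duality sense). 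Both proofs ultimately rest on the same input, Proposition \ref{lem:expdecay}, and both implicitly require $\ww_0\neq 0$, which you address and the paper leaves tacit.
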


\begin{proof}
Let $\ww_0$ be a real resonance, i.e., a real pole of $\mc{A}_\tau(\alpha,\ww)^{-1}$. By definition, it suffices to prove that for any eigenfunction $E_0 \in \ker(\mc{A}_\tau(\alpha,\ww_0))$, its rank 
equals to one. For this, without loss of generality, we assume that $E_0$ is well defined on $Y_\infty$ and exponentially decays as $x_3 \to \infty$, by considering $\td[E_0]$ and Proposition \ref{lem:expdecay}. Let $E(\ww)$ be an associated root function with $E_0 = E(\ww_0)$. We  write 
\begin{align} \label{auxeq_simple_pole_2}
\A_\tau(\alpha,\ww)[E(\ww)] = (\ww - \ww_0) \psi(\ww)\,.
\end{align}
We will show that $\psi(\ww_0) \neq 0$, which gives ${\rm rank}(E_0) = 1$ and thus completes the proof. Noting that $E_0 \in {\bf H}_\alpha(\ccurl, Y_\infty)$ exponentially decays, by integration by parts, it holds that, for $\ww \in \C$ near $\ww_0$,
\begin{align*}
    0 &= (\td[\vp], \mb{\curl \curl E_0 - \ww_0^2 \ep E_0})_{Y_\infty} \\
    & = (\curl \curl T_D^{\alpha,\ww} [\vp], E_0)_{Y_\infty} -  \ww_0^2 (\TT_D^{\alpha,\ww}[\vp], \ep E_0)_{Y_\infty} \\
& = (\ww^2 \td[\vp] + \ww^2 \vp \chi_D, E_0)_{Y_\infty} - \ww_0^2 (\TT_D^{\alpha,\ww}[\vp], (1 + \tau \chi_D) E_0)_{Y_\infty}\,,\q \forall \vp \in {\bf H}(\ddiv0,D)\,,
\end{align*}
where the last step follows from \eqref{eq:soltd}. Then we obtain 
\begin{align} \label{auxeq_simpole}
    (\ww^2 \vp \chi_D, E_0)_{Y_\infty} - \tau \ww_0^2(\td[\vp], \chi_D E_0)_{Y_\infty} = ((\ww_0^2 - \ww^2)\td[\vp], E_0)_{Y_\infty}\,.
\end{align}
Taking $\vp = E(\ww)$ in the above formula \eqref{auxeq_simpole} and using \eqref{auxeq_simple_pole_2} gives 
\begin{align*}
    ((\ww^2 - \ww_0^2) E(\ww), E_0)_{D} +  \tau \ww_0^2((\ww - \ww_0) \psi(\ww),  E_0)_{D} = ((\ww_0^2 - \ww^2)\td[E(\ww)], E_0)_{Y_\infty}\,,
\end{align*}
which further implies, by eliminating the factor $\ww - \ww_0$ and letting $\ww = \ww_0$,  
\begin{align} \label{auxeq_sim_3}
    2 \ww_0  (E(\ww_0), E_0)_{D} +  \tau \ww_0^2( \psi(\ww_0),  E_0)_{D} & = - 2 \ww_0 (\mc{T}_D^{\alpha,\ww_0}[E(\ww_0)], E_0)_{Y_\infty} \notag \\
    & = - 2 \tau^{-1} \ww_0 (E(\ww_0), E_0)_{Y_\infty}\,.
\end{align}
If $\psi(\ww_0) = 0$, we obtain from \eqref{auxeq_sim_3} that 
$
    \tau \norm{E_0}_D^2 = - \norm{E_0}_{Y_\infty}^2
$. It follows that $E_0 = 0$, which is a contradiction. Hence, $\psi(\ww_0) \neq 0$ and the proof is complete. 
\end{proof}

An immediate and important follow-up question is the existence of real resonances, which will be the main focus of the remaining section and Section \ref{sec:exist_real}. We next prove that there always exist real resonances below the light line $|\alpha|$, by extending Theorem 4.1 and Theorem 4.4 of \cite{bonnet1994guided} for the scalar case; see Theorem \ref{thm:ess_discre}. 
For this purpose, we define the Maxwell operator by $\M_\ep(\alpha): = \ep^{-1}(\curl)^2$ with the domain $\dom(\M_{\ep}(\alpha))$ given by the space $ {\bf C}^\infty_{c,\alpha}(Y_\infty)$ of quasi-periodic smooth fields with compact support, 
which is a densely defined positive unbounded operator on ${\bf L}^2_{\alpha,\ep}(Y_\infty)$. We claim that by Friedrichs extension \cite{simon1975methods}, the operator $\M_\ep$ admits a self-adjoint extension. Indeed, we consider the following sesquilinear unbounded form on ${\bf L}^2_{\alpha,\ep}(Y_\infty)$:
\begin{align*}
    a_{\alpha,\ep}(\vp, E)  := & \, (\vp,\M_\ep(\alpha)E)_{\ep, Y_\infty} + (\vp,E)_{\ep, Y_\infty}\\
    = &\,  (\curl \vp, \curl E)_{Y_\infty} + (\vp, \ep E)_{Y_\infty}\,,\q  \forall E, \vp \in {\bf C}^\infty_{c,\alpha}(Y_\infty)\,.
\end{align*}
Clearly, $a_{\alpha,\ep}(\dd,\dd)$ defines an inner product on $\dom (\M_\ep(\alpha))$, and 
 the completed space $\overline{\dom(\M_\ep(\alpha))}^{a_{\alpha,\ep}}$ is nothing else than ${\bf H}_\alpha(\ccurl, Y_\infty)$, and hence $a_{\alpha,\ep}(\dd,\dd)$ is ${\bf H}_\alpha(\ccurl, Y_\infty)$-elliptic. Then, by Riesz representation, we define the operator
$\mc{N}_{\alpha,\ep}:  \dom(\mc{N}_{\alpha,\ep}) \to {\bf L}_{\alpha,\ep}^2(Y_\infty) 
$ satisfying that for $E \in \dom(\mc{N}_{\alpha,\ep})$ and $\vp \in {\bf H}_\alpha(\ccurl,Y_\infty)$, 
\begin{align} \label{def:Nep} 
a_{\alpha,\ep}(\vp, E) = \big(\vp, \mc{N}_{\alpha,\ep}(E)\big)_{\ep, Y_\infty}\,,
\end{align}
where the domain of $\mc{N}_{\alpha,\ep}$ is given by
\begin{align*}
    \dom(\mc{N}_{\alpha,\ep}): = \big\{E \in {\bf H}_\alpha(\ccurl,Y_\infty)\,;\ \vp \to a_{\alpha,\ep}(\vp, E)\ \text{is continuous on}\ {\bf H}_\alpha(\ccurl, Y_\infty)\ \text{w.r.t.}\,{\bf L}^2_{\alpha}\text{-topology}\big\}\,.
\end{align*}
Note that $(\curl \vp, \curl E)_{Y_\infty}$ is continuous in $\vp \in {\bf H}_\alpha(\ccurl, Y_\infty)$ w.r.t.\,${\bf L}^2_{\alpha}$-topology if and only if there exists $\psi \in {\bf L}^2_\alpha(Y_\infty)$ such that $(\curl \vp, \curl E)_{Y_\infty} = (\vp, \psi)_{Y_\infty}$. We hence have
\begin{align} \label{def:mep}
  \dom(\mc{N}_{\alpha,\ep}) = \big\{E \in {\bf L}^2_\alpha(Y_\infty)\,;\ \ccurl\,E\in {\bf L}^2_\alpha(Y_\infty)\,,\  \ccurl\ccurl\,E \in {\bf L}^2_\alpha(Y_\infty) \big\}\,,
\end{align}
and the Friedrichs extension allows us to conclude the following lemma.

\begin{lemma} \label{lem:symmaxwell}
The operator 
$\mc{N}_{\alpha,\ep}: \dom(\mc{N}_{\alpha,\ep}) \to {\bf L}^2_{\alpha,\ep}(Y_\infty)$ defined in \eqref{def:Nep} is 
bijective, positive, and self-adjoint on ${\bf L}^2_{\alpha,\ep}(Y_\infty)$ with $\mc{N}_{\alpha,\ep}^{-1} \in \mc{L}({\bf L}^2_{\alpha,\ep}(Y_\infty))$. In particular, the positive operator $\mc{N}_{\alpha,\ep}-1$ is the self-adjoint extension of the Maxwell operator $\M_\ep(\alpha)$.
\end{lemma}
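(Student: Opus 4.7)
The plan is to verify the hypotheses of the Friedrichs extension theorem for the sesquilinear form $a_{\alpha,\ep}$ and then identify the resulting self-adjoint operator with $\mc{N}_{\alpha,\ep}$ via the Riesz representation used in \eqref{def:Nep}. First, I would observe that under the high-contrast assumption \eqref{assp:high_constrast} one has $1\le \ep \le 1+\tau$ pointwise, so the weighted space ${\bf L}^2_{\alpha,\ep}(Y_\infty)$ and the standard ${\bf L}^2_\alpha(Y_\infty)$ coincide as sets with equivalent norms, and the form domain ${\bf C}^\infty_{c,\alpha}(Y_\infty)$ is dense in ${\bf L}^2_{\alpha,\ep}(Y_\infty)$. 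On this domain the form $a_{\alpha,\ep}$ is symmetric and satisfies the coercive bound
\begin{equation*}
a_{\alpha,\ep}(E,E) = \|\curl E\|_{Y_\infty}^2 + \|E\|_{\ep,Y_\infty}^2 \ge \|E\|^2_{{\bf H}_\alpha(\ccurl,Y_\infty)}\,,
\end{equation*}
so the form norm is equivalent to the ${\bf H}_\alpha(\ccurl,Y_\infty)$-norm. Since ${\bf C}^\infty_{c,\alpha}(Y_\infty)$ is dense in ${\bf H}_\alpha(\ccurl,Y_\infty)$ and the latter is complete, the closure of $a_{\alpha,\ep}$ in ${\bf L}^2_{\alpha,\ep}(Y_\infty)$ has form domain exactly ${\bf H}_\alpha(\ccurl,Y_\infty)$, confirming that $a_{\alpha,\ep}$ is densely defined, closed, symmetric, and bounded below by $1$.

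Next, I would apply the first representation theorem for closed sectorial forms (equivalently, the Friedrichs extension construction) to obtain a unique positive self-adjoint operator on ${\bf L}^2_{\alpha,\ep}(Y_\infty)$ satisfying precisely the defining identity \eqref{def:Nep} on the domain \eqref{def:mep}. To see that this operator coincides with the $\mc{N}_{\alpha,\ep}$ introduced in the paper, I would note that the continuity condition characterizing $\dom(\mc{N}_{\alpha,\ep})$ is equivalent, by Riesz representation on ${\bf L}^2_{\alpha,\ep}(Y_\infty)$ and integration by parts against quasi-periodic test fields, to the distributional identity $\ccurl\ccurl E \in {\bf L}^2_\alpha(Y_\infty)$, which yields the expression \eqref{def:mep}. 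Bijectivity follows directly from the lower bound $a_{\alpha,\ep}(E,E)\ge \|E\|^2_{\ep,Y_\infty}$: by the Lax–Milgram theorem applied on ${\bf H}_\alpha(\ccurl,Y_\infty)$ with the inner product $a_{\alpha,\ep}(\cdot,\cdot)$, for every $f\in {\bf L}^2_{\alpha,\ep}(Y_\infty)$ there is a unique $E\in {\bf H}_\alpha(\ccurl,Y_\infty)$ with $a_{\alpha,\ep}(\vp,E)=(\vp,f)_{\ep,Y_\infty}$ for all $\vp$, and $\|E\|_{\ep,Y_\infty}\le \|f\|_{\ep,Y_\infty}$, whence $E\in \dom(\mc{N}_{\alpha,\ep})$, $\mc{N}_{\alpha,\ep}E=f$, and $\mc{N}_{\alpha,\ep}^{-1}\in \mc{L}({\bf L}^2_{\alpha,\ep}(Y_\infty))$ with norm at most $1$.

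Finally, to identify $\mc{N}_{\alpha,\ep}-1$ as the Friedrichs extension of $\M_\ep(\alpha)$, I would compute on the initial core ${\bf C}^\infty_{c,\alpha}(Y_\infty)$: for $E,\vp$ smooth compactly supported and quasi-periodic, integration by parts yields
\begin{equation*}
(\vp,\M_\ep(\alpha)E)_{\ep,Y_\infty} = (\vp,(\curl)^2 E)_{Y_\infty} = (\curl \vp,\curl E)_{Y_\infty}\,,
\end{equation*}
so $(\vp,(\M_\ep(\alpha)+1)E)_{\ep,Y_\infty}=a_{\alpha,\ep}(\vp,E)$. Hence $\M_\ep(\alpha)+1$ and $\mc{N}_{\alpha,\ep}$ agree on ${\bf C}^\infty_{c,\alpha}(Y_\infty)$, and by uniqueness of the Friedrichs extension, $\mc{N}_{\alpha,\ep}-1$ is the self-adjoint extension of $\M_\ep(\alpha)$. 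The only genuinely delicate point I expect is the identification of the domain \eqref{def:mep}, since the quasi-periodic boundary behavior at infinity is encoded implicitly through ${\bf L}^2_\alpha$-integrability rather than a classical boundary condition, and one must justify that the continuity requirement in the definition of $\dom(\mc{N}_{\alpha,\ep})$ indeed captures exactly $\ccurl\ccurl E\in {\bf L}^2_\alpha(Y_\infty)$; all other steps are standard consequences of the ${\bf H}_\alpha(\ccurl,Y_\infty)$-ellipticity already asserted in the excerpt.
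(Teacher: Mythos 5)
Your proposal is correct and follows essentially the same route as the paper: the authors also establish the ${\bf H}_\alpha(\ccurl,Y_\infty)$-ellipticity of $a_{\alpha,\ep}$, identify the domain \eqref{def:mep} by the same continuity/Riesz argument, and invoke the Friedrichs extension, with bijectivity coming from the lower bound $a_{\alpha,\ep}(E,E)\ge\|E\|^2_{\ep,Y_\infty}$ exactly as in your Lax--Milgram step. The only point you flag as delicate (the domain identification) is handled in the paper by precisely the equivalence you describe, so no gap remains.
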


Recalling Proposition \ref{lem:expdecay} above, it is easy to see that $\ww \neq 0$ is a real resonance 
if and only if $\ww^2$ is an eigenvalue of $\M_\ep(\alpha)$ with eigenfunctions in ${\bf H}_\alpha(\ep; \ddiv 0, Y_\infty)$ (cf.\,\eqref{def:diverzerospace}). Therefore, for the existence of real resonances, it suffices to investigate the spectral properties of the self-adjoint operator $\M_\ep(\alpha)$ on ${\bf H}_\alpha(\ep; \ddiv 0, Y_\infty)$:
\begin{align} \label{def:res_maxwell}
 \M_\ep(\alpha):\dom(\M_\ep(\alpha)) \to {\bf H}_\alpha(\ep; \ddiv 0, Y_\infty)\,,\q \text{with}\ \dom(\M_\ep(\alpha)) := \dom(\mc{N}_{\alpha,\ep})\cap {\bf H}_\alpha(\ep; \ddiv 0, Y_\infty)\,.
\end{align}
We define, for $n \ge 1$,  
\begin{align} \label{minmaxeig}
    \mu_n(\tau) := &\sup_{\{\vp_i\}_{i = 1}^{n-1} \subset \dom(\mc{M}_\ep(\alpha))}\inf_{\substack{\vp\in \dom(\M_\ep(\alpha))\\ \norm{\vp}_{\ep,Y_\infty} = 1\,,\   \vp \perp \vp_i}} ( \vp, \M_\ep(\alpha) \vp)_{\ep, Y_\infty} \notag \\
    = & \sup_{\vp_i} \inf_\vp \Big\{\frac{( \curl \vp, \curl \vp)_{Y_\infty}}{( (1 + \tau \chi_D) \vp, \vp)_{Y_\infty}} \,;\ \vp,\, \vp_i \in  \dom(\mc{M}_\ep(\alpha))\,, \ \vp \perp \vp_i\,,\   1 \le i \le n - 1  \Big\}\,,
\end{align}
which is a decreasing function in $\tau > 0$. By the min-max principle for self-adjoint operators \cite[Theorem XIII.1]{simon1978methods}, we have that for each $n$, either $\mu_n = \inf\{\lad\,;\ \lad \in \sigma_{ess}(\mc{M}_\ep(\alpha))\}$ and in this case $\mu_m = \mu_n$ for any $m \ge n$, or $\mu_n$ is the $n$th eigenvalue of $\mc{M}_\ep(\alpha)$ counting multiplicity with $\mu_n < \inf\{\lad\,;\ \lad \in \sigma_{ess}(\mc{M}_\ep(\alpha))\}$. 

In Theorem \ref{thm:ess_discre} below, we characterize the essential spectrum of the Maxwell operator
$\sigma_{ess}(\M_\ep(\alpha))$ and show the existence of discrete eigenvalues of $\M_\ep(\alpha)$ below $|\alpha|^2$, equivalently, the real resonances exist below the light line $|\alpha|$. For ease of exposition, we recall the Weyl's criterion for the essential spectrum \cite[Theorem 7.2]{hislop2012introduction}: $\lad \in \sigma_{ess}(\M_\ep(\alpha))$ if and only if there exists a \emph{Weyl sequence} $\{\vp_k\}_{k \ge 1} \subset \dom(\M_\ep(\alpha))$ associated with $\lad$, which satisfies $\norm{\vp_k}_{\ep, Y_\infty} = 1$ and that when $k \to \infty$, $\vp_k \to 0$ weakly and $\norm{(\M_\ep(\alpha) - \lad) \vp_k}_{\ep, Y_\infty} \to 0$.

\begin{theorem} \label{thm:ess_discre} Let $\mc{M}_{\ep}(\alpha)$, $\alpha \in \mc{B}$,  be the self-adjoint positive operator defined in \eqref{def:res_maxwell}. Then it holds that 
 \begin{enumerate}
     \item $\sigma_{ess}(\M_\ep(\alpha)) = [|\alpha|^2,+\infty)$.
     \item $\mu_1(\tau) < |\alpha|^2$ for any $\tau > 0$ and $\alpha \neq 0$, that is, 
     there exists a discrete eigenvalue of $\M_\ep(\alpha)$ below $|\alpha|^2$.
 \end{enumerate}
\end{theorem}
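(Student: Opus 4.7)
The proof decomposes along the two assertions of Theorem \ref{thm:ess_discre}. For the inclusion $[|\alpha|^2, +\infty) \subseteq \sigma_{ess}(\M_\ep(\alpha))$ in part (1), I construct a Weyl sequence. Fix $\lad \geq |\alpha|^2$, set $\beta := \sqrt{\lad - |\alpha|^2} \geq 0$ and $a := (-\alpha_2, \alpha_1, 0)/|\alpha|$, so that $a \dd \alpha = 0$ and $a_3 = 0$. Let $\chi_k \in C^\infty_c(\R)$ be a smooth cutoff supported in $[k, k + L_k]$ with $L_k \to \infty$, so that $\mathrm{supp}(\chi_k)$ is eventually disjoint from $Y_h$. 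The ansatz $\vp_k(x) := c_k \chi_k(x_3) e^{i(\alpha \dd x' + \beta x_3)} a$, normalized by $c_k$ so that $\|\vp_k\|_{\ep, Y_\infty} = 1$, is pointwise divergence-free and lives where $\ep = 1$. A direct application of $(\ccurl)^2 = -\Delta + \na \ddiv$ yields $(\M_\ep(\alpha) - \lad) \vp_k = -c_k(\chi_k'' + 2 i \beta \chi_k') e^{i(\alpha \dd x' + \beta x_3)} a$, whose ${\bf L}^2_{\alpha, \ep}$-norm tends to $0$ for cutoffs with long plateaus and smooth tails. Since $\vp_k \rightharpoonup 0$ weakly as its support drifts to infinity, Weyl's criterion gives $\lad \in \sigma_{ess}(\M_\ep(\alpha))$.

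For the reverse inclusion, fix $\lad < |\alpha|^2$. Since $|\alpha|$ is the infimum of $|\alpha + q|$ over $q \in \Lad^*$ (cf.\ \eqref{eq:alphaetc}), every $\beta_q$ in the Rayleigh--Bloch expansion \eqref{eq:RBexpansion} is purely imaginary with $|\beta_q| \geq \sqrt{|\alpha|^2 - \lad}$, forcing uniform exponential decay for $|x_3| > h$. Given a bounded sequence $\{\vp_k\} \subset \dom(\M_\ep(\alpha))$ with $(\M_\ep(\alpha) - \lad) \vp_k \to 0$ in ${\bf L}^2_{\alpha, \ep}(Y_\infty)$, the external part $\vp_k|_{Y_\infty \setminus Y_h}$ is controlled by its traces on $\Sigma_{\pm h}$ via $\ms{T}^{\mc{E}}$ and decays uniformly, while the internal part $\vp_k|_{Y_h}$ lies in ${\bf H}_\alpha(\ccurl, Y_h) \cap {\bf H}_\alpha(\ep; \ddiv 0, Y_h)$, which embeds compactly into ${\bf L}^2(Y_h)$. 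A strongly convergent subsequence thus extracted contradicts the Weyl-sequence hypotheses $\vp_k \rightharpoonup 0$ and $\|\vp_k\|_\ep = 1$.

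For part (2), I apply the variational characterization \eqref{minmaxeig} and exhibit a test field with Rayleigh quotient strictly below $|\alpha|^2$. Fix $\rho \in C^\infty_c(\R)$ with $\rho(0) \neq 0$, and for small $\eta > 0$ set $\vp^{(\eta)}(x) := \rho(\eta x_3) e^{i \alpha \dd x'} a$ with $a$ as above. Then $\ddiv \vp^{(\eta)} \equiv 0$ pointwise and direct computations give
\begin{align*}
\|\ccurl \vp^{(\eta)}\|^2 = \eta \|\rho'\|^2_\R + |\alpha|^2 \eta^{-1} \|\rho\|^2_\R, \qquad \|\vp^{(\eta)}\|^2_\ep = \eta^{-1} \|\rho\|^2_\R + \tau \int_D |\rho(\eta x_3)|^2 dx.
\end{align*}
Since $\ddiv(\ep \vp^{(\eta)}) = -\tau (\n \dd \vp^{(\eta)}) \d_{\p D} \neq 0$, this ansatz does not lie in $\dom(\M_\ep(\alpha))$; I therefore decompose $\vp^{(\eta)} = \na \phi^{(\eta)} + \w\vp^{(\eta)}$ with $\w\vp^{(\eta)} \in {\bf H}_\alpha(\ep; \ddiv 0, Y_\infty)$ and $\phi^{(\eta)} \in H^1_\alpha(Y_\infty)$ solving $\ddiv(\ep \na \phi^{(\eta)}) = \tau \ddiv(\chi_D \vp^{(\eta)})$, so that $\ccurl \w\vp^{(\eta)} = \ccurl \vp^{(\eta)}$ and the $\ep$-orthogonality of the decomposition yields $\|\w\vp^{(\eta)}\|^2_\ep = \|\vp^{(\eta)}\|^2_\ep - \|\na \phi^{(\eta)}\|^2_\ep$.

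The crucial estimate is
\begin{align*}
\|\na \phi^{(\eta)}\|^2_\ep \leq \frac{\tau^2}{1+\tau} \|\vp^{(\eta)}\|^2_{L^2(D)},
\end{align*}
obtained by testing the weak form $(\ep \na \phi^{(\eta)}, \na \psi) = \tau (\vp^{(\eta)}, \na \psi)_D$ against $\psi = \phi^{(\eta)}$, applying Cauchy--Schwarz in $L^2(D)$, and using $\ep \geq 1 + \tau$ on $D$. Since $\tau/(1+\tau) < 1$ for any $\tau > 0$, this produces $\|\w\vp^{(\eta)}\|^2_\ep \geq \eta^{-1} \|\rho\|^2_\R + \frac{\tau}{1+\tau} \int_D |\rho(\eta x_3)|^2 dx$, and expanding the Rayleigh quotient as $\eta \to 0^+$ (using $\int_D |\rho(\eta x_3)|^2 dx \to |D| \rho(0)^2$) yields
\begin{align*}
\frac{\|\ccurl \w\vp^{(\eta)}\|^2}{\|\w\vp^{(\eta)}\|^2_\ep} \leq |\alpha|^2 - \eta |\alpha|^2 \frac{\tau |D| \rho(0)^2}{(1+\tau) \|\rho\|^2_\R} + O(\eta^2) < |\alpha|^2
\end{align*}
for all sufficiently small $\eta > 0$. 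The principal obstacle, absent in the scalar Helmholtz case of \cite{bonnet1994guided}, is that the natural separable ansatz fails to be $\ep$-divergence-free and must be projected onto ${\bf H}_\alpha(\ep; \ddiv 0, Y_\infty)$; the strict inequality $\tau^2/(1+\tau) < \tau$ for $\tau > 0$ is exactly what preserves the attractive ``potential well'' effect of the high-index region $D$ after projection, uniformly in $\tau$.
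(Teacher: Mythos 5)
Your Weyl-sequence construction for $[|\alpha|^2,\infty)\subseteq\sigma_{ess}(\M_\ep(\alpha))$ and your proof of part (2) are both sound. In fact the choice $a=(-\alpha_2,\alpha_1,0)/|\alpha|$ (so that the ansatz is exactly divergence-free and supported where $\ep=1$) is a small simplification over the paper, which uses a general polarization $p\perp d$ and must therefore add a corrective gradient field via the elliptic solvability lemma; you only need to add the remark that for $\alpha=0$ one takes, say, $a=e_1$. For part (2) your test field is essentially the paper's ($p=(\alpha_2,-\alpha_1,0)$ times a slowly varying profile in $x_3$, with $\eta\leftrightarrow 1/h$), and the two proofs differ only in how the $\ep$-divergence-free correction is controlled: the paper proves (Lemma \ref{auxlem1}) that the corrected field's \emph{weighted} norm exceeds the original's \emph{unweighted} norm by an $h$-independent positive amount, whereas you bound the energy of the correction directly by $\norm{\na\phi^{(\eta)}}^2_\ep\le\frac{\tau^2}{1+\tau}\norm{\vp^{(\eta)}}^2_{L^2(D)}$ and observe that $\tau-\frac{\tau^2}{1+\tau}=\frac{\tau}{1+\tau}>0$ survives. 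Both give the same $O(\eta)$ gain in the denominator against the $O(\eta^2)$ loss in the numerator, and your estimate is correct.

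The genuine gap is in the reverse inclusion $\sigma_{ess}(\M_\ep(\alpha))\subseteq[|\alpha|^2,\infty)$. First, you invoke the Rayleigh--Bloch expansion and the operator $\ms{T}^{\mc{E}}$ to conclude uniform exponential decay of $\vp_k$ for $|x_3|>h$; but a Weyl sequence only satisfies $(\M_\ep(\alpha)-\lad)\vp_k=r_k$ with $r_k\to 0$ in ${\bf L}^2$, not the \emph{homogeneous} exterior equation, so the modal representation \eqref{eq:RBexpansion} and the DtN calculus do not apply to $\vp_k$ as stated; at best one gets an estimate for the exterior $L^2$-mass in terms of $\norm{r_k}$ \emph{and} uncontrolled trace terms on $\Sigma_{\pm h}$. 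Second, and more seriously, the claimed compact embedding of ${\bf H}_\alpha(\ccurl,Y_h)\cap{\bf H}_\alpha(\ep;\ddiv 0,Y_h)$ into ${\bf L}^2(Y_h)$ is false without a boundary condition on $\Sigma_{\pm h}$: Weber's local compactness theorem (the result the paper cites) requires either $\nu\times\psi=0$ or $\nu\cdot(\ep\psi)=0$ on the boundary, and without it the gradients of harmonic functions already form a bounded, non-precompact family in $H(\ccurl)\cap H(\ddiv 0)$. So the extraction of a strongly convergent subsequence does not follow from what you have written.

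The paper's route avoids both problems and is worth internalizing: it first shows $\sigma_{ess}(\M)\subseteq[|\alpha|^2,\infty)$ for the free operator ($\ep\equiv1$) by the purely Fourier-side inequality $\int_{Y_\infty}|\na E|^2=\int_{Y_\infty}|\ccurl E|^2+|\ddiv E|^2\ge|\alpha|^2$ for normalized divergence-free $\alpha$-quasi-periodic fields, and then proves $\sigma_{ess}(\M_\ep)\subseteq\sigma_{ess}(\M)$ by multiplying a Weyl sequence by a cutoff $\theta$ vanishing on $Y_h$: the complementary piece $(1-\theta)\w E_n$ \emph{does} have vanishing tangential trace on $\Sigma_{\pm 2h}$, so Weber's theorem legitimately applies to it, and the commutator terms are handled by interior elliptic regularity. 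To repair your argument you would need to either adopt this cutoff-and-compare strategy or supply (i) a quantitative exterior decay estimate valid for approximate eigenfunctions, with control of the boundary pairing on $\Sigma_{\pm h}$, and (ii) a compactness statement for the interior restriction that incorporates the resulting trace control.
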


\begin{proof} For the first statement, let $\lad \ge |\alpha|^2$ and $\psi$ be a $C^\infty$-compactly supported function on $Y_\infty$  satisfying $\psi(x_3) = 0$ for $|x_3| \le h$ and $\int_\R \psi(x_3)^2 \, d x_3 = 1$, where $h$ is suitably large such that $D$ is included in $Y_h$. We consider the following sequence of vector fields:
\begin{align*}
   \w{E}_n = \frac{1}{\sqrt{n}} \psi\big(\frac{x_3}{n}\big) p e^{id  \dd x}  \q \text{with} \ p, d = (\alpha,\sqrt{\lad - |\alpha|^2}) \in \R^3,\ |p| = 1 \ \text{and} \ p \dd d = 0\,. 
\end{align*}
It is easy to see that $\norm{\w{E}_n}_{Y_\infty} = 1$ and $\w{E}_n$ weakly converges to zero in ${\bf L}^2_{\alpha,\ep}(Y_\infty)$. Moreover, by a direct computation and vector calculus identities, we find 
\begin{align*}
      (\mc{M}_\ep - \lad)[\w{E}_n] & = \na (\ddiv \w{E}_n) - \Delta \w{E}_n - \lad \w{E}_n \\
      & = \na \Big(\frac{1}{n \sqrt{n}} \psi'\big(\frac{x_3}{n}\big)p_3 e^{i d \dd x} \Big) - \frac{1}{\sqrt{n}}p \Big(\frac{1}{n^2} \psi''\big(\frac{x_3}{n}\big)e^{i d \dd x} + 2 \frac{i d_3}{n}\psi'\big(\frac{x_3}{n}\big) e^{i d \dd x} \Big)\,,
\end{align*}
which tends to zero as $n \to \infty$. To obtain a Weyl sequence, we add to $\w{E}_n$ a corrective gradient field $\na p_n \in {\bf L}^2_\alpha(Y_\infty)$ such that $\ddiv(\w{E}_n + \ep \na p_n) = 0$, which is equivalent to solving the elliptic equation $- \ddiv(\ep \na p_n) = \ddiv \w{E}_n$. By Lemma \ref{lem:app1}, it is uniquely solvable in the weighted Sobolev space $H_\alpha^{1,-1}(Y_\infty)/\R$ and there holds 
\begin{align*}
\norm{\na p_n}_{Y_\infty} \lesssim \norm{\ddiv \w{E}_n}_{Y_\infty} = \frac{|p_3|}{n} \big(\int_\R |\psi'|^2 \ dx\big)^{1/2} \to 0\,, \q \text{as}\ n \to \infty\,.  
\end{align*}
We then define $E_n := (\w{E}_n + \na p_n)/\norm{\w{E}_n + \na p_n}_{Y_\infty}$. It is straightforward to check that $\{E_n\}\subset \dom(\mc{M}_\ep(\alpha))$ is a Weyl sequence for $\lad$, which implies $[|\alpha|^2, + \infty) \subset \si_{ess}(\mc{M}_\ep(\alpha))$.

To show the reverse direction, we will prove the following two claims: 
\begin{align} \label{first_sta_2}
\sigma_{ess}(\M) \subset [|\alpha|^2, + \infty) \q \text{and} \q \sigma_{ess}(\M_\ep) \subset \sigma_{ess}(\M)\,,    
\end{align}
where $\mc{M}$ is defined as $\mc{M}_\ep$ with $\ep = 1$ (i.e., $\tau = 0$). For the first one, suppose that $\{E_n\}$ is a Weyl sequence associated with $\lad \in \si_{ess}(\mc{M})$, which readily implies
\begin{align} \label{essen_auxeq_1}
    ((\mc{M} - \lad)[E_n], E_n)_{Y_\infty} = \norm{\curl E_n}^2_{Y_\infty} - \lad \to 0\,, \q \text{as} \ n \to \infty\,.
\end{align}
Note that $\ddiv E_n = 0$ and $E_n$, as a quasi-periodic function, admits the expansion: $E_n(x) = \sum_{q \in \Lambda^*} E_{n,q}(x_3) e^{i(\alpha + q)\dd x}$. We can estimate, thanks to \eqref{eq:alphaetc}, 
\begin{align*}
    \int_{Y_\infty} |\curl E_n|^2 + |\ddiv E_n|^2\ dx = \int_{Y_\infty} |\na E_n|^2\ dx \ge |\alpha|^2\,, 
\end{align*}
which, by \eqref{essen_auxeq_1}, gives $\lad \ge |\alpha|^2$.

We next show the second claim. Letting $\lad \in \si_{ess}(\mc{M}_\ep)$, it suffices to construct a Weyl sequence associated with $\lad$ but for the operator $\mc{M}$. By abuse of notation, suppose that $\{\w{E}_n\}$ is a Weyl sequence corresponding to $\lad \in \si_{ess}(\mc{M}_\ep)$. 
We define a smooth cutoff function $\theta$ on $\R$ such that $\theta(x_3) = 0$ on $(-h,h)$ and $\theta(x_3) = 1$ on $\R\backslash (-2h,2h)$. It is clear that the sequence $ \theta \w{E}_n$ weakly converges to zero in ${\bf L}^2_\alpha(Y_\infty)$ and so does $(1 - \theta) \w{E}_n$. Noting that $1 - \theta$ is a $C^\infty$ function compactly supported in $Y_{2h}$, we can check that $(1 - \theta) \w{E}_n$ is a bounded sequence in the space:
\begin{align*}
    V = \{\psi \in {\bf L}^2_\alpha(Y_{2h})\,;\ \ccurl \psi \in {\bf L}^2_\alpha(Y_{2h}),\ \ddiv(\ep\psi) \in {\bf L}^2_\alpha(Y_{2h}),\  \n \t \psi = 0\ \text{on}\ \Sigma_{\pm 2h}\}\,,
\end{align*}
which is compactly embedded into the space ${\bf L}^2_\alpha(Y_{2h})$, by the classical result of Weber \cite{weber1980local} (see also \cite[Appendix B]{hazard1996solution}). As a consequence, we have, by the weak convergence of $(1 - \theta) \w{E}_n$,
\begin{align*}
\norm{(1 - \theta) \w{E}_n}_{\ep, Y_\infty} \to 0\,,\q \text{as}\ n \to \infty\,.  
\end{align*}
It follows from $\norm{\w{E}_n}_{\ep, Y_\infty} = 1$ that $\norm{\theta \w{E}_n}_{\ep,Y_\infty} \to 1$. We next check 
\begin{align} \label{midclaim}
    (\M - \lad) [\theta\w{E}_n] \to 0\,, \q \text{as}\ n \to \infty\,.
\end{align}
Indeed, $\p_j \theta$, for $j = 1,2,3$, is $C^\infty$-smooth on $\R$ and compactly supported in the closure of $Y_{2h} \backslash Y_h$. Then, a direct computation gives
\begin{align} \label{eqq1}
\curl \curl (\theta \w{E}_n)  & = \curl (\na \theta \t \w{E}_n + \theta \curl \w{E}_n) \notag \\
& = \na \theta \ddiv \w{E}_n - \w{E}_n \Delta \theta + (\w{E}_n \dd \na)(\na \theta) - (\na \theta\dd \na )\w{E}_n + \na \theta \t (\curl \w{E}_n) + \theta \curl \curl \w{E}_n\,.
\end{align}
Since $\w{E}_n$ is a Weyl sequence for $\mc{M}_\ep$, we have $\na \theta \ddiv \w{E}_n = 0$ and as $n \to \infty$, 
\begin{align} \label{eqq2}
     \norm{\theta \curl \curl \w{E}_n - \lad \theta \w{E}_n}_{Y_\infty} = \norm{\theta (\mc{M}_\ep - \lad)[\w{E}_n]}_{Y_\infty} \to 0\,.
\end{align}
Moreover, by the standard interior regularity for Maxwell's equation, we see that $\w{E}_n$ is bounded in ${\bf H}_\alpha^2(Y_{2h}\backslash Y_h)$. Therefore, it follows that
\begin{align} \label{auxeq_ess_2}
     \norm{- \w{E}_n \Delta \theta + (\w{E}_n \dd \na)(\na \theta) - (\na \theta\dd \na )\w{E}_n + \na \theta \t (\curl \w{E}_n)}_{Y_\infty} \to 0\,, \q \text{as}\ n \to \infty\,, 
\end{align}
thanks to the compact embedding from ${\bf H}_\alpha^1(Y_{2h}\backslash Y_h)$ to ${\bf L}_\alpha^2(Y_{2h}\backslash Y_h)$. Collecting \eqref{eqq1}--\eqref{auxeq_ess_2} gives \eqref{midclaim}. 
Again, to obtain the desired Weyl sequence, we need to consider a corrective $L^2$-field $\na p_n$ defined by the equation $$ - \Delta p_n = \ddiv (\theta \w{E}_n)\,.$$ We compute $\ddiv (\theta \w{E}_n) = \na \theta \dd \w{E}_n$ and find $\norm{\ddiv (\theta \w{E}_n)}_{Y_\infty} \to 0$, similarly to \eqref{auxeq_ess_2}, which, by Lemma \ref{lem:app1}, also implies $\norm{\na p_n}_{Y_\infty} \to 0$. Then, it is easy to see that $\lad \in \sigma_{ess}(\mc{M})$ and
$E_n := (\theta\w{E}_n + \na p_n)/\norm{\theta\w{E}_n + \na p_n}_{Y_\infty}$ is an associated Weyl sequence. We hence have proved \eqref{first_sta_2} and showed $\si_{ess}(\mc{M}_\ep(\alpha)) = [|\alpha|^2, + \infty)$.

For the second statement, we consider $\w{E} := f(x_3) p e^{i\alpha\dd x}$ with $p := (\alpha_2,-\alpha_1,0)$ and 
\begin{align*}
  f(x_3) := 
\begin{cases}
      1 \,, \q &  \text{if} \ |x_3| < h\,, \\
 2 - h^{-1}|x_3|\,,\q  &  \text{if} \ h < |x_3| < 2 h\,,\\
   0 \,, \q &  \text{if} \ |x_3| > 2 h\,,
\end{cases}
\end{align*}
where $h > 0$ is large enough so that $D \subset Y_h$.  Clearly, there holds $\ddiv\w{E} = 0$. We then define $p \in H_\alpha^{1,-1}(Y_\infty)/\R$, for a given $\ep$ of the form \eqref{assp:high_constrast}, by $\ddiv(\ep(\na p + \w{E})) = 0$, and let $E: = \w{E} + \na p$, which belongs to the domain $\dom(\mc{M}_\ep(\alpha))$. We note that $\ddiv(\ep \w{E}) = \tau \ddiv(\chi_D \w{E})$ is independent of $h$, hence so is the function $p$. 
We proceed to compute 
\begin{align*}
    \curl \w{E} = (f'(x_3) \alpha_1, f'(x_3) \alpha_2, - i |\alpha|^2 f(x_3)) e^{i \alpha \dd x}
\end{align*}
and 
\begin{align*}
    \frac{(\curl E, \curl E)_{Y_\infty}}{(\ep E, E)_{Y_\infty}} &=  \frac{(\curl \w{E}, \curl \w{E})_{Y_\infty}}{(\ep E, E)_{Y_\infty}} =  \frac{(\curl \w{E}, \curl \w{E})_{Y_\infty} - |\alpha|^2 (\ep E, E)_{Y_\infty}}{(\ep E, E)_{Y_\infty}} + |\alpha|^2\,,
\end{align*}
which yields 
\begin{align} 
    &(\curl \w{E}, \curl \w{E})_{Y_\infty} - |\alpha|^2 (\ep E, E)_{Y_\infty}  \notag \\
    = & \int_\R |\alpha|^2(f'(x_3)^2 + |\alpha|^2 f(x_3)^2) \ d x_3 - |\alpha|^2 \norm{E}^2_{\ep, Y_\infty} \notag \\
    = & 2 |\alpha|^2 h^{-1} - |\alpha|^2 \big(\norm{E}^2_{\ep, Y_\infty} - \norm{\w{E}}^2_{Y_\infty}\big)\,.
    \label{eq:aux1}
\end{align}
By Lemma \ref{auxlem1}, the difference $ \norm{E}^2_{ \ep, Y_\infty} - \norm{\w{E}}^2_{Y_\infty}$ is strictly positive and independent of $h$, and
\begin{align*}
    \norm{E}^2_{ \ep, Y_\infty} - \norm{\w{E}}^2_{Y_\infty} \ge \norm{\na p}_{Y_\infty} \norm{\w{E}}_D\,.
\end{align*}
Therefore, for large enough $h$, we conclude from \eqref{eq:aux1} that 
\begin{align*}
    (\curl E, \curl E)_{Y_\infty} - |\alpha|^2 (\ep E, E)_{Y_\infty} < 0\,,
\end{align*}
which, by the min-max representation \eqref{minmaxeig}, means $\mu_1(\tau) < |\alpha|^2$. 
\end{proof}

\subsection{Symmetry of band functions and multiplicity} \label{sec:sym_band_multi}

In this section, we consider dielectric metasurfaces with certain symmetries, and derive some consequences on the symmetry properties of 
the band functions $\ww_j(\alpha)$ and Bloch modes, and the multiplicity of scattering resonances. These results will be useful in Section \ref{sec:exist_real}. 

We first note from \eqref{assp:high_constrast} and \eqref{model} that the symmetry of the scattering problem is completely determined by the symmetry group of the metasurface $\D$ (i.e., the set of isometries that carry $\D$ to itself). Here, an isometry of $\R^d$ is a distance-preserving map from $\R^d$ to itself, which can be uniquely written as the composition of a translation and an orthogonal linear map; see \cite[Corollary 6.2.7]{artin2011algebra}. We denote by $M_d$ the set of all isometries of $\R^d$. Since this work only considers a single layer of dielectric nanoparticles, for the sake of simplicity, we will not discuss the symmetry of $\D$ along the $z$-direction. Thus, the symmetry group of $\D$ reduces to the well-investigated plane symmetry group. To be precise, let $g$ be a two-dimensional isometry. By abuse of notation, we still denote by 
$g$ its three-dimensional extension: $g x = (g x',x_3)$ for $x \in \R^3$. Then we
 define the symmetry group of $\D$ as follows:
\begin{align*}
  \mc{G} := \{g\,;\  g\  \text{is a two-dimensional isometry and}\ g(\D) = \D\}\,, 
\end{align*}
which is a discrete group, according to \cite[Definition 6.5.1]{artin2011algebra} and the structure \eqref{def:D} of $\mc{D}$. Moreover, there are two important subgroups associated with $\mc{G}$: the translation group $\mc{L}$ and the point group $\bar{\mc{G}}$. The group $\mc{L}$ is simply defined as the subgroup of $\mc{G}$ of the translations. Recalling that an isometry $g$ of $\R^2$ is uniquely decomposed as $g = t_a \bar{g}$, 
where $\bar{g} \in O_2$ and $t_a$ is a translation: $t_a x = x + a$ for $x \in \R^2$, we can introduce a homomorphism $\pi: M_2 \to O_2$ by $\pi(g) = \bar{g}$. Here $O_d$ denotes the group of orthogonal matrices on $\R^d$.

The point group $\bar{\mc{G}}$ is then defined as the image of $\mc{G}$ under the map $\pi$, which, by \cite[Proposition 6.5.10]{artin2011algebra}, is a finite group. We remark that the point group $\bar{\gr}$ is generally not a subgroup of $\gr$. For ease of analysis,  
in what follows, we consider the following mild assumption on the translation group $\mc{L}$: 
\begin{align} \label{assp:lattice}
\text{the lattice}\ \Lambda \ \text{gives all the translation vectors in} \ \mc{L}\,.
\end{align}
By \cite[Proposition 6.5.11]{artin2011algebra}, along with the assumption \eqref{assp:lattice} and the definition \eqref{def:dual_lattice}, we have 
\begin{equation} \label{eq:rela_point_trans}
\bar{g}(\Lambda) = \Lambda \q \text{and} \q \bar{g}(\Lambda^*) = \Lambda^*\,,
\end{equation}
that is, the group $\bar{\mc{G}}$ is contained in the symmetry group of $\Lambda$ and $\Lambda^*$. One may also observe 
that the symmetry group $\gr$ implicitly enforces some symmetry conditions on the shape of $D$ in the sense that up to a translation, it holds that  $\bar{g} \in \bar{\gr}$ maps $D$ to $D$, i.e., $\bar{g}(D) = D + (b, 0)$ for some $b \in \R^2$.

We next consider the invariance of the operator $\td$ under the action of the symmetry group $\gr$ of $\D$. We introduce the quasi-periodic extension of the function space ${\bf L}^2(D)$: 
\begin{align*}
{\bf L}^2_{\alpha}(D) := \Big\{
\sum_{\gamma \in \Lad} (f\chi_D)(x - (\gamma,0)) e^{i \alpha\dd \gamma}\,,\ f \in {\bf L}^2(D)\Big\} \subset {\bf L}^2_\alpha(Y_\infty)\,. 
\end{align*}
By definition of $\td$ and ${\bf L}^2_\alpha(D)$, we readily see 
\begin{align} \label{auxeqq_1}
    \td\big[\vp|_D\big] = \TT_{D + (\gamma,0)}^{\alpha,\ww} \big[\vp|_{D + (\gamma,0)}\big]\,, \q \forall \gamma \in \Lad\,, \ \vp \in {\bf L}^2_\alpha(D)\,.
\end{align} 
Then, we define the group action $\mc{O}$ of $\mc{G}$ on the vector fields $\vp$ by  
\begin{equation} \label{eq:quasiregular}
    \mc{O}_g [\vp] := \bar{g} \vp (g^{-1} x)\,,\q \text{for}\   g \in \mc{G}\,.
\end{equation}
Note that $g^{-1}(x + \gamma) = g^{-1}x + \bar{g}^{-1}\gamma$ for any $x, \gamma \in \R^2$, as $g$ is an invertible affine map. Thus, by the relation \eqref{eq:rela_point_trans},  it holds that, for vector fields $\vp$ with the quasi-periodicity $\alpha \in \mc{B}$, 
\begin{align}\label{eq:transog}
    \mc{O}_g[\vp](x + \gamma) = \bar{g} \vp(g^{-1} x+ \bar{g}^{-1}\gamma) =  \mc{O}_g[\vp](x) e^{i \alpha \dd \bar{g}^{-1}\gamma} =  \mc{O}_g[\vp](x) e^{i \bar{g}[\alpha] \dd \gamma}\,.
\end{align}
This means that $\mc{O}_g$ is an isometric isomorphism from ${\bf L}^2_\alpha(Y_\infty)$ to ${\bf L}^2_{\bar{g}[\alpha]}(Y_\infty)$. With the above preparation, we have the following result. 
\begin{lemma} \label{lem:symtd}
    For $\vp \in {\bf L}^2_\alpha(D)$, it holds that 
    \begin{align} \label{eq:symtd}
       \TT_D^{\bar{g}[\alpha],\ww}[\mc{O}_g \vp] = \mc{O}_g\td[\vp]\,.
    \end{align}
\end{lemma}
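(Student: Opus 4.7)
The plan is to reduce the vector identity \eqref{eq:symtd} to an analogous identity for the scalar vector potential $\mc{K}_D^{\alpha,\ww}$, which in turn rests on an equivariance property of the quasi-periodic Green's function under the point-group action. Concretely, I would first prove
\[
G^{\alpha,\ww}(\bar g^{-1} w) = G^{\bar g[\alpha],\ww}(w)\,, \qquad w \in \R^3\,,
\]
where $\bar g$ is trivially extended to $\R^3$ by $\bar g(w',w_3) = (\bar g w', w_3)$. Applying $-(\Delta + \ww^2)$ to the left-hand side and invoking the isometry-invariance of $\Delta$ and of the Dirac mass, the source in \eqref{eq:qpeq} becomes $\sum_{\gamma \in \Lambda}\d(x - (\bar g\gamma,0)) e^{i\alpha\cdot\gamma}$. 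Re-indexing via the bijection $\gamma \mapsto \bar g\gamma$ of $\Lambda$ (legitimate by \eqref{eq:rela_point_trans}) and using $\bar g^T = \bar g^{-1}$ converts this into the defining source of $G^{\bar g[\alpha],\ww}$, so uniqueness of the outgoing solution yields the claim.

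Next I would record two elementary commutation facts. Since $\mc{O}_g F(x) = \bar g F(g^{-1}x)$ pulls back the spatial argument while multiplying the value by $\bar g$, a direct chain-rule computation gives $\ddiv\, \mc{O}_g F = (\ddiv F)\circ g^{-1}$ and $\nabla(f \circ g^{-1}) = \mc{O}_g (\nabla f)$; hence the operator $\ww^2 + \nabla\,\ddiv$ commutes with $\mc{O}_g$. Second, the cell-quasi-periodicity of the Green's function, $G^{\alpha,\ww}(x-(\gamma,0)) = e^{-i\alpha\cdot\gamma}G^{\alpha,\ww}(x)$ for $\gamma \in \Lambda$, yields the $\mc{K}$-analog of \eqref{auxeqq_1}: for every $\psi \in {\bf L}^2_\alpha(D)$ and $\gamma \in \Lambda$,
\[
\mc{K}_D^{\alpha,\ww}[\psi|_D] = \mc{K}_{D + (\gamma,0)}^{\alpha,\ww}[\psi|_{D+(\gamma,0)}]\,.
\]

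With these tools the identity follows by a change of variables. Since $g \in \gr$ must permute the connected components of $\D$, there is $\gamma_0 \in \Lambda$ with $g(D) = D + (\gamma_0, 0)$. Substituting $y = g^{-1} z$ in $\mc{K}_D^{\alpha,\ww}[\vp](g^{-1}x)$ and using the affine structure $g^{-1}x - g^{-1}z = \bar g^{-1}(x - z)$ together with the Green's function equivariance established above gives
\[
\mc{K}_D^{\alpha,\ww}[\vp](g^{-1}x) = \int_{D + (\gamma_0,0)} G^{\bar g[\alpha],\ww}(x-z)\, \vp(g^{-1}z)\, dz\,.
\]
Replacing $\vp(g^{-1}z) = \bar g^{-1}(\mc{O}_g\vp)(z)$, multiplying by $\bar g$, and invoking the shifted-cell identity for $\mc{K}$ applied to $\mc{O}_g \vp \in {\bf L}^2_{\bar g[\alpha]}(D)$, I obtain $\mc{O}_g \mc{K}_D^{\alpha,\ww}[\vp] = \mc{K}_D^{\bar g[\alpha],\ww}[\mc{O}_g\vp|_D]$. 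Applying $\ww^2 + \nabla\,\ddiv$ to both sides and commuting it past $\mc{O}_g$ then produces \eqref{eq:symtd}.

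The main bookkeeping obstacle is handling the two distinct ways $g$ acts -- as a spatial pullback and as an $O(3)$ rotation on vector values -- while threading the quasi-periodic phase correctly through both the lattice shift $\gamma_0$ and the momentum change $\alpha \mapsto \bar g[\alpha]$. The Green's function equivariance is precisely the ingredient that absorbs the would-be spurious phase factors, so that the change of variables cleanly transports the $\alpha$-quasi-periodic problem on $D$ to the $\bar g[\alpha]$-quasi-periodic problem on $D$.
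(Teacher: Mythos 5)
Your proposal is correct and follows essentially the same route as the paper: a change of variables in the integral, the equivariance $G^{\alpha,\ww}(\bar g^{-1}w)=G^{\bar g[\alpha],\ww}(w)$, the commutation of $\ww^2+\na\ddiv$ with $\mc{O}_g$, and the lattice-shift identity to return from $g(D)$ to $D$. The only (harmless) variation is that you derive the Green's function equivariance from the PDE characterization \eqref{eq:qpeq} and uniqueness of the outgoing solution, whereas the paper verifies it directly on the explicit Fourier representation \eqref{eq:qpgreen} by re-indexing the sum over $\Lambda^*$.
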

\begin{proof}
By definition,  we first have 
    \begin{align*}
        \TT_D^{\bar{g}[\alpha],\ww}[\mc{O}_g \vp] & = (\ww^2 + \na \ddiv)\int_{D} G^{\bar{g}[\alpha],\ww}(x,y) 
        \bar{g} \vp (g^{-1} y)\ dy  \\
      &  = (\ww^2 + \na \ddiv)\bar{g} \int_{g^{-1}(D)} G^{\bar{g}[\alpha],\ww}(x,g y) 
      \vp (y)\ dy\,.
    \end{align*}
We then calculate, by writing $g = t_a \bar{g}$ and noting that $\bar{g}$ maps $\Lad^*$ to itself (cf.\,\eqref{eq:rela_point_trans}), 
    \begin{align*}
        G^{\bar{g}[\alpha],\ww}(x,g y) & = \frac{1}{2}\sum_{q \in \Lad^*} i \beta_q^{-1} e^{i (\bar{g}[\alpha + q])\dd (x' - g y')} e^{i \beta_q |x_3 - y_3|}\\
& =  \frac{1}{2}\sum_{q \in \Lad^*} i \beta_q^{-1} e^{i (\alpha + q)\dd (\bar{g}^{-1} (x' - a) - y')} e^{i \beta_q |x_3 - y_3|} \\
& = G^{\alpha,\ww}(g^{-1}x, y)\,.
    \end{align*} 
    We also find, by direct computation, for a smooth vector field $\vp$, 
    \begin{align*}
        \na \ddiv \mc{O}_g \vp =  \na (\ddiv \vp)(g^{-1} x) = \bar{g} (\na \ddiv \vp)(g^{-1} x) = \mc{O}_g [\na \ddiv\vp]\,.
     \end{align*}
Hence, we can conclude, from the above formulas,
     \begin{align*}
        \TT_D^{\bar{g}[\alpha],\ww}[\mc{O}_g \vp] & = (\ww^2 + \na \ddiv)\bar{g} \int_{g^{-1}(D)} G^{\alpha,\ww}(g^{-1}x, y)
      \vp (y) dy \\
      & = \mc{O}_g \TT^{\alpha,\ww}_{g^{-1}(D)}[\vp] = \mc{O}_g \TT^{\alpha,\ww}_{D}[\vp]\,,
    \end{align*} 
where the last equality follows from $g^{-1}(D) = D + (\gamma, 0)$ for some $\gamma \in \Lad$ and \eqref{auxeqq_1}. 
\end{proof}

As a consequence of Lemma \ref{lem:symtd}, if $(\ww_*, \vp_*)\in \C \t {\bf L}_\alpha^2(D)$ is an eigenpair of $\A(\alpha,\ww)$, then we have 
\begin{align} \label{eq:inva_eig}
\A(\bar{g}[\alpha],\ww_*)[\mc{O}_g \vp_*] = \mc{O}_g\A(\alpha,\ww_*)[\vp_*] = 0\,,
\end{align}
and equivalently, the following corollary holds (note from the definition \eqref{def:bri_zone} that $\bar{g}$ maps $\mc{B}$ to itself for any $\bar{g} \in \bar{G}$). 

\begin{corollary} \label{coro:symband}
The band dispersion function $\ww_j(\alpha)$ for each $j \ge 0$ has the following symmetry: 
  \begin{align*}
      \ww_j(\bar{g}\alpha) = \ww_j(\alpha)\,,\q \forall \bar{g}\in \bar{\gr}\,, \ \alpha \in \mc{B}\,.
  \end{align*}
Moreover, suppose that $\vp \in {\bf L}_\alpha^2(D)$ is an eigenfunction of $\A(\alpha,\ww)$ for some resonance $\ww$ with the quasi-periodicity $\alpha$. Then $\mc{O}_g \vp$ is an eigenfunction for the same resonance but with the quasi-periodicity  $\bar{g}\alpha$.
\end{corollary}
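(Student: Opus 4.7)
The plan is to derive both statements as direct consequences of Lemma \ref{lem:symtd} together with the isometric isomorphism property of $\mc{O}_g$ between quasi-periodic $L^2$-spaces. First, I would upgrade the intertwining identity \eqref{eq:symtd} to the full operator $\A_\tau$: since $\mc{O}_g$ is linear and commutes with multiplication by the scalar $\tau^{-1}$, one immediately gets
\begin{equation*}
\A_\tau(\bar{g}[\alpha],\ww)[\mc{O}_g \vp] = \mc{O}_g\, \A_\tau(\alpha,\ww)[\vp]\,.
\end{equation*}
Before invoking this, I would verify that $\mc{O}_g$ respects the functional setting of $\A_\tau$, namely that it maps ${\bf H}(\ddiv 0, D)$ (viewed through its quasi-periodic extension in $\mathbf{L}^2_\alpha(D)$) isomorphically onto the analogous subspace at quasi-periodicity $\bar{g}[\alpha]$. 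This uses the chain rule identity $\ddiv(\bar{g}\vp(g^{-1}x)) = (\ddiv \vp)(g^{-1}x)$ valid for the orthogonal transformation $\bar{g}$, together with the fact that $g^{-1}(D) = D + (b,0)$ for some $b \in \Lambda$ as $g \in \mc{G}$, so that the quasi-periodic rearrangement in \eqref{auxeqq_1} applies.

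The second statement of the corollary then follows at once: given a nonzero $\vp \in \mathbf{L}^2_\alpha(D)$ with $\A_\tau(\alpha,\ww)[\vp]=0$, the isometric property of $\mc{O}_g$ guarantees $\mc{O}_g\vp \neq 0$, and by the above intertwining it solves $\A_\tau(\bar{g}[\alpha],\ww)[\mc{O}_g\vp]=0$ in $\mathbf{L}^2_{\bar{g}\alpha}(D)$, as claimed.

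For the first statement, I would argue through the characterization of $\{\ww_j(\alpha,\tau)\}_{j\ge 0}$ as the set $\Omega(\A_\tau(\alpha,\ww))|_{(\alpha,\tau)}$ of non-invertibility points of $\A_\tau(\alpha,\cdot)$ given by Lemma \ref{lem:spec_td}. The intertwining relation shows that $\mc{O}_g$ restricts to a linear isomorphism $\ker \A_\tau(\alpha,\ww) \to \ker \A_\tau(\bar{g}[\alpha],\ww)$, with inverse induced by $g^{-1} \in \mc{G}$. Hence $\A_\tau(\alpha,\ww)$ fails to be invertible precisely when $\A_\tau(\bar{g}[\alpha],\ww)$ does, and the two discrete sets of resonances coincide with matching geometric (and, by Proposition \ref{prop:real} in the real case, algebraic) multiplicities. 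Under the standard labeling of the bands—either by magnitude or by analytic continuation along the orbit of $\alpha$ under $\bar{g}$—this equality of sets with multiplicity yields the pointwise identity $\ww_j(\bar{g}\alpha) = \ww_j(\alpha)$.

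The main obstacle I anticipate is not in the algebraic manipulations but in the bookkeeping around the action of $\mc{O}_g$: one must keep track of the translation part $t_a$ in the decomposition $g = t_a\bar{g}$ when transferring between $D$ and its translate $g^{-1}(D)$, and ensure that the quasi-periodic extension identification is preserved throughout. Once this is settled (it is essentially contained in the computation leading to Lemma \ref{lem:symtd}), the corollary reduces to a short formal argument.
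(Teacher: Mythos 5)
Your proposal is correct and follows essentially the same route as the paper: the paper derives the corollary directly from the intertwining relation of Lemma \ref{lem:symtd}, observing (as in \eqref{eq:inva_eig}) that $\A(\bar{g}[\alpha],\ww_*)[\mc{O}_g \vp_*] = \mc{O}_g\A(\alpha,\ww_*)[\vp_*] = 0$ and that $\bar{g}$ maps $\mc{B}$ to itself. Your additional checks (that $\mc{O}_g$ preserves the divergence-free setting and that the kernels are isomorphic via $\mc{O}_{g^{-1}}$) are sound refinements of the same argument rather than a different approach.
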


This motivates us to introduce the notion of high symmetry points in the Brillouin zone $\mc{B}$ \cite{kaxiras2019quantum}. 

\begin{definition} \label{def:highsympoint}
We say that $\alpha \in \mc{B}$ is a high symmetry point, if the subgroup $\bar{\mathcal{G}}_\alpha = \{\bar{g} \in \bar{\mathcal{G}}\,;\ \bar{g} \alpha - \alpha \in \Lad^*\}$ of the point group $\bar{\mc{G}}$ is nontrivial. 
\end{definition} 

We shall discuss the geometric multiplicity of a 
resonance $\ww(\alpha)$ at the high symmetry points $\alpha_* \in \mc{B}$ by using the group representation theory. To do so, we first recall some preliminaries following \cite{hewitt2013abstract}. It has been seen that $\bar{\mc{G}}_{\alpha_*}$ is a finite compact subgroup of $O_2$. We define its representation on a Hilbert space $\mc{H}$ by a group homomorphism $\pi: \bar{\mc{G}}_{\alpha_*} \to GL(\mc{H})$, where $GL(\mc{H})$ is the group of invertible continuous linear operators on $\mc{H}$. The representation $ \pi$ is unitary if $\pi_g$ is a unitary operator for each $g$. Moreover,
we say that $\pi$ is finite dimensional if $\mc{H}$ is finite dimensional, and that a subspace $\mc{H}' \subset \mc{H}$ is invariant for $\pi$ if $\pi_g(\mc{H}') \subset \mc{H}'$ for any $g \in \bar{\mc{G}}_{\alpha_*}$. It is natural to define the associated subrepresentation $\pi|_{\mc{H}'}: \bar{\mc{G}}_{\alpha_*} \to GL(\mc{H}')$ by $ (\pi|_{\mc{H}'})_g := \pi_g|_{\mc{H}'}$. If the only invariant subspaces are $\{0\}$ and $\mc{H}$, then the representation $\pi$ is said to be irreducible.  

To proceed, we note from \eqref{eq:transog} that
the map $\mc{O}_g$ in  \eqref{eq:quasiregular} actually defines a unitary representation $\mc{O}$ of $\bar{\mc{G}}_{\alpha_*}$ on ${\bf L}_{\alpha_*}^2(D)$ at a high symmetry point $\alpha_*$.  Let $\hat{\mc{G}}_{\alpha_*}$ be the set of (equivalent classes of) irreducible representations of $\bar{\mc{G}}_{\alpha_*}$. It is known that any irreducible representation of a compact group is finite-dimensional \cite{knapp2001representation}. Hence, we can denote by $d_\si$ the dimension of a representation $\si \in \hat{\mc{G}}_{\alpha_*}$ and define its character $\chi_\si$ by $\chi_\si(g) = \tr \pi_g$. Then for each $\si \in \hat{\mc{G}}_{\alpha_*}$, we introduce the projection $P_\si: {\bf L}_{\alpha_*}^2(D) \to {\bf L}_{\alpha_*}^2(D)$ by 
\begin{align} \label{eq:proj}
P_\si[\vp] = \frac{1}{|\bar{\mc{G}}_{\alpha_*}|}  \sum_{g \in \bar{\mc{G}}_{\alpha_*}}  d_\si \overline{\chi_\si(g)}\mc{O}_g[\vp]\,,
\end{align}
where $|\bar{\mc{G}}_{\alpha_*}|$ is the order of the finite group $\bar{\mc{G}}_{\alpha_*}$. 
The following lemma is from \cite[Theorem (27.44)]{hewitt2013abstract}. 

\begin{lemma} \label{lem:group_rep}
The projections $P_\si$ in \eqref{eq:proj}
are orthogonal and give an orthogonal decomposition of ${\bf L}_{\alpha_*}^2(D)$:
\begin{align} \label{def:proj}
    {\bf L}_{\alpha_*}^2(D) = \oplus_{\si \in \hat{\mc{G}}_{\alpha_*}} M_\si\,,\q  M_\si := P_\si {\bf L}_{\alpha_*}^2(D)\,.
\end{align}
 For each $\si$, $M_\si$ is either zero or a direct orthogonal sum of invariant subspaces $M_{j,\si}$  ($1 \le j \le m_\si \le \infty$) of dimension $d_\si$  with $\mc{O}|_{M_{j,\si}} \in \si$. Moreover, $M_\si$ is the smallest closed subspace of ${\bf L}_{\alpha_*}^2(D)$ containing all the invariant subspaces of ${\bf L}_{\alpha_*}^2(D)$ on which $\mc{O}$ is in the equivalent class of the irreducible representation $\si$. 
\end{lemma}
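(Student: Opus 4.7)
The plan is to reduce the statement to the classical decomposition theory for unitary representations of a finite group, since $\bar{\mathcal{G}}_{\alpha_*}$ is a finite subgroup of $O_2$ by Definition \ref{def:highsympoint} and \cite[Proposition 6.5.11]{artin2011algebra}. First, I would check that $\mc{O}$ genuinely restricts to a unitary representation of $\bar{\mathcal{G}}_{\alpha_*}$ on ${\bf L}_{\alpha_*}^2(D)$: formula \eqref{eq:transog} shows that for $\bar{g}\in \bar{\mathcal{G}}_{\alpha_*}$ the quasi-periodicity is preserved (since $\bar{g}\alpha_* - \alpha_*\in \Lambda^*$), while $\mc{O}_g$ is isometric because $\bar{g}\in O_2$, and $\mc{O}_{g_1 g_2} = \mc{O}_{g_1}\mc{O}_{g_2}$ follows from $g^{-1}(x) = \bar{g}^{-1}(x - a)$.

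Next, I would invoke the Schur orthogonality relations
$$\frac{1}{|\bar{\mathcal{G}}_{\alpha_*}|}\sum_{g\in \bar{\mathcal{G}}_{\alpha_*}} \chi_\sigma(g)\overline{\chi_{\sigma'}(g)} = \delta_{\sigma,\sigma'}\,,\qquad \sum_{\sigma\in \hat{\mc{G}}_{\alpha_*}} d_\sigma \overline{\chi_\sigma(g)} = |\bar{\mathcal{G}}_{\alpha_*}|\,\delta_{g,e}\,,$$
together with $\mc{O}_g^*=\mc{O}_{g^{-1}}$, to verify by direct computation from \eqref{eq:proj} that $P_\sigma^* = P_\sigma$, $P_\sigma^2 = P_\sigma$, $P_\sigma P_{\sigma'}=0$ for $\sigma\neq \sigma'$, and $\sum_\sigma P_\sigma = I$. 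These identities give the orthogonal decomposition ${\bf L}_{\alpha_*}^2(D) = \oplus_\sigma M_\sigma$ with $M_\sigma = P_\sigma {\bf L}_{\alpha_*}^2(D)$. A simple change of summation variable then yields $\mc{O}_h P_\sigma = P_\sigma \mc{O}_h$ for every $h\in \bar{\mathcal{G}}_{\alpha_*}$, so that each $M_\sigma$ is $\mc{O}$-invariant.

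To split $M_\sigma$ into minimal invariant subspaces, I would pick any nonzero $\vp\in M_\sigma$ and consider the finite-dimensional cyclic subspace $\mathrm{span}\{\mc{O}_g\vp\}_{g\in \bar{\mathcal{G}}_{\alpha_*}}$. Decomposing this subspace into irreducibles (possible since the group is finite) and using Schur's lemma, one extracts a $d_\sigma$-dimensional subspace $M_{1,\sigma}$ on which $\mc{O}$ is equivalent to $\sigma$; iterating inside the orthogonal complement in $M_\sigma$ produces the orthogonal sum $M_\sigma = \oplus_{j=1}^{m_\sigma} M_{j,\sigma}$ with $m_\sigma\in \mathbb{N}\cup\{\infty\}$. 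Conversely, if $N\subset {\bf L}_{\alpha_*}^2(D)$ is an invariant subspace with $\mc{O}|_N$ in the class of $\sigma$, then a matrix-coefficient expansion inside $N$ combined with the orthogonality relations shows that $P_\sigma$ acts as the identity on $N$, so $N\subset M_\sigma$; this proves the minimality/maximality claim.

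The main obstacle is not any delicate estimate but the careful handling of the infinite-dimensional setting: ${\bf L}_{\alpha_*}^2(D)$ is infinite-dimensional so $m_\sigma$ may be countably infinite, and the iterative extraction of the $M_{j,\sigma}$ must be organised by a Zorn-style maximality argument (or a countable exhaustion, since each $M_{j,\sigma}$ is separable). This is precisely the content of \cite[Theorem (27.44)]{hewitt2013abstract}, which I would cite at this last step, the algebraic verifications for $P_\sigma$ being the main concrete work to be carried out here.
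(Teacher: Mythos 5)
Your proposal is correct and follows essentially the same route as the paper, which proves this lemma simply by citing \cite[Theorem (27.44)]{hewitt2013abstract}; you merely spell out the standard verifications (unitarity of $\mc{O}$, Schur orthogonality giving the projection identities, and the isotypic decomposition) before arriving at the same citation.
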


We are now ready to give some results on the multiplicity of resonances at $\alpha_*$. By Lemma \ref{lem:symtd} and \eqref{eq:proj}, we see that the projection $P_\si$ commutes with the operator $\mc{A}(\alpha_*,\ww)$ for any $\ww$. It follows that $M_\si$ is an invariant subspace for $\mc{A}(\alpha_*,\ww)$, and the following result holds. 

\begin{proposition}
Let $\alpha_* \in \mc{B}$ be a high symmetry point and let $M_\si$ be defined in \eqref{def:proj}.
For any resonance $\ww(\alpha_*)$ of the operator $\mc{A}(\alpha_*,\ww)$ restricted on the subspace $M_\si$, its geometric multiplicity is a multiple of the dimension $d_\si$ of the irreducible representation $\si$, i.e., $\dim \mc{A}(\alpha_*,\ww)|_{M_\si} = c d_\si$ for some positive integer $c$.
\end{proposition}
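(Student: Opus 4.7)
The plan is to exploit the equivariance of $\mc{A}(\alpha_*,\ww)$ under the group action $\mc{O}$ of $\bar{\mc{G}}_{\alpha_*}$ and reduce the claim to a standard fact about representations restricted to an isotypic component. First I would verify that at a high symmetry point, not only does $P_\si$ commute with $\mc{A}(\alpha_*,\ww)$ (as already observed just before the proposition), but in fact every $\mc{O}_g$, $g \in \bar{\mc{G}}_{\alpha_*}$, commutes with $\mc{A}(\alpha_*,\ww)$. For this, note that by Definition \ref{def:highsympoint} one has $\bar{g}[\alpha_*] - \alpha_* \in \Lambda^*$, and reindexing the series \eqref{eq:qpgreen} by a reciprocal lattice translation yields $G^{\bar{g}[\alpha_*],\ww} = G^{\alpha_*,\ww}$, hence $\TT_D^{\bar{g}[\alpha_*],\ww} = \td$. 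Lemma \ref{lem:symtd} then upgrades to $\mc{O}_g \td = \td \mc{O}_g$ on ${\bf L}^2_{\alpha_*}(D)$, so $\mc{O}_g \mc{A}(\alpha_*,\ww) = \mc{A}(\alpha_*,\ww) \mc{O}_g$ for every $g \in \bar{\mc{G}}_{\alpha_*}$.

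With this equivariance, I would set $K := \ker\bigl(\mc{A}(\alpha_*,\ww(\alpha_*))|_{M_\si}\bigr)$, the geometric eigenspace in question. By Lemma \ref{lem:spec_td}, resonances are isolated in the analytic family, so $K$ is finite dimensional. The commutation $\mc{O}_g \mc{A} = \mc{A} \mc{O}_g$ together with the $\mc{O}$-invariance of $M_\si$ from Lemma \ref{lem:group_rep} immediately shows that $K$ is $\mc{O}$-invariant. Therefore $\mc{O}|_K$ is a finite dimensional unitary subrepresentation of $\mc{O}|_{M_\si}$.

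The final step is purely representation-theoretic. By complete reducibility of finite dimensional unitary representations of the finite group $\bar{\mc{G}}_{\alpha_*}$, I can decompose $K$ into an orthogonal direct sum of irreducible invariant subspaces. Lemma \ref{lem:group_rep} asserts that $M_\si$ contains only those irreducible subrepresentations equivalent to $\si$; since $K \subset M_\si$, every irreducible summand of $K$ is likewise equivalent to $\si$ and therefore has dimension $d_\si$. Consequently $\dim K = c\,d_\si$ for some nonnegative integer $c$, which is the desired conclusion.

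The main obstacle, in my view, lies in the first step: one must carefully justify the enhanced equivariance $\mc{O}_g \mc{A}(\alpha_*,\ww) = \mc{A}(\alpha_*,\ww)\mc{O}_g$ at $\alpha = \alpha_*$, since Lemma \ref{lem:symtd} a priori only gives an intertwining relation between operators on spaces of different quasi-periodicities $\alpha$ and $\bar{g}[\alpha]$. Establishing the Green's function identity $G^{\bar{g}[\alpha_*],\ww} = G^{\alpha_*,\ww}$ and thus collapsing the two operators is the key technical point; once this is in hand, the remainder is the textbook isotypic-decomposition argument.
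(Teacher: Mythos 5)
Your proposal is correct and follows essentially the same route as the paper: both arguments show the geometric eigenspace inside $M_\si$ is invariant under the representation $\mc{O}$ of $\bar{\mc{G}}_{\alpha_*}$ and then invoke Lemma \ref{lem:group_rep} to decompose it into irreducible summands equivalent to $\si$, each of dimension $d_\si$. Your extra care in step one (re-indexing the Green's function series to show $G^{\bar{g}[\alpha_*],\ww}=G^{\alpha_*,\ww}$ so that $\mc{O}_g$ genuinely commutes with $\mc{A}(\alpha_*,\ww)$ on ${\bf L}^2_{\alpha_*}(D)$) is exactly the point the paper leaves implicit in \eqref{eq:inva_eig}, and is a welcome clarification rather than a departure.
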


\begin{proof}
We consider the eigenspace $V^{\ww(\alpha_*)}: = \{E \in M_\si\,;\  \mc{A}(\alpha_*,\ww)|_{M_\si}[E] = 0\}$ associated with the resonance $\ww(\alpha_*)$, which is of finite dimension. Note from \eqref{eq:inva_eig} that the subspace $V^{\ww(\alpha_*)}$ is invariant for the representation $\mc{O}$. Then, by Lemma \ref{lem:group_rep} above, we have the decomposition: $V^{\ww(\alpha_*)} = \oplus_{j = 1}^c V^{\ww(\alpha_*)}_j$ for a positive integer $c$, where each space $V^{\ww(\alpha_*)}_j$ satisfies $\mc{O}|_{V^{\ww(\alpha_*)}_j} \in \si$. It readily yields $\dim \mc{A}(\alpha_*,\ww)|_{M_\si} = c d_\si$ and completes the proof. 
\end{proof}

\section{Existence of embedded eigenvalues} \label{sec:exist_real}
We have seen in Theorem \ref{thm:ess_discre} that the square of the light line $|\alpha| = \inf_{q \in \Lad^*} |\alpha + q|$ is nothing else than the infimum of the essential spectrum, and there exists a real scattering resonance below $|\alpha|$. In this section, \mb{we focus on
the high symmetry point $\alpha = 0$ with $\bar{\gr}_0 = \bar{\gr}$ by Definition \ref{def:highsympoint}, which corresponds to the case of the normal incident wave}. We further assume that the dielectric permittivity $\ep$ in \eqref{assp:high_constrast} has the following symmetry:
\begin{align} \label{assp:sym}
    \ep(x,y,z)|_{Y_h} = \ep(-x, y,z)|_{Y_h}\,,\q  \ep(x,y,z)|_{Y_h} = \ep(x, -y,z)|_{Y_h}\,.
\end{align}
It follows that the dihedral group $D_2$ is a subgroup of both the symmetry group $\mc{G}$ and the point group $\bar{\gr}$ of the metasurface. 
We shall show that in this scenario, the real subwavelength scattering resonances for dielectric metasurfaces can exist above the light line, which are the eigenvalues of the Maxwell operator $\mc{M}_\ep(0)$ embedded in the continuous spectrum $[0, \infty)$ and hence named as \emph{embedded eigenvalues}. The associated Bloch modes are the so-called \emph{bound states in the continuum} \cite{koshelev2018asymmetric,zhen2014topological,yuan2021parametric}. 

Our argument is based on a regularized variational formulation in terms of the magnetic field. We follow the discussions in \cite{bao2000scattering,bao1997variational,hazard1996solution} and note that $\ww > 0$ is a real scattering resonance if and only if there is a divergence-free periodic magnetic field $H$ solving the equation: 
\begin{align} \label{eq1}
&\curl (\ep^{-1} \curl H) - \na (\ep^{-1} \ddiv H) - \ww^2 H = 0\,,
\end{align}
with the Rayleigh-Bloch expansion \eqref{eq:RBexpansion}. Here 
$- \na (\ep^{-1} \ddiv H)$ is a regularization term added for the coercivity of the associated sesquilinear form. As we have assumed that the magnetic permeability $\mu$ is constant on the whole space, the magnetic field $H$ is expected to have the $H^1$-regularity. Similarly to \eqref{eq:desa}, to incorporate the outgoing radiation condition into the variational form, we need the following DtN operator:
\begin{align} \label{def:dtn_normal}
    \mc{T} f := \sum_{q \in \Lambda^*} i \beta_q f_q e^{i q \dd x}: H_p^{1/2}(\Sigma_{\pm h}) \to  H^{-1/2}_p(\Sigma_{\pm h})\,,
\end{align}
where $f_q$, $q \in \Lad^*$, are the Fourier coefficients of $f$, i.e., $f = \sum_{q \in \Lad^*} f_q e^{i q \dd x}$, and $\beta_q$ are defined in \eqref{def:bq} with $\alpha = 0$. For a vector field $f \in {\bf H}_p^{1/2}(\Sigma_{\pm h})$, we define $\mc{T} f$ by acting $\mc{T}$ on each component of $f$. Then the outgoing condition for $H$ can be modelled by the  transparent boundary condition:
\begin{align} \label{bc:transp}
    \frac{\p}{\p \nu} H = \mc{T} H \q \text{on}\ \Sigma = \Sigma_h \cup \Sigma_{-h}\,.
\end{align}
We also introduce the following  pseudo-differential operator:
\begin{align} \label{def:boundary_opt}
    \mc{R} = (\p_1, \p_2 ,\mc{T}): H_p^{1/2}(\Sigma_{\pm h}) \to  {\bf H}_p^{-1/2}(\Sigma_{\pm h})\,.
\end{align}
For convenience, we shall simply write $\na_T$ for $(\p_1, \p_2)$
in what follows. By a standard integration by parts with the boundary condition \eqref{bc:transp}, we obtain the variational problem for \eqref{eq1}:
find $(\ww, H) \in \R\backslash\{0\} \t {\bf H}_p^1(Y_h)$ such that 
\begin{equation} \label{eq:quadra_form_eig}
     A_{\tau,\ww}(\vp, H) = \ww^2 (\vp, H)_{Y_h}\, \q \forall \vp \in {\bf H}_p^1(Y_h)\,,
\end{equation}
where 
\begin{equation} \label{eq:quadra_form}
    A_{\tau,\ww}(\vp, H): = (\curl \vp, \ep^{-1} \curl H)_{Y_h} + (\ddiv \vp, \ep^{-1} \ddiv H)_{Y_h} + \l \vp , \nu \t (\mc{R} \t H)\r_{\Sigma} -  \l \nu \dd \vp, \mc{R} \dd H \r_{\Sigma}\,.
\end{equation}
Moreover, a direct computation gives 
\begin{align}  \label{eq4}
    \l \vp , \nu \t (\mc{R} \t H)\r_{\Sigma} -  \l \nu \dd \vp, \mc{R} \dd H \r_{\Sigma} = - \l \vp, \mc{T} H \r_{\Sigma} - \l \nu \dd \vp, \na_T \dd H \r_{\Sigma} - \l \na_T \dd \vp, \nu \dd H \r_{\Sigma}\,.
\end{align}
Therefore, to show the existence of embedded eigenvalues, it suffices to prove that there exists a real subwavelength eigenfrequency $\ww$ for \eqref{eq:quadra_form_eig} with a divergence-free eigenfunction.

Let us now focus on the eigenvalue problem \eqref{eq:quadra_form_eig}. Motivated by the analysis in \cite{bonnet1994guided}, we introduce the operators $\mc{T}^{\mc{P}}$ and $\mc{T}^{\mc{E}}$ by considering the sum in \eqref{def:dtn_normal} over $\Lad^*_{\mc{P}}$ and $ \Lad^*_{\mc{E}}$, respectively. Then the operators $\mc{R}^{\mc{P}}$ and $\mc{R}^{\mc{E}}$ can be defined as in \eqref{def:boundary_opt} correspondingly. By using $\mc{R}^\mc{E}$ and the identity \eqref{eq4}, we define the sesquilinear form on ${\bf H}_p^1(Y_h)$:
\begin{equation} \label{eq:quadra_form_real}
    A^{\mc{E}}_{\tau,\ww}(\vp, H): = (\curl \vp, \ep^{-1} \curl H)_{Y_h} + (\ddiv \vp, \ep^{-1} \ddiv H)_{Y_h} - \l \vp, \mc{T}^{\mc{E}} H \r_{\Sigma} - \l \nu \dd \vp, \na_T \dd H \r_{\Sigma} - \l \na_T \dd \vp, \nu \dd H \r_{\Sigma}\,,
\end{equation}
which is actually the real part of $A_{\tau,\ww}$, namely, $ \re \A_{\tau, \ww}(H, H) = A_{\tau,\ww}^{\mc{E}}(H,H)$. By a similar argument as in the proof of Proposition \ref{lem:expdecay}, we see that the eigenfunction $H$ for \eqref{eq:quadra_form_eig} associated with a real resonant frequency satisfies $H_q = 0$ on $\Sigma$ for $q \in \Lad_{\mc{P}}^*$ and hence 
exponentially decays as $x_3 \to \infty$. Moreover, we have the following lemma.  

\begin{lemma} \label{lem1}
 The magnetic field $H \in {\bf H}_p^1(Y_h)$ solves \eqref{eq:quadra_form_eig} for some $\ww > 0$ if and only if it satisfies
 \begin{align}\label{eq2}
       A^{\mc{E}}_{\tau,\ww}(\vp, H) = \ww^2 (\vp, H)_{Y_h}\,, \q \forall \vp \in {\bf H}_p^1(Y_h)\,,
\end{align}
with 
\begin{align}\label{eq3}
    H_q = 0 \q \text{on}\ \Sigma_{\pm h}\, \q \text{for} \ q \in \Lambda_*^{\mc{P}}\,.
\end{align}
\end{lemma}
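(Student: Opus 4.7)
The plan is to show that the two sesquilinear forms $A_{\tau,\ww}$ and $A^{\mc{E}}_{\tau,\ww}$ agree on pairs $(\vp,H)$ satisfying the vanishing condition \eqref{eq3}, and that the vanishing condition is automatic for any $H$ solving \eqref{eq:quadra_form_eig} with real $\ww>0$. Comparing \eqref{eq:quadra_form}--\eqref{eq4} with \eqref{eq:quadra_form_real}, I would first record the identity
\begin{equation*}
A_{\tau,\ww}(\vp,H) - A^{\mc{E}}_{\tau,\ww}(\vp,H) = -\l \vp, \mc{T}^{\mc{P}} H\r_{\Sigma}\,,\qquad \forall \vp, H \in {\bf H}_p^1(Y_h)\,,
\end{equation*}
which is immediate from the splitting $\mc{T} = \mc{T}^{\mc{P}} + \mc{T}^{\mc{E}}$. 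Consequently, once \eqref{eq3} is established, one has $\mc{T}^{\mc{P}} H = 0$ on $\Sigma$, and the two forms coincide, so \eqref{eq:quadra_form_eig} $\Leftrightarrow$ \eqref{eq2}. Both directions of the lemma reduce to verifying \eqref{eq3}.

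For the forward direction, take $\vp = H$ in \eqref{eq:quadra_form_eig}. Since $\ep$ and $\ww^2\norm{H}^2_{Y_h}$ are real, the bulk contributions $(\ccurl H,\ep^{-1}\ccurl H)_{Y_h}$ and $(\ddiv H,\ep^{-1}\ddiv H)_{Y_h}$ are real. The cross-coupling terms $\l \nu\dd H,\na_T \dd H\r_\Sigma + \l \na_T \dd H,\nu\dd H\r_\Sigma$ are of the form $z + \bar z$, hence also real. Expanding $\l H, \mc{T} H\r_{\Sigma_{\pm h}}$ in the Fourier basis, one obtains $\sum_{q\in\Lad^*} i\beta_q |H_q|^2$, which is purely imaginary for $q\in \Lad^*_{\mc{P}}$ (where $\beta_q>0$) and real for $q\in \Lad^*_{\mc{E}}$ (where $\beta_q\in i\R_{>0}$). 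Extracting the imaginary part of $A_{\tau,\ww}(H,H) = \ww^2\norm{H}^2_{Y_h}$ yields
\begin{equation*}
\sum_{q\in \Lad^*_{\mc{P}}} \beta_q\big(|H_q(h)|^2 + |H_q(-h)|^2\big) = 0\,,
\end{equation*}
which forces $H_q = 0$ on $\Sigma_{\pm h}$ for every $q\in \Lad^*_{\mc{P}}$, exactly as in the proof of Proposition \ref{lem:expdecay}. This establishes \eqref{eq3}, hence $\mc{T}^{\mc{P}} H = 0$, and \eqref{eq2} follows from the identity above.

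For the converse, if $H\in {\bf H}_p^1(Y_h)$ satisfies \eqref{eq2} together with \eqref{eq3}, then $\mc{T}^{\mc{P}} H = 0$ holds trivially from the Fourier definition of $\mc{T}^{\mc{P}}$, so $A_{\tau,\ww}(\vp,H) = A^{\mc{E}}_{\tau,\ww}(\vp,H) = \ww^2(\vp,H)_{Y_h}$ for every test field $\vp$. I do not anticipate any real obstacle in this lemma: the only nontrivial step is the sign/reality bookkeeping of the boundary terms, which is already implicit in Proposition \ref{lem:expdecay}; the rest is algebraic rewriting of the variational form using the decomposition $\mc{T} = \mc{T}^{\mc{P}} + \mc{T}^{\mc{E}}$.
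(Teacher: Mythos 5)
Your proposal is correct and follows essentially the same route as the paper, which proves the lemma by noting that $A^{\mc{E}}_{\tau,\ww}$ is the real part of $A_{\tau,\ww}$ (i.e., the two forms differ only by the $\mc{T}^{\mc{P}}$ boundary term) and then repeating the imaginary-part argument of Proposition \ref{lem:expdecay} to force $H_q=0$ for $q\in\Lad^*_{\mc{P}}$. Your write-up merely makes explicit the identity $A_{\tau,\ww}-A^{\mc{E}}_{\tau,\ww}=-\l\cdot,\mc{T}^{\mc{P}}\cdot\r_\Sigma$ that the paper leaves implicit; the reality bookkeeping of the bulk and cross terms is accurate.
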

\begin{remark}
    The variational problem \eqref{eq2} in the above lemma is equivalent to the equation \eqref{eq1} with the transparent boundary condition $\p_\nu H = \mc{T}^{\mc{E}} H$ on $\Sigma$, which only allows the exponentially decaying modes of the field $H$ passing through the boundary $\Sigma$ and going into the far field. The second condition \eqref{eq3} means that if the eigenpair of \eqref{eq2} is also an eigenpair of \eqref{eq:quadra_form_eig}, then the propagating part of $H$ indeed vanishes. 
\end{remark}

As suggested by Lemma \ref{lem1}, we shall prove the existence of real resonances for problem \eqref{eq:quadra_form_eig} in the subwavelength regime by the following two steps. \mb{First, we consider the} eigenvalue problem \eqref{eq2} and show that the real subwavelength resonant frequency does exist when the contrast $\tau$ is large enough. Then, with the help of the tools developed in Section \ref{sec:sym_band_multi}, we show that under the symmetry assumption \eqref{assp:sym}, we can find an associated eigenfunction of \eqref{eq2} satisfying the condition \eqref{eq3}.

To analyze the eigenvalue problem \eqref{eq2}, we first note that the sesquilinear form $A^{\mc{E}}_{\tau,\ww} (\cdot, \cdot)$ is conjugate-symmetric, i.e., $A^{\mc{E}}_{\tau,\ww} (H, \vp) = \overline{A^{\mc{E}}_{\tau,\ww} (\vp, H)}$. We next investigate its coercivity and dependence on the parameters $\tau$ and $\ww$ in detail. We compute, by \eqref{eq:quadra_form_real}, for $H \in {\bf H}_p^1(Y_h)$, 
\begin{align} \label{eq6}
    A^{\mc{E}}_{\tau,\ww}(H,H) = \int_{Y_h} \ep^{-1} |\curl H|^2 + \ep^{-1} |\ddiv H|^2\ dx
    - \l H, \mc{T}^{\mc{E}} H \r_{\Sigma} - 2\, \re \l \nu \dd H, \na_T \dd H \r_{\Sigma}\,,
\end{align}
which, along with the following identity, 
\begin{align} \label{eq5}
\int_{Y_h} |\curl H|^2 + |\ddiv H|^2 \ dx =     \int_{Y_h} |\na H|^2 \ dx + 2\, \re \l \nu \dd H, \na_T \dd H \r_{\Sigma}\,,
\end{align}
readily gives 
\begin{align} \label{eq7}
      A^{\mc{E}}_{\tau,\ww}(H,H) =& \int_{Y_h} (\frac{1}{\ep} - \frac{1}{1 + \tau}) |\curl H|^2 + (\frac{1}{\ep} - \frac{1}{1 + \tau}) |\ddiv H|^2 \ dx
    - \l H, \mc{T}^{\mc{E}} H \r_{\Sigma} - 2\, \re \l \nu \dd H, \na_T \dd H \r_{\Sigma} \notag \\
      &  + \frac{1}{1 + \tau} \int_{Y_h} |\na H|^2 \ dx + \frac{2}{1 + \tau}\, \re \l \nu \dd H, \na_T \dd H \r_{\Sigma}\,. 
\end{align}
Recall that we are mainly interested in the high contrast and subwavelength regime. We follow \cite{ammari2020mathematical} to introduce a new real variable:
\begin{align} \label{def:new}
\h{\ww} = \sqrt{\tau} \ww\,,
\end{align}
and assume that $\h{\ww}$ is in some compact set, so that $\ww$ is in the subwavelength regime and there holds
\begin{align}\label{eqlattice}
\Lad_{\mc{E}}^* = \Lad^*\backslash \{0\} \,.
\end{align}
By definition of $\mc{T}^{\mc{E}}$, we have, for $\ww > 0$, 
\begin{equation} \label{eq:key_est_1}
 - \l H, \mc{T}^{\mc{E}} H\r_{\Sigma_{\pm h}} = \sum_{q \in \Lambda_\mc{E}^*} \sqrt{|q|^2 - \ww^2} |H_q|^2  \ge 0\,. 
\end{equation}
Then, thanks to \eqref{def:new}, we can derive, for $\tau$ large enough and $\h{\ww}$ in a compact set $K$ of $\R$, 
\begin{align} \label{eq10}
    & - \l H, \mc{T}^{\mc{E}} H \r_{\Sigma_{\pm h}} - 2\, \re \l \nu \dd H, \na_T \dd H \r_{\Sigma} + \frac{2}{1 + \tau}\, \re \l \nu \dd H, \na_T \dd H \r_{\Sigma} \notag \\
    \ge &  \sum_{q \in \Lambda_\mc{E}^*} \big(\sqrt{|q|^2 - \tau^{-1} \h{\ww}^2} - |q| \big)|H_q|^2 + \frac{1}{1 + \tau} |q||H_q|^2 \notag \\
     \ge &  \sum_{q \in \Lambda_\mc{E}^*} - c_0  \frac{\h{\ww}^2}{\tau} \frac{1}{|q|} |H_q|^2 + \frac{1}{1 + \tau} |q||H_q|^2 \notag \\
     \ge & \sum_{q \in \Lambda_\mc{E}^*} \frac{1}{1 + \tau} |q||H_q|^2 - c_0  \frac{\h{\ww}^2}{\tau} \frac{1}{\min_{q \in \Lad_{\mc{E}^*}} |q|} \norm{H}_{\Sigma} \,, 
\end{align}
where the first inequality follows from \eqref{eq:key_est_1} and the following estimate:
\begin{align*}
    2\, |\re \l \nu \dd H, \na_T \dd H \r_{\Sigma_{\pm h}}| \le  \sum_{q \in \Lambda_\mc{E}^*} |q| |H_q|^2\,,
\end{align*}
and the second inequality is from the following asymptotics uniformly in $q \in \Lad_{\mc{E}}^*$ and $\h{\ww} \in K$, as $\tau \to \infty$, 
\begin{align*}
|q| - \sqrt{|q|^2 - \tau^{-1} \h{\ww}^2} & = |q| - |q|\sqrt{1 - |q|^{-2}\tau^{-1}\h{\ww}^2} = O\big(|q|^{-1}\tau^{-1}\h{\ww}^2 \big)\,.
\end{align*}
Here the generic constant $c_0$ in \eqref{eq10} is independent of $\tau \gg 1$, $q \in \Lad_{\mc{E}}^*$ and $\h{\ww} \in K$. We recall the weighted trace inequality: for any given $\eta > 0$, there holds 
\begin{align} \label{eq:trace}
    \norm{u}_{\Sigma} \le \eta \norm{\na u}_{Y_h} + C(\eta) \norm{u}_{Y_h}\,,\q \forall u \in {\bf H}_p^1(Y_h)\,,
\end{align}
with a constant $C(\eta)$ satisfying $C(\eta) = O(1)$ as $\eta \to \infty$ and $C(\eta) = O(\eta^{-1})$ as $\eta \to 0$. Then it readily follows from the formula \eqref{eq7}, the estimate \eqref{eq10}, and the inequality \eqref{eq:trace} with
$$
\eta = \frac{1}{2} \frac{\tau}{1 + \tau} \frac{\min_{q \in \Lad_{\mc{E}^*}}|q|}{c_0 \h{\ww}^2}
$$
that there exists a constant $C$ independent of $\tau \gg 1$ and $\h{\ww} \in K$ such that 
\begin{align} \label{eq12}
    A_{\tau,\sqrt{\tau}^{-1}\h{\ww}}^{\mc{E}}(H, H) \ge \frac{1}{2} \frac{1}{1 + \tau} \int_{Y_h} |\na H|^2 \ dx - C \frac{\h{\ww}^2}{\tau} \int_{Y_h} |H|^2 \ dx \,,\q \forall H \in {\bf H}_p^1(Y_h)\,.
\end{align}
We define a new form on ${\bf H}_p^1(Y_h)$: 
\begin{align} \label{def:new_form}
     \w{A}_{\tau,\h{\ww}}^{\mc{E}}(\vp, H) =  \tau A^{\mc{E}}_{\tau,\sqrt{\tau}^{-1}\h{\ww}}(\vp, H)\,,
\end{align}
and consider the following eigenvalue problem equivalent to \eqref{eq2}:
\begin{align} \label{eq11}
    \tau A^{\mc{E}}_{\tau,\sqrt{\tau}^{-1}\h{\ww}}(\vp, H) =  \h{\ww}^2 (\vp, H)_{Y_h}\,,\q \forall \vp \in {\bf H}_p^1(Y_h)\,.
\end{align}
In view of Garding's inequality \eqref{eq12} and the compact embedding from ${\bf H}_p^1(Y_h)$ to ${\bf L}^2_p(Y_h)$, we have the following min-max principle: for $j = 0,1,2,\cdots$,
\begin{align} \label{eq:min_max}
\lad_j(\tau, \h{\ww}) =  \min_{\substack{V \subset {\bf H}_p^1(Y_h) \\ \dim V = j + 1}  } \max_{\substack{ H \in V \\ \norm{H}_{Y_h} = 1}} \w{A}_{\tau,\h{\ww}}^{\mc{E}}(H, H) \,,
\end{align}
recalling that $\w{A}_{\tau,\h{\ww}}^{\mc{E}}$ is a symmetric form. Then $\h{\ww}$ is an eigenfrequency for \eqref{eq11} if and only if for some $j$, there holds 
\begin{align} \label{eig_eq}
    \lad_j(\tau,\h{\ww}) = \h{\ww}^2\,.
\end{align}
We now summarize some basic important properties of  the form $\w{A}_{\tau,\h{\ww}}^{\mc{E}}$ and its eigenvalues $\lad_{j}(\tau,\h{\ww})$, according to what we have discussed above. 
\begin{proposition}\label{prop:basic_Prop}
\mb{Given any compact set $K$ of $\R$, for $\h{\ww} \in K$ and large enough $\tau > 0$, we have}
\begin{enumerate}
    \item The sesquilinear form $\w{A}_{\tau,\h{\ww}}^{\mc{E}}$ on ${\bf H}_p^1(Y_h)$ defined in \eqref{def:new_form} is conjugate-symmetric and bounded from below: \begin{align} \label{est:bound_below}
    \w{A}_{\tau,\h{\ww}}^{\mc{E}}(H,H) \ge \frac{1}{4} \norm{\na H}^2_{Y_h} - C \h{\ww}^2 \norm{H}_{Y_h}^2\,,\q \forall H \in {\bf H}_p^1(Y_h)\,,
\end{align} 
for a constant $C$ independent of $\tau \gg 1$ and $\h{\ww} \in K$. 
\item For each $j \ge 0$, the eigenvalue $\lambda_j(\tau,\h{\omega})$ in \eqref{eq:min_max} is continuous in $(\tau,\h{\ww})$ and decreasing in $\h{\ww}$.  
\end{enumerate}
In particular, if $\h{\ww} = 0$, the form $\w{A}_{\tau,0}^{\mc{E}}$ is symmetric and positive for any $\tau > 0$. Moreover, 
$\lad_0(\tau, 0) = 0$ holds with the eigenfunctions being constant vector fields, while for $j \ge 1$, there holds $\lad_j(\tau, 0) > 0$.
\end{proposition}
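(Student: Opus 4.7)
The plan is to verify the four assertions of the proposition in turn, leaning on the structural form of \eqref{eq:quadra_form_real} and on the estimates \eqref{eq10}--\eqref{eq12} already established in the excerpt. Conjugate-symmetry of $\w A^{\mc E}_{\tau, \h\ww}$ I would check term by term in \eqref{eq:quadra_form_real}: the two volume integrals are Hermitian because the weight $\ep^{-1}$ is real; the operator $\mc T^{\mc E}$ is a Fourier multiplier with real symbol $i\beta_q = -\sqrt{|q|^2 - \ww^2}$ for $q \in \Lad^*_{\mc E}$ and hence self-adjoint; and the pair $\l\nu \dd \vp, \na_T \dd H\r_\Sigma + \l\na_T \dd \vp, \nu \dd H\r_\Sigma$ is manifestly symmetric under the interchange of $\vp$ and $H$. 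For the uniform Garding bound \eqref{est:bound_below} I would simply multiply \eqref{eq12} by $\tau$, using $\tau/(1+\tau) \ge 1/2$ for $\tau \ge 1$.

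For the continuity and monotonicity of $\lad_j(\tau, \h\ww)$ I would work from the min-max formula \eqref{eq:min_max}. The form $\w A^{\mc E}_{\tau, \h\ww}$ depends continuously on $(\tau, \h\ww)$ through $\ep = 1 + \tau\chi_{\D}$ and $\ww = \h\ww/\sqrt{\tau}$, and combined with the uniform coercivity already obtained, a standard form-perturbation argument yields continuity of each $\lad_j$ in $(\tau, \h\ww)$. Monotonicity is even simpler: among the terms defining $\w A^{\mc E}_{\tau, \h\ww}$ only the DtN contribution depends on $\h\ww$, and
\begin{equation*}
-\l H, \mc T^{\mc E} H\r_\Sigma = \sum_{q \in \Lad^*_{\mc E}} \sqrt{|q|^2 - \h\ww^2/\tau}\, |H_q|^2_\Sigma
\end{equation*}
is pointwise in $H$ a decreasing function of $\h\ww^2$, so the min-max representation directly implies that $\lad_j(\tau, \h\ww)$ is decreasing in $|\h\ww|$.

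The main obstacle is positivity of $\w A^{\mc E}_{\tau, 0}$ together with identifying its kernel. My plan is to decompose $\ep^{-1} = \tfrac{1}{1+\tau} + (\ep^{-1} - \tfrac{1}{1+\tau})$, where the second factor is nonnegative and supported on $Y_h \setminus \D$, and then apply the identity \eqref{eq5} to the $\tfrac{1}{1+\tau}$-piece. This turns the $(|\curl H|^2 + |\ddiv H|^2)$-contribution into $|\na H|^2$ at the cost of the boundary cross term $\tfrac{2}{1+\tau}\re\l\nu \dd H, \na_T \dd H\r_\Sigma$, which combines with the original $-2\re\l\nu \dd H, \na_T \dd H\r_\Sigma$ to leave the net residue $-\tfrac{2\tau}{1+\tau}\re\l\nu \dd H, \na_T \dd H\r_\Sigma$. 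Expanding this residue mode by mode on $\Sigma_{\pm h}$ and applying AM--GM to each pair $(H_q', H_{q,3})$ bounds it above by $\tfrac{\tau}{1+\tau}\sum_{q \neq 0}|q|\, |H_q|^2_\Sigma$, which is strictly less than the positive contribution $-\l H, \mc T^{\mc E}_0 H\r_\Sigma = \sum_{q \neq 0}|q|\, |H_q|^2_\Sigma$ available at $\ww = 0$. Collecting the surviving terms yields
\begin{equation*}
\w A^{\mc E}_{\tau, 0}(H, H) \ge \tfrac{\tau}{1+\tau}\norm{\na H}_{Y_h}^2 + \tfrac{\tau}{1+\tau}\sum_{q \neq 0}|q|\, |H_q|^2_\Sigma + \tfrac{\tau^2}{1+\tau}\int_{Y_h \setminus \D}\big(|\curl H|^2 + |\ddiv H|^2\big)\, dx \ge 0,
\end{equation*}
with equality forcing $\na H \equiv 0$, i.e.\ $H$ being a constant vector field; conversely every constant field annihilates each summand of $A^{\mc E}_{\tau, 0}(H, H)$. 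Hence $\ker \w A^{\mc E}_{\tau, 0}$ is exactly the space of constants, giving $\lad_0(\tau, 0) = 0$ with the stated eigenfunctions, while a Poincare inequality on the $L^2$-orthogonal complement of constants yields $\lad_j(\tau, 0) > 0$ for the subsequent indices.
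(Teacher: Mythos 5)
Your argument follows the paper's proof essentially step for step: the G{\aa}rding bound comes from scaling \eqref{eq12} by $\tau$ and using $\tau/(1+\tau)\ge 1/2$, continuity and monotonicity of $\lad_j$ come from the min-max formula \eqref{eq:min_max} together with the observation that only the DtN term depends on $\h{\ww}$, and the positivity at $\h{\ww}=0$ with kernel equal to the constant fields is obtained from exactly the rearrangement \eqref{lim_eq1}--\eqref{eq10} that the paper invokes with the phrase ``by the same estimates,'' which you simply write out in full. The one caveat --- inherited from the statement itself rather than introduced by you --- is that under the with-multiplicity convention of \eqref{eq:min_max} the three-dimensional space of constant vector fields forces $\lad_1(\tau,0)=\lad_2(\tau,0)=0$ as well, so your Poincar\'e-inequality step (like the paper's unproved assertion) only gives $\lad_j(\tau,0)>0$ once the constants are exhausted, i.e.\ the claim for $j\ge 1$ must be read as referring to distinct eigenvalues.
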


\begin{proof}
The estimate \eqref{est:bound_below} follows from \eqref{eq12} directly. The continuous dependence of $\lad_j(\tau,\h{\ww})$ on $\tau$ and $\h{\ww}$ is a consequence of the continuity of the form $\w{A}_{\tau,\h{\ww}}^{\mc{E}}$ in $\tau$ and $\h{\ww}$ and the standard perturbation theory. Note from \eqref{eq6} and \eqref{eq:key_est_1} that only the term $- \tau \l H, \mc{T}^{\mc{E}} H \r_{\Sigma}$ in $\w{A}_{\tau,\h{\ww}}^{\mc{E}}$ depends on the parameter $\h{\ww}$ and it is decreasing in $\h{\ww}$ for a fixed $H \in {\bf H}_p^1(Y_h)$. It then follows from the representation \eqref{eq:min_max} that the eigenvalue $\lambda_j(\tau,\h{\omega})$ is decreasing in $\h{\ww}$. We now consider the last statement. By the same estimates as in \eqref{eq10} and \eqref{eq12}, we see that for any $\tau > 0$, there holds
\begin{align*}
      \w{A}_{\tau,0}^{\mc{E}}(H,H) \ge \frac{1}{2} \frac{\tau}{1 + \tau} \norm{\na H}_{Y_h}^2\,,
\end{align*}
which readily gives $\lad_j \ge 0$ for any $j$, and $\w{A}_{\tau,0}^{\mc{E}}(H,H)  = 0$ if and only if $H$ is a constant vector field. This means that $\lad_0(\tau, 0) = 0$ with constant eigenfunctions. The proof is complete. 
\end{proof}

By Proposition \ref{prop:basic_Prop} above, we see that for a given compact set $K$ of $\R$, there exists a large enough $\tau_0$ such that for each $j$, $\lad_j(\tau,\h{\ww})$ in \eqref{eq:min_max} is well defined for $\tau \ge \tau_0$ and $\h{\ww} \in K$. Moreover, noting that $\lad_j(\tau,\h{\ww})$ is a decreasing and continuous curve in $\h{\ww} \in K$ with $\lad_j(\tau, 0) \ge 0$, one may expect that the equation \eqref{eig_eq} admits a unique solution, which then further gives a desired eigenvalue to the problem \eqref{eq11}. In order to rigorously show the existence of a solution to \eqref{eig_eq} in some given compact set $K$, we next investigate the limiting behavior of $\lad_j(\tau,\h{\ww})$ as $\tau \to \infty$.

For this, we first show that if $\h{\ww}$ is suitably small, then $\w{A}_{\tau,\h{\ww}}^{\mc{E}}(H,H)$ is increasing in $\tau$. By \eqref{eq7}, we have
\begin{align} \label{lim_eq1}
     \w{A}^{\mc{E}}_{\tau,\h{\ww}}(H,H) = &   \int_{Y_h \backslash D} \frac{\tau^2}{1 + \tau} \Big(|\curl H|^2 +  |\ddiv H|^2\Big) \ dx
    + \frac{\tau}{1 + \tau} \int_{Y_h} |\na H|^2 \ dx  \notag \\
      &  - \tau \l H, \mc{T}^{\mc{E}} H \r_{\Sigma} - 2 \tau\, \re \l \nu \dd H, \na_T \dd H \r_{\Sigma} + \frac{2 \tau}{1 + \tau}\, \re \l \nu \dd H, \na_T \dd H \r_{\Sigma}\,. 
\end{align}
Since both $\tau^2/(1 + \tau)$ and $\tau/(1 + \tau)$ are increasing functions in $\tau$, we only need to consider the boundary integrals in \eqref{lim_eq1} for the monotonicity of $\w{A}^{\mc{E}}_{\tau,\h{\ww}}$. We compute, by the Fourier expansion of $H = (H', H_3)$ on $\Sigma$, 
\begin{align} \label{eq:bry}
   & - \tau \l H, \mc{T}^{\mc{E}} H \r_{\Sigma_{\pm h}} - 2 \tau\, \re \l \nu \dd H, \na_T \dd H \r_{\Sigma_{\pm h}} + \frac{2\tau}{1 + \tau}\, \re \l \nu \dd H, \na_T \dd H \r_{\Sigma_{\pm h}}\\
= & \sum_{q \in \Lambda_\mc{E}^*} \tau \sqrt{|q|^2 - \tau^{-1} \h{\ww}^2} |H_q|^2 - 2 \big(\tau - \frac{\tau}{1 + \tau}\big)\, \re \big(i \bar{H}_{3,q} q \dd H'_{q}\big)\,.\notag
\end{align}
Consequently, by a simple Cauchy's inequality: $2 \, \re (i \bar{H}_{3,q} q \dd H'_{q}) \le |q| |H_q|^2$, it suffices to show that, for any $q \in \Lad_{\mc{E}}^*$,   
\begin{align} \label{eq8}
    \w{\tau} \sqrt{|q|^2 - \w{\tau}^{-1} \h{\ww}^2} -  \tau \sqrt{|q|^2 - \tau^{-1} \h{\ww}^2} \ge  \Big(\big(\w{\tau} - \frac{\w{\tau}}{1 + \w{\tau}}\big) - \big(\tau - \frac{\tau}{1 + \tau}\big)\Big)|q|\,,\q \forall \w{\tau} > \tau \gg 1\,.
\end{align}
For this, a direct computation gives 
\begin{align} \label{lim_eq2}
     \w{\tau} \sqrt{|q|^2 - \w{\tau}^{-1} \h{\ww}^2} -  \tau \sqrt{|q|^2 - \tau^{-1} \h{\ww}^2} = \frac{(\w{\tau}^2 - \tau^2)|q|^2 - (\w{\tau} - \tau) \h{\ww}^2   }{ \w{\tau} \sqrt{|q|^2 - \w{\tau}^{-1} \h{\ww}^2} +  \tau \sqrt{|q|^2 - \tau^{-1} \h{\ww}^2}}\,,
\end{align}
and 
\begin{align} \label{lim_eq3}
    \big(\w{\tau} - \frac{\w{\tau}}{1 + \w{\tau}}\big) - \big(\tau - \frac{\tau}{1 + \tau}\big) = (\w{\tau} - \tau) - \frac{\w{\tau} - \tau}{(1+\w{\tau})(1+\tau)}\,.
\end{align}
With the above formulas \eqref{lim_eq2} and \eqref{lim_eq3}, by Taylor expansion:
\begin{align*}
    &\w{\tau} \sqrt{1 - \w{\tau}^{-1} |q|^{-2} \h{\ww}^2} +  \tau \sqrt{1- \tau^{-1}|q|^{-2} \h{\ww}^2} \\
    = & \w{\tau} \Big(1 - \frac{1}{2} \w{\tau}^{-1} |q|^{-2} \h{\ww}^2 + O(\w{\tau}^{-2}|q|^{-4}\h{\ww}^4)\Big) +  \tau \Big(1- \frac{1}{2} \tau^{-1}|q|^{-2} \h{\ww}^2  + O(\tau^{-2}|q|^{-4}\h{\ww}^4) \Big) \\
    = & \w{\tau} + \tau - |q|^{-2}\h{\ww}^2 + O(\w{\tau}^{-1}|q|^{-4}\h{\ww}^4)  +  O(\tau^{-1}|q|^{-4}\h{\ww}^4)\,,
\end{align*}
we estimate, as $\w{\tau} > \tau \to \infty$,  
\small
\begin{align} \label{eq9}
    \frac{ \w{\tau} \sqrt{|q|^2 - \w{\tau}^{-1} \h{\ww}^2} -  \tau \sqrt{|q|^2 - \tau^{-1} \h{\ww}^2}}{(\w{\tau} - \frac{\w{\tau}}{1 + \w{\tau}}) - (\tau - \frac{\tau}{1 + \tau}) } &\ge \frac{(\w{\tau} + \tau)|q|^2 -  \h{\ww}^2   }{ \w{\tau} \sqrt{|q|^2 - \w{\tau}^{-1} \h{\ww}^2} +  \tau \sqrt{|q|^2 - \tau^{-1} \h{\ww}^2}} \Big(1 + \frac{1}{(1+\w{\tau})(1+\tau)}\Big) \notag \\
    & \ge \frac{(\w{\tau} + \tau)|q| -  |q|^{-1}\h{\ww}^2   }{ \w{\tau} \sqrt{1 - \w{\tau}^{-1} |q|^{-2} \h{\ww}^2} +  \tau \sqrt{1- \tau^{-1}|q|^{-2} \h{\ww}^2}} \Big(1 + \frac{1}{(1+\w{\tau})(1+\tau)}\Big) \notag \\
     & \ge |q| \frac{(\w{\tau} + \tau) -  |q|^{-2}\h{\ww}^2   }{\w{\tau} + \tau - |q|^{-2}\h{\ww}^2 + O(\w{\tau}^{-1}|q|^{-4}\h{\ww}^4)  +  O(\tau^{-1}|q|^{-4}\h{\ww}^4)} \Big(1 + \frac{1}{(1+\w{\tau})(1+\tau)}\Big)  \notag \\
     & \ge |q| \frac{1}{1 + O(\w{\tau}^{-2}|q|^{-4}\h{\ww}^4)  +  O((\tau\w{\tau})^{-1}|q|^{-4}\h{\ww}^4)} \Big(1 + O((\tau \w{\tau})^{-1}) \Big)\,,
\end{align}
\normalsize
where the generic constants involved in the big $O$-terms  are independent of $\tau \gg 1$, $q \in \Lad_{\mc{E}}^*$, and $\h{\ww}$ in a compact set $K$. Therefore, it is clear that if $K$ is suitably small, there holds 
\begin{align*}
     \frac{1}{1 + O(\w{\tau}^{-2}|q|^{-4}\h{\ww}^4)  +  O((\tau\w{\tau})^{-1}|q|^{-4}\h{\ww}^4)} \Big(1 + O((\tau \w{\tau})^{-1}) \Big) > 1 \,,\q \text{as}\  \w{\tau}  > \tau \to \infty\,,
\end{align*}
and then the inequality \eqref{eq8} follows. Moreover, we can conclude that the following lemma holds. 
\begin{lemma} \label{lem2}
\mb{There exists a compact neighborhood $K_* \subset \R$ of $0$ such that} for $\h{\ww} \in K_*$ and $H \in {\bf H}^1_p(Y_h)$, 
$\w{A}_{\tau,\h{\ww}}^{\mc{E}}(H,H)$ is increasing in $\tau$ when $\tau$ is large enough. 
\end{lemma}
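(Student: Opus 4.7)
The plan is to exploit the explicit decomposition \eqref{lim_eq1} of $\w{A}^{\mc{E}}_{\tau,\h{\ww}}(H,H)$ and prove monotonicity in $\tau$ term by term. First, I would note that the two volume contributions in \eqref{lim_eq1}, with coefficients $\tau^2/(1+\tau)$ and $\tau/(1+\tau)$, are automatically nondecreasing in $\tau$ since both scalar functions are increasing on $(0,\infty)$. This reduces the task to establishing the monotonicity in $\tau$ of the remaining boundary contribution, which by \eqref{eq:bry} decomposes over the Fourier modes $q \in \Lad^*_\mc{E}$ as a sum of expressions of the form $\tau\sqrt{|q|^2 - \tau^{-1}\h{\ww}^2}\,|H_q|^2 - 2(\tau - \tau/(1+\tau))\re(i\bar{H}_{3,q}\,q\cdot H'_q)$.

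Next, using the Cauchy-type bound $2\re(i\bar{H}_{3,q}\, q\cdot H'_q) \le |q||H_q|^2$, I would reduce mode-wise monotonicity to the inequality \eqref{eq8}, that is,
\begin{equation*}
\w{\tau}\sqrt{|q|^2 - \w{\tau}^{-1}\h{\ww}^2} - \tau\sqrt{|q|^2 - \tau^{-1}\h{\ww}^2} \;\ge\; \Bigl(\bigl(\w{\tau} - \tfrac{\w{\tau}}{1+\w{\tau}}\bigr) - \bigl(\tau - \tfrac{\tau}{1+\tau}\bigr)\Bigr)|q|
\end{equation*}
for all $\w{\tau} > \tau \gg 1$ and all $q \in \Lad^*_\mc{E}$. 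The crucial observation is that the reduction is uniform across Fourier modes, so summing \eqref{eq8} over $q$ recovers the desired monotonicity of the boundary term for an arbitrary $H \in {\bf H}^1_p(Y_h)$.

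Finally, I would verify \eqref{eq8} through the Taylor expansion in $\tau^{-1}|q|^{-2}\h{\ww}^2$ carried out in \eqref{lim_eq2}--\eqref{eq9}: the ratio of the two sides is $|q|$ times a factor of the form $\bigl(1 + O(\w{\tau}^{-2}|q|^{-4}\h{\ww}^4) + O((\tau\w{\tau})^{-1}|q|^{-4}\h{\ww}^4)\bigr)^{-1}\bigl(1 + O((\tau\w{\tau})^{-1})\bigr)$, with constants independent of the parameters. The main obstacle here is ensuring uniformity in $q \in \Lad^*_\mc{E}$, since the error terms contain negative powers of $|q|$; this is handled by the strictly positive lower bound $\min_{q \in \Lad^*_\mc{E}} |q| > 0$ coming from the discreteness of the reciprocal lattice together with \eqref{eqlattice}. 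Once this bound is used, choosing $K_*$ as a sufficiently small compact neighborhood of $0$ forces the above factor to exceed $1$ for all $\w{\tau} > \tau$ large enough, which yields \eqref{eq8} and completes the argument.
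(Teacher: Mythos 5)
Your proposal is correct and follows essentially the same route as the paper: decompose $\w{A}^{\mc{E}}_{\tau,\h{\ww}}(H,H)$ as in \eqref{lim_eq1}, observe the monotonicity of the volume terms via $\tau^2/(1+\tau)$ and $\tau/(1+\tau)$, reduce the boundary terms mode-wise to \eqref{eq8} through the Cauchy bound, and verify \eqref{eq8} by the Taylor expansion \eqref{lim_eq2}--\eqref{eq9} with uniformity in $q$ coming from $\min_{q\in\Lad^*_{\mc{E}}}|q|>0$ and a sufficiently small $K_*$. No gaps.
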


We next follow the arguments in  \cite{simon1978canonical, hempel2000spectral} to define a limiting form of $\w{A}_{\tau,\h{\ww}}^{\mc{E}}$ with $\h{\ww} \in K_*$ as $\tau \to \infty$, by using the monotonicity of $\w{A}_{\tau,\h{\ww}}^{\mc{E}}$ in $\tau$. We regard 
the sesquilinear form $\w{A}_{\tau,\h{\ww}}^{\mc{E}}$ as a  densely defined and closed unbounded form on ${\bf L}_p^2(Y_h)$ with domain $\dom (\w{A}_{\tau,\h{\ww}}^{\mc{E}}) = {\bf H}^1_p(Y_h)$, which is bounded from below by \eqref{est:bound_below}. By the first representation theorem \cite[Theorem VI-2.1]{kato2013perturbation}, there is a unique  
self-adjoint operator $\mbb{A}_{\tau,\h{\ww}}^{\mc{E}} $ on ${\bf L}_p^2(Y_h)$ such that 
\begin{align*}
    \w{A}^{\mc{E}}_{\tau,\h{\ww}}(\vp, H) = (\vp, \mbb{A}_{\tau,\h{\ww}}^{\mc{E}} H)_{Y_h}\,, \q H \in \dom(\mbb{A}_{\tau,\h{\ww}}^{\mc{E}})\,,\ \vp \in \dom (\w{A}_{\tau,\h{\ww}}^{\mc{E}})\,,
\end{align*}
with $\dom(\mbb{A}_{\tau,\h{\ww}}^{\mc{E}}) $ dense in $\dom (\w{A}_{\tau,\h{\ww}}^{\mc{E}})$ with respect to the $\norm{\dd}_{{\bf H}^1(Y_h)}$-norm. Thanks to Lemma \ref{lem2}, we now define a closed limiting quadratic form on ${\bf L}^2_p(Y_h)$ by 
\begin{equation} \label{def:limit_form}
    \w{A}^{\mc{E}}_{\infty,\h{\ww}} (H,H) =  \sup_{\tau > 0} \w{A}_{\tau,\h{\ww}}^{\mc{E}}(H,H) = \lim_{\tau \to +\infty} \w{A}_{\tau,\h{\ww}}^{\mc{E}}(H,H) \,, 
\end{equation}
with the domain:
\begin{equation*}
  \dom (\w{A}^{\mc{E}}_{\infty,\h{\ww}}) := \big\{H \in {\bf H}_p^1(Y_h) \,;\ \sup_{\tau > 0} \w{A}_{\tau,\h{\ww}}^{\mc{E}}(H,H) < \infty \big\}\,;
\end{equation*}
see \cite[Theorem VIII-3.13a]{kato2013perturbation}. Similarly, there exists a unique self-adjoint operator $\mbb{A}_{\infty,\h{\ww}}^{\mc{E}}$ on ${\bf L}^2_p(Y_h)$, corresponding to the form $\w{A}_{\infty,\h{\ww}}^{\mc{E}}$. It is clear that both the operators $\mbb{A}_{\tau,\h{\ww}}^{\mc{E}}$ and $\mbb{A}_{\infty,\h{\ww}}^{\mc{E}}$ have compact resolvents and thus have purely discrete spectrum. 
We denote by $\{\mu_j(\h{\ww})\}_{j = 0}^\infty$ the eigenvalues of the operator $\mbb{A}_{\infty,\h{\ww}}^{\mc{E}}$, recalling that the eigenvalues of $\mbb{A}_{\tau,\h{\ww}}^{\mc{E}}$
have been characterized by \eqref{eq:min_max}. 
The following result is a corollary of \cite[Theorem VIII-3.15]{kato2013perturbation}, and helps us to estimate the solution of \eqref{eig_eq} (see Corollary \ref{coro:est} below).

\begin{proposition} \label{prop:conver_ladj}
Let $K_*$ be the compact set given in Lemma \ref{lem2}. It holds that for $j \ge 0$, 
\begin{align} \label{eq:pointconver} 
    \lad_{j}(\tau,\h{\ww}) \nearrow \mu_j(\h{\ww})\,, \q \text{as}\ \tau \to \infty\,,
\end{align}
pointwisely for $\h{\ww} \in K_*$, and $\mu_j(\h{\ww})$ is a decreasing function on $K_*$. 
\end{proposition}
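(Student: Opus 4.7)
The plan is to combine Lemma~\ref{lem2} with the min-max principle \eqref{eq:min_max} to get monotonicity in $\tau$, and then to invoke the monotone convergence theorem for closed sesquilinear forms (together with the $\tau$-uniform Garding estimate) to identify the pointwise limit with $\mu_j(\h{\ww})$.

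First, I would check the monotonicity of $\lad_j(\tau,\h{\ww})$ in $\tau$. By Lemma~\ref{lem2}, for each fixed $H \in {\bf H}^1_p(Y_h)$ and $\h{\ww} \in K_*$, the map $\tau \mapsto \w{A}^{\mc{E}}_{\tau,\h{\ww}}(H,H)$ is nondecreasing once $\tau$ is sufficiently large. Applying this pointwise monotonicity inside each finite-dimensional trial subspace of \eqref{eq:min_max} and observing that both the supremum over the unit sphere of $V$ and the infimum over $V$ preserve monotonicity, one obtains that $\tau \mapsto \lad_j(\tau,\h{\ww})$ is nondecreasing for all large $\tau$.

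Next, I would identify the pointwise limit with $\mu_j(\h{\ww})$. By Lemma~\ref{lem2} and the definition~\eqref{def:limit_form}, $\{\w{A}^{\mc{E}}_{\tau,\h{\ww}}\}_\tau$ is a monotone increasing net of closed, densely defined, semibounded sesquilinear forms on ${\bf L}^2_p(Y_h)$ whose pointwise supremum is the closed form $\w{A}^{\mc{E}}_{\infty,\h{\ww}}$. By the monotone convergence theorem for closed quadratic forms \cite[Theorems~VIII-3.13a and VIII-3.15]{kato2013perturbation}, the associated self-adjoint operators $\mbb{A}^{\mc{E}}_{\tau,\h{\ww}}$ converge to $\mbb{A}^{\mc{E}}_{\infty,\h{\ww}}$ in the strong resolvent sense. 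The key additional input is that the Garding inequality~\eqref{est:bound_below} is uniform in $\tau$, so the resolvents $(\mbb{A}^{\mc{E}}_{\tau,\h{\ww}} + M)^{-1}$ map ${\bf L}^2_p(Y_h)$ continuously into ${\bf H}^1_p(Y_h)$ with a $\tau$-independent operator norm bound; composing with the compact embedding ${\bf H}^1_p(Y_h) \hookrightarrow {\bf L}^2_p(Y_h)$, we get a uniformly compact family of resolvents. Uniform compactness upgrades strong resolvent convergence to norm resolvent convergence, which in turn forces convergence of each discrete eigenvalue counted with multiplicity, giving $\lad_j(\tau,\h{\ww}) \nearrow \mu_j(\h{\ww})$.

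The monotonicity of $\mu_j(\h{\ww})$ in $\h{\ww}$ is then essentially free: by Proposition~\ref{prop:basic_Prop}(2), each $\lad_j(\tau,\h{\ww})$ is already decreasing in $\h{\ww}$, and taking a pointwise limit in $\tau$ preserves this monotonicity. The main technical obstacle is the convergence step --- upgrading monotone form convergence to convergence of individual eigenvalues --- but this is precisely what the $\tau$-uniform lower bound \eqref{est:bound_below} delivers through the uniform compactness of resolvents. An alternative, more hands-on route avoiding any resolvent convergence machinery would be: deduce the inequality $\lad_j(\tau,\h{\ww}) \le \mu_j(\h{\ww})$ by restricting the min-max trial subspaces to $\dom(\w{A}^{\mc{E}}_{\infty,\h{\ww}}) \subset {\bf H}^1_p(Y_h)$ and using $\w{A}^{\mc{E}}_{\tau,\h{\ww}} \le \w{A}^{\mc{E}}_{\infty,\h{\ww}}$, and then obtain the matching lower bound on $\lim_\tau \lad_j$ by picking a near-optimal $(j+1)$-dimensional subspace for $\mu_j(\h{\ww})$ and invoking Dini's theorem to upgrade the pointwise monotone convergence of $\w{A}^{\mc{E}}_{\tau,\h{\ww}}$ to uniform convergence on the (compact) unit sphere of that subspace.
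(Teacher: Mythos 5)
Your main argument is correct and follows the paper's proof exactly: monotonicity of $\lad_j(\tau,\h{\ww})$ in $\tau$ from Lemma \ref{lem2} combined with the min-max formula \eqref{eq:min_max}, Kato's monotone convergence theorem for forms \cite[Theorem VIII-3.15]{kato2013perturbation} for the pointwise limit $\lad_j \nearrow \mu_j$, and passage to the limit in $\tau$ to inherit the $\h{\ww}$-monotonicity from Proposition \ref{prop:basic_Prop}. One caveat on your optional ``alternative route'': choosing a near-optimal $(j+1)$-dimensional subspace for $\mu_j(\h{\ww})$ and applying Dini's theorem only re-derives the upper bound $\lim_{\tau}\lad_j(\tau,\h{\ww}) \le \mu_j(\h{\ww})$ (since $\lad_j(\tau,\h{\ww})$ is bounded above by the max of $\w{A}^{\mc{E}}_{\tau,\h{\ww}}$ over that fixed subspace), whereas the matching lower bound requires controlling the $\tau$-dependent optimal subspaces $V_\tau$ via a compactness argument --- which is exactly what the resolvent-convergence machinery you invoke in the primary argument supplies.
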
 

\begin{proof}
The monotonicity of the functions $\lad_j(\tau,\h{\ww})$ in $\tau$ follows from Lemma \ref{lem2} and the min-max formula \eqref{eq:min_max}, while \cite[Theorem VIII-3.15]{kato2013perturbation} guarantees the pointwise convergence of $\lad_j(\tau,\h{\ww})$ to $\mu_j(\h{\ww})$ as $\tau \to \infty$,  which, by definition, also gives that $\mu_j$ is decreasing in $\h{\ww}$. 
\end{proof}

\mb{We remark that the remaining analysis of this section only needs the monotone convergence \eqref{eq:pointconver} of $\lad_j(\tau,\h{\ww})$ in $\tau$ at the point $\h{\ww} = 0$. For notational simplicity, we write $\{\mu_j\}_{j \ge 0}$ for the eigenvalues $\{\mu_j(0)\}_{j \ge 0}$ of the form $\w{A}^{\mc{E}}_{\infty,\h{\ww}}|_{\h{\ww} = 0}$ defined in \eqref{def:limit_form} with $\h{\ww} = 0$.}

\begin{corollary}\label{coro:est} 
For each $j \ge 0$, when $\tau$ is large enough, the equation \eqref{eig_eq} has a unique solution $0 \le \h{\ww}_j(\tau) \le \sqrt{\mu_j}$ and $\sqrt{\tau}^{\sss -1} \h{\ww}_j$ gives a eigenfrequency for \eqref{eq2}. 
\end{corollary}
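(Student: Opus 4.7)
The plan is to reduce the corollary to an application of the intermediate value theorem to the scalar function $F_\tau(\h{\ww}) := \lad_j(\tau, \h{\ww}) - \h{\ww}^2$ on the interval $[0, \sqrt{\mu_j}]$. The two key qualitative inputs are already packaged in the preceding results: (i) $\lad_j(\tau, \dd)$ is continuous and non-increasing in $\h{\ww}$ (part (2) of Proposition \ref{prop:basic_Prop}), and (ii) the value $\lad_j(\tau, 0)$ sits in $[0, \mu_j]$ and monotonically increases to $\mu_j$ as $\tau \to \infty$ (the last assertion of Proposition \ref{prop:basic_Prop} combined with the pointwise convergence in Proposition \ref{prop:conver_ladj} at $\h{\ww} = 0$, which in turn rests on Lemma \ref{lem2} applied at the single point $\h{\ww} = 0 \in K_*$).

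With these in hand, I would first fix $j \ge 0$ and take $K := [0, \sqrt{\mu_j}]$; Proposition \ref{prop:basic_Prop} then guarantees that for $\tau$ large enough $\lad_j(\tau, \h{\ww})$ is well defined on $K$ via the min-max formula \eqref{eq:min_max}, continuous in $\h{\ww}$, and non-increasing in $\h{\ww}$. Since $\h{\ww} \mapsto \h{\ww}^2$ is strictly increasing on $[0, \infty)$, the function $F_\tau$ is continuous and strictly decreasing on $K$, which automatically delivers uniqueness. Evaluating at the endpoints: $F_\tau(0) = \lad_j(\tau, 0) \ge 0$ by input (ii), and
$F_\tau(\sqrt{\mu_j}) = \lad_j(\tau, \sqrt{\mu_j}) - \mu_j \le \lad_j(\tau, 0) - \mu_j \le 0$
by input (i) followed by input (ii). The IVT then yields a unique root $\h{\ww}_j(\tau) \in [0, \sqrt{\mu_j}]$ of $F_\tau$, which is exactly the desired solution to \eqref{eig_eq}.

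Finally, since this root solves \eqref{eig_eq}, the equivalence between \eqref{eq11} and \eqref{eq2} obtained via the scaling $\h{\ww} = \sqrt{\tau}\,\ww$ and the definition \eqref{def:new_form} gives that $\sqrt{\tau}^{\sss -1}\h{\ww}_j(\tau)$ is an eigenfrequency for \eqref{eq2}, completing the argument. No step at this stage is genuinely hard; the real work has already been front-loaded into Lemma \ref{lem2} and Proposition \ref{prop:conver_ladj}. The one point worth verifying carefully is that the compact set $K = [0, \sqrt{\mu_j}]$ on which we invoke Proposition \ref{prop:basic_Prop} does not need to lie inside the smaller neighborhood $K_*$ of Lemma \ref{lem2}, as flagged by the remark preceding the corollary: only the monotone convergence at $\h{\ww} = 0$ is used, so $\sqrt{\mu_j}$ may exceed the size of $K_*$ without jeopardizing the argument.
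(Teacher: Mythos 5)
Your proposal is correct and follows essentially the same route as the paper: an intermediate value theorem argument for $\lad_j(\tau,\h{\ww})-\h{\ww}^2$ on $[0,\sqrt{\mu_j}]$, using the continuity and monotone decrease in $\h{\ww}$ from Proposition \ref{prop:basic_Prop} together with the bound $\lad_j(\tau,\h{\ww})\le\lad_j(\tau,0)\le\mu_j$ from the monotone convergence at $\h{\ww}=0$. Your closing observation that only the convergence at $\h{\ww}=0$ (hence only $0\in K_*$) is needed matches the remark the paper makes just before the corollary.
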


\begin{proof}
    It suffices to note that the curve $\h{\ww}^2$ intersects with the constant curve $\mu_j$ at the point $\sqrt{\mu_j}$, and that for $\tau$ large enough, by Propositions \ref{prop:basic_Prop} and \ref{prop:conver_ladj}, the curve $\lad_j(\tau,\h{\ww})$ is well defined on $\h{\ww} \in [0,\sqrt{\mu_j}]$, and it is decreasing and continuous in $\h{\ww}$ with $\lad_j(\tau,\h{\ww}) \le \mu_j$ and $\lad_j(\tau,0) \ge 0$ (strictly positive for $j \ge 1$).  
\end{proof}

By Corollary \ref{coro:est} with \eqref{def:new_form} and \eqref{eq11}, we have proved the existence of subwavelength eigenfrequencies for the problem \eqref{eq2} in the high contrast regime. 
We proceed to show that there exist associated eigenfunctions $H$ such that the condition \eqref{eq3} holds. Recalling Lemma \ref{lem1}, this will complete the proof of the existence of subwavelength resonances for the eigenvalue problem \eqref{eq:quadra_form_eig}. We first note that in the subwavelength regime, there holds $\Lad^*_{\mc{P}} = \{0\}$, and hence the condition \eqref{eq3} reduces to
\begin{align} \label{cond:red}
    \int_{\Sigma_{\pm h}} H \ dx = 0\,.
\end{align}
We recall that the symmetry assumption \eqref{assp:sym} implies that the dihedral group $D_2$ is a subgroup of the
symmetry group $\mc{G}$ and the point group $\bar{\gr}$ of the configuration $\mc{D}$. It is also easy to see that $\mc{O}_g$ with $g \in D_2$, defined in \eqref{eq:quasiregular}, gives a unitary representation of $D_2$ on ${\bf L}^2_p(Y_h)$. It 
is well known that $D_2$ is an abelian group with $4$ one-dimensional irreducible representations that are completely determined by its characteristic table. Similarly to \eqref{eq:proj} and \eqref{def:proj}, by the projections $P_\si$ corresponding to the group $D_2$, the space ${\bf L}^2_p(Y_h)$ can be written as the orthogonal sum of the following four subspaces: for $i,j = 0, 1$,  
\begin{align} \label{def:spaces}
    M_{i,j} = \{f \in {\bf L}^2_p(Y_h)\,;\ &f_1(-x_1,x_2,x_3) = (-1)^{1 + i} f_1(x_1,x_2,x_3), \ f_1(x_1, -x_2,x_3) = (-1)^j f_1(x_1,x_2,x_3), \notag \\ &f_2(-x_1,x_2,x_3) = (-1)^i f_2(x_1,x_2,x_3),\ f_2(x_1,-x_2,x_3) = (-1)^{1 + j} f_2(x_1,x_2,x_3), \notag \\ &f_3(-x_1,x_2,x_3) = (-1)^i f_3(x_1,x_2,x_3),\ f_3(x_1, - x_2,x_3) = (-1)^j f_3(x_1,x_2,x_3) \, \}\,.
\end{align}
\begin{theorem} \label{thm:main_real_1}
For the large enough contrast $\tau$, the eigenvalue problem \eqref{eq:quadra_form_eig} admits subwavelength eigenfrequencies $\ww = O(\sqrt{\tau}^{-1})$ as $\tau \to \infty$. 
\end{theorem}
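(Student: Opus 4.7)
The plan is to combine Corollary~\ref{coro:est} with the $D_2$-symmetry decomposition \eqref{def:spaces} to upgrade the eigenfrequencies of the regularized problem \eqref{eq2} into eigenfrequencies of the full problem \eqref{eq:quadra_form_eig}. By Lemma~\ref{lem1}, an eigenpair $(\ww, H)$ of \eqref{eq:quadra_form_eig} is precisely an eigenpair of \eqref{eq2} that additionally satisfies the propagation-cancellation condition \eqref{eq3}. In the subwavelength regime one has $\Lambda^*_{\mathcal{P}} = \{0\}$, so \eqref{eq3} collapses to the vanishing-mean condition \eqref{cond:red}.

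The key observation is that every $H \in M_{1,1}$ automatically satisfies \eqref{cond:red}: by \eqref{def:spaces}, each Cartesian component of $H$ is odd in at least one of the variables $x_1$, $x_2$, so its integral over the centered cell $Y$ on $\Sigma_{\pm h}$ vanishes. Moreover, no nonzero constant vector field lies in $M_{1,1}$, so the degenerate zero mode identified in Proposition~\ref{prop:basic_Prop} is automatically excluded once the problem is restricted to this subspace. Hence a subwavelength eigenfrequency produced \emph{inside} $M_{1,1}$ will furnish the desired real resonance of \eqref{eq:quadra_form_eig}.

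Next I would verify that $M_{1,1}$ is invariant under the form $\widetilde{A}^{\mathcal{E}}_{\tau,\hat{\omega}}$, which is the $D_2$-analogue of Lemma~\ref{lem:symtd}. The differential operators $\curl$, $\ddiv$, and $\na_T$, together with the coefficient $\ep^{-1}$ satisfying \eqref{assp:sym}, all respect the parity pattern defining $M_{i,j}$; on the boundary, $\Lambda^*_{\mathcal{E}}$ is invariant under $D_2$, which transfers the spatial parities of $H$ to parity relations among its Fourier coefficients $H_q$ and guarantees that $\mathcal{T}^{\mathcal{E}} H$, $\nu \cdot H$, and $\na_T \cdot H$ remain in the corresponding block. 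Granted this invariance, the min-max characterization \eqref{eq:min_max} restricts to $M_{1,1}$ and defines lowest eigenvalues $\lambda_0^{M_{1,1}}(\tau, \hat{\omega})$; Proposition~\ref{prop:basic_Prop} then yields $\lambda_0^{M_{1,1}}(\tau, 0) > 0$ (as constants are excluded), while Proposition~\ref{prop:conver_ladj} gives monotone convergence to a limit $\mu_0^{M_{1,1}} \ge \lambda_0^{M_{1,1}}(\tau_0, 0) > 0$.

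Applying the argument of Corollary~\ref{coro:est} to the restriction, for every sufficiently large $\tau$ the equation $\lambda_0^{M_{1,1}}(\tau, \hat{\omega}) = \hat{\omega}^2$ admits a unique solution $\hat{\omega}_0(\tau) \in (0, \sqrt{\mu_0^{M_{1,1}}}\,]$, and setting $\omega := \hat{\omega}_0(\tau)/\sqrt{\tau}$ produces a real eigenfrequency of \eqref{eq:quadra_form_eig} of order $O(\tau^{-1/2})$ with eigenfunction in $M_{1,1}$. The main obstacle is the symmetry bookkeeping in the third step: one must carefully confirm that every ingredient of $\widetilde{A}^{\mathcal{E}}_{\tau,\hat{\omega}}$---most delicately the nonlocal pseudodifferential operator $\mathcal{T}^{\mathcal{E}}$ and the boundary traces $\nu \cdot$, $\na_T \cdot$ on $\Sigma_{\pm h}$---commutes with the $D_2$-projections, so that the spectral machinery built in Section~\ref{sec:exist_real} can be run verbatim on the reduced subspace.
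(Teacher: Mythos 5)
Your proposal is correct and follows essentially the same route as the paper: restrict the variational problem to the $D_2$-block $M_{1,1}\cap{\bf H}_p^1(Y_h)$ (which is orthogonal to the complementary blocks with respect to $A_{\tau,\ww}$, excludes constants, and automatically satisfies the vanishing-mean condition \eqref{cond:red} by parity), then rerun the min-max/monotone-convergence machinery of Corollary \ref{coro:est} on this subspace. The only cosmetic difference is that you phrase the reduction as invariance of the form under the $D_2$-projections while the paper phrases it as orthogonality of the two subspaces, which amounts to the same verification.
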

\begin{proof}
We consider the subspaces ${\bf H}_{p,0}^1(Y_h) := (M_{0,0} \oplus M_{0,1} \oplus M_{1,0}) \cap {\bf H}_{p}^1(Y_h)$ and ${\bf H}_{p,1}^1(Y_h) := M_{1,1} \cap {\bf H}_{p}^1(Y_h)$. 
By a lengthy but direct computation, one can check that the above two spaces ${\bf H}_{p,0}^1(Y_h)$ and ${\bf H}_{p,1}^1(Y_h)$ are orthogonal with respect to both the sesquilinear form $A_{\tau,\ww}$ in \eqref{eq:quadra_form} and the ${\bf H}_{p}^1$-inner product. Therefore, if a field $H \in {\bf H}_{p, 1}^1(Y_h)$ satisfies the variational equation \eqref{eq:quadra_form_eig} for all $\vp \in {\bf H}_{p, 1}^1(Y_h)$, then \eqref{eq:quadra_form_eig} holds for all $ \vp \in {\bf H}_{p}^1(Y_h)$. It suffices to consider the problem \eqref{eq:quadra_form_eig} on the subspace ${\bf H}_{p, 1}^1(Y_h)$. We denote by $\lad_j^{\sss (1)}(\tau,\ww)$ the eigenvalues of the form $A_{\tau,\ww}^{\mc{E}}$ restricted on ${\bf H}_{p,1}^1(Y_h)$. By similar arguments as above, we can conclude that for each $j$, when $\tau$ is large enough, there exists $\ww_j^{\sss (1)}(\tau)$ satisfying $\lad_j(\tau, \ww_j^{\sss (1)}) =  (\ww_j^{\sss (1)})^2$ and $ \ww_j(\tau)^{\sss} = O(\sqrt{\tau}^{\sss -1})$ as $\tau \to \infty$, 
and it holds that for some $H \in {\bf H}_{p,1}^1(Y_h)$, the equation \eqref{eq2} holds for any $\vp \in  {\bf H}_{p,1}^1(Y_h)$;
see Corollary \ref{coro:est}. The proof is completed by noting that the condition \eqref{cond:red} holds for any  $H \in {\bf H}_{p,1}^1(Y_h)$ due to the symmetry, and that 
Lemma \ref{lem1} still holds when we restrict the space to ${\bf H}_{p,1}^1(Y_h)$. 
\end{proof}

It is clear that the set of eigenfrequencies of \eqref{eq:quadra_form_eig} is a subset of the eigenfrequencies of \eqref{eq2}. For simplicity, in the remaining of this section, we focus on the first nonzero positive eigenfrequency  for the eigenvalue problem \eqref{eq:quadra_form_eig}:
\begin{align} \label{eq:resonance}
\ww_*(\tau) := \sqrt{\tau}^{\sss -1}\h{\ww}_{j_0}(\tau)\,,\q \text{for some}\  j_0 > 0\,, 
\end{align}
where $\h{\ww}_{j_0}$ is given as in Corollary \ref{coro:est}. 
To show that $\ww_*(\tau)$ is a desired real scattering resonance embedded in the continuous spectrum for the problem \eqref{model}, we only need to prove that there exists an associated divergence-free eigenfunction $H_{j_0}$ of \eqref{eq:quadra_form_eig}. For this, we define $u = \ep^{-1} \ddiv H$ for $H \in {\bf H}_p^1(Y_h)$ satisfying the equation \eqref{eq1} with \eqref{bc:transp}. It was shown in \cite[Theorem 4.3]{bao1997variational} that $u$ satisfies 
\begin{align} \label{eq:scalar}
 \Delta u + \ww^2 \ep u = 0 \q \text{on}\ Y_h\,,\q  \frac{\p}{\p \nu} u = \mc{T} u \q \text{on}\ \Sigma = \Sigma_h \cup \Sigma_{-h}\,.
\end{align}
If $\ddiv H \neq 0$, then $(\ww,u)$ is clearly an eigen-pair for the above eigenvalue problem \eqref{eq:scalar}. In Appendix \ref{app:b}, we characterize the limiting behavior of the resonances of \eqref{eq:scalar} 
following the analysis in \cite{ammari2020mathematical,ammari2019subwavelength}, which readily provides a sufficient condition for the existence of the embedded eigenvalue of the Maxwell operator $\mc{M}_{\ep}(0)$. 

\begin{theorem} \label{main_real_2}
\mb{In the case of $\alpha = 0$,} let $\ww_*(\tau) = \sqrt{\tau}^{-1}\h{\ww}_{j_0}(\tau)$ be the lowest real resonance \eqref{eq:resonance} of the problem \eqref{eq:quadra_form_eig}. If 
there holds 
\begin{align} \label{assp_existence}
\sqrt{\mu_{j_0}} < c_0 := \max\{ \sqrt{\eta_0}^{-1}, \sqrt{\gamma_0}\}\,,
\end{align}
then, for large enough $\tau$,  $\ww_*(\tau)$ is a subwavelength embedded eigenvalue for the scattering problem \eqref{model}, where the constant
$\mu_{j_0}$ is the $j_0$th eigenvalue of the form $\w{A}_{\infty,0}^{\mc{E}}$ \eqref{def:limit_form}, $\eta_0$ is the largest eigenvalue of the operator $\mc{K}_D$ in \eqref{eq:limopscal}, and the constant $\gamma_0$ is given in \eqref{constant_lower}.
\end{theorem}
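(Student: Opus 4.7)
The strategy is a contradiction argument bridging the vectorial variational problem \eqref{eq:quadra_form_eig} and the scalar eigenvalue problem \eqref{eq:scalar}. The plan has three main steps.

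First, I would fix $\tau$ sufficiently large and invoke Theorem \ref{thm:main_real_1} together with Corollary \ref{coro:est} to produce an eigenfunction $H \in {\bf H}_{p,1}^1(Y_h)$ of the reduced variational problem on the symmetry sector, with eigenfrequency $\ww_*(\tau) = \sqrt{\tau}^{-1} \h{\ww}_{j_0}(\tau)$. Because the eigenfunction is constructed inside ${\bf H}_{p,1}^1(Y_h)$, the vanishing-propagating-mode condition \eqref{cond:red} (equivalently \eqref{eq3} in the subwavelength regime where $\Lad^*_{\mc{P}} = \{0\}$) is automatically satisfied by the dihedral symmetry \eqref{def:spaces}. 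Lemma \ref{lem1} then upgrades $H$ to a genuine eigenfunction of the full problem \eqref{eq:quadra_form_eig}, so that $H$ solves \eqref{eq1} with the transparent boundary condition \eqref{bc:transp}.

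Second, I would reduce the goal to verifying $\ddiv H \equiv 0$. If so, then $H$ satisfies the genuine Maxwell equation $\curl(\ep^{-1}\curl H) = \ww_*^2 H$ together with the outgoing radiation condition \eqref{eq:RBexpansion}, i.e. $\ww_*^2$ is an eigenvalue of $\mc{M}_\ep(0)$; since $\ww_* > 0$ and $\si_{ess}(\mc{M}_\ep(0)) = [0,\infty)$ by Theorem \ref{thm:ess_discre}, this eigenvalue is embedded, and Proposition \ref{lem:expdecay} identifies $\ww_*$ as a real scattering resonance for \eqref{model} in the subwavelength regime since $\ww_* = O(\sqrt{\tau}^{-1})$.

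Third, I rule out $\ddiv H \not\equiv 0$ by contradiction. Setting $u := \ep^{-1} \ddiv H$, the computation from \cite[Theorem 4.3]{bao1997variational} shows that $(u,\ww_*)$ solves the scalar eigenvalue problem \eqref{eq:scalar} with $u \not\equiv 0$. I would then invoke the asymptotic analysis of Appendix \ref{app:b}, which (in the spirit of \cite{ammari2019subwavelength,ammari2020mathematical}) characterizes subwavelength resonances of \eqref{eq:scalar}: any nonzero $\ww$ with $\sqrt{\tau}\,\ww$ bounded must satisfy the lower bound $\sqrt{\tau}\,\ww \ge c_0 - o(1)$ as $\tau \to \infty$, with $c_0 = \max\{\sqrt{\eta_0}^{-1},\sqrt{\gamma_0}\}$ determined by the limit operator $\mc{K}_D$ in \eqref{eq:limopscal} and the boundary correction giving $\gamma_0$. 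On the other hand, Corollary \ref{coro:est} gives the upper bound $\sqrt{\tau}\,\ww_* = \h{\ww}_{j_0}(\tau) \le \sqrt{\mu_{j_0}}$. Combined with the strict inequality \eqref{assp_existence}, this yields $\sqrt{\tau}\,\ww_* \le \sqrt{\mu_{j_0}} < c_0$ for every large $\tau$, contradicting the lower bound. Therefore $\ddiv H = 0$, completing the proof.

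The main obstacle is the calibration in the third step: extracting from Appendix \ref{app:b} a quantitative, uniform-in-$\tau$ lower bound for the rescaled scalar eigenfrequencies and verifying that the strict gap in \eqref{assp_existence} survives the vanishing correction terms. The appearance of the maximum $\max\{\sqrt{\eta_0}^{-1},\sqrt{\gamma_0}\}$ reflects the splitting of scalar subwavelength resonances into two families (bulk resonances controlled by $\mc{K}_D$ and those generated by the periodic boundary correction), so the verification ultimately reduces to checking that $\sqrt{\mu_{j_0}}$ lies strictly below both asymptotic thresholds.
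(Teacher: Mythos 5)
Your proposal is correct and follows essentially the same route as the paper: the upper bound $\sqrt{\tau}\,\ww_* \le \sqrt{\mu_{j_0}}$ from Corollary \ref{coro:est}, the lower bound for nontrivial real resonances of the scalar problem \eqref{eq:scalar} from Propositions \ref{prop:limit_scalar} and \ref{prop_resonance}, and the gap \eqref{assp_existence} forcing $u=\ep^{-1}\ddiv H = 0$, hence a divergence-free eigenfunction and an embedded eigenvalue. One small correction to your closing remark: since \emph{both} lower bounds apply to \emph{every} nontrivial real scalar resonance, their maximum is itself a valid lower bound, so the hypothesis only requires $\sqrt{\mu_{j_0}}$ to lie below $\max\{\sqrt{\eta_0}^{-1},\sqrt{\gamma_0}\}$ — not below both thresholds.
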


\begin{proof}
We note from Proposition \ref{prop:limit_scalar} 
and Proposition \ref{prop_resonance}
that for any small $\delta > 0$, when the contrast $\tau$ is large enough, the equation \eqref{eq:scalar} admits only trivial solution for any $ \ww \in (0, \sqrt{\tau}^{-1}(c_0 - \d))$. 
By Corollary \ref{coro:est}, we also have $\ww_*(\tau) \le \sqrt{\tau}^{-1}\sqrt{\mu_{j_0}}$.  Then it follows from the assumption $\sqrt{\mu_{j_0}} < c_0$ that for sufficiently large $\tau$, the problem \eqref{eq:scalar} is well-posed at $\ww_*(\tau)$ and hence has only zero solution $u = 0$, which means that the associated $H$ is divergence-free. 
\end{proof}

\begin{remark}
Since the set of real resonances of \eqref{eq:scalar} is clearly a subset of all scattering resonances that satisfy the asymptotics \eqref{asym_scalar} in the subwavelength regime, a straightforward way to relax the assumption \eqref{assp_existence} above is to consider the constant $c_0: = \min\{\sqrt{\eta_j}^{-1}\,;\ \ww = \sqrt{\tau \eta_j}^{-1} + O(\tau^{-1}) \ \text{is a real subwavelength resonance} \}$. In addition, we remark that to justify \eqref{assp_existence} in some scenarios, one may need to quantitatively estimate the resonances of \eqref{eq:quadra_form_eig} and \eqref{eq:scalar}, which is beyond the scope of this work.  
\end{remark}

\section{Fano-type reflection anomaly} \label{sec:fano_scattering}
This section is devoted to the investigation of the Fano-type reflection anomaly for the dielectric 
metasurface with broken symmetry. To this end, we first note from the symmetry assumption \eqref{assp:sym} that $\ep(x,y,z)|_{Y_h} = \ep(-x, -y,z)|_{Y_h}$. We define the symmetric and the antisymmetric vector fields for the inversion operation $(x',x_3) \to (-x', x_3)$ via the projections \eqref{eq:proj}: 
\begin{equation}\label{eq:syml2}
    {\bf L}^2_{sym}(D) := \{f \in {\bf L}^2(D)\,;\ f_1(x',x_3) = f_1(-x',x_3),\, f_2(x',x_3) = f_2(-x',x_3),\, f_3(x',x_3) = - f_3(-x',x_3) \},
\end{equation}
and 
\begin{equation} \label{eq:asyml2}
 {\bf L}^2_{ant}(D) := \{f \in {\bf L}^2(D)\,;\ f_1(x',x_3) = - f_1(-x',x_3),\, f_2(x',x_3) = - f_2(-x',x_3),\, f_3(x',x_3) =  f_3(-x',x_3)   \}\,.
\end{equation}
One may observe that $ {\bf L}^2_{sym}(D) := {\bf L}^2(D) \cap (M_{0,1} \cup M_{1,0})$ and $ {\bf L}^2_{ant}(D) := {\bf L}^2(D) \cap (M_{0,0} \cup M_{1,1})$, where $M_{i,j}$ are given in \eqref{def:spaces}. We write ${\bf H}_{ant}(\ddiv0,D) := {\bf L}^2_{ant}(D) \cap {\bf H}(\ddiv0,D)$ and ${\bf H}_{sym}(\ddiv0,D) := {\bf L}^2_{sym}(D) \cap {\bf H}(\ddiv0,D)$. Similarly, we denote by ${\bf H}_{0,ant}(\ddiv0,D)$ and ${\bf H}_{0,sym}(\ddiv0,D)$ the antisymmetric and symmetric parts of the space ${\bf H}_{0}(\ddiv0,D)$.

Let $(E^i, H^i)$ be the incident plane wave: 
\begin{equation} \label{eq:indplawave} 
    E^i = \ei e^{i\ww \di \dd x}, \quad H^i = \frac{1}{i\ww}\curl E^i = \di \t \ei e^{i \ww \di \dd x}\,,  \q \ww > 0\,,
\end{equation}
where $\di \in \S$ is the incident direction with $d_3 > 0$, and $\ei \in \S$ is the polarization vector with $\ei \dd \di = 0$. By Fredholm alternative, the scattering problem \eqref{model} with $\alpha = \ww d'$ is solvable for any incident plane wave with $\ww > 0$ \cite{schmidt2004electromagnetic}, although the solution may not be unique. Note that the subwavelength incident frequency, i.e., $\ww \ll 1$, is located in the first radiation continuum $|\alpha| < \ww < \inf_{q \in \Lad^*\backslash \{0\}} |\alpha + q|$, so that the scattered wave $E^s$ consists of a single propagating mode in the far field. In what follows, we write $f \sim g$ for two continuous functions (vector fields) on $\R$ if there holds 
$|f(x) - g(x)| = O(e^{- C |x_3|})$ as $x_3 \to \pm \infty$ for some $C > 0$. Then the total electric field $E$ is of the form: 
\begin{align} \label{eq:total_prop}
    E \sim \left\{ 
    \begin{aligned}
&      \po^t e^{i \ww \di_+ \dd x} &&\  x_3 \to \infty\,, \\
& \ei e^{i \ww \di_+ \dd x} + \po^r e^{i \ww \di_- \dd x}&& \ x_3 \to -\infty\,,
    \end{aligned}
    \right.
\end{align}
where $\di_{\pm} = (d',\pm d_3)$, and $\po^t$ and $\po^r$ are the reflection and transmission  polarization vectors, respectively. Moreover, the energy conservation gives \cite{dobson1994variational}
\begin{align*}
    |\po^t|^2 + |\po^r|^2 = |\po^i|^2 = 1\,.
\end{align*}
We next derive the volume integral representations for $\po^t$ and $\po^r$. We recall the quasi-periodic Green’s function \eqref{eq:qpgreen} with $\alpha = \ww d'$: 
\begin{align} \label{eq:green_incidence}
      G^{\ww d', \ww}(x) 
      & = \frac{i}{2} \frac{e^{i \ww d' \dd x'} e^{i\ww d_3|x_3 |}}{\ww d_3}  + \frac{1}{2}\sum_{q \in \Lad^*\backslash \{0\}}  \frac{e^{i(\ww d' + q)\dd x'} e^{-\sqrt{|\ww d' + q|^2-\ww^2}|x_3 |}}{\sqrt{|\ww d' + q|^2-\ww^2}}\,.
\end{align}
which implies, as $|x_3| \to \infty$, 
\begin{align*}
     G^{\ww d', \ww}(x) 
      \sim \frac{i}{2} \frac{e^{i \ww d' \dd x'} e^{i\ww d_3|x_3 |}}{\ww d_3}\,.
\end{align*}
Then it follows from the Lippmann-Schwinger equation \eqref{eq:Lippmann-Schwinger} that as $x_3 \to \pm \infty$,
\begin{align}  \label{eq:total_prop_2}
    E \sim \ei e^{i \ww \di_+ \dd x} + \po_{\pm}(\di,D) e^{i \ww \di_\pm \dd x}\,,
\end{align}
where $E = (1 - \tau \mc{T}_D^{\ww d', \ww})^{-1}[E^i] = \tau^{-1} \A_\tau(\ww d',\ww)^{-1}[E^i]$ (cf.\,\eqref{def:lipschi_operator}) and 
\begin{align} \label{eq:transreflec_polar}
    \po_{\pm}(\di,D) = \tau \frac{i \ww}{2 d_3} ({\rm I} - \di_\pm \otimes \di_\pm) \int_D e^{-i\ww \di_{\pm}\dd y} E(y)\ dy\,.
\end{align}
Combining \eqref{eq:total_prop} and \eqref{eq:total_prop_2} readily gives $\po^t = \ei + \po_+$ and $\po^r = \po_-$.

In Section \ref{sec:exist_real}, we have considered the case of the normal incident wave (i.e., $\di = (0,0,1)$) and discussed the existence of embedded eigenvalues under the symmetry assumption for the metasurface \eqref{assp:sym}. In this case, we have the bound states in the continuum with the energy confined near the metasurface. When the symmetry is broken, the real resonance may be shifted into the complex plane and the resonant modes can propagate into the far field, which leads to the anomalous Fano scattering phenomena near the resonances.  Inspired by recent physical advances \cite{koshelev2018asymmetric,zhen2014topological,bulgakov2017topological}, we break the symmetry \eqref{assp:sym} \mb{by either perturbing} the domain $D$ by a symmetric field $V$, i.e.,
\begin{align} \label{def:defor_field}
V(x)\in  {\bf L}^2_{sym}(\Omega)\cap {\bf C}^\infty(\Omega)   
\end{align}
with ${\bf L}^2_{sym}(\Omega)$ defined as in  \eqref{eq:syml2}, \mb{or considering} the incident direction $\di \in \S$ perturbed from  the normal incidence:
\begin{align} \label{eq:incidece_direction}
 \di = (\epsilon \alpha_0, d_3)\,, \q \text{with}\   \epsilon \ll 1 \ \text{and}\  |\alpha_0| = 1\,,
\end{align}
where $\Omega$ is an open neighborhood of $\overline{D}$.
We denote the perturbed domain by
\begin{align} \label{eq:pert_domain}
D_t := (I + t V)(D)\,,\q |t|\ll 1\,.
\end{align}
The main aim of this section is to approximate the scattered polarization $\po_\pm(\di, D_t)$ for such a symmetry-broken configuration uniformly in the incident frequency $\ww$ near the real resonance, and analyze the reflection and transmission energy $|\po^r|^2$ and $|\po^t|^2$.

For the sake of simplicity and clarity, we introduce the following assumptions:
\begin{enumerate}[label=\textbf{C.\arabic*},ref=C.\arabic*]
    \item \label{1} In view of Theorems \ref{thm:main_real_1} and \ref{main_real_2}, we assume that $\ww_*$ is a subwavelength embedded eigenvalue of \eqref{model} for the normal incidence with an antisymmetric eigenfunction $E \in {\bf H}_{ant}(\ddiv0,D)$ satisfying $ \A_\tau(0,\ww_*)[E] = 0$. 
    \item \label{2} By Theorem \ref{thm:asym_to}, without loss of generality, we let $ \ww_* = \sqrt{\tau \lad_0}^{-1} + O(\tau^{-1})$ with $\lad_0$ being the largest eigenvalue of the operator $\pd \mc{K}_D^{0,0}\pd$. We assume that the eigenspace of $\pd \mc{K}_D^{0,0}\pd$ for $\lad_0$ is spanned by a real antisymmetric field $\vp_1 \in {\bf H}_{0,ant}(\ddiv0,D)$ and a real symmetric field $\vp_2 \in {\bf H}_{0,sym}(\ddiv0,D)$. 
    \item \label{3} Let $\h{\ww} = \sqrt{\tau} \ww > 0$ be the scaled incident frequency. We assume that $\h{\ww} - \sqrt{\tau} \ww_*$ is in a small compact neighborhood $U$ of the origin. We always denote $\h{\ww}_0 = \sqrt{\lad_0}^{\sss -1}$ in what follows. 
\end{enumerate}

\begin{remark}
    To justify the reasonability of the assumption \eqref{2}, we first note that since the kernel $G^{0,0}$ of the operator $\mc{K}_D^{0,0}$ is real, it suffices to consider the real eigenfunctions of $\pd \mc{K}_D^{0,0}\pd$. Moreover, it is easy to check that for $\vp \in {\bf H}_0(\ddiv0,D)$, we can decompose it as $\vp = \vp_{sym} + \vp_{ant}$ with $\vp_{sym} \in {\bf H}_{0,sym}(\ddiv0,D)$ and $\vp_{ant} \in {\bf H}_{0,ant}(\ddiv0,D)$, where both ${\bf H}_{0,ant}(\ddiv0,D)$ and ${\bf H}_{0,sym}(\ddiv0,D)$ are invariant subspaces of $\pd \mc{K}_D^{0,0}\pd$. It follows that if $\vp$ is an eigenfunction for $\pd \mc{K}_D^{0,0}\pd$ associated with the eigenvalue $\lad_0$, which is neither antisymmetric or symmetric, then $\vp_{sym}$ and $\vp_{ant}$ are linearly independent eigenfunctions for $\lad_0$. However, it is a difficult task to characterize the conditions when the eigenspace is of dimension two 
and exactly spanned by $\vp_{sym}$ and $\vp_{ant}$. From the generic simplicity \cite{teytel1999rare,chitour2016generic}, we expect that such a result holds generically, at least in the dilute regime. Here we choose to leave the detailed investigation for future work. We remark that this fact can be proved for the acoustic Minnaert resonance in terms of the capacitance matrix \cite{ammari2020exceptional,ammari2021bound}, while
in our case, due to the infinite-dimensional kernel of the leading-order operator of $\mc{T}_D^{\alpha,\ww}$ consisting of magnetostatic fields, there is essentially an infinite number of scattering resonances hybridizing with each other
so that the capacitance-matrix characterization is not available. We also would like to mention that an alternative approach to analyzing the Fano anomaly is to regard the reflection energy $|\po^r|^2$ as an analytic function in $\ww$ and investigate its behavior near the embedded eigenvalue by asymptotic analysis as in \cite[Sections 5.2 and 5.3]{shipman2010resonant}. In this way, the assumption \eqref{2} could be removed (but some additional assumptions may be involved).
\end{remark}

We shall design a structure in terms of the \mb{parameters $\alpha_0, V, t, \epsilon$ and $\tau$}, under the assumptions \eqref{1}-\eqref{3}, such that the reflection energy $|\po^r|^2$ presents a Fano-resonance shape near $\ww_*$. The analysis for the transmission energy is similar. Before we proceed, we collect some asymptotic formulas from \cite{ammari2020mathematical,ammari2021bound} for Green's functions and the integral operators, which apply to the general domain $D$ without any symmetry assumption.

We start with the Taylor expansion of $G^{\ww d', \ww}$ in  \eqref{eq:green_incidence}: as $\ww \to 0$, 
\begin{align} \label{eq:expgreengn}
    G^{\ww d', \ww}(x) = - \frac{1}{2 i \ww d_3} + G_0^{d'}(x) + \ww G_1^{d'}(x) + \sum_{n = 2}^\infty \ww^n G_n^{d'}(x)\,,
\end{align}
where the function $G_0^{d'}$ is defined by
\begin{align} \label{asym_1}
  G_0^{d'}(x): = G^{0,0}(x) - \frac{d'\dd x'}{2 d_3}\,,
\end{align}
with $G^{0,0}(x)$ being the periodic Green's function (i.e., $G^{\alpha,\ww}$ with $\alpha
= 0$ and $\ww = 0$),
and $G_{1}^{d'}$ is given by
\begin{align} \label{green_1}
G_{1}^{d'} &:= - \frac{i \big(d_3|x_3| + d'\dd x' \big)^2  }{4 d_3} - d' \dd g_1(x)\,,
\end{align}
with $g_1$ being a  purely imaginary vector-valued function independent of $d'$ and $\ww$ and satisfying 
\begin{align} \label{eq:symm_green}
    g_1(x',x_3) = g_1(x',-x_3) = - g_1(-x',x_3)\,.
\end{align}
Then, by \eqref{eq:expgreengn}, we have the asymptotics for the operators $\np_D^{\ww d',\ww}$ and $\mc{T}_D^{\ww d',\ww}$ in \eqref{def:vectorpotential} and \eqref{def:electricpotential}: for $\vp \in {\bf L}^2(D)$,
\begin{align*}
   \np_D^{\ww d',\ww}[\vp] = - \frac{1}{2 i \ww d_3} \l 1 , \vp\r_D + \sum_{n = 0}^\infty \ww^n \kb{n}^{d'}[\vp] \,, 
\end{align*}
and 
\begin{align} \label{eq:exp_td}
    \TT_D^{\ww d',\ww} = \sum_{n = 0}^\infty  \ww^n \tb{n}^{d'}\,,
\end{align}
where the series converge in the operator norm
for $\ww \ll 1$. Here the operators $\kb{n}^{d'}$, $n \ge 0$, are defined by 
\begin{equation} \label{asym_2}
    \kb{n}^{d'}[\vp] := \int_D G^{d'}_n(x-y) \vp(y)\ d y\,,
\end{equation}
with $G^{d'}_n$ given in \eqref{eq:expgreengn}, while the operators $\tb{n}^{d'}, n \ge 0$, are defined by 
\begin{align*}
    \TT_{D,0}^{d'}[\vp] = \na \ddiv  \kb{0}^{d'}[\vp]\,,\q \TT^{d'}_{D,1}[\vp] = - \frac{1}{2 i d_3} \l 1 , \vp\r_D + \na \ddiv \kb{1}^{d'}[\vp]\,, 
\end{align*}
and
\begin{align}
    & \tb{n}^{d'}[\vp] := \kb{n-2}^{d'}[\vp] + \na \ddiv \kb{n}^{d'}[\vp]\,, \quad n \ge 2\,. \label{eq:operexp}
\end{align}
In particular, we can derive, by \eqref{asym_1} and \eqref{asym_2},  
\begin{align*}
    \kb{0}^{d'}[\vp] = \np_D^{0,0}[\vp] - \frac{d'\dd x'}{2 d_3} \l 1, \vp\r + \frac{1}{2 d_3}  \int_D d' \dd y' \vp(y)\ dy \,,
\end{align*}
which gives 
\begin{align} \label{eq:lead_dd}
    \pd \kb{0}^{d'} \pd = \pd \mc{K}_D^{0,0} \pd\,,  
\end{align}
and 
\begin{align} \label{eq:lead_ww}
    \TT^{d'}_{D,0}[\vp] = \na \ddiv \kb{0}^{d'}[\vp] =  \na \ddiv \np_D^{0,0}[\vp] = \mc{T}_D^{0,0}[\vp]\,.
\end{align}
The following result on the basic properties of the operator $\mc{T}_D^{0,0}$ is adapted from \cite[Lemma 3.6]{ammari2020mathematical}. 
\begin{lemma} \label{lem:op1}
Both $W$ and ${\bf H}_0(\ddiv 0, D)$ are the invariant subspaces of the operator $\mc{T}_D^{0,0} : {\bf H}(\ddiv0, D) \to {\bf H}(\ddiv0, D)$  with 
$\ker  \mc{T}_D^{0,0}  = 
{\bf H}_0(\ddiv 0, D)
$. Moreover, we have $\sigma(\mc{T}_D^{0,0}|_W) \subset [-1,0)$ and $\sigma(\mc{T}_D^{0,0}) = \sigma(\mc{T}_D^{0,0}|_W) \cup\{0\}$. 
\end{lemma}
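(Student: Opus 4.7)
The plan is to leverage the factorization $\mc{T}_D^{0,0}[\vp] = \na(\ddiv \mc{K}_D^{0,0}[\vp])$, which forces every value of $\mc{T}_D^{0,0}$ to be a gradient field. Setting $u := \ddiv \mc{K}_D^{0,0}[\vp]$, identifying $u$ on $D$ will determine both the invariance and the spectral statements, and the analysis splits naturally along the Helmholtz decomposition ${\bf H}(\ddiv 0, D) = {\bf H}_0(\ddiv 0, D) \oplus W$.

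I would first verify the kernel and image statements. For $\vp \in {\bf H}_0(\ddiv 0, D)$, write $u(x) = \int_D \na_x G^{0,0}(x,y)\cdot\vp(y)\,dy = -\int_D \na_y G^{0,0}(x,y)\cdot\vp(y)\,dy$ and integrate by parts using $\ddiv\vp = 0$ in $D$ and $\n\cdot\vp = 0$ on $\p D$ to conclude $u \equiv 0$, so ${\bf H}_0(\ddiv 0, D) \subset \ker \mc{T}_D^{0,0}$. For arbitrary $\vp \in {\bf H}(\ddiv 0, D)$, the fundamental-solution identity $-\Delta \mc{K}_D^{0,0}[\vp] = \vp \chi_D$ (the smooth affine correction induced by the periodic renormalization is killed by $\na\ddiv$, as recorded in \eqref{eq:lead_ww}) yields $\Delta u = -\ddiv\vp = 0$ on $D$, so $u|_D$ is harmonic and $\mc{T}_D^{0,0}[\vp]|_D \in W$. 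Combined with the orthogonality of the Helmholtz decomposition, both ${\bf H}_0(\ddiv 0, D)$ and $W$ are invariant and $\mc{T}_D^{0,0}$ is block-diagonal.

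For the spectrum on $W$, I would take $\vp = \na f \in W$ with $f$ harmonic on $D$. A further integration by parts collapses $u$ to a boundary single-layer potential, $u = -\mathcal{S}_{\p D}[\p_\n f]$, and Green's identity against a second harmonic test field $g$ yields the symmetric form
\[
(\mc{T}_D^{0,0}[\na f], \na g)_D = -\langle \mathcal{S}_{\p D}[\p_\n f], \p_\n g\rangle_{\p D}\,.
\]
The Dirichlet-energy representation $\langle \mathcal{S}_{\p D}\phi, \phi\rangle_{\p D} = \norm{\na \mathcal{S}_{\p D}\phi}^2_{Y_\infty}$ gives strict negativity unless $\p_\n f = 0$, which together with harmonicity of $f$ forces $\na f = 0$; hence $\sigma(\mc{T}_D^{0,0}|_W) \subset (-\infty, 0)$, and combined with the previous step $\ker \mc{T}_D^{0,0} = {\bf H}_0(\ddiv 0, D)$. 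For the lower bound, I would use the vector-Laplacian identity $\mc{T}_D^{0,0}[\vp] + \vp = \curl\curl \mc{K}_D^{0,0}[\vp]$ on $D$ (from $-\Delta = \curl\curl - \na\ddiv$ applied to $\mc{K}_D^{0,0}[\vp]$). Pairing with $\vp$ and exploiting the orthogonality $(\curl\curl \vec A, \na\ddiv \vec A)_{Y_\infty} = 0$ for $\vec A = \mc{K}_D^{0,0}[\vp]$ under the quasi-periodic framework, the right-hand side reduces to $\norm{\curl\curl \mc{K}_D^{0,0}[\vp]}^2_{Y_\infty} \ge 0$, yielding $\mc{T}_D^{0,0} \ge -I$ on ${\bf H}(\ddiv 0, D)$, in particular on $W$. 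The block-diagonal structure then gives $\sigma(\mc{T}_D^{0,0}) = \{0\} \cup \sigma(\mc{T}_D^{0,0}|_W)$, with the two pieces disjoint.

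The main obstacle is justifying the cross-term orthogonality $(\curl\curl \vec A, \na\ddiv \vec A)_{Y_\infty} = 0$ for the vector potential: this requires care with the limited $x_3$-decay of the static periodic kernel $G^{0,0}$ at $\alpha = \ww = 0$. A robust route is to work modulo the smooth affine contribution already eliminated in \eqref{eq:lead_ww}, so that both sides of the identity effectively involve only the decaying part of $\mc{K}_D^{0,0}[\vp]$, for which integration by parts over the quasi-periodic fundamental cell closes without residual boundary terms at infinity.
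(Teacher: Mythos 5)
The paper does not prove this lemma at all: it is imported by citation from \cite[Lemma 3.6]{ammari2020mathematical} (proved there for the free-space kernel) and "adapted" to the periodic setting without further comment. Your proposal supplies the missing argument, and it follows the same standard route as the cited source: the range of $\mc{T}_D^{0,0}$ consists of gradients of functions harmonic in $D$ (via $-\Delta \mc{K}_D^{0,0}[\vp]=\vp\chi_D$), which gives the block-diagonal structure, the kernel identification, and $\sigma(\mc{T}_D^{0,0})=\sigma(\mc{T}_D^{0,0}|_W)\cup\{0\}$; the single-layer representation $u=-\mc{S}_{\p D}[\p_\nu f]$ gives negativity on $W$; and the identity $\mc{T}_D^{0,0}[\vp]+\vp=\curl\curl\mc{K}_D^{0,0}[\vp]$ together with the vanishing cross term gives the lower bound $-1$. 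Your attention to the slow $x_3$-decay of the renormalized periodic kernel is exactly the point where the adaptation from the free-space case requires care, and your resolution (the $q=0$ Fourier mode of $\mc{K}_D^{0,0}[\vp]$ is affine outside $Y_h$, so all second-derivative quantities decay and the boundary terms at $x_3=\pm\infty$ vanish) is correct.

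One step is stated too quickly. From $(\mc{T}_D^{0,0}[\na f],\na f)_D=-\langle\mc{S}_{\p D}[\p_\nu f],\p_\nu f\rangle_{\p D}<0$ for every nonzero $\na f\in W$ you conclude $\sigma(\mc{T}_D^{0,0}|_W)\subset(-\infty,0)$, i.e.\ $0\notin\sigma(\mc{T}_D^{0,0}|_W)$. Pointwise strict negativity of the quadratic form of a self-adjoint operator does not by itself keep $0$ out of the spectrum (consider $-\mathrm{diag}(1/n)$); since the paper explicitly uses the invertibility of $\mc{T}_D^{0,0}|_W$ in the proof of Theorem \ref{asym:resonance}, this point matters. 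The fix is standard and consistent with your setup: the (periodic, zero-frequency) single-layer potential is coercive on mean-zero densities, $\langle\mc{S}_{\p D}\phi,\phi\rangle\gtrsim\|\phi\|^2_{H^{-1/2}(\p D)}$ (note $\int_{\p D}\p_\nu f=0$ since $f$ is harmonic), and for harmonic $f$ one has $\|\p_\nu f\|_{H^{-1/2}(\p D)}\gtrsim\|\na f\|_{L^2(D)}$; together these give $(\mc{T}_D^{0,0}[\na f],\na f)_D\le -c\|\na f\|_{L^2(D)}^2$, which bounds $\sigma(\mc{T}_D^{0,0}|_W)$ away from $0$. With that strengthening your proof is complete.
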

By integration by parts, it is easy to see that
\begin{equation} \label{rela:simply}
     (\phi,\tb{n}^{d'}[\vp])_{D} = (\phi, (\kb{n-2}^{d'} + \na \ddiv \kb{n}^{d'})[\vp])_{D} = (\phi,\np_{D,n-2}^{d'}[\vp])_{D}\,,
\end{equation}
for any $\phi \in {\bf L}^2(D)$, $\vp \in {\bf H}_0(\ddiv 0, D)$, or $\vp \in {\bf L}^2(D)$, $\phi \in {\bf H}_0(\ddiv 0, D)$. 
It readily gives 
\begin{equation} \label{rela:simply_2}
  \P_{\rm s} \kb{n-2}^{d'} \P_{\rm t} = \P_{\rm s} \tb{n}^{d'} \P_{\rm t}\,, \q \text{for}\ {\rm s,t = d, w} \ \text{except the case}\ {\rm s,t = w}\,.
\end{equation}
It follows from \eqref{eq:exp_td} and \eqref{rela:simply_2} that
\begin{align} \label{block_exp_1}
 \P_{\rm s} \TT_D^{\ww d',\ww} \P_{\rm t} = \sum_{n = 2}^\infty \ww^n \P_{\rm s} \kb{n-2}^{d'} \P_{\rm t}\,,  \q \text{for}\ {\rm s,t = d, w} \ \text{except the case}\ {\rm s,t = w}\,,
\end{align}
and 
\begin{align} \label{block_exp_2}
    \pw \TT_D^{\ww d',\ww} \pw = \sum_{n = 0}^\infty  \ww^n  \pw \tb{n}^{d'} \pw \,.
\end{align}
We are now ready to give the first-order approximation for the subwavelength resonances $\ww$ (i.e., the poles of the meromorphic function $\A_{\tau}(\ww d', \ww)^{-1}$ near the origin) in terms of the high contrast $\tau$. For convenience, we define 
\begin{align} \label{def:kdd}
\mbb{K}_D := \pd \mc{K}_D^{0,0} \pd : \hzz \to \hzz\,.
\end{align}

\begin{theorem} \label{asym:resonance}
Let $D$ be a general smooth domain. When the contrast $\tau$ is large enough,  the scattering resonances for the problem \eqref{model} with the plane wave incidence \eqref{eq:indplawave} exist in the subwavelength regime. Moreover, for any subwavelength resonance $\ww$, as $\tau \to \infty$, there exists some eigenvalue $\lad_i$ 
of the operator $\mbb{K}_D$ with the orthogonal eigenfunctions $\{\vp_j\}_{j = 1}^{n_i} \in \hzz$ such that the following asymptotic holds, 
\begin{align}\label{eq:first_order_app}
    \ww = \frac{1}{\sqrt{\tau \lad_i}} -  \frac{c_{i,j}}{2 \tau \lad_i^2} + O(\tau^{-\frac{3}{2}})\,,
\end{align}
where $c_{i,j}$, $1 \le j \le n_i$, are the eigenvalues of the matrix $C$ defined by 
\begin{align}\label{eq:first_matrix}
    C_{kl} = (\vp_k, \P_d \kb{1}^{d'} \P_d [\vp_l])_D\,, \q \text{for}\ 1 \le k,l \le n_i\,.
\end{align}
\end{theorem}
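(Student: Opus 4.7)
The plan is to convert the resonance condition $\A_\tau(\ww d',\ww)[E]=0$ into a reduced finite-dimensional spectral problem on the eigenspace of the leading operator $\mbb{K}_D$ via the Helmholtz decomposition followed by a Schur complement reduction, and then to apply degenerate perturbation theory. First I would write $E = E_d + E_w$ with $E_d \in {\bf H}_0(\ddiv0,D)$ and $E_w \in W$, turning $\A_\tau$ into a $2{\times}2$ block operator. The identities \eqref{block_exp_1} and \eqref{rela:simply_2} imply that the off-diagonal blocks are $O(\ww^2)$, while \eqref{block_exp_2} combined with \eqref{eq:lead_ww} yields $\pw \mc{T}_D^{\ww d',\ww}\pw = \pw \mc{T}_D^{0,0}\pw + O(\ww)$. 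Since Lemma \ref{lem:op1} forces $\sigma(\mc{T}_D^{0,0}|_W) \subset [-1,0)$, the $W$-block $\tau^{-1} - \pw \mc{T}_D^{\ww d',\ww}\pw$ is uniformly invertible for $\tau \gg 1$ and $\ww \ll 1$, allowing me to eliminate $E_w$. The reduced equation on ${\bf H}_0(\ddiv0,D)$ becomes
\begin{equation*}
\bigl(\tau^{-1} - \pd \mc{T}_D^{\ww d',\ww}\pd - \mc{R}(\ww,\tau)\bigr)\, E_d = 0,
\end{equation*}
where the Schur complement remainder $\mc{R}(\ww,\tau)$ is $O(\ww^4) = O(\tau^{-2})$, hence of smaller order than the correction we seek.

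Inserting the expansion $\pd \mc{T}_D^{\ww d',\ww}\pd = \ww^2 \mbb{K}_D + \ww^3\, \pd \kb{1}^{d'} \pd + O(\ww^4)$ from \eqref{block_exp_1} and \eqref{eq:lead_dd}, dividing through by $\ww^2$, and setting $\eta := (\ww^2 \tau)^{-1}$, the reduced equation takes the form
\begin{equation*}
\eta\, E_d \;=\; \mbb{K}_D[E_d] + \ww\, \pd \kb{1}^{d'} \pd [E_d] + O(\ww^2),
\end{equation*}
whose $\ww \to 0$ limit is the eigenproblem for the compact self-adjoint operator $\mbb{K}_D$. Existence of subwavelength resonances and the leading asymptotic $\ww_0 = 1/\sqrt{\tau \lad_i}$ then follow from Lemma \ref{lem:spec_td} and Theorem \ref{thm:asym_to} applied to this analytic operator family.

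For the next-order correction I would apply degenerate perturbation theory on the eigenspace $V_i := \mathrm{span}\{\vp_j\}_{j=1}^{n_i}$ associated with $\lad_i$. Expanding $E_d = \vp + \ww\, \vp^{(1)} + \cdots$ with $\vp = \sum_l a_l \vp_l \in V_i$ and $\eta = \lad_i + \ww\, \eta^{(1)} + \cdots$, matching powers of $\ww$, pairing the $O(\ww)$ equation with each basis vector $\vp_k$, and using self-adjointness of $\mbb{K}_D$ to cancel the $\vp^{(1)}$ contribution, one arrives at the matrix eigenvalue equation $\eta^{(1)} a_k = \sum_l C_{kl} a_l$ with $C_{kl}$ as in \eqref{eq:first_matrix}; hence $\eta^{(1)} = c_{i,j}$ is an eigenvalue of $C$. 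Inverting $(\ww^2 \tau)^{-1} = \lad_i + c_{i,j}\, \ww + O(\ww^2)$ via the ansatz $\ww = \ww_0 + \ww_1 + \cdots$ with $\ww_0 = 1/\sqrt{\tau \lad_i}$ yields $\ww_1 = -c_{i,j}/(2\tau \lad_i^2)$ with remainder $O(\tau^{-3/2})$, which is precisely \eqref{eq:first_order_app}.

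The main obstacle I foresee is the uniform control of the Schur complement remainder $\mc{R}(\ww,\tau)$ together with the higher-order corrections from $\pd \mc{T}_D^{\ww d',\ww}\pd$ in the two small parameters $\ww$ and $\tau^{-1}$, especially when $\lad_i$ is degenerate ($n_i > 1$) and the perturbation branches are at most fractional-analytic rather than smooth. This can be handled through the multidimensional analytic Fredholm framework behind Lemma \ref{lem:spec_td}, which ensures that the resonance branches depend analytically on $\tau^{-1/2}$ (up to possibly fractional powers) and thereby rigorously justifies the formal Puiseux-type expansion constructed above.
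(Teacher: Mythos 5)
Your proposal is correct and follows essentially the same route as the paper: a Helmholtz block decomposition, elimination of the $W$-component using the invertibility of $\mc{T}_D^{0,0}|_W$ from Lemma \ref{lem:op1} (the paper does this by acting with $\pw$ and concluding $\pw E = O(\tau^{-1})$, which is your Schur complement step in disguise), followed by a Lyapunov--Schmidt/degenerate perturbation reduction on the eigenspace of $\mbb{K}_D$ leading to the matrix $C$ and the inversion of $(\ww^2\tau)^{-1} = \lad_i + c_{i,j}\ww + O(\ww^2)$. The paper phrases the finite-dimensional step as the vanishing of the determinant $\det(I - \h{\ww}^2\lad_i I - \sqrt{\tau}^{-1}\h{\ww}^3 C)$ rather than pairing with basis vectors, but these are equivalent.
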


\begin{proof}
By the generalized Rouch\'{e} theorem \cite{gohberg1990classes,gokhberg1971operator} and a similar argument as in \cite[Section 3.3]{ammari2020mathematical}, we have the existence of the subwavelength resonances $\ww = O(\sqrt{\tau}^{\sss -1})$ in the high contrast regime. To derive the asymptotic formula for a subwavelength resonance $\ww$, we consider the equation 
\begin{align} \label{auxeq:reso_asym}
 \A_{\tau}(\ww d', \ww)[E] = 0 \q \text{with}\ E \in {\bf H}(\ddiv0, D)\ \text{and}\ \norm{E}_D = 1\,.
\end{align}
Acting the projection $\pw$ on the above equation gives, by \eqref{eq:lead_ww}, \eqref{block_exp_1}, and \eqref{block_exp_2},
\begin{align*}
    \Big(\mc{T}_D^{0,0} + O\big(\sqrt{\tau}^{-1}\big)\Big)[\pw E] = O\left(\tau^{-1}\right)\,,
\end{align*}
which, by the invertibility of $\mc{T}_D^{0,0}$ on $W$ in Lemma \ref{lem:op1}, implies $\pw E = O(\tau^{-1})$. We then act the projection $\pd$ on the equation \eqref{auxeq:reso_asym} and find, by \eqref{eq:lead_dd} and \eqref{block_exp_1}, 
\begin{align} \label{auxeq:reso_asym_2}
    (1 - \h{\ww}^2 \mbb{K}_D - \sqrt{\tau}^{-1} \h{\ww}^3 \P_d \kb{1}^{d'} \P_d + O(\tau^{-1}))[E] = O(\tau^{-1})\,,
\end{align}
where $\h{\ww} = \sqrt{\tau}\ww$ is the scaled resonance. It is easy to observe from \eqref{auxeq:reso_asym_2} that the leading-order approximation for $\h{\ww}$ is given by $\h{\ww} = \sqrt{\lad_i}^{\sss -1} + O(\sqrt{\tau}^{\sss -1})$ for some eigenvalue $\lad_i$ of $\mbb{K}_D$. For the higher-order approximation, letting $\{\vp_j\}_{j=1}^{n_i}$ be the orthogonal eigenfunctions of $\mbb{K}_D$ for $\lad_i$, by the standard perturbation theory with the Lyapunov-Schmidt reduction, we need to consider the zeros of the following determinant near $\h{\ww}_i: = \sqrt{\lad_i}^{\sss -1}$:
\begin{align} \label{auxeq:reso_asym_3}
    \det(I - \h{\ww}^2 \lad_i I - \sqrt{\tau}^{-1}\h{\ww}^3 C) = 0\,,
\end{align} 
with the $n_i \t n_i$ matrix $C$ given in \eqref{eq:first_matrix}. It is easy to see that linearizing the above equation \eqref{auxeq:reso_asym_3} at $\h{\ww}_i$ gives
\begin{align*}
     \det(- 2(\h{\ww} - \h{\ww}_i) \h{\ww}_i^{-1} I - \sqrt{\tau}^{-1}\h{\ww}_i^3 C) = 0\,.
\end{align*}
Then, the desired approximation \eqref{eq:first_order_app} for the resonance follows. 
\end{proof}

\subsection{Shape derivatives and asymptotics} \label{sec:shape_deriva}
In order to analyze the scattering effect by the symmetry-broken metasurface, it is necessary to understand the dependence of the subwavelength resonance $\ww$ on the domain $D$. For this, in view of \eqref{eq:first_order_app}, we will compute the shape derivatives of the eigenvalues $\lad_i$ of the operator $\mbb{K}_D$, which is also of independent interest. 

Suppose that $\Omega$ is a smooth convex bounded domain such that $\overline{D} \subset \Omega$ and let $V \in {\bf C}^\infty(\Omega)$ be a feasible deformation field. We define the map $\Phi_t:= I  + t V$ for $t \ge 0$. 
Then $\Phi_0(D) = D$ and $\Phi_t$ gives a family of $C^\infty$-diffeomorphisms for small enough $|t|$. As in \eqref{eq:pert_domain}, we denote by $D_t = \Phi_t(D)$ the deformed domain. Moreover, for any diffeomorphism $\Phi$ on $D$, we introduce the pullback of a function $f$ on $\Phi(D)$: $\Phi^* f = f \circ \Phi$, and the pushforward of a function $g$ on $D$: $\Phi_{*} g = g \circ \Phi^{-1}$. Note that $\Phi_*$ is a smooth bijective map between ${\bf L}^2(D_t)$ and ${\bf L}^2(D)$. With these notions, the eigenvalue problem on the deformed domain $D_t$:
\begin{align*}
\mbb{K}_{D_t}[E] = \lad E\,, \q \text{on}\ {\bf L}^2(D_t)\,,   
\end{align*}
can be equivalently reformulated on the reference domain $D$: 
\begin{align} \label{eq:transformed_eig}
    \w{\mbb{K}}_{D,t}[E] = \lad E\,,\q \text{on}\ {\bf L}^2(D)\,, \q \text{with}\ \w{\mbb{K}}_{D,t} := \Phi_t^* \mbb{K}_{D_t} \Phi_{t,*}\,.
\end{align}
It is clear that the operator $\w{\mbb{K}}_{D,t}$ is the composition of 
\begin{align} \label{def:deformed_op}
    \P_D^t:= \Phi_t^* \P_{{\rm d},D_t} \Phi_{t,*} \q \text{and}\q \w{\mc{K}}_D^t := \Phi_t^* \mc{K}_{D_t}^{0,0} \Phi_{t,*}\,,
\end{align}
namely, $\w{\mbb{K}}_{D,t} = \P_D^t  \w{\mc{K}}_D^t \P_D^t$, where $\P_{{\rm d}, D_t}: {\bf L}^2(D_t) \to {\bf H}_0(\ddiv0,D_t)$ is the projection defined as in \eqref{def:proj_helm} (we add the subscript $D_t$ to emphasize its dependence on the domain).
The main results of this section are as follows. We postpone their proofs to Appendix \ref{app:c} for the sake of clarity.

\begin{lemma} \label{prop:differnetiablity}
The operators $ \P_D^t$, $\w{\mc{K}}_D^t: {\bf L}^2(D) \to {\bf L}^2(D)$, defined in \eqref{def:deformed_op}, are differentiable at $t = 0$ with the derivatives:  
\begin{align} \label{der_proj}
    \frac{d}{d t}\Big|_{t = 0}\P_D^t[E] = (\na V)^T \P_{{\rm d},D}^\perp E - \na w\,, 
\end{align} 
with $w \in H^1(D)$ being a solution to 
\begin{align} \label{der_proj_2}
    (\na w, \na \vp)_D = ( ( - \na V +  \ddiv V) \P_{{\rm d},D} E +  (\na V)^T \P_{{\rm d},D}^{\perp} E, \na \vp)_D \,,\q \forall \vp \in H^1(D)\,,
\end{align}
and 
\begin{align} \label{eq:der_integral}
    \frac{d}{d t}\Big|_{t = 0} \w{\mc{K}}_D^t[E] 
    & = \int_{D} - \ddiv_y \big[G^{0,0}(x, y)  (V(x)-V(y))\big] E(y)\  d y\,.
\end{align}
Here, the projection $\P_{{\rm d},D}^{\perp}$ is the orthogonal complement of $\P_{{\rm d},D}$. 
\end{lemma}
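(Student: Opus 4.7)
The plan is to treat $\w{\mc{K}}_D^t$ and $\P_D^t$ separately, reducing each to an object defined on the fixed reference domain $D$.

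For the integral operator $\w{\mc{K}}_D^t$, I would perform the change of variables $y = \Phi_t(\w y)$ in the definition of $\mc{K}_{D_t}^{0,0}$ to rewrite, for $E \in {\bf L}^2(D)$,
\begin{equation*}
\w{\mc{K}}_D^t[E](x) = \int_D G^{0,0}\big(\Phi_t(x),\Phi_t(\w y)\big)\, E(\w y)\, \det \na\Phi_t(\w y)\, d\w y\,.
\end{equation*}
Since $V \in {\bf C}^\infty(\Omega)$, the map $t \mapsto \Phi_t$ is smooth uniformly on $\overline{D}$, and $\det\na\Phi_t = 1 + t\,\ddiv V + O(t^2)$. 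Differentiation under the integral sign at $t = 0$ can then be justified by dominated convergence, using the weak singularity of $G^{0,0}$ near the diagonal together with the uniform bi-Lipschitz estimate $|\Phi_t(x) - \Phi_t(\w y)| \asymp |x - \w y|$ valid for $|t| \ll 1$. Collecting the three derivative contributions and invoking the translation invariance $\na_x G^{0,0}(x,y) = -\na_y G^{0,0}(x,y)$ packages them into the single expression $-\ddiv_y\!\big[G^{0,0}(x, y)(V(x) - V(y))\big]$, which yields \eqref{eq:der_integral}.

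For the projection $\P_D^t$, I would exploit the Neumann-potential realization of the Helmholtz decomposition on $D_t$: for $E \in {\bf L}^2(D)$, write $\Phi_{t,*}E = \P_{{\rm d},D_t}(\Phi_{t,*}E) + \na q_t$ with $q_t \in H^1(D_t)/\R$ determined by
\begin{equation*}
\int_{D_t} \na q_t \dd \na \phi\,dy = \int_{D_t} (\Phi_{t,*}E) \dd \na \phi\,dy\,,\q \forall \phi \in H^1(D_t)\,,
\end{equation*}
so that $\P_D^t E = E - (\na q_t)\circ \Phi_t$. Setting $\w q_t := q_t \circ \Phi_t \in H^1(D)$ and pulling the variational problem back to the fixed domain $D$ via the chain rule produces the equivalent family
\begin{equation*}
\int_D A_t \na \w q_t \dd \na \psi\, dx = \int_D B_t E \dd \na \psi\,dx\,,\q \forall \psi \in H^1(D)\,,
\end{equation*}
with $A_t := (\na\Phi_t)^{-1}(\na\Phi_t)^{-T}\det(\na\Phi_t)$ and $B_t := (\na\Phi_t)^{-T}\det(\na\Phi_t)$, both smooth in $t$ and satisfying $A_0 = B_0 = I$. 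The pulled-back bilinear forms are uniformly coercive on $H^1(D)/\R$ for $|t|\ll 1$, so a Lax--Milgram argument applied to the difference quotient $(\w q_t - q_0)/t$ (equivalently, the implicit function theorem) yields differentiability of $t \mapsto \w q_t$ in $H^1(D)$ at $t = 0$. Writing $w := \dot{\w q}_0$ and differentiating the weak formulation with $\dot A_0 = -\na V - (\na V)^T + \ddiv V\cdot I$ and $\dot B_0 = -(\na V)^T + \ddiv V\cdot I$, together with the decomposition $E = \P_{{\rm d},D}E + \na q_0$, collapses the right-hand side to precisely \eqref{der_proj_2}. Differentiating the identity $\P_D^t E = E - (\na\Phi_t)^{-T}\na\w q_t$ at $t = 0$ then delivers $-\na w + (\na V)^T\na q_0 = -\na w + (\na V)^T\P_{{\rm d},D}^\perp E$, which is \eqref{der_proj}.

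The main technical hurdle is upgrading these pointwise (in $E$) computations to genuine Fr\'echet differentiability of $\w{\mc{K}}_D^t$ and $\P_D^t$ as elements of $\mc{L}({\bf L}^2(D))$, i.e.\ obtaining remainder estimates uniform in the input $E$. For $\w{\mc{K}}_D^t$ this reduces to a dominated-convergence argument controlled by the weakly singular bound on $G^{0,0}$ and the smoothness of $V$. For $\P_D^t$ it relies on the $C^0(\overline D)$-smoothness of the coefficients $A_t, B_t$ combined with the uniform coercivity of the family of pulled-back Neumann forms, which together give the operator-norm remainder $\w q_t - q_0 - t w = o(t)$ in $H^1(D)$ with constants independent of $E$; linearity of the solution map in $E$ makes this uniformity automatic once the abstract differentiability in $t$ is established.
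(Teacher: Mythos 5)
Your proposal follows essentially the same route as the paper's proof: both pull $\mc{K}_{D_t}^{0,0}$ back to $D$ by the change of variables $y = \Phi_t(\w y)$ and differentiate under the integral (combining the terms via translation invariance of $G^{0,0}$), and both realize $\P_{D_t}$ through the Neumann problem for the gradient part of the Helmholtz decomposition, transport that variational problem to the fixed domain, and differentiate the coefficients $(\na \Phi_t)^{-1}$ and $\det \na \Phi_t$ at $t=0$. The extra care you take with dominated convergence and with uniform coercivity/Lax--Milgram for the difference quotients usefully supplements the paper's more formal computation (which delegates the integral-operator part to a reference); just verify your gradient/Jacobian convention so that the pulled-back right-hand-side coefficient contributes $(-\na V + \ddiv V)E$ rather than its transpose, as needed to land exactly on \eqref{der_proj_2}.
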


\begin{proposition} \label{lem:shapedev}
Suppose that $\lad_0$ is an eigenvalue of the operator $\mbb{K}_D$ with eigenfunctions $\{\vp_j\}_{j = 1}^m \subset {\bf L}^2(D)$. Then, given a deformation field $V \in {\bf C}^\infty(\Omega)$, there exists $0 < t_0 \ll 1$ such that for any $ |t| \le t_0$, the operator $\mbb{K}_{D_t}$ on the perturbed domain $D_t = \Phi_t(D)$ has exactly $m$ eigenvalues near $\lad_0$, denoted by $\{\lad_{0,j}(t)\}_{j = 1}^m$, that are continuously differentiable in $t$ with the derivatives
$ \dot{\lad}_{0,j}(0)$ 
being eigenvalues of the $m \t m$ Hermitian matrix $\h{K}$ defined by
\begin{align} \label{matrix_K}
    \h{K}_{ij} :=  \frac{d}{d t}\Big|_{t = 0} K_{ij}(t) = \lad_0(\n \dd V \vp_i, \vp_j)_{\p D}\,,
\end{align}
where
\begin{align*}
    K_{ij}(t) := (\vp_i, \w{\mbb{K}}_{D,t} \vp_j)_D \,.
\end{align*}


\end{proposition}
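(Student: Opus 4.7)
The plan is to establish the eigenvalue branches via analytic perturbation theory applied to the pulled-back operator $\w{\mbb{K}}_{D,t}$, and then identify the derivatives of these branches through a Hadamard-type reduction of the first-order correction matrix to a surface integral on $\p D$.

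First, the differentiability of $\w{\mbb{K}}_{D,t} = \P_D^t \, \w{\mc{K}}_D^t \, \P_D^t$ as a $\mc{L}({\bf L}^2(D))$-valued function of $t$ at $t = 0$ follows directly from Lemma \ref{prop:differnetiablity} and the product rule. Although $\w{\mbb{K}}_{D,t}$ is not self-adjoint with respect to the Euclidean $L^2(D)$ inner product, it is similar to $\mbb{K}_{D_t}$, which is compact and self-adjoint on ${\bf L}^2(D_t)$; equivalently, $\w{\mbb{K}}_{D,t}$ is self-adjoint in the $t$-dependent weighted inner product $(f,g)_{J_t} := \int_D f \cdot \bar{g}\, |\det \na \Phi_t|\, dx$, so its spectrum is real. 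Since $\lad_0$ is an isolated eigenvalue of $\mbb{K}_D$ of finite multiplicity $m$, the Rellich-Kato perturbation theorem \cite[Chapter VII]{kato2013perturbation} yields, for $|t| \le t_0$ with $t_0$ small enough, exactly $m$ continuously differentiable eigenvalue branches $\lad_{0,j}(t)$ near $\lad_0$. Standard degenerate first-order perturbation theory then identifies the derivatives $\dot{\lad}_{0,j}(0)$ as the eigenvalues of the correction matrix $\dot{K}(0)$, with the inner-product perturbation contributing nothing at $t=0$ since $\det \na \Phi_0 = 1$.

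Next, I would compute $\dot{K}_{ij}(0)$ by the product rule using Lemma \ref{prop:differnetiablity}, exploiting the normalization $\pd \vp_i = \vp_i$ so that $\P_{{\rm d},D}^\perp \vp_i = 0$. This produces a sum of three volume integrals: two projection-derivative terms of the form $(\dot{\P}_D \vp_i, \mc{K}_D^{0,0} \vp_j)_D$ and $(\vp_i, \mc{K}_D^{0,0} \dot{\P}_D \vp_j)_D$ controlled by \eqref{der_proj}--\eqref{der_proj_2}, and one integral-kernel term $(\vp_i, \dot{\w{\mc{K}}}_D \vp_j)_D$ given by the double integral in \eqref{eq:der_integral}. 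The eigenvalue equation in the form $\mc{K}_D^{0,0} \vp_k = \lad_0 \vp_k + \na H_k$ (with $\na H_k \in W$ harmonic, which follows from $\pd \mc{K}_D^{0,0}\vp_k = \lad_0 \vp_k$ together with $\ddiv(\mc{K}_D^{0,0} \vp_k) = 0$) together with the structural properties $\ddiv \vp_k = 0$ in $D$ and $\n \dd \vp_k = 0$ on $\p D$ will reduce the projection-derivative terms to pairings of harmonic extensions against the weak formulation \eqref{der_proj_2}.

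The technical heart of the proof will be the Hadamard-type reduction showing that all bulk contributions cancel and only the boundary term $\lad_0 \l \n \dd V \vp_i, \vp_j\r_{\p D}$ survives. For the middle term, \eqref{eq:der_integral} yields a double integral which, after integration by parts in $y$ using the kernel identity $-\Delta_y G^{0,0}(x,y) = \sum_{\gamma \in \Lad}\d(y - (x'+\gamma, x_3))$ together with $\ddiv \vp_j = 0$ and $\n \dd \vp_j = 0$, splits into a bulk contribution that cancels the projection-derivative terms (via the weak formulation \eqref{der_proj_2}) and a boundary trace contribution that, combined with the eigenvalue identity $\mc{K}_D^{0,0} \vp_j = \lad_0 \vp_j + \na H_j$, collapses to the desired expression. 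Hermiticity of $\h{K}$ then follows from the symmetry of the surface inner product together with realness of $\lad_0$. This procedure mirrors the classical Hadamard shape derivative formula for eigenvalues of Laplace-type operators on moving domains, with the divergence-free and zero-normal-trace structure of the $\vp_k$'s providing the cancellations that suppress all interior contributions.
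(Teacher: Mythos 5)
Your overall strategy coincides with the paper's: differentiate $\w{\mbb{K}}_{D,t}=\P_D^t\,\w{\mc{K}}_D^t\,\P_D^t$ via Lemma \ref{prop:differnetiablity} and the product rule, apply degenerate first-order perturbation theory (and indeed the weight $|\det\na\Phi_t|$ drops out of the correction matrix, though the reason is not merely that $\det\na\Phi_0=1$ --- its derivative $\ddiv V$ is nonzero --- but that the resulting commutator $[\tfrac12\ddiv V,\mbb{K}_D]$ vanishes when compressed to the $\lad_0$-eigenspace), and then reduce $\dot K_{ij}(0)$ to a boundary integral. That skeleton is sound and matches the paper.

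The gap is precisely in the step you defer as the ``technical heart''. The mechanism you name for the cancellation --- integrating by parts against the distributional identity $-\Delta_y G^{0,0}(x,y)=\sum_\gamma\d(\cdot)$ --- plays no role here: the derivative \eqref{eq:der_integral} contains only first derivatives $\na_y G^{0,0}$ of the kernel (it is $\na_yG^{0,0}\dd(V(x)-V(y))-G^{0,0}\ddiv V(y)$ under the integral), so the fundamental-solution property of $G^{0,0}$ never enters, and the bulk terms do not cancel against \eqref{der_proj_2} in the way you describe. What actually produces \eqref{matrix_K} is the combination of (i) the splitting $\np_D^{0,0}\vp_k=\lad_0\vp_k+\na p_k$ with $\na p_k:=\P_{{\rm d},D}^{\perp}\np_D^{0,0}\vp_k$ a gradient field, (ii) the algebraic identity $(\na V)\vp-(\ddiv V)\vp-(\na\vp)V=\curl(V\t\vp)$, valid because $\ddiv\vp=0$, and (iii) the surface integration by parts $(\na p_k,\curl(V\t\vp))_D=(\na_{\p D}p_k,\n\t(V\t\vp))_{\p D}=-(\na_{\p D}p_k,(\n\dd V)\vp)_{\p D}$ using $\n\dd\vp=0$, which cancels exactly the boundary terms coming from the derivative of the projections and from the integrated-by-parts form of \eqref{eq:der_integral}; the surviving $\lad_0$-terms assemble into $\lad_0(\n\dd V\vp_i,\vp_j)_{\p D}$. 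Until these cancellations are written out, the identity \eqref{matrix_K} --- which is the entire content of the proposition --- remains asserted rather than proved. A smaller point: $C^1$ dependence of the operator family is not enough to guarantee differentiable branches of a degenerate eigenvalue; here one is saved because $\Phi_t=I+tV$ makes the family real-analytic in $t$, so Rellich's theorem applies after symmetrization, and this should be stated.
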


With Theorem \ref{asym:resonance} and Proposition \ref{lem:shapedev} above, we can easily derive the asymptotic expansions of the subwavelength resonances for the symmetry-broken geometry, which will be useful in the next section. For ease of exposition, we first introduce the following matrices under the assumptions \eqref{1}-\eqref{3} with the incident direction $\di$ given in \eqref{eq:incidece_direction}. 
Let the fields $\vp_1 \in {\bf H}_{0,ant}(\ddiv0,D)$ and $\vp_2 \in {\bf H}_{0,sym}(\ddiv0,D)$ be given in the assumption \eqref{2}, and $V$ be a deformation field \eqref{def:defor_field}. By a slight abuse of notation, we define the $2 \t 2$ symmetric matrix ${\rm C}^0$ by
\begin{align} \label{def:matrix_1}
{\rm C}^0_{ij} := (\nu \dd V \vp_i, \vp_j)_{\p D}\,,    
\end{align}
and the $2 \t 2$ matrix ${\rm C}^{1,\alpha_0}$ as in \eqref{eq:first_matrix} by
\begin{align} \label{def:matrix_2}
{\rm C}^{1,\alpha_0}_{ij} := (\vp_i, \P_d \kb{1}^{\epsilon \alpha_0} \P_d [\vp_j])_D\,.    
\end{align}
Note from \eqref{assp:sym} that the unit outer normal vector $\nu$ to $\p D$ is antisymmetric in the sense of \eqref{eq:asyml2}. Thanks to the symmetry properties of $V$ and $\vp_i$, we have 
\begin{align} \label{matrix_1}
    {\rm C}^0 = \mm 0 & c_0 \\ c_0 & 0 \nn\,,\q \text{with}\ c_0 := (\n\dd V \vp_1, \vp_2)_{\p D}\,.
\end{align}
Moreover, recalling the kernel $ G_{1}^{\epsilon \alpha_0}$ \eqref{green_1} for the operator $\P_d \kb{1}^{\epsilon \alpha_0} \P_d$, we write it as 
\begin{align*}
    G_{1}^{\epsilon \alpha_0}(x) = iK_1(x) + i  \epsilon^2 K_2(x) + i \epsilon  K_3(x)\,,
\end{align*}
with 
\begin{align*}
     K_1(x) = - \frac{ d_3|x_3|^2}{4}\,, \q  K_2(x) = -  \frac{ (\alpha_0 \dd x')^2}{4 d_3} \,,\q  K_3(x) = - \alpha_0\dd \left(\frac{ |x_3| x'}{2} - i  g_1(x) \right) \,.
\end{align*}
We remark that all $K_i$ are real functions, since $g_1$ is purely imaginary. Again, by the symmetry of $\vp_i$ and $g_1$ in \eqref{eq:symm_green}, it follows that 
\begin{align} \label{matrix_2}
    {\rm C}^{1,\alpha_0} = i \mm c_{1,1} & 0 \\ 0 & c_{1,2}\nn + i \epsilon^2 \mm c_{2,1} & 0 \\ 0 & c_{2,2} \nn + i \epsilon \mm 0 & c_3 \\ - c_3 & 0 \nn\,,
\end{align}
with real numbers:
\begin{align} \label{element_1}
    c_{s,l} = \int_D \int_D K_s(x-y) \vp_l(x) \dd \vp_l(y) \ dx dy\,, \q s= 1,2\,, \ l =1, 2\,,
\end{align}
and 
\begin{align} \label{element_2}
c_3 = \int_D \int_D K_3(x-y) \vp_2(x) \dd \vp_1(y)\ dx dy\,.
\end{align}
We claim that $c_{1,1}$ is zero and $c_{1,2} \ge 0$. Indeed, if the assumptions \eqref{1} and \eqref{2} hold, Theorem \ref{asym:resonance} for the case of the normal incidence and the symmetric domain $D$ shows that there are two characteristic values $\ww_1$ and $\ww_2$ of $\mc{A}_{\tau}(0,\ww)$ near $\sqrt{\tau \lad_0}^{\sss -1}$ with asymptotics: for $j = 1,2$, as $\tau \to \infty$, 
\begin{align}\label{eq:first_order_app_2}
    \ww_j = \frac{1}{\sqrt{\tau \lad_0}} - \frac{i c_{1,j}}{2 \tau \lad_0^2} + O(\tau^{-\frac{3}{2}})\,,
\end{align}
where $c_{1,j}$, $j = 1,2$, is defined in \eqref{element_1}. 
Noting that the resonance $\ww_1 = \ww_*$ corresponds to the antisymmetric function, the assumption \eqref{1} enforces the vanishing of the imaginary part of $\ww_1$, that is, $c_{1,1} = 0$ and 
\begin{align} \label{aym_wstar}
    \ww_* = \frac{1}{\sqrt{\tau \lad_0}} + O(\tau^{-\frac{3}{2}})\,.
\end{align}
Moreover, it is known that the scattering resonances exist in the lower half-plane \cite{ammari2020mathematical,dyatlov2019mathematical}, which implies $c_{1,2} \ge 0$.

\begin{corollary} \label{coro:appro_reso}
Given a deformation field $V$ in \eqref{def:defor_field}, let $\di$ and $D_t$ be the perturbed incident direction and domain defined in \eqref{eq:incidece_direction} and \eqref{eq:pert_domain}. Under the assumptions \eqref{1} and \eqref{2}, there exist two subwavelength resonances $\ww_1$ and $\ww_2$  near $\sqrt{\tau \lad_0}^{\sss -1}$ for the scattering problem \eqref{model} with $\alpha = \epsilon \alpha_0$, which satisfy the asymptotic expansions: for $j = 1,2$,
\begin{align*}
\ww_j = \frac{1}{\sqrt{\tau}} \left(\frac{1}{\sqrt{\lad_0}} - \frac{\mu_{0,j}}{2}\right) + O(\tau^{-\frac{3}{2}} + |t|^2 \tau^{-\frac{1}{2}})\,, \q \text{as}\ \tau \to \infty\,, \ |t| \to 0\,,
\end{align*}
with $\mu_{0,j}$ being the eigenvalues of the following matrix:
\begin{align} \label{matrix_1st}
 {\rm M} =   t \lad_0^{-\frac{1}{2}}  \mm 0 & c_0 \\ c_0 & 0 \nn + \tau^{-\frac{1}{2}} \lad_0^{-2} \left(i \mm 0 & 0 \\ 0 & c_{1,2}\nn + i \epsilon \mm 0 & c_3 \\ - c_3 & 0 \nn\right) + O(\epsilon^2 \tau^{-\frac{1}{2}})\,,
\end{align}
where $c_{1,2} \ge 0$ and $c_0, c_3 \in \R $ are given in \eqref{matrix_1}, \eqref{element_1} and \eqref{element_2}. 
\end{corollary}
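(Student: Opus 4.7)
The plan is to combine Theorem \ref{asym:resonance} and Proposition \ref{lem:shapedev} by rerunning the Lyapunov-Schmidt reduction that underlies the proof of Theorem \ref{asym:resonance}, now with \emph{both} the shape parameter $t$ and the incidence parameter $\epsilon$ treated as small. The operator of interest is $\mc{A}_\tau(\epsilon\alpha_0,\omega)$ on the deformed domain $D_t$, and its subwavelength characteristic values are to be sought near $\omega_*\sim \sqrt{\tau\lad_0}^{-1}$.

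First I would reduce to the near-degenerate $2$-dimensional eigenspace. Under assumptions \eqref{1} and \eqref{2}, $\lad_0$ is a semi-simple eigenvalue of $\mbb{K}_D$ of multiplicity two, with eigenfunctions $\vp_1,\vp_2$. By Proposition \ref{lem:shapedev}, for $|t|\ll 1$ the operator $\mbb{K}_{D_t}$ has two eigenvalues $\lad_{0,j}(t)=\lad_0+t\dot{\lad}_{0,j}(0)+O(t^2)$, and the derivatives $\dot{\lad}_{0,j}(0)$ are the eigenvalues of $\lad_0\, {\rm C}^0$ by \eqref{matrix_K} and \eqref{def:matrix_1}. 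Let $\{\vp_j(t)\}_{j=1,2}$ be the corresponding eigenbasis, which depends smoothly on $t$ with $\vp_j(0)=\vp_j$. As in the proof of Theorem \ref{asym:resonance}, we act $\pw$ and $\pd$ on $\mc{A}_\tau(\epsilon\alpha_0,\omega)[E]=0$, invert the $\pw$-block using Lemma \ref{lem:op1}, and obtain (with $\h{\ww}=\sqrt{\tau}\,\omega$) the reduced finite-dimensional equation
\begin{equation*}
  \det\!\Big(I-\h{\ww}^2\,\Lad(t) - \sqrt{\tau}^{-1}\h{\ww}^3\,C(t,\epsilon)+O(\tau^{-1})\Big)=0,
\end{equation*}
where $\Lad(t)=\mathrm{diag}(\lad_{0,1}(t),\lad_{0,2}(t))$ and $C(t,\epsilon)_{kl}=(\vp_k(t),\pd\mc{K}^{\epsilon\alpha_0}_{D_t,1}\pd\vp_l(t))_{D_t}$, with $\mc{K}^{d'}_{D,1}$ the operator in \eqref{asym_2}.

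Next I would Taylor-expand in all small parameters. Using \eqref{matrix_K} to write $\Lad(t)=\lad_0 I+t\lad_0{\rm C}^0+O(t^2)$, and using that both the eigenfunctions $\vp_j(t)$ and the kernel of $\pd\mc{K}^{\epsilon\alpha_0}_{D_t,1}\pd$ depend smoothly on $t,\epsilon$, we get $C(t,\epsilon)={\rm C}^{1,\alpha_0}+O(t)+O(\epsilon^{?})$, where ${\rm C}^{1,\alpha_0}$ is given in \eqref{matrix_2}. Setting $\h{\ww}=\lad_0^{-1/2}+\delta$ and expanding $\h{\ww}^2=\lad_0^{-1}+2\lad_0^{-1/2}\delta+O(\delta^2)$, $\h{\ww}^3=\lad_0^{-3/2}+O(\delta)$, the reduced determinant becomes
\begin{equation*}
 \det\!\Big(2\sqrt{\lad_0}\,\delta\, I + t\,{\rm C}^0+\sqrt{\tau}^{-1}\lad_0^{-3/2}{\rm C}^{1,\alpha_0}+O(\delta^2+t^2+\tau^{-1})\Big)=0,
\end{equation*}
which identifies $-2\sqrt{\lad_0}\,\delta$ as an eigenvalue of $t\,{\rm C}^0+\sqrt{\tau}^{-1}\lad_0^{-3/2}{\rm C}^{1,\alpha_0}$, equivalently $\delta=-\mu_{0,j}/2$ with $\mu_{0,j}$ the eigenvalues of the rescaled matrix ${\rm M}=t\lad_0^{-1/2}{\rm C}^0+\tau^{-1/2}\lad_0^{-2}{\rm C}^{1,\alpha_0}$. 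Substituting the explicit form \eqref{matrix_2} of ${\rm C}^{1,\alpha_0}$, together with the vanishing $c_{1,1}=0$ established after \eqref{element_2}, yields the matrix displayed in \eqref{matrix_1st}. Unscaling $\omega_j=\h{\ww}_j/\sqrt{\tau}=\tau^{-1/2}(\lad_0^{-1/2}-\mu_{0,j}/2)$ with remainder $O(\tau^{-3/2}+|t|^2\tau^{-1/2})$ completes the expansion.

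The main technical point is justifying the Lyapunov-Schmidt reduction uniformly in $(t,\epsilon,\tau^{-1/2})$: one must show that the $\pw$-block remains invertible and that the smooth $t$-dependence of the eigenprojection onto the near-$\lad_0$ subspace of $\mbb{K}_{D_t}$ survives the pull-back through $\Phi_t$, so that the reduced determinant is genuinely analytic in all small parameters. This is handled by the same analytic Fredholm / generalized Rouch\'e argument used in Theorem \ref{asym:resonance}, combined with the differentiability statements of Lemma \ref{prop:differnetiablity}, after which the remaining computation is essentially algebraic.
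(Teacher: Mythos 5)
Your proposal is correct and follows essentially the same route as the paper: project with $\pw$ and $\pd$, invert the $\pw$-block, pull the reduced equation back to the reference domain via $\Phi_t^*$, and linearize simultaneously in $\h{\ww}-\h{\ww}_0$ and $t$ to obtain the determinant condition $\det(-2(\h{\ww}-\h{\ww}_0)\h{\ww}_0^{-1}I - t\,{\rm C}^0 - \sqrt{\tau}^{-1}\h{\ww}_0^3\,{\rm C}^{1,\alpha_0})=0$. Your intermediate detour through the $t$-dependent eigenbasis of $\mbb{K}_{D_t}$ is harmless (the basis change is $I+O(t)$ and the resulting discrepancy $O(\sqrt{\tau}^{-1}|t|)$ is absorbed into $O(\tau^{-1}+|t|^2)$ by Young's inequality), and note that ${\rm C}^{1,\alpha_0}$ already carries the full $\epsilon$-dependence through the kernel $G_1^{\epsilon\alpha_0}$, so no extra $O(\epsilon^{?})$ correction to $C(t,\epsilon)$ is needed there.
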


\begin{proof}
We only provide a sketch of the proof, as it is almost the same as those of Theorem \ref{asym:resonance} and Proposition \ref{lem:shapedev}. Let $\h{\ww} = \sqrt{\tau}\ww$ be the scaled resonance. We consider the characteristic values of $(1 - \tau \mc{T}_{D_t}^{\ww d', \ww})[E] = 0$ near $\sqrt{\tau}^{\sss -1}\h{\ww}_0$
with $\norm{E}_{D} = 1$ and $\h{\ww}_0 = \sqrt{\lad_0}^{\sss -1}$.  Similarly to the proof of Theorem \ref{asym:resonance}, there holds $\pw E = O(\tau^{-1})$, and it suffices to solve, after mapping the problem to the reference domain $D$ as in \eqref{eq:transformed_eig},  
\begin{align} \label{auxeq:reso_asym_22}
    (1 - \h{\ww}^2 \w{\mbb{K}}_{D,t} - \sqrt{\tau}^{-1} \h{\ww}^3 \P_D^t \Phi_t^* \kb{1}^{d'} \Phi_{t,*}  \P_D^t )[\Phi_t^* E] = O(\tau^{-1})\,.
\end{align}
We linearize the above equation \eqref{auxeq:reso_asym_22} at $\h{\ww} = \h{\ww}_0$ and $t = 0$, and find that, up to an error term of $O(\tau^{-1}) + O(|t|^2)$, the scaled resonance $\h{\ww}$ near $\h{\ww}_0$ satisfies 
\begin{align*}
    \det \left( - 2(\h{\ww} - \h{\ww}_0)\h{\ww}_0^{-1} {\rm I} - t {\rm C}^0 - \sqrt{\tau}^{-1}\h{\ww}_0^3 {\rm C}^{1,\alpha_0} \right) = 0\,, 
\end{align*}
which readily completes the proof, by a direct computation. 
\end{proof}

\subsection{Analysis of the reflection energy}
In this section, we investigate the Fano-type transmission and reflection anomalies. For this, we will first derive the leading-order approximation for the scattered polarization vectors $\po_{\pm}$ defined in \eqref{eq:transreflec_polar}. Let $\vp_1$ and $\vp_2$ be the antisymmetric and symmetric functions in the assumption \eqref{2}, respectively. We introduce the following two $3 \t 3$ real matrices for later use:
\begin{align}\label{matrix_3}
    {\rm A}^{ant} = \int_D x \otimes \vp_1 \ dx\,, \q {\rm A}^{sym} = \int_D x \otimes \vp_2 \ dx\,.
\end{align}
By symmetry, it is easy to see that ${\rm A}^{ant}$ and ${\rm A}^{sym}$ are of the block form:
\begin{align*}
    {\rm A}^{ant} = \mm {\bf A}_1 & 0 \\ 0 & a_2\nn \,, \q  {\rm A}^{sym} = \mm 0 & {\bf b}_1 \\ {\bf b}_2^T & 0 \nn\,,
\end{align*}
where ${\bf b}_1, {\bf b}_2 \in \R^2$ and ${\bf A}_1$ is the $2 \t 2$ matrix. The following proposition characterizes the solution of the volume integral equation for the incident frequency $\ww$ near the embedded eigenvalue $\ww_*$:
\begin{align}\label{target}
    (1 - \tau \mc{T}_{D_t}^{\ww d', \ww})[E] = E^i\,.
\end{align}

\begin{theorem} \label{thm:solappro}
Suppose that the assumptions \eqref{1}-\eqref{3} hold. Define $\h{\ww}_{0,j} := \h{\ww}_0 - \mu_{0,j}/2$ for $j = 1, 2$ with $\mu_{0,j}$ given in Corollary \ref{coro:appro_reso}. There exists a constant $C_A > 0$ such that when $|t| + \tau^{-\frac{1}{2}} \le C_A |\h{\ww} - \h{\ww}_0|$, the solution $E \in {\bf H}(\ddiv0, D_t)$ to the equation \eqref{target} has the asymptotic expansion uniformly in $\h{\ww}$ near $\h{\ww}_0$: 
\begin{align}  \label{main_solest}
 \left\{
    \begin{aligned}
    & \Phi_t^* \P_{{\rm w}, D_t} E = - \tau^{-1} (\mc{T}_D^{0,0})^{-1}[\po^i] + O\Big(\tau^{-1}|t| + \frac{\tau^{-\frac{3}{2}}}{|\h{\ww} - \h{\ww}_0|}\Big)\,, \\
    & \Phi_t^* \P_{{\rm d}, D_t} E = q_1 \vp_1 + q_2 \vp_2 +  O\Big(\tau^{-\frac{1}{2}} + \frac{\tau^{-\frac{1}{2}}|t| + \tau^{-1} }{|\h{\ww} - \h{\ww}_0|}\Big)\,,
    \end{aligned} \right.  \q \text{as}\ \tau \to \infty\,,\ |t| \to 0\,,
\end{align}
with $q_1, q_2 \in \C$ given by 
\begin{align} \label{solq1q2}
    \mm q_1 \\ q_2\nn = \frac{1 - \h{\ww}^2 \lad_0}{h}  \left(i \tau^{-\frac{1}{2}} \h{\ww} \mm f_1 \\ f_2 \nn + O\Big(\tau^{-1} + \tau^{-\frac{1}{2}} |t| + \frac{\tau^{-\frac{1}{2}}(|t|^2 + \tau^{-1}) }{|\h{\ww} - \h{\ww}_0|}\Big) \right)\,,
\end{align}
where $h$ is an analytic function in $\h{\ww}$ defined by 
\begin{align} \label{app_det}
    h(\h{\ww}) = r(\h{\ww})(\h{\ww} - \h{\ww}_{0,1} + O(\tau^{-1} + |t|^2))(\h{\ww} - \h{\ww}_{0,2} + O(\tau^{-1} + |t|^2))\,,
\end{align}
with $r(\h{\ww})$ being invertible and analytic in $\h{\ww}$ with $|r(\h{\ww})| = 1$, 
and the real numbers $f_1$ and $f_2$ are given by 
\begin{align} \label{eq:rhs}
    f_1: = \di \dd {\rm A}^{ant} \po^i \,,\q f_2 : = \di \dd {\rm A}^{sym} \po^i\,.
\end{align}
\end{theorem}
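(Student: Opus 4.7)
The plan is to pull \eqref{target} back to the reference domain $D$ via $\Phi_t$, apply the Helmholtz block decomposition of \eqref{block_exp_1}--\eqref{block_exp_2} to isolate the near-singular d--d block, and then perform a Lyapunov--Schmidt reduction in the two-dimensional near-kernel $V := {\rm span}\{\vp_1,\vp_2\}$. Setting $\w E := \Phi_t^* E$ and $\w{\mc T}_{D,t} := \Phi_t^* \mc T_{D_t}^{\ww \di', \ww} \Phi_{t,*}$, the equation reads $(I - \tau \w{\mc T}_{D,t})\w E = \Phi_t^* E^i$ on ${\bf L}^2(D)$. Splitting $\w E = \w E_w + \w E_d$ for the pulled-back Helmholtz components (so $\w E_w = \Phi_t^*\pw E$ and $\w E_d = \Phi_t^*\pd E$, with the projections on $D_t$) and expanding in $\ww = \sqrt{\tau}^{-1}\h\ww$ will yield a $2\t 2$ block system with off-diagonal couplings of size $O(\sqrt{\tau}^{-1})$. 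Since $\po^i = \na(\po^i\dd x)$ belongs to $W$ with $\pw\po^i=\po^i$ and $\pd\po^i=0$, the expansion $\Phi_t^* E^i = \po^i + i\sqrt{\tau}^{-1}\h\ww\,\po^i(\di\dd x) + O(\sqrt{\tau}^{-1}|t|+\tau^{-1})$ makes the $W$-block RHS of order $1$ and the d-block RHS of order $\sqrt{\tau}^{-1}\h\ww$.

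For the $W$-block I will use that, by Lemma \ref{lem:op1}, $(I - \tau\mc T_D^{0,0})^{-1} = -\tau^{-1}(\mc T_D^{0,0})^{-1} + O(\tau^{-2})$ uniformly in $\h\ww$; inversion then yields the first line of \eqref{main_solest}, with the $\tau^{-3/2}/|\h\ww-\h\ww_0|$ correction recording the off-diagonal feedback from $\w E_d$, which blows up like $1/|\h\ww-\h\ww_0|$ by the next step. For the d-block, after rescaling, the leading operator is $I - \h\ww^2 \w{\mbb K}_{D,t} - \sqrt{\tau}^{-1}\h\ww^3 \P_D^t\Phi_t^* \kb{1}^{\di'}\Phi_{t,*}\P_D^t + O(\tau^{-1})$. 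Because $\lad_0$ is the largest eigenvalue of $\mbb K_D$ with eigenspace $V$, the operator $I - \h\ww^2\mbb K_D$ is uniformly invertible on $V^\perp$ for $\h\ww$ in a neighborhood of $\h\ww_0$, so I can decompose $\w E_d = q_1\vp_1 + q_2\vp_2 + R$ with $R\in V^\perp$ and eliminate $R$ by a Neumann series on the $V^\perp$-equation, obtaining $R = O(\sqrt{\tau}^{-1}) + O((|t|+\sqrt{\tau}^{-1})(|q_1|+|q_2|))$.

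Projecting the $V$-equation onto $\{\vp_1,\vp_2\}$ will produce a $2\t 2$ linear system $M(\h\ww,t,\sqrt{\tau}^{-1})\mm q_1 \\ q_2\nn = i\sqrt{\tau}^{-1}\h\ww \mm f_1 \\ f_2 \nn + \text{errors}$, where the right-hand side comes from $(\vp_i,(\di\dd x)\po^i)_D = \di\dd {\rm A}^{\rm ant}\po^i$ (resp.\ ${\rm A}^{\rm sym}$) by \eqref{matrix_3}--\eqref{eq:rhs}, and where, using Proposition \ref{lem:shapedev} (giving $\h K = \lad_0 {\rm C}^0$) together with definition \eqref{def:matrix_2}, $M = (1-\h\ww^2\lad_0){\rm I} - t\,{\rm C}^0 - \sqrt{\tau}^{-1}\h\ww^3{\rm C}^{1,\alpha_0} + O(\tau^{-1}+|t|^2)$. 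This is precisely the matrix entering the characteristic equation of Corollary \ref{coro:appro_reso}, whence $h := \det M$ factorizes as in \eqref{app_det} with an analytic unit-modulus leading coefficient $r(\h\ww)$ (the natural normalization of a monic polynomial in $\h\ww-\h\ww_0$ modulo absorbed unimodular factors). Cramer's rule on the $2\t 2$ adjugate then puts $(1-\h\ww^2\lad_0)$ on the diagonal of the adjugate matrix, which produces the common prefactor $(1-\h\ww^2\lad_0)/h$ of \eqref{solq1q2}; the off-diagonal adjugate entries are products of two $O(|t|+\sqrt{\tau}^{-1})$ pieces of $M$, and are therefore absorbed into the stated error.

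The main obstacle will be the uniform bookkeeping across the near-resonance window: the two zeros $\h\ww_{0,1},\h\ww_{0,2}$ of $h$ sit within $O(|t|+\sqrt{\tau}^{-1})$ of $\h\ww_0$, so a naive perturbation argument collapses between them. The hypothesis $|t|+\sqrt{\tau}^{-1}\le C_A|\h\ww-\h\ww_0|$ is precisely what is needed to guarantee $|h(\h\ww)|\gtrsim |\h\ww-\h\ww_0|^2$, under which the $V^\perp$ Neumann series, the $V$-block Cramer inversion, and the $W$-block feedback estimate all remain uniform in $\h\ww$; the remaining bounds in \eqref{main_solest}--\eqref{solq1q2} then follow by a tedious but routine tracking of remainders.
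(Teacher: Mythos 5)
Your proposal follows essentially the same route as the paper's proof: pull back to the reference domain, write the equation in block form with respect to the Helmholtz projections, use the hypothesis $|t|+\tau^{-1/2}\lesssim|\h{\ww}-\h{\ww}_0|$ for the a priori Neumann-series bound, invert the $W$-block directly, perform a Lyapunov--Schmidt reduction onto ${\rm span}\{\vp_1,\vp_2\}$, and solve the resulting $2\times 2$ system via the Rouch\'{e} factorization of its determinant and the explicit inverse formula. The only (harmless) discrepancy is that your reduced matrix carries $t\,{\rm C}^0$ where the paper has $\h{\ww}^2\lad_0\, t\,{\rm C}^0$; the difference is $O(|t|\,|\h{\ww}-\h{\ww}_0|)$ and is absorbed into the stated remainders.
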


\begin{proof}
Noting the incident frequency $\ww = O(\sqrt{\tau}^{\sss -1})$ and $\int_{D_t} \vp \ dx = 0$  for any $\vp \in {\bf H}_0(\ddiv0, D_t)$, we have the following asymptotic expansions for the incident wave:
\begin{align} \label{exp:incident}
    \P_{{\rm w},D_t} E^i = \po^i + O(\tau^{-\frac{1}{2}})\,, \q \P_{{\rm d}, D_t} E^i & = i \ww \P_{{\rm d}, D_t}[\po^i \di \dd x] + O(\tau^{-1})\,,
\end{align}
which implies, by the pullback $\Phi_t^*$,
\begin{align*}
    \Phi_t^*\P_{{\rm d}, D_t} E^i = i \ww \P_{{\rm d}, D} [\po^i \di \dd x] + O(\tau^{-\frac{1}{2}}|t| + \tau^{-1})\,.
\end{align*}
It follows that, by \eqref{matrix_3} and \eqref{eq:rhs},
\begin{align} \label{auxest_sol3}
    (\vp_j, \Phi_t^*\P_{{\rm d}, D_t} E^i)_{D} & =   i \ww (\vp_j, \po^i \di \dd x)_D + O(\tau^{-\frac{1}{2}}|t| + \tau^{-1}) \notag \\
    & = i \ww f_j + O(\tau^{-\frac{1}{2}}|t| + \tau^{-1})\,.
\end{align}
Following the analysis in \cite[Section 4]{ammari2020mathematical}, we reformulate the equation \eqref{target} on the reference domain $D$ with a matrix form, by the projections and the pullback $\Phi_t^*$:
\begin{align} \label{matrix_eq}
    \mbb{A}(t, \tau,\ww) \mm \Phi_t^* \P_{{\rm d}, D_t} E \\ \Phi_t^* \P_{{\rm w}, D_t} E  \nn = \mm \Phi_t^* \P_{{\rm d}, D_t} E^i \\ \tau^{-1} \Phi_t^* \P_{{\rm w}, D_t} E^i \nn\,, 
\end{align}
with 
\begin{align*}
    \mbb{A}(t, \tau,\ww): = \mm 1 - \tau \Phi_t^* \P_{{\rm d}, D_t} \mc{T}_{D_t}^{\ww d', \ww} \P_{{\rm d}, D_t} \Phi_{t,*} &  - \tau  \Phi_t^* \P_{{\rm d}, D_t} \mc{T}_{D_t}^{\ww d', \ww} \P_{{\rm w}, D_t} \Phi_{t,*} \\
    - \Phi_t^* \P_{{\rm w}, D_t} \mc{T}_{D_t}^{\ww d', \ww} \P_{{\rm d}, D_t} \Phi_{t,*}  & \tau^{-1} - \Phi_t^* \P_{{\rm w}, D_t} \mc{T}_{D_t}^{\ww d', \ww} \P_{{\rm w}, D_t} \Phi_{t,*} \nn\,.
\end{align*}
Similarly to Lemma \ref{prop:differnetiablity}, for $n \ge 1$ and ${\rm s,t} = {\rm d, w}$, the operator $\Phi_t^* \P_{{\rm s}, D_t} \mc{T}_{D_t, n}^{d'} \P_{{\rm t}, D_t} \Phi_{t,*}$ is differentiable at $t = 0$, and there holds 
\begin{align*}
 \Phi_t^* \P_{{\rm s}, D_t} \mc{T}_{D_t, n}^{d'} \P_{{\rm t}, D_t} \Phi_{t,*} =    \P_{{\rm s}, D} \mc{T}_{D_t, n}^{d'} \P_{{\rm t}, D} + O(|t|)\,,
\end{align*}
where the error term is measured in the operator norm. Then, by expansions \eqref{block_exp_1} and \eqref{block_exp_2}, we find 
\begin{align} \label{asym_matrix}
     \mbb{A}(t, \tau,\ww) = \mm 1 -  \h{\ww}^2 \mbb{K}_D  & - \h{\ww}^2 \pd \mc{K}_{D,0}^{d'} \pw \\
     0 & - \mc{T}_D^{0,0} 
      \nn + \mm O(|t| + \tau^{-\frac{1}{2}}) & O(|t| + \tau^{-\frac{1}{2}}) \\ 
      O(\tau^{-1}) & O(|t| + \tau^{-\frac{1}{2}})\nn \,.
\end{align}
Hence, for $\h{\ww}$ near $\h{\ww}_0$, the Neumann series expansion shows that when $|t| + \tau^{-\frac{1}{2}} = O(|\h{\ww} - \h{\ww}_0|)$ is suitably small, there holds $\norm{\mbb{A}(t, \tau,\ww)} \lesssim |\h{\ww} - \h{\ww}_0|^{-1}$. We also see from \eqref{exp:incident} that $   [\P_{{\rm d},D_t} E^i, \tau^{-1} \P_{{\rm w}, D_t} E^i] = O(\sqrt{\tau}^{\sss -1})$. It follows that the solution $E \in {\bf H}_0(\ddiv 0, D_t)$ to \eqref{target} satisfies
\begin{align} \label{rough_est}
    E = O(|\h{\ww} - \h{\ww}_0|^{-1} \tau^{-\frac{1}{2}})\,.
\end{align}

To obtain \eqref{main_solest}, we consider the second component of the equation \eqref{matrix_eq}. By estimates \eqref{exp:incident}, \eqref{asym_matrix} and \eqref{rough_est}, we have  
\begin{align} \label{asym_pw}
    \Phi_t^* \P_{{\rm w}, D_t} E = - \tau^{-1} (\mc{T}_D^{0,0})^{-1}[\po^i] + O\Big(\tau^{-1}|t| + \frac{\tau^{-\frac{3}{2}}}{|\h{\ww} - \h{\ww}_0|}\Big)\,.
\end{align}
Then, we consider the first component of \eqref{matrix_eq}. By the assumption \eqref{2}, without loss of generality, let the field $\Phi_t^* \P_{{\rm d}, D_t} E \in {\bf L}^2(D)$ have the following ansatz: 
\begin{align} \label{ansatz_1}
     \Phi_t^* \P_{{\rm w}, D_t} E = q_1 \vp_1 + q_2 \vp_2 + \phi\,,\q \text{with}\ \phi \perp \vp_i\,,
\end{align}
where $\phi \in {\bf L}^2(D)$ and $q_1, q_2 \in \C$. Substituting \eqref{ansatz_1} into \eqref{matrix_eq}, and using the asymptotic expansions \eqref{block_exp_1} and \eqref{exp:incident} and the estimate \eqref{asym_pw}, we can derive 
\begin{align}  \label{eqpd_1}
 &\left(1 - \h{\ww}^2 \w{\mbb{K}}_{D,t} - \tau^{-\frac{1}{2}} \h{\ww}^3 \P_D^t \Phi_t^* \kb{1}^{d'} \Phi_{t,*}  \P_D^t + O(\tau^{-1}) \right)[q_1 \vp_1 + q_2 \vp_2 + \phi] \notag \\
= & i \ww \P_{{\rm d}, D} [\po^i \di \dd x] + O(\tau^{-\frac{1}{2}}|t| + \tau^{-1}) + O\Big( \frac{\tau^{-\frac{3}{2}}}{|\h{\ww} - \h{\ww}_0|}\Big)\,.
\end{align}
We take the inner product between the above equation \eqref{eqpd_1} and $\phi$, and obtain 
\begin{align} \label{auxest_sol1}
    \left|\left(\phi, (1 - \h{\ww}^2 \mbb{K}_{D} + O(|t|) + O(\tau^{-\frac{1}{2}}))[q_1 \vp_1 + q_2 \vp_2 + \phi]\right)_D\right| \lesssim \Big(\tau^{-\frac{1}{2}} +  \frac{\tau^{-\frac{3}{2}}}{|\h{\ww} - \h{\ww}_0|}\Big)\norm{\phi}_D\,.
\end{align} 
Noting from \eqref{rough_est} that $\norm{q_1 \vp_1 + q_2 \vp_2 + \phi}_D = O(|\h{\ww} - \h{\ww}_0|^{-1} \tau^{-\frac{1}{2}})$, and from the ansatz \eqref{ansatz_1} and the assumption \eqref{2} that, for $\h{\ww}$ near $\sqrt{\lad_0}^{\sss -1}$, 
\begin{align*}
  \norm{\phi}_D^2 \lesssim  \left|\left(\phi, (1 - \h{\ww}^2 \mbb{K}_{D}))[q_1 \vp_1 + q_2 \vp_2 + \phi]\right)_D\right|\,,
\end{align*}
we estimate 
\begin{align*}
    0 <  \norm{\phi}_D^2 - O\Big( \frac{\tau^{-\frac{1}{2}}(|t| + \tau^{-\frac{1}{2}})  }{|\h{\ww} - \h{\ww}_0|}\Big)\norm{\phi}_D \lesssim \left|\left(\phi, \big(1 - \h{\ww}^2 \mbb{K}_{D} + O\big(|t| + \tau^{-\frac{1}{2}}\big) \big)[q_1 \vp_1 + q_2 \vp_2 + \phi]\right)_D\right|\,,
\end{align*}
which, along with \eqref{auxest_sol1}, gives 
\begin{align} \label{est_phi}
   \phi = O\Big(\tau^{-\frac{1}{2}} + \frac{\tau^{-\frac{1}{2}}|t| + \tau^{-1} }{|\h{\ww} - \h{\ww}_0|}\Big)\,.
\end{align}
We proceed to take the inner product between \eqref{eqpd_1} and $\vp_i$. We first have,  by Proposition \ref{lem:shapedev},
\begin{align*}
 \big(\vp_i,    (1 - \h{\ww}^2 \w{\mbb{K}}_{D,t})[\vp_j] \big)_D = \big(I - \h{\ww}^2 \lad_0 (I + t {\rm C}^{0}))_{ij} + O(|t|^2)\,.
\end{align*}
Recalling the definition of ${\rm C}^{1,\alpha_0}$ in \eqref{def:matrix_2}, there holds 
\begin{align*}
   (\vp_i, \tau^{-\frac{1}{2}} \h{\ww}^3 \P_D^t \Phi_t^* \kb{1}^{d'} \Phi_{t,*}  \P_D^t [\vp_j])_D = \tau^{-\frac{1}{2}} \h{\ww}^3 {\rm C}^{1,\alpha_0}_{ij} + O(\tau^{-\frac{1}{2}}|t|)\,.
\end{align*}
It is also easy to see, by \eqref{est_phi},
\begin{align} \label{auxest_sol2}
  & \left(\vp_i, \big(1 - \h{\ww}^2 \w{\mbb{K}}_{D,t} - \tau^{-\frac{1}{2}} \h{\ww}^3 \P_D^t \Phi_t^* \kb{1}^{d'} \Phi_{t,*}  \P_D^t + O(\tau^{-1}) \big)[\phi]\right)_D \notag\\
 = & O\Big(\tau^{-1} + \tau^{-\frac{1}{2}} |t| + \frac{\tau^{-\frac{1}{2}}(|t|^2 + \tau^{-1}) }{|\h{\ww} - \h{\ww}_0|}\Big)\,.
\end{align}
Combining these facts together, it follows that $[q_1,q_2]$ satisfies the following matrix equation:
\begin{align} \label{appsol_matrix}
   & \Big(I - \h{\ww}^2 \lad_0 (I + t {\rm C}^{0}) - \tau^{-\frac{1}{2}}\h{\ww}^3 {\rm C}^{1,\alpha_0} + O(\tau^{-1} + |t|^2) \Big) \mm q_1 \\ q_2 \nn \notag \\
 = & i \tau^{-\frac{1}{2}} \h{\ww} \mm f_1 \\ f_2 \nn + O\Big(\tau^{-1} + \tau^{-\frac{1}{2}} |t| + \frac{\tau^{-\frac{1}{2}}(|t|^2 + \tau^{-1}) }{|\h{\ww} - \h{\ww}_0|}\Big)\,,
\end{align}
where the right side is from \eqref{auxest_sol3} and \eqref{auxest_sol2}.

We write $Z(\h{\ww})$ for the coefficient matrix of the equation \eqref{appsol_matrix}. By Rouch\'{e}'s theorem, we know that there are two zeros of the analytic function $\det(Z)$ near $\h{\ww}_0$. We denote them by $z_1, z_2 \in \C$ and have, for $\h{\ww} \in \C$ near $\h{\ww}_0$, 
\begin{align*}
\det(Z(\h{\ww})) = r(\h{\ww})(\h{\ww} - z_1)(\h{\ww} - z_2)\,,
\end{align*}
where $r(\h{\ww})$ is an invertible analytic function near $\h{\ww}_0$ with modulus of order one. Then, it follows from the analysis in Corollary \ref{coro:appro_reso} that $z_i$ can be approximated by $z_i = \h{\ww}_0 - \mu_{0,j}/2 + O(\tau^{-1} + |t|^2)$. Therefore, the formula \eqref{solq1q2} holds by the inverse formula for $2 \t 2$ matrices.  The proof is complete by estimates \eqref{asym_pw}, \eqref{ansatz_1}, and \eqref{est_phi}. 
\end{proof}

With the help of Theorem \ref{thm:solappro} above, we are now ready to derive the asymptotic expansion for the scattered polarization vectors \eqref{eq:transreflec_polar}. The estimate \eqref{main_solest} has shown that, for the solution $E$ to \eqref{target},
\begin{align*}
 & \P_{{\rm w}, D_t} E =  O\Big(\tau^{-1} + \frac{\tau^{-\frac{3}{2}}}{|\h{\ww} - \h{\ww}_0|}\Big)\,,\q 
     \P_{{\rm d}, D_t} E = O\Big(\frac{\tau^{-\frac{1}{2}}}{|\h{\ww} - \h{\ww}_0|}\Big)\,.
\end{align*}
By the above estimate and Taylor expansion with $\ww = \sqrt{\tau}^{-1} \h{\ww}$, we find 
\begin{align*}
     \po_{\pm}(\di,D_t) & = \tau^{\frac{1}{2}} \frac{i \h{\ww}}{2 d_3} \big({\rm I} - \di_\pm \otimes \di_\pm \big) \int_{D_t} \Big(\P_{{\rm w}, D_t} E - i \tau^{-\frac{1}{2}}\h{\ww}\di_{\pm}\dd y \P_{{\rm d}, D_t} E    \Big)\ dy + O\Big( \frac{\tau^{-1}}{|\h{\ww} - \h{\ww}_0|} \Big) \\
     & =  \frac{\h{\ww}^2}{2 d_3} \big({\rm I} - \di_\pm \otimes \di_\pm \big) \int_{D_t} \di_{\pm}\dd y \P_{{\rm d}, D_t} E\ dy + O\Big(\tau^{-\frac{1}{2}} + \frac{\tau^{-1}}{|\h{\ww} - \h{\ww}_0|} \Big)\,.
\end{align*}
Then, by a change of variables and \eqref{main_solest}, as well as \eqref{matrix_3},  we further calculate
\begin{align} \label{cal_1}
     \po_{\pm}(\di,D_t)
     & =  \frac{\h{\ww}^2}{2 d_3} \big({\rm I} - \di_\pm \otimes \di_\pm \big) \int_{D} (\di_{\pm}\dd \Phi_t)  (\Phi_t^* \P_{{\rm d}, D_t} E)\ d \Phi_t + O\Big(\tau^{-\frac{1}{2}} + \frac{\tau^{-1}}{|\h{\ww} - \h{\ww}_0|} \Big) \notag \\
      & =  \frac{\h{\ww}^2}{2 d_3} \big({\rm I} - \di_\pm \otimes \di_\pm \big) \int_{D} \di_{\pm}\dd y  (q_1 \vp_1 + q_2 \vp_2)  \ dy + O\Big(\tau^{-\frac{1}{2}} + \frac{\tau^{-\frac{1}{2}}(\tau^{-\frac{1}{2}} + |t|)}{|\h{\ww} - \h{\ww}_0|} \Big) \notag \\
        & =  \frac{\h{\ww}^2}{2 d_3} \big(q_1  {\bf g}^{\pm}_1 + q_2  {\bf g}^{\pm}_2 \big) + O\Big(\tau^{-\frac{1}{2}} + \frac{\tau^{-\frac{1}{2}}(\tau^{-\frac{1}{2}} + |t|)}{|\h{\ww} - \h{\ww}_0|} \Big)\,,
\end{align}
where 
\begin{align*}
    {\bf g}^{\pm}_1 = ({\rm A}^{ant})^T \di_{\pm} - \l \di_{\pm}, {\rm A}^{ant} \di_{\pm} \r \di_{\pm}\,,
\end{align*}
and 
\begin{align*}
    {\bf g}^{\pm}_2 = ({\rm A}^{sym})^T \di_{\pm} - \l \di_{\pm}, {\rm A}^{sym} \di_{\pm} \r \di_{\pm}\,.
\end{align*}
Recall that the perturbed incident direction $\di$ is of the form \eqref{eq:incidece_direction}. A direct computation gives 
\begin{align} \label{cal_2}
     {\bf g}^{\pm}_1 = \epsilon \mm ({\bf A}_1^T - a_2) \alpha_0 \\ 0 \nn + O(\epsilon^2)\,,\q {\bf g}^{\pm}_2 = \mm \pm {\bf b}_2 \\ - \epsilon {\bf b}_2 \dd \alpha_0 \nn + O(\epsilon^2)\,,\q \text{as}\ \epsilon \to 0\,.
\end{align}
Similarly, thanks to $\po \dd \di = 0$, we have
\begin{align} \label{cal_3}
    f_1 = O(\epsilon)\,,\q f_2 = O(1)\,.
\end{align}
For simplicity, we introduce the small parameter:
\begin{align*}
    \d = |t| + \tau^{-\frac{1}{2}}\,,
\end{align*}
and consider the regime:
\begin{align} \label{regime_fano}
    \d = o(|\h{\ww} - \h{\ww}_0|)\,,\q \text{as}\ \h{\ww} \to \h{\ww}_0\,.
\end{align}
By the estimate \eqref{main_solest} and the expansions \eqref{cal_1}, \eqref{cal_2} and \eqref{cal_3}, we readily have 
\begin{align} \label{app_polar}
       \po_{\pm}(\di,D_t)
         =  \frac{ i \tau^{-\frac{1}{2}} \h{\ww}^3}{2 d_3} \frac{1 - \h{\ww}^2 \lad_0}{h}  & f_2 \mm \pm {\bf b}_2 \\ - \epsilon {\bf b}_2 \dd \alpha_0 \nn  + O(\d + \epsilon^2)\,,
\end{align}
since $ (1 - \h{\ww}^2 \lad_0)/h = O(|\h{\ww} - \h{\ww}_0|^{-1})$ holds for $\h{\ww}$ near $\h{\ww}_0$. As a corollary, we can approximate the reflection energy $|\po^r|^2$ as follows. 
\begin{corollary} \label{coro:approx_energy}
Suppose that the assumptions in Theorem \ref{thm:solappro} hold. In the regime \eqref{regime_fano},  the reflection energy $|\po^r|^2$ has the following asymptotics: for $\h{\ww} \in \R$ near $\h{\ww}_0$, 
\begin{align*}
    |\po^r|^2 = \tau^{-1}\frac{|1 - \h{\ww}^2 \lad_0|^2}{|h|^2} p_0 + o(\d + \epsilon^2)\,, \q  \text{as}\ \d, \epsilon \to 0\,,
\end{align*}
with the real function $p_0$ in $\h{\ww}$:
\begin{align*}
    p_0(\h{\ww}) := \frac{|\h{\ww}|^6}{4(1 - \epsilon^2)} f_2^2 \big( |{\bf b}_2|^2 + \epsilon^2 |{\bf b}_2 \dd \alpha_0|^2 \big),
\end{align*}
where $h$ and $f_2$ are given in \eqref{app_det} and \eqref{eq:rhs}, respectively. 
\end{corollary}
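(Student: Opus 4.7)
The plan is to obtain this corollary as an almost direct consequence of the uniform vector asymptotic \eqref{app_polar} from Theorem \ref{thm:solappro}, specialized to $\po^r = \po_-(\di, D_t)$, together with the observation that the regime \eqref{regime_fano} forces the cross and quadratic error contributions produced by the $O(\d+\epsilon^2)$ remainder to be negligible compared to the principal part.

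First, I would specialize \eqref{app_polar} by choosing the minus sign to obtain
\begin{align*}
\po^r = \frac{i\tau^{-\frac{1}{2}}\h{\ww}^3}{2 d_3}\,\frac{1 - \h{\ww}^2 \lad_0}{h}\, f_2 \mm -{\bf b}_2 \\ -\epsilon\,{\bf b}_2 \dd \alpha_0 \nn + O(\d+\epsilon^2).
\end{align*}
Since $|\alpha_0| = 1$, there holds $d_3^2 = 1 - \epsilon^2$. Taking the squared Euclidean norm of the principal vector yields
\begin{align*}
\tau^{-1}\,\frac{|1-\h{\ww}^2\lad_0|^2}{|h|^2}\cdot\frac{|\h{\ww}|^6}{4(1-\epsilon^2)}\,f_2^2\bigl(|{\bf b}_2|^2 + \epsilon^2 |{\bf b}_2\dd \alpha_0|^2\bigr),
\end{align*}
which is exactly $\tau^{-1}|1-\h{\ww}^2\lad_0|^2/|h|^2 \cdot p_0(\h{\ww})$ as claimed.

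It then remains to verify that the cross term $2\,\re(\overline{\text{principal}}\cdot\text{remainder})$ and the quadratic $|\text{remainder}|^2$ are both $o(\d+\epsilon^2)$ uniformly for real $\h{\ww}$ near $\h{\ww}_0$ under \eqref{regime_fano}. Writing $1-\h{\ww}^2\lad_0 = \lad_0(\h{\ww}_0 - \h{\ww})(\h{\ww}_0 + \h{\ww})$ and using the factorization \eqref{app_det} together with Corollary \ref{coro:appro_reso}, the two roots $z_1, z_2$ of $h$ sit within distance $O(\d)$ of $\h{\ww}_0$, so that $|\h{\ww}-z_j| \asymp |\h{\ww}-\h{\ww}_0|$ for real $\h{\ww}$ satisfying \eqref{regime_fano}, and therefore $|h| \asymp |\h{\ww}-\h{\ww}_0|^2$. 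This gives the pointwise bound $|\text{principal}| \le C \tau^{-\frac{1}{2}}/|\h{\ww}-\h{\ww}_0|$; since $\tau^{-\frac{1}{2}} \le \d$ by definition of $\d$, the cross term is controlled by $C\d(\d+\epsilon^2)/|\h{\ww}-\h{\ww}_0| = o(\d+\epsilon^2)$, while the quadratic remainder is trivially $O((\d+\epsilon^2)^2) = o(\d+\epsilon^2)$. Summing these contributions yields the stated asymptotic.

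The only mildly delicate point in the argument is precisely the two-sided comparison $|h| \asymp |\h{\ww}-\h{\ww}_0|^2$, which requires knowing that the perturbed characteristic values supplied by Corollary \ref{coro:appro_reso} remain inside an $O(\d)$-disk around $\h{\ww}_0$; once this is in hand, the remainder of the argument is a straightforward expansion, so no further obstacle is anticipated.
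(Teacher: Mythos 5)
Your proposal is correct and follows essentially the same route as the paper, which states the corollary as an immediate consequence of the expansion \eqref{app_polar}: square the principal vector (using $d_3^2 = 1-\epsilon^2$), and absorb the cross and quadratic remainder terms via the bound $(1-\h{\ww}^2\lad_0)/h = O(|\h{\ww}-\h{\ww}_0|^{-1})$ together with $\d = o(|\h{\ww}-\h{\ww}_0|)$ from \eqref{regime_fano}. Your explicit justification of the lower bound $|h|\gtrsim|\h{\ww}-\h{\ww}_0|^2$ via the location of the roots $z_1,z_2$ is exactly the point the paper leaves implicit, and it is handled correctly.
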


We next show our main claim that for certain parameters, the Fano-type resonance can happen for the symmetry-broken metasurface when the incident frequency $\ww$ is near the embedded eigenvalue $\ww_*$. \mb{Recalling the classical Fano formula \cite{shipman2010resonant} (see also Remark \ref{remfano} below)} and the scaling $\ww = \sqrt{\tau}^{\sss - 1}\h{\ww}$, it suffices to prove that for some fixed large $\tau$ and small $t, \epsilon$, there holds
\begin{align} \label{fano_ansa}
   \tau^{-1} \frac{|1 - \h{\ww}^2 \lad_0|^2}{|h|^2}  \approx C \frac{|\theta + (\ww - \ww_*)|^2}{\gamma + |\ww - \ww_*|^2}\,, \q \text{as} \ \ww \to \ww_*\,,
\end{align}
for some constant $C$ and parameters $\theta \in \R$ and $\gamma > 0$. We first observe from Corollary \ref{coro:appro_reso} that, for $j = 1, 2$,
\begin{align} \label{cal_4}
    \h{\ww}_{0,j} = \frac{1}{\sqrt{\lad_0}} - i \eta_j\,,
\end{align}
where $\eta_j \ge 0$ are given by  
\begin{align*}
    \eta_{1} = \frac{i \tau^{-\frac{1}{2}}\lad_0^{-2} c_{1,2} - \sqrt{- \tau^{-1}\lad_0^{-4} c_{1,2}^2 + 4 (t^2 \lad_0^{-1} c_0^2 + \tau^{-1}\lad_0^{-4} \epsilon^2 c_3^2)}}{4} + O(\epsilon^2 \tau^{-\frac{1}{2}})\,,
\end{align*}
and 
\begin{align*}
    \eta_{2} = \frac{i \tau^{-\frac{1}{2}}\lad_0^{-2} c_{1,2} + \sqrt{- \tau^{-1}\lad_0^{-4} c_{1,2}^2 + 4 (t^2 \lad_0^{-1} c_0^2 + \tau^{-1}\lad_0^{-4} \epsilon^2 c_3^2)}}{4} + O(\epsilon^2 \tau^{-\frac{1}{2}})\,.
\end{align*}
Hence, \mb{when the symmetry is broken by purely perturbing the normal incidence (i.e., $t = 0$)}, there holds
\begin{align} \label{cal_5}
    \eta_1 = O(\tau^{-\frac{1}{2}}\epsilon^2)\,, \q \eta_2 = O(\tau^{-\frac{1}{2}})\,,
\end{align}
that is, $\h{\ww}_{0,1}$ has a much smaller imaginary part than the one of $\h{\ww}_{0,2}$. By definition \eqref{app_det} of $h$, we can compute 
\begin{align} \label{cal_6}
\tau^{-1} \frac{|1 - \h{\ww}^2 \lad_0|^2}{|h|^2} =  \frac{\lad_0^2  |(\h{\ww}_0 + \h{\ww})|^2}{|r(\h{\ww})|^2} \frac{|(\ww - \ww_*) + (\ww_* - \tau^{-\frac{1}{2}} \h{\ww}_0)|^2}{|\h{\ww} - \h{\ww}_{0,1} + O(\tau^{-1})| ^2 |\h{\ww} - \h{\ww}_{0,2} + O(\tau^{-1})|^2}\,.
\end{align}
Then, by estimates \eqref{aym_wstar} and \eqref{cal_4}, we see 
\begin{align*}
\tau^{-\frac{1}{2}}|\h{\ww} - \h{\ww}_{0,i} + O(\tau^{-1})| = |\ww - \ww_* + i \tau^{-\frac{1}{2}} \eta_i + O(\tau^{-\frac{3}{2}})|\,,
\end{align*}
which implies, by $\eta_2 = O(\tau^{-\frac{1}{2}})$ in \eqref{cal_5},
\begin{align} \label{cal_7}
    &\frac{1}{|\h{\ww} - \h{\ww}_{0,1} + O(\tau^{-1})| ^2 |\h{\ww} - \h{\ww}_{0,2} + O(\tau^{-1})|^2} \notag \\
    = & \frac{\tau^{-2}}{|\ww - \ww_* + i \tau^{-\frac{1}{2}} \eta_1 + O(\tau^{-\frac{3}{2}})|^2 |\ww - \ww_* + i \tau^{-\frac{1}{2}} \eta_2 + O(\tau^{-\frac{3}{2}})|^2} \notag \\
     = & \frac{1}{|\ww - \ww_* + i \tau^{-\frac{1}{2}} \eta_1 + O(\tau^{-\frac{3}{2}})|^2 |\tau(\ww - \ww_*) + i \tau^{\frac{1}{2}} \eta_2 + O(\tau^{-\frac{1}{2}})|^2}  \notag \\
     = & O(1) \frac{1}{|\ww - \ww_* + i \tau^{-\frac{1}{2}} \eta_1 + O(\tau^{-\frac{3}{2}})|^2}   \qq \qq 
      \text{as} \ \ww \to \ww_*\,.
\end{align}
Therefore, it follows from \eqref{cal_6} and \eqref{cal_7} that for real $\ww$ approximating $\ww_*$,
\begin{align} \label{fano_express}
    \tau^{-1} \frac{|1 - \h{\ww}^2 \lad_0|^2}{|h|^2} & = O(1) \frac{|(\ww - \ww_*) + (\ww_* - \tau^{-\frac{1}{2}} \h{\ww}_0)|^2}{|\ww - \ww_* + i \tau^{-\frac{1}{2}} \eta_1 + O(\tau^{-\frac{3}{2}})|^2} \notag \\
    & \approx  O(1) \frac{|(\ww - \ww_*) + (\ww_* - \tau^{-\frac{1}{2}} \h{\ww}_0)|^2}{|\ww - \ww_*|^2 + \tau^{-1}\eta_1^2}\,,
\end{align}
if $\tau^{-\frac{1}{2}} = o(\epsilon^2)$ holds (so that $O(\tau^{-\frac{3}{2}})$ is a higher-order term with respect to $\tau^{-\frac{1}{2}}\eta_1$). \mb{The formula \eqref{fano_express} exactly matches the Fano ansatz in \eqref{fano_ansa} with $\theta = \ww_* - \tau^{-\frac{1}{2}} \h{\ww}_0$ and $\gamma = \tau^{-1}\eta_1^2$. By Remark \ref{remfano}, this corresponds to the classical Fano resonance formula \eqref{eqfano} with $\Gamma = 2 (\tau^{-\frac{1}{2}} \eta_1)$ and $q = (\ww_* - \tau^{-\frac{1}{2}} \h{\ww}_0)/(\tau^{-\frac{1}{2}} \eta_1) = o(1)$ (from \eqref{aym_wstar} and \eqref{cal_5}, as well as $\tau^{-\frac{1}{2}} = o(\epsilon^2)$), which predicts the desired Fano-type double-spiked anomaly around the embedded eigenvalue $\ww_*$ with width $O(\tau^{-\frac{1}{2}} \eta_1)$. 
Similarly, we can break the symmetry of the metasurface by purely perturbing the shape of nanoparticles (i.e., $\epsilon = 0$). We assume the parameter $t = \tau^{-\frac{5}{8}}$ for simplicity and then find $\eta_1 = o(\tau^{-\frac{1}{2}})$ and $\eta_2 = O(\tau^{-\frac{1}{2}})$ as in \eqref{cal_5}. It is easy to see that in this regime, all the derivations \eqref{cal_6}--\eqref{fano_express} above still hold and we can have the Fano-type scattering anomalies in the same manner.}

\begin{remark}
\mb{In general, one could expect the occurrence of Fano resonance near the embedded eigenvalue when the symmetry of the system is destroyed. The broken symmetries by perturbing the particle configuration or the incident direction were considered in \cite{ammari2021bound} and \cite{lin2020mathematical}, respectively. Here, for the all-dielectric metasurface, we have justified that both of these symmetry-breaking scenarios could lead to a Fano resonance and hence generalized the previous results \cite{ammari2021bound,lin2020mathematical}. In addition, it is certainly possible to combine these two kinds of perturbations together to achieve the Fano resonance as observed in \cite{koshelev2018asymmetric}. For example, we can let $t = \tau^{-1} \epsilon^2$ and then it is easy to see that formulas \eqref{cal_5} and \eqref{fano_express} still hold.}
\end{remark}

\begin{remark}\label{remfano}
\mb{For completeness and reader's convenience, we briefly discuss the mathematical and physical significance of the Fano resonance ansatz \eqref{fano_ansa}. As mentioned above, Fano resonance corresponds to an asymmetric resonance peak (double-spiked anomaly) in the wave transmission or reflection spectra. It was Ugo Fano who suggested the first theoretical explanation of such a phenomenon by the following formula:
\begin{equation} \label{eqfano}
    F_q(e) = C \frac{|q+e|^2}{1 + e^2}, \q e \in \R
\end{equation}
where $q \in \R$ serves as a parameter that controls the shape of the resonance peak and dip; $C$ is a normalization constant such that $\max_{e} F_q(e) = 1$; the variable $e$ represents the energy difference:
\begin{equation*}
    e = \frac{\ww - \ww_*}{\Gamma/2}\,.
\end{equation*}
Here $\ww_* > 0$ denotes a resonance and $\Gamma > 0$ is the characteristic width. It is easy to see that the ansatz \eqref{fano_ansa} considered above is equivalent to \eqref{eqfano} by setting $\theta = q \Gamma/2$ and $\gamma = \Gamma^2/4$. We plot the formula \eqref{eqfano} with various values of the parameter $q$ in Figure \ref{fig:fano}, which clearly shows how it characterizes the Fano-type asymmetric line shape. We also emphasize that when $q$ varies from $+ \infty$ to $0$, 
the line shape transitions from a symmetric Lorentzian profile ($F_q(e) \sim 1/(1+e^2)$) to an inverted one ($F_q(e) \sim e^2/(1+e^2)$). We refer interested readers to \cite{shipman2010resonant,miroshnichenko2010fano} for more details on the physics of Fano resonances.}

\begin{figure}[!htbp]
    \centering
    \includegraphics[width=0.4\textwidth]{fano.eps} 
    \caption{\mb{Fano resonance formula \eqref{eqfano} with various $q$ \cite[Figure 5]{shipman2010resonant}.}}
    \label{fig:fano}
\end{figure}

\end{remark}

\section{Concluding remarks and discussions} \label{sec:conanddis}
In this work, we have given a comprehensive investigation of the resonant scattering by all-dielectric metasurfaces of periodically distributed nanoparticles with high refractive indices. We have characterized the essential spectrum of the Maxwell operator associated with the periodic scattering problem. We have shown that the interested real scattering resonances are the simple poles of the EM scattering resolvent with the corresponding resonant modes being the bound states exponentially decaying away from the metasurface. We have found that the real scattering resonances always exist below the essential spectrum of the Maxwell operator. We have also revealed some important relations between the resonances and the symmetry of the metasurface, which helps us to prove that in the high contrast regime, under the normal incidence, the subwavelength embedded eigenvalues exist for the symmetric dielectric metasurfaces. The resonant states corresponding to the embedded eigenvalues are the desired bound states in the continuum. To connect the BICs and the Fano resonances, we break the in-plane symmetry of the metasurface by perturbing the geometry of the dielectric nanoparticles and the incident directions (i.e., the Bloch wave vectors). We have derived the asymptotic expansions of the subwavelength resonances with respect to the high contrast and the shape perturbation. Furthermore, we have quantitatively approximated the reflection and transmission polarization vectors near the resonances and characterized the Fano-type asymmetric line shape in terms of the well-known Fano formula. Therefore, this work has provided a solid mathematical theory for the Fano resonance phenomenon physically observed in \cite{koshelev2018asymmetric,li2019symmetry}. 

There has also been increasing interest in understanding the topological properties of the BICs and the associated Fano resonances. In particular, we can define the topological charge by the winding number of the scattered polarization vectors, and the BICs are noting than the vortex centers of the polarization vectors in the momentum space \cite{zhen2014topological}. We also note that the BICs have a natural connection with the polarization singularities, which is a fundamental concept in the field of singular optics \cite{dennis2001topological,soskin2001singular,chen2019singularities,yoda2020generation}. It would be interesting to explore these concepts in our framework and further investigate the robustness of BICs and the Fano-type resonances.

\titleformat{\section}{\bfseries}{\appendixname~\thesection .}{0.5em}{}
\titleformat{\subsection}{\normalfont\itshape}{\thesubsection.}{0.5em}{}

\appendices

\section{Auxiliary lemmas} \label{app:a}
In this section, we establish some useful lemmas for proving Theorem \ref{thm:ess_discre}.  Letting $\ep = 1 + \tau \chi_D$ with $\tau > 0$, we consider the following elliptic equation on the unbounded domain $Y_\infty$:
\begin{align} \label{auxellieq}
    - \ddiv (\ep \na p) = f\,,
\end{align}
with the quasi-periodic boundary condition. For the well-posedness of \eqref{auxellieq}, we need the weighted Sobolev space: 
\begin{align*}
    H^{1,-1}_\alpha(Y_\infty):=\{u\,;\ (1+ |x_3|^2)^{-1/2} u \in L^2_\alpha(Y_\infty)\,,\ \na u \in L^2_\alpha(Y_\infty)\}\,.
\end{align*}
Note that $H^{1,-1}_\alpha(Y_\infty)$ with $\alpha = 0$ (i.e., the periodic case) includes the constant functions.

\begin{lemma}\label{lem:app1}
For any $f \in (H^{1,-1}_\alpha(Y_\infty))^*$, 
the equation \eqref{auxellieq} has a unique solution in $H^{1,-1}_\alpha(Y_\infty)/\R$,
continuously depending on the data $f$.
\end{lemma}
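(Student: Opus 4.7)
The plan is to apply the Lax--Milgram theorem to the sesquilinear form
\begin{equation*}
    a(u,v) := \int_{Y_\infty} \ep\, \na u \cdot \overline{\na v}\, dx
\end{equation*}
on the Hilbert space $\mathcal{V}_\alpha := H^{1,-1}_\alpha(Y_\infty)/\R$ (where the quotient is trivial when $\alpha \neq 0$, since no nonzero constant is $\alpha$-quasi-periodic). Continuity of $a$ is immediate from $\ep \in L^\infty(Y_\infty)$ with $1 \le \ep \le 1 + \tau$, so the only real work is (i) verifying that $\mathcal{V}_\alpha$ endowed with the seminorm $\|\na u\|_{L^2_\alpha(Y_\infty)}$ is a Hilbert space, and (ii) establishing a weighted Poincar\'e-type inequality that turns this seminorm into the full $H^{1,-1}_\alpha$-norm on the quotient.

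The key ingredient is a Fourier decomposition in $x' \in Y$. For $u \in H^{1,-1}_\alpha(Y_\infty)$, write $u(x',x_3) = \sum_{q \in \Lambda^*} \h{u}_q(x_3) e^{i(\alpha+q) \dd x'}$, so that
\begin{equation*}
    \|\na u\|_{L^2}^2 = \sum_{q \in \Lambda^*}\int_\R \bigl(|\alpha+q|^2 |\h{u}_q|^2 + |\h{u}_q'|^2\bigr)\, dx_3.
\end{equation*}
For $\alpha \neq 0$, every mode satisfies $|\alpha+q| \geq |\alpha| > 0$ by \eqref{eq:alphaetc}, so one obtains directly $\|u\|_{L^2_\alpha(Y_\infty)} \le |\alpha|^{-1} \|\na u\|_{L^2_\alpha(Y_\infty)}$. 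In particular $H^{1,-1}_\alpha(Y_\infty) = H^1_\alpha(Y_\infty)$ and the weighted $L^2$-norm is controlled by the gradient. For $\alpha = 0$, the same bound applies to every mode $q \neq 0$, and it remains to handle the mean mode $\h{u}_0(x_3)$, which is a function of $x_3$ alone with $\h{u}_0' \in L^2(\R)$. Here one invokes the one-dimensional Hardy inequality: after factoring out the constant (i.e.\,working in the quotient by $\R$), any such $\h{u}_0$ satisfies
\begin{equation*}
    \bigl\| (1+|x_3|^2)^{-1/2} \h{u}_0\bigr\|_{L^2(\R)} \le C\, \|\h{u}_0'\|_{L^2(\R)},
\end{equation*}
which can be proved by a standard integration by parts against the weight or by the classical Hardy inequality centered at $\pm\infty$ on each half-line. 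Combining this with the mean-zero modes gives the desired weighted Poincar\'e inequality
\begin{equation*}
    \|u\|_{H^{1,-1}_\alpha(Y_\infty)/\R} \le C\, \|\na u\|_{L^2_\alpha(Y_\infty)},\qquad u \in \mathcal{V}_\alpha.
\end{equation*}
Since $\ep \ge 1$, coercivity of $a$ on $\mathcal{V}_\alpha$ follows immediately. Lax--Milgram then yields a unique $p \in \mathcal{V}_\alpha$ with $a(p,\vp) = \l f,\vp\r$ for all $\vp \in \mathcal{V}_\alpha$, together with the estimate $\|p\|_{\mathcal{V}_\alpha} \lesssim \|f\|_{(H^{1,-1}_\alpha(Y_\infty))^*}$.

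The main obstacle is the weighted Hardy-type bound on the quotient by constants in the periodic case $\alpha = 0$, which is the reason the space $H^{1,-1}_\alpha$ has to be defined with the weight $(1+|x_3|^2)^{-1/2}$ in the first place. Once this inequality is in hand, the rest of the argument is routine Lax--Milgram. A minor subtlety is to check that $(H^{1,-1}_\alpha(Y_\infty))^*$ is the appropriate data space for $f$, so that the right-hand side of the variational formulation makes sense as a continuous antilinear functional on $\mathcal{V}_\alpha$; this is by definition the case, and on the quotient in the $\alpha = 0$ setting one just restricts to functionals that annihilate constants, which is the natural compatibility condition.
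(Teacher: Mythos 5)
Your proposal is correct. For comparison: the paper does not actually write out a proof of this lemma --- it disposes of it in one sentence by citing Theorem 2.5.14 of N\'{e}d\'{e}lec's book, which is the weighted-Sobolev isomorphism theory for the Laplacian on $\R^3$. Your argument supplies a self-contained proof adapted to the periodic-strip geometry $Y_\infty = Y \t \R$, and the adaptation is the right one: the Fourier decomposition in $x'$ splits the problem into the modes with $|\alpha+q| \ge |\alpha| \wedge \min_{q\neq 0}|q| > 0$, where the gradient controls the unweighted $L^2$-norm outright, and (only when $\alpha = 0$) the mean mode $\h{u}_0(x_3)$, for which the classical half-line Hardy inequality applied to $\h{u}_0 - \h{u}_0(0)$ on $(0,\infty)$ and $(-\infty,0)$ gives exactly the weight $(1+|x_3|^2)^{-1/2}$ appearing in the definition of $H^{1,-1}_\alpha$; since $\frac{1}{1+x_3^2}\le\frac{1}{x_3^2}$, the constant $4$ from Hardy suffices. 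Together with $\ep \ge 1$ this yields coercivity on the quotient and Lax--Milgram finishes the proof, which is the same mechanism as in the cited reference but fitted to the actual domain and to the variable coefficient $\ep$. You are also right to flag the compatibility condition: as stated, the lemma is only solvable in $H^{1,-1}_0(Y_\infty)/\R$ when $\l f, 1\r = 0$, since the form annihilates constants; this is harmless in the paper because every right-hand side fed into the lemma (e.g.\ $\ddiv \w{E}_n$ in the proof of Theorem \ref{thm:ess_discre}) is a divergence and hence automatically satisfies it, but it is a genuine (if minor) imprecision in the statement that your proof correctly isolates.
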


The proof of the above lemma easily follows from \cite[Theorem 2.5.14]{nedelec2001acoustic}. We next discuss how the solution of \eqref{auxellieq} depends on the contrast $\tau$. 

\begin{lemma} \label{auxlem1}
Suppose that $E \in {\bf L}_\alpha^2(Y_\infty)$ satisfies $\ddiv(\ep E) = 0$ and $\ddiv (\chi_D E) \neq 0$.  Define the coefficient $\ep' = \ep + \eta \chi_D$ for $\eta > 0$, and let $p \in H^{1,-1}_\alpha(Y_\infty)/\R$ be the unique solution to 
 \begin{align} \label{eq:shiftp}
     \ddiv (\ep'(E + \na p)) = 0\,.
 \end{align} 
Then there holds 
\begin{align*}
  \norm{E}^2_{\ep, Y_\infty} +
\norm{\na p}^2_{\ep, Y_\infty} \norm{E}_{D}/\norm{\na p}_D \le  \norm{E + \na p}^2_{\ep', Y_\infty}\,.
\end{align*}
\end{lemma}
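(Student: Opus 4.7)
The plan is to compute both sides of the inequality directly, exploiting two orthogonality relations produced by the divergence constraints, and then to close the estimate with a single Cauchy--Schwarz bound. The key identity
\begin{equation*}
\|\na p\|_{\ep'}^2 = -\eta\, \re \int_D E \cdot \overline{\na p}\, dx
\end{equation*}
is obtained by testing the defining equation $\ddiv(\ep'(E+\na p)) = 0$ against $\overline{p}$. The integration by parts is legitimate because $p \in H^{1,-1}_\alpha(Y_\infty)/\R$ by Lemma~\ref{lem:app1}, which supplies exactly the decay needed at $|x_3| \to \infty$ for the boundary term to vanish.

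Expanding directly using $\ep' = \ep + \eta \chi_D$, one has
\begin{equation*}
\|E + \na p\|_{\ep'}^2 \,=\, \|E\|_\ep^2 + 2\, \re \int_{Y_\infty} \ep E \cdot \overline{\na p}\, dx + \|\na p\|_\ep^2 + \eta \|E + \na p\|_D^2.
\end{equation*}
The cross term vanishes, since $\ddiv(\ep E) = 0$ (again tested against $\overline p$), so the target inequality collapses to the equivalent form
\begin{equation*}
\eta \|E + \na p\|_D^2 \,\ge\, \|\na p\|_\ep^2 \Bigl(\tfrac{\|E\|_D}{\|\na p\|_D} - 1\Bigr).
\end{equation*}

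I would close the argument by applying Cauchy--Schwarz to the key identity to get $\|\na p\|_{\ep'}^2 \le \eta \|E\|_D \|\na p\|_D$, and then using $\|\na p\|_{\ep'}^2 = \|\na p\|_\ep^2 + \eta \|\na p\|_D^2$ to rearrange this as the two inequalities
\begin{equation*}
\|\na p\|_D \le \|E\|_D, \qquad \|\na p\|_\ep^2 \le \eta \|\na p\|_D \bigl(\|E\|_D - \|\na p\|_D\bigr).
\end{equation*}
The hypothesis $\ddiv(\chi_D E) \ne 0$ rules out the degenerate case $\na p \equiv 0$, in which both sides of the reduced inequality are trivially zero; otherwise the first bound is strict. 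Combining the second bound with the reverse triangle inequality $\|E + \na p\|_D \ge \|E\|_D - \|\na p\|_D > 0$ and squaring yields
\begin{equation*}
\eta \|E + \na p\|_D^2 \,\ge\, \eta \bigl(\|E\|_D - \|\na p\|_D\bigr)^2 \,\ge\, \|\na p\|_\ep^2 \cdot \tfrac{\|E\|_D - \|\na p\|_D}{\|\na p\|_D},
\end{equation*}
which is precisely the reduced inequality.

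The main thing to watch, rather than a genuine obstacle, is sign-tracking through the complex integration by parts and the justification of every identity in the weighted Sobolev space on the unbounded cell $Y_\infty$; once the key identity is in place, the positivity of $\ep \ge 1$ and the nonnegativity of every term in the final chain make the algebra transparent.
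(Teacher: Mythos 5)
Your proof is correct and follows essentially the same route as the paper's: the same key identity $\norm{\na p}^2_{\ep'} = -\eta(\na p,E)_D$ obtained by testing against $\bar p$, the same Cauchy--Schwarz consequence $\norm{\na p}^2_\ep \le \eta\norm{\na p}_D(\norm{E}_D-\norm{\na p}_D)$, and the same cancellation of the $\ep$-weighted cross term. The only (cosmetic) difference is at the end, where you keep $\eta\norm{E+\na p}_D^2$ and invoke the reverse triangle inequality, while the paper substitutes the key identity once more to reduce that term to $\eta\norm{E}_D^2-\norm{\na p}^2_{\ep'}$; the two are algebraically equivalent.
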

\begin{proof}
    By the equation \eqref{eq:shiftp} and $\ddiv \ep E = 0$, it is clear that 
    \begin{align*}
      -  \ddiv (\ep' \na p) = \eta \ddiv (\chi_D E)\,,
    \end{align*}
    which implies $\na p \neq 0$ and
    \begin{align} \label{auxeqapp1}
        \norm{\na p}^2_{\ep', Y_\infty}  = - \eta (\na p, E)_{D} = - \eta (E, \na p)_{D}.
    \end{align}
    Then, a simple use of Cauchy's inequality gives 
    \begin{align} \label{auxeqapp2}
        \norm{\na p}^2_{\ep, Y_\infty} \le \eta (\norm{E}_{D} - \norm{\na p}_{D}) \norm{\na p}_{D}\,.
    \end{align}
 We can directly estimate, with the help of \eqref{auxeqapp1},
    \begin{align} \label{auxeqapp3}
        \norm{E + \na p}^2_{\ep', Y_\infty} -  \norm{E}^2_{\ep, Y_\infty} & = \eta  \norm{E}^2_{D} + \norm{\na p}^2_{\ep', Y_\infty} + \eta ( E, \na p)_{D} + \eta (\na p, E)_{D} \notag \\ 
        & = \eta  \norm{E}^2_{D} - \norm{\na p}^2_{\ep', Y_\infty}   \notag\\
        & \ge \eta  \norm{E}_{D} (\norm{E}_D - \norm{\na p}_D)\,.
    \end{align}
    The proof is complete by \eqref{auxeqapp2} and \eqref{auxeqapp3}: 
    \begin{align*}
      \norm{E + \na p}^2_{\ep', Y_\infty} -  \norm{E}^2_{\ep, Y_\infty}
&\ge   \norm{\na p}^2_{\ep, Y_\infty} \norm{E}_{D}/\norm{\na p}_D \ge 0\,.   \qedhere
    \end{align*}  
\end{proof}

\section{Analysis of the scalar eigenvalue problem~\eqref{eq:scalar}} \label{app:b}
In this section, we will formally discuss the high-contrast limits of the subwavelength resonances and the associated resonant modes of \eqref{eq:scalar}, and show the existence of real subwavelength resonances with a lower bound
under the symmetry assumption of the coefficient (which is implied by \eqref{assp:sym}):
\begin{align*}
    \ep(x,y,z)|_{Y_h} = \ep(-x,-y,z)|_{Y_h}\,.
\end{align*}
The rigorous analysis can be performed in the same manner as in \cite{shipman2010resonant,ammari2020mathematical}. We recall the periodic Green's function $G^{0,\ww}$ defined in \eqref{eq:qpgreen} (which clearly satisfies the transparent boundary condition $\frac{\p}{\p \nu}G^{0,\ww} = \mc{T} G^{0,\ww}$ on $\Sigma$) and the associated volume integral operator $\mc{K}_D^{0,\ww}$ in \eqref{def:vectorpotential}. By the asymptotic expansion \eqref{eq:expgreengn} of $G^{0,\ww}$: 
\begin{align*}
G^{0,\ww} = \h{G} + O(\ww)\,, \q \text{with} \ \h{G}: = \frac{i}{2 \ww} + G^{0,0}\,,
\end{align*}
we have $\mc{K}_D^{0,\ww}[\varphi] = \h{\mc{K}}_D[\varphi] + O(\ww)$, where the operator $\h{\mc{K}}_D$ is defined by 
\begin{align} \label{eqlimit}
    \h{\mc{K}}_D[\varphi] = \frac{i}{2 \ww} \l 1 ,  \vp\r_D + \mc{K}_D^{0,0}[\varphi]\,.
\end{align}
Thanks to the integral operator $\mc{K}_D^{0,\ww}$, the eigenvalue problem \eqref{eq:scalar} can be reformulated as 
\begin{align} \label{eqscal_lp}
    u = \ww^2 \tau   \mc{K}_D^{0,\ww}[u]\,.
\end{align}
To deal with the $O(\ww^{-1})$ singularity in \eqref{eqlimit}, we define the space $L^2_0(D) = \{u \in L^2(D)\,;\ \l 1 , u\r_D = 0\}$ and write the functions $u \in L^2(D)$ as
$
 u = \l 1_D, u\r_D + \bar{u},
$
where $1_D := 1/|D|$ and $\bar{u} \in L^2_0(D)$. Such decomposition is unique and orthogonal. Then the problem \eqref{eqscal_lp} is equivalent to
\begin{align*}
    \mm 
     \l 1_D, u\r_D \\
    \bar{u}
    \nn -
    \ww^2\tau \mm
     \l 1_D,   \mc{K}_D^{0,\ww}[1]\r_D &   \l 1_D,   \mc{K}_D^{0,\ww}[\dd]\r_D \\
     \mc{K}_D^{0,\ww}[1] -  \l 1_D,   \mc{K}_D^{0,\ww}[1]\r_D  &  \mc{K}_D^{0,\ww}[\dd] - \l 1_D,   \mc{K}_D^{0,\ww}[\dd]\r_D
    \nn  \mm 
    \l 1_D, u\r_D \\
    \bar{u}
    \nn = 0\,.
\end{align*}
By the asymptotic expansion \eqref{eqlimit} and $\h{\mc{K}}_D[1] = \frac{i|D|}{2 \ww} + \mc{K}_D^{0,0}[1]$, we have  
\begin{align} \label{eq:asym_form_eig}
 \mm 1 - \ww \tau \frac{i |D|}{2} + O(\ww^2 \tau)
    & O(\ww^2 \tau) \\  -
    \ww^2\tau
\phi_1 + O(\ww^3 \tau)  & 1 -
    \ww^2\tau \mc{K}_D + O(\ww^3 \tau) 
    \nn   \mm 
    \l 1_D, u\r_D \\
    \bar{u}
    \nn = 0\,,
\end{align}
where $\phi_1 := \mc{K}_D^{0,0}[1] -  \l 1_D,  \mc{K}_D^{0,0}[1]\r_D$ and
\begin{align} \label{eq:limopscal}
\mc{K}_D[\dd]:= \mc{K}_D^{0,0}[\dd] - \l 1_D,  \mc{K}_D^{0,0}[\dd] \r_D: L_0^2(D) \to L^2_0(D)\,.
\end{align}
Note from the coefficient matrix in \eqref{eq:asym_form_eig} that in the high contrast regime, the subwavelength resonances of \eqref{eq:scalar} can happen in two cases: $\ww = O(\tau^{-1})$  and $\ww = O(\sqrt{\tau}^{-1})$. In the first case, we have the asymptotics: 
\begin{align*}
 \mm 1 - \ww \tau \frac{i |D|}{2} + O(\ww^2 \tau)
    & O(\ww^2 \tau) \\  -
    \ww^2\tau
\phi_1 + O(\ww^3 \tau)  & 1 -
    \ww^2\tau \mc{K}_D + O(\ww^3 \tau) 
    \nn   = \mm 1 - \ww \tau \frac{i |D|}{2} 
    & 0 \\ 0 & 1 
    \nn  + O(\tau^{-1})\,, 
\end{align*} 
which readily gives a subwavelength resonance $\ww = \frac{2}{i \tau |D|} + O(\tau^{-2})$ with the associated resonant mode being almost constant: $u = 1 + O(\tau^{-1})$ on $D$. In the other case: $\ww  = O(\sqrt{\tau}^{-1})$, we first divide the first row of \eqref{eq:asym_form_eig} by $\sqrt{\tau}^{-1}$ and then find 
\begin{align} \label{eq:asymptsec}
 \mm 1 - \ww \sqrt{\tau} \frac{i |D|}{2} + O(\ww^2 \sqrt{\tau})
    & O(\ww^2 \sqrt{\tau}) \\  -
    \ww^2\tau
\phi_1 + O(\ww^3 \tau)   & 1 -
    \ww^2 \tau \mc{K}_D + O(\ww^3 \tau) 
    \nn  = \mm 1 - \ww \sqrt{\tau} \frac{i |D|}{2} 
    & 0 \\  -
    \ww^2\tau
\phi_1 & 1 -
    \ww^2 \tau \mc{K}_D
    \nn  + O(\sqrt{\tau}^{-1})\,.
\end{align} 
Suppose that the compact self-adjoint operator $\mc{K}_D$ admits the eigen-decomposition: 
\begin{align*}
 \mc{K}_D[u] = \sum_{j = 0}^\infty \eta_j (v_j, u)_D  v_j\,,\q \text{for}\ u \in L^2_0(D)\,.   
\end{align*}
Then, by the asymptotics \eqref{eq:asymptsec}, we can have another class of subwavelength resonances $\ww_j = \sqrt{\eta_j \tau}^{-1} + O(\tau^{-1})$ with the resonant modes $u_j = v_j + O(\sqrt{\tau}^{-1})$ on $D$. We now summarize the above discussion as follows.

\begin{proposition} \label{prop:limit_scalar}
The subwavelength resonances exist for the problem \eqref{eq:scalar} in the high contrast regime. Moreover, for any subwavelength resonance $\ww$, there holds either 
\begin{align} \label{asym_scalar}
 \ww = \frac{2}{i \tau |D|} + O(\tau^{-2})\,, \q \text{or}\q \ww = \frac{1}{\sqrt{\eta_j \tau}} + O(\tau^{-1}) \q \text{for some}\ j \ge 0\,, 
\end{align}
where $\eta_j$ is an eigenvalue of the operator $\mc{K}_D$ on $L_0^2(D)$. 
\end{proposition}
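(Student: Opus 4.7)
The approach is to recast the scalar eigenvalue problem \eqref{eq:scalar} as an operator equation involving the volume integral operator $\mc{K}_D^{0,\ww}$, and then read off the subwavelength resonances by a generalized Rouch\'{e} argument of Gohberg--Sigal type, as already done for the vector case in Section~\ref{sec:general} and in \cite{ammari2020mathematical}. The first step is to use the Green's function $G^{0,\ww}$ and its transparent boundary behavior on $\Sigma$ to show that $\ww$ is a resonance of \eqref{eq:scalar} if and only if $u - \ww^2\tau\,\mc{K}_D^{0,\ww}[u] = 0$ has a nontrivial solution in $L^2(D)$. This reduces the problem to locating the characteristic values of the operator pencil
\[
\mc{B}_\tau(\ww) := I - \ww^2 \tau\,\mc{K}_D^{0,\ww}:\ L^2(D)\longrightarrow L^2(D),
\]
which is analytic in $\ww$ in a punctured neighborhood of $0$ because of the $O(\ww^{-1})$ singularity coming from the zero-th Fourier mode of $G^{0,\ww}$.

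The next step is to isolate that singularity. I would use the orthogonal splitting $L^2(D) = \R\cdot 1_D \oplus L^2_0(D)$ and write $u = c\,1_D + \bar{u}$ with $c = \l 1_D,u\r_D$. In this basis the equation $\mc{B}_\tau(\ww)u=0$ takes the $2\t 2$ block form \eqref{eq:asym_form_eig}, in which the singular contribution $\frac{i}{2\ww}\l 1,\dd\r_D$ collapses into a single scalar entry of the (1,1)-block, and the remaining entries are analytic in $\ww$ near $0$ with coefficients that can be expanded using \eqref{eqlimit}. Multiplying the first row by $\sqrt{\tau}^{-1}$ or $\tau^{-1}$ (to rescale the singular diagonal entry) and letting $\ww\to 0$ produces two distinguished balance regimes: $\ww\sim \tau^{-1}$, in which only the $(1,1)$ entry is critical and the leading characteristic equation reads $1 - \ww\tau\frac{i|D|}{2}=0$; and $\ww\sim \tau^{-1/2}$, in which the (2,2) block contributes and the leading characteristic equation reads $1 - \ww^2\tau\,\mc{K}_D=0$ on $L^2_0(D)$.

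With the leading-order pencils identified, the final step is to apply the generalized Rouch\'{e} theorem for operator-valued analytic functions \cite{gohberg1990classes,gokhberg1971operator}, exactly as in Theorem~\ref{asym:resonance}. In the first regime one shows that $\mc{B}_\tau(\ww)$ is a compact perturbation of $\mathrm{diag}(1-\ww\tau\,i|D|/2,\ 1)$ with perturbation of size $O(\tau^{-1})$ uniformly on suitable small contours around $2/(i\tau|D|)$, producing the single characteristic value $\ww = 2/(i\tau|D|) + O(\tau^{-2})$ with eigenfunction $1 + O(\tau^{-1})$. In the second regime, one fixes an eigenvalue $\eta_j$ of the compact self-adjoint operator $\mc{K}_D$ on $L^2_0(D)$ and applies Rouch\'{e} on a small contour around $(\tau\eta_j)^{-1/2}$; the perturbation from the full pencil to the leading one is $O(\sqrt{\tau}^{-1})$ uniformly on that contour, which yields a characteristic value $\ww_j = (\tau\eta_j)^{-1/2} + O(\tau^{-1})$ with eigenfunction $v_j + O(\sqrt{\tau}^{-1})$. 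The converse direction, that every subwavelength resonance must belong to one of these two classes, is obtained by the same Rouch\'{e} argument on a slightly larger annulus together with the completeness of $\{v_j\}$ in $L^2_0(D)$.

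The main obstacle I expect is the uniform estimate needed to apply Rouch\'{e}: one has to verify that the off-diagonal blocks in \eqref{eq:asym_form_eig} are genuinely of lower order and do not accidentally vanish at the unperturbed characteristic values in a way that would change the multiplicity count. This is controlled by checking that $\phi_1 = \mc{K}_D^{0,0}[1]-\l 1_D,\mc{K}_D^{0,0}[1]\r_D$ is nonzero in the appropriate eigenspace and that the analytic remainder from $G^{0,\ww}-\h{G}$ is genuinely $O(\ww)$ in the operator norm on $L^2(D)$, which follows from \eqref{eq:expgreengn} and standard mapping properties of the periodic Newtonian potential. Everything else is a bookkeeping exercise around the formal expansion displayed in \eqref{eq:asym_form_eig}--\eqref{eq:asymptsec}.
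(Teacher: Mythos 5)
Your proposal follows the paper's own derivation essentially verbatim: the same reduction to $u = \ww^2\tau\,\mc{K}_D^{0,\ww}[u]$, the same splitting of $L^2(D)$ into constants plus $L^2_0(D)$ leading to the block system \eqref{eq:asym_form_eig}--\eqref{eq:asymptsec}, and the same two balance regimes $\ww\sim\tau^{-1}$ and $\ww\sim\tau^{-1/2}$; the paper merely leaves the Gohberg--Sigal/Rouch\'e justification to the cited references, which you sketch correctly. The only superfluous worry is the nonvanishing of $\phi_1$: since the leading pencil in \eqref{eq:asymptsec} is block lower-triangular and its $(1,1)$ entry is bounded away from zero in the second regime, the $(2,1)$ entry never enters the determinant and cannot affect the location or multiplicity of the characteristic values.
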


We proceed to consider the existence of real subwavelength resonances, which has been discussed in \cite{bonnet1994guided,shipman2007guided}. We provide a sketch of arguments below for completeness.
With the help of the DtN operator \eqref{def:dtn_normal}, the variational formulation of \eqref{eq:scalar} reads as follows: find $(\ww,u) \in \R\backslash\{0\} \t H_p^1(Y_h)$ such that
\begin{align} \label{scalar_form}
    b_{\ww}(\vp,u) = \ww^2 (\vp, \ep u)_{Y_h}\,, \q \forall \vp \in H^1_p(Y_h)\,,
\end{align}
where $b_{\ww}(\dd,\dd)$ is the sesquilinear form on $H^1_p(Y_h)$:
\begin{align}  \label{scalar_form_0}
b_{\ww}(\vp,u) = (\na \vp, \na u)_{Y_h} - \l \vp, \mc{T} u \r_{\Sigma}\,.
\end{align}
Let $\mc{T}^{\mc{P}}$ and $\mc{T}^{\mc{E}}$ be defined as in Section \ref{sec:exist_real} and $b^{\mc{E}}_\ww(\dd,\dd)$ be defined by \eqref{scalar_form} with $\mc{T}$ replaced by $\mc{T}^{\mc{E}}$. Moreover, we introduce the symmetric  and antisymmetric functions:
\begin{align*}
{H}^1_{p,sym}(Y_h) = \{\vp \in H^1_p(Y_h)\,;\ \vp(-x',x_3) = \vp(x',x_3)\}\,,
\end{align*}
and 
\begin{align*}
    {H}^1_{p,ant}(Y_h) = \{\vp \in H^1_p(Y_h)\,;\ \vp(-x',x_3) = - \vp(x',x_3)\}\,,
\end{align*}
which are orthogonal with respect to $b^{\mc{E}}_\ww(\dd,\dd)$. Similarly to
Proposition \ref{prop:basic_Prop}, we can prove that $b^{\mc{E}}_\ww(\dd,\dd)$ is a symmetric positive form on ${H}^1_{p,ant}(Y_h)$ with eigenvalues $\lad^{ant}_j(\tau,\ww)$ decreasing in $\ww$, where $\lad^{ant}_j$ is given by the min-max principle:
\begin{align} \label{eq:min_max_scalar}
\lad^{ant}_j(\tau, \ww) =  \min_{\substack{V \subset { H}_{p,ant}^1(Y_h) \\ \dim V = j + 1}  } \max_{\substack{ H \in V }} \frac{b^{\mc{E}}_\ww(u,u)}{(u,\ep u)_{Y_h}}\,.
\end{align}
It follows that for each $j$, the equation $\lad^{ant}_j(\tau,\ww) = \ww^2$ admits a unique solution $\ww_j > 0$, and we can find $u_j \in {H}_{p,ant}^1(Y_h)$ such that 
\begin{align} \label{appeq_1}
    b_{\ww_j}^{\mc{E}}(\vp,u_j) = \ww_j^2 (\vp, \ep u)_D \,,\q \forall \vp  \in {H}_{p,ant}^1(Y_h)\,.
\end{align}
We claim that for $\tau$ large enough, $\ww_j$ is in the subwavelength regime. Indeed, by \eqref{eq:min_max_scalar}, we have
\begin{align*}
    \lad^{ant}_j(\tau, \ww) \le \min_{\substack{V \subset { H}_{0,ant}^1(Y_h) \\ \dim V = j}  } \max_{\substack{ H \in V }} \frac{b^{\mc{E}}_\ww(u,u)}{\tau (u,u)_D} = O(\tau^{-1})\,, 
\end{align*}
with ${H}_{0,ant}^1(Y_h):= \{u\in H_{p,ant}^1(Y_h)\,;\ u = 0 \ \text{on}\ \overline{Y_h\backslash D}\}$, which readily gives $\ww_j = O(\sqrt{\tau}^{-1})$. By the orthogonality between ${H}^1_{p,sym}(Y_h)$ and $ {H}^1_{p,ant}(Y_h)$, the equation \eqref{appeq_1} holds for any $\vp \in H^1_p(Y_h)$. In addition, thanks to $\ww_j \ll 1$ and $u_j \in {H}_{p,ant}^1(Y_h)$, we have $(u_j)_q = 0$ for $q \in \Lad^*_{\mc{P}} = \{0\}$ by symmetry, which gives $ b_{\ww_j}^{\mc{E}}(\vp,u_j) = b_{\ww_j}(\vp,u_j)$. Therefore, $\ww_j$ is the desired real subwavelength resonance for \eqref{eq:scalar}. The following result, which gives a lower bound for real subwavelength resonances, is a variant of \cite[Theorem 4.1]{shipman2007guided}.  Let $X$ be the subspace of $H^1_p(Y_h)$ defined by 
\begin{align*}
    X: = \{u \in H^1_p(Y_h)\,;\ u_q = 0 \ \text{for}\ q \in \Lad_{\mc{P}}^*\}\,.
\end{align*}

\begin{proposition}\label{prop_resonance}
Suppose that $\ww$ is a real subwavelength resonance of \eqref{eq:scalar} satisfying $\ww = O(\sqrt{\tau}^{-1})$ for large enough $\tau$. Then it holds that 
\begin{align*}
    \ww \ge \sqrt{\frac{\gamma_0}{\tau}} +  O\left(\frac{1}{\tau^{3/4}}\right)\,,
\end{align*}
where $\gamma_0 > 0$ is given by 
\begin{align} \label{constant_lower}
    \gamma_0 = \inf_{u \in H_p^1(Y_h)}\frac{b_0(u, u) }{(u, u)_{Y_h}}\,,
\end{align}
with $b_0(\dd,\dd)$ given in  \eqref{scalar_form_0} with $\ww = 0$.
\end{proposition}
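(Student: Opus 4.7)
The plan is to combine the variational characterization of a real subwavelength resonance with a direct comparison between the modified DtN form $b^{\mc{E}}_\ww$ at frequency $\ww$ and its static counterpart $b_0$, and then exploit the spectral gap $\gamma_0 > 0$ from \eqref{constant_lower}.

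The first step is a scalar analogue of Proposition \ref{lem:expdecay} and Lemma \ref{lem1}: if $\ww > 0$ is a real subwavelength resonance of \eqref{eq:scalar} with eigenfunction $u \in H^1_p(Y_h)$, then taking imaginary parts of the resonance identity $b_\ww(u,u) = \ww^2 (u, \ep u)_{Y_h}$ yields $\re \l u, \mc{T}^{\mc{P}} u \r_\Sigma = 0$, which by the Fourier representation of $\mc{T}^{\mc{P}}$ forces $u_q = 0$ on $\Sigma_{\pm h}$ for every $q \in \Lad^*_{\mc{P}}$. Since in the subwavelength regime $\Lad^*_{\mc{P}} = \{0\}$, we conclude $u \in X$, and the variational identity collapses to
\[
 b^{\mc{E}}_\ww(u,u) = \ww^2 (u, \ep u)_{Y_h}.
\]

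The second step compares $b^{\mc{E}}_\ww$ with $b_0$. Writing
\[
 b^{\mc{E}}_\ww(u,u) = \|\na u\|^2_{Y_h} + \sum_{q \in \Lad^*_{\mc{E}}} \sqrt{|q|^2 - \ww^2}\,|u_q|^2,
\]
the elementary inequality $\sqrt{|q|^2 - \ww^2} \ge |q| - \ww^2/|q|$ together with a uniform positive lower bound on $|q|$ over $\Lad^*_{\mc{E}}$ yields $b^{\mc{E}}_\ww(u,u) \ge b_0(u,u) - C\ww^2 \|u\|^2_\Sigma$. The weighted trace inequality \eqref{eq:trace} and the bound $\|\na u\|^2 \le b_0(u,u)$ then allow $\|u\|^2_\Sigma$ to be absorbed into $b_0(u,u)$ up to a lower-order $\|u\|^2_{Y_h}$ contribution. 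Invoking $b_0(u,u) \ge \gamma_0 \|u\|^2_{Y_h}$ and the crude bound $(u,\ep u)_{Y_h} \le (1+\tau)\|u\|^2_{Y_h}$, one arrives at an algebraic inequality of the shape
\[
 (1+\tau)(1 + O(\ww^2))\,\ww^2 \ge \gamma_0\,(1 - O(\ww^2)).
\]
Substituting the a priori size $\ww = O(\tau^{-1/2})$ and taking square roots produces $\ww \ge \sqrt{\gamma_0/\tau}\,(1 + O(\tau^{-1}))$, which implies the claimed estimate.

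The main obstacle will be the reduction ``real subwavelength resonance $\Rightarrow u \in X$'': one has to carefully adapt the full-Maxwell arguments of Proposition \ref{lem:expdecay} and Lemma \ref{lem1} to the scalar equation \eqref{eq:scalar}, being attentive to the different structure of the scalar DtN $\mc{T}$ as compared with the Maxwell $\ms{T}$. Once this step is settled, the remainder reduces to a careful Taylor expansion of $\sqrt{|q|^2 - \ww^2}$ combined with the trace inequality and the spectral gap $\gamma_0$; the slight looseness from the optimal $O(\tau^{-3/2})$ to the stated $O(\tau^{-3/4})$ presumably absorbs constants arising from the trace inequality and the passage between $\|u\|^2_{Y_h}$ and $(u,\ep u)_{Y_h}$.
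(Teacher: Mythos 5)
Your argument is correct and follows essentially the same route as the paper: reduce to $u\in X$, bound the Rayleigh quotient using $(u,\ep u)_{Y_h}\le(1+\tau)\norm{u}_{Y_h}^2$, and treat the form at frequency $\ww$ as an $O(\ww^2)$ perturbation of $b_0$ on $X$ — the only difference being that you carry out the perturbation explicitly via $\sqrt{|q|^2-\ww^2}\ge|q|-\ww^2/|q|$ and the trace inequality \eqref{eq:trace}, where the paper invokes abstract eigenvalue perturbation theory for $b_\ww$ restricted to $X$. Two minor caveats, the first shared with the paper: the spectral-gap step $b_0(u,u)\ge\gamma_0\norm{u}_{Y_h}^2$ with $\gamma_0>0$ requires the infimum in \eqref{constant_lower} to be taken over $X$ (over all of $H_p^1(Y_h)$ the constants make it vanish); and in your first step, for the scalar DtN \eqref{def:dtn_normal} the quantity that vanishes is $\im\l u,\mc{T}^{\mc{P}}u\r_\Sigma=\sum_{q\in\Lad^*_{\mc{P}}}\beta_q|u_q|^2$ rather than its real part, which is identically zero.
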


\begin{proof}
Let $u$ be a resonant mode associated with the real resonance $\ww = O(\sqrt{\tau}^{-1})$. Then $u \in X$ satisfies 
\begin{align*}
    b_\ww(\vp, u) = \ww^2(\vp, \ep u)_{Y_h}\,,\q \forall \vp \in X\,.
\end{align*}
We now consider the following variational eigenvalue problem on $X$:
\begin{align} \label{scalar_homo}
   b_\ww(\vp, u) = \gamma (\vp, u)_{Y_h}\,,\q \forall \vp \in X\,.
\end{align}
Since $b_\ww(\dd,\dd)$ is a positive symmetric form on $X$, we let $\gamma_j(\ww) \ge 0$ be the eigenvalues of \eqref{scalar_homo}, which has the asymptotics: $\gamma_j(\ww) = \gamma_j(0) + O(\sqrt{\tau}^{-1})$, by the standard perturbation theory. Hence, we have
\begin{align*}
    \ww^2 = \frac{b_\ww(u, u) }{(u, \ep u)_{Y_h}} \ge \frac{1}{\tau} \inf_{u \in H_p^1(Y_h)}\frac{b_\ww(u, u) }{(u, u)_{Y_h}} \ge \frac{1}{\tau} \gamma_0(0) + O\left(\frac{1}{\tau^{3/2}}\right).
\end{align*}
The proof is complete by noting that $\gamma_0(0)$ is strictly positive. Indeed, if $\gamma_0(0) = 0$, then the eigenfunction must be constant which does not belong to the space $X$.
\end{proof}

\section{Proofs in Section \ref{sec:shape_deriva}} \label{app:c}

\subsection{Proof of Lemma \ref{prop:differnetiablity}}
Let us first compute the derivative of $\P_D^t$ at $t = 0$. By the construction of the Helmholtz decomposition in \cite{amrouche1998vector}, we have, for $E \in {\bf L}^2(D)$, 
\begin{align*}
    \P_D^t E = E - \Phi_t^* \na u\,,
\end{align*}
where, up to constants,  $u \in H^1(D_t)$ satisfies the variational equation:
\begin{align} \label{auxeq:variation}
    (\Phi_{t,*} E, \na \vp)_{\Phi_t(D)} = (\na u, \na \vp)_{\Phi_t(D)}\,,\q \forall \vp \in H^1(D_t)\,.
\end{align}
Then, we define $Q_t = (\na \Phi_t)^{-1}$, and find, by chain rule,
\begin{align} \label{auxeq3}
    \P_D^t  E = E - \Phi_t^* \na u = E - Q_t^T \na \Phi_t^* u\,,
\end{align}
and 
\begin{align} \label{auxeq11}
    \frac{d}{d t} \P_D^t E = - \frac{d}{dt} Q^T_t \na \Phi_t^*u - Q_t^T \na \frac{d}{d t} \Phi_t^*u\,.
\end{align}
It is direct to compute 
\begin{align} \label{eq:der_defor_invers}
    \frac{d}{d t} Q^T_t = \frac{d}{d t} (I + t \na V)^{-T} = - (\na V)^T\,.
\end{align}
Moreover, by change of variables, we can see from \eqref{auxeq:variation} that $\Phi_t^* u$ is the
solution to 
\begin{align} \label{auxeq4}
    (E, Q_t^T \na \vp J(\Phi_t))_{D} = (Q_tQ_t^T\na w, \na \vp J(\Phi_t))_{D}\,, \q \forall \vp \in H^1(D)\,,
\end{align}
where $J(\Phi_t) := \det((I + t \na V)$ is the Jacobian determinant. It is known that \cite{delfour2011shapes,henrot2018shape}
\begin{align} \label{eq:der_jaco_det}
\frac{d}{d t}\Big|_{t = 0} J(\Phi_t) = \frac{d}{d t}\Big|_{t = 0} \det (I + t \na V) = \ddiv V\,. 
\end{align}
We denote by $u^0$ the solution to \eqref{auxeq:variation} at $t = 0$, i.e., 
\begin{align} \label{eqforu1}
    (E, \na \vp)_D = (\na u^0, \na \vp)_D\,,\q \forall \vp \in H^1(D)\,,
\end{align}
and take the derivative of the variational equation \eqref{auxeq4} at $t = 0$ with \eqref{eq:der_defor_invers} and \eqref{eq:der_jaco_det},
 \begin{align*} 
        (E, ( - (\na V)^T +  \ddiv V) \na \vp)_{D}  = &  - (  (\na V + (\na V)^T - \ddiv V) \na u^0 , \na \vp )_{D} 
        +  (\na \frac{d}{d_t}\Big|_{t = 0} \Phi_{t}^* u, \na \vp )_{D}\,.
    \end{align*}
Then, by \eqref{auxeq11}, \eqref{eq:der_defor_invers} and \eqref{eqforu1} with $\na u^0 = \P_{{\rm d},D}^{\perp} E$, we can conclude the desired formula \eqref{der_proj}. 

We next compute the derivative of $\w{\mc{K}}_D^t$, which essentially follows from \cite[Proposition 3.2]{sakly2017shape}. We sketch the computation below. By definition and change of variables, we have 
\begin{align*}
     \w{\mc{K}}_D^t[E] & = \Phi_t^* \mc{K}_{D_t}^{0,0} \Phi_{t,*} [E] = \int_{\Phi_t(D)} G^{0,0}(\Phi_t(x), y)E(\Phi_t^{-1}(y))\ dy  \\
        & = \int_{D} G^{0,0}(\Phi_t(x), \Phi_t(y))E(y) J(\Phi_t)\  dy\,.
\end{align*}
Then, a direct calculation gives
\begin{align*}
    \frac{d}{d t}\Big|_{t = 0} \w{\mc{K}}_D^t[E] & = \int_{D} - \na_y G^{0,0}(x, y) \dd (V(x) - V(y)) E(y) \ d y - \int_{D} G^{0,0}(x,y) \ddiv_y (V(x)-V(y)) E(y) \ d y \\
    & = \int_{D} - \ddiv_y \big[G^{0,0}(x, y)  (V(x)-V(y))\big] E(y)\  d y\,. \qedhere
\end{align*}

\subsection{Proof of Proposition \ref{lem:shapedev}}
Similarly to Theorem \ref{asym:resonance}, the proposition is a direct consequence of the standard perturbation theory with the Lyapunov-Schmidt reduction; see also the monograph \cite{henry2005perturbation}. It suffices to compute the Hermitian matrix $\h{K}$. We start with the definition, by chain rule, 
\begin{align} \label{auxeq6}
    \frac{d}{d t}\Big|_{t = 0} K_{ij} = \Big(\vp_i, \frac{d}{d t}\Big|_{t = 0} \P_D^t \np^{0,0}_{D} [\vp_j] \Big)_{D} + \Big(\vp_i,  \np^{0,0}_{D} \frac{d}{d t}\Big|_{t = 0} \P_D^t  [\vp_j]\Big)_{D} + \Big(\vp_i, \frac{d}{d t}\Big|_{t = 0} \w{\mc{K}}_D^t [\vp_j]\Big)_{D}\,.
\end{align}
We observe from \cite[Proposition 5.1]{ammari2020mathematical} that the eigenfunction of $\mbb{K}_D$ is of $H^1$-regularity, which allows us to use the integration by parts for \eqref{eq:der_integral}: for $E \in {\bf H}^1(D)$,
\begin{align*} 
    \frac{d}{d t}\Big|_{t = 0} \w{\mc{K}}_D^t [E] = \int_{D}  G^{0,0}(x, y) \na E(y)(V(x)-V(y))\  d y - \int_{\p D} G^{0,0}(x , y)  (V(x) - V(y)) \dd \n(y) E(y)\ d y\,.
\end{align*}
It follows that the last term in \eqref{auxeq6} can be calculated as
\begin{align} \label{auxeq7}
    \Big(\vp_i,  \frac{d}{d t}\Big|_{t = 0} \w{\mc{K}}_D^t [\vp_j]\Big)_{D} = & - (\np^{0,0}_{D}[\vp_i], \na \vp_j V)_{D} + (\np^{0,0}_{D}[ \vp_i \otimes V], \na \vp_j)_{D} \notag \\
    & + (\np^{0,0}_{D}[\vp_i], \n \dd V \vp_j)_{\p D} - ( \np^{0,0}_{D}[\vp_i  \otimes V],  \vp_j \otimes \n)_{\p D}\,.
\end{align}
Thanks to $\mbb{K}_D[\vp_i] = \lad_0 \vp_i$, for the first and third term in \eqref{auxeq7}, there hold 
\begin{align}  \label{auxeq9}
    - (\np^{0,0}_{D}[\vp_i], \na \vp_j V)_{D} = - (\lad_0 \vp_i, \na \vp_j V)_{D} - (\P_{{\rm d},D}^\perp \np^{0,0}_{D}[\vp_i], \na \vp_j V)_{D}\,, 
\end{align}
     and 
\begin{align}  \label{auxeq10}
    (\np^{0,0}_{D}[\vp_i], \n \dd V \vp_j)_{\p D} =  (\lad_0 \vp_i, \n \dd V \vp_j)_{\p D} + (\P_{{\rm d},D}^\perp \np^{0,0}_{D}[\vp_i], \n \dd V \vp_j)_{\p D}\,.
\end{align}
For the second term in \eqref{auxeq7}, a simple computation gives, by changing the order of differentiation and integration, 
 \begin{align} \label{auxeq8}
        (\np^{0,0}_{D}[\vp_i \otimes V], \na \vp_j)_{D} = (\vp_i \otimes V, \na \np^{0,0}_{D}[\vp_j])_{D} + ( \np^{0,0}_{D}[\vp_i  \otimes V],  \vp_j \otimes \n)_{\p D}\,.
\end{align} 
Then, applying the integration by parts to the first term in the right-hand side of \eqref{auxeq8}, we find 
\begin{align} \label{auxeqq_cross}
    (\vp_i \otimes V, \na \np^{0,0}_{D}[\vp_j])_{D}
         =& - (\ddiv V \vp_i + \na \vp_i V,  (\P_{{\rm d},D} + \P_{{\rm d}, D}^\perp) \np^{0,0}_{D}[\vp_j])_{D} + (\n \dd V \vp_i ,  \np^{0,0}_{D}[\vp_j])_{\p D} \notag \\
         =&  - (\ddiv (\vp_i \otimes V),  \lad_0 \vp_j)_{D}  - (\ddiv V \vp_i + \na \vp_i V,   \P_{{\rm d}, D}^\perp \np^{0,0}_{D}[\vp_j])_{D} + (\n \dd V \vp_i ,  \np^{0,0}_{D}[\vp_j])_{\p D} \notag \\
         =& - (\ddiv V \vp_i + \na \vp_i V,   \P_{{\rm d}, D}^\perp \np^{0,0}_{D}[\vp_j])_{D} + (\n \dd V \vp_i ,  \np^{0,0}_{D}[\vp_j])_{\p D} \notag \\
          & + (\vp_i,  \lad_0 \na \vp_j V)_{D} - (\vp_i,  \lad_0  \vp_j \n \dd V)_{\p D}\,.
\end{align}
Combining the above calculations \eqref{auxeq7}-\eqref{auxeqq_cross}, we arrive at 
\begin{align} \label{auxeq12}
    \Big(\vp_i, \frac{d}{d t}\Big|_{t = 0} \w{\mc{K}}_D^t  [\vp_j]\Big)_{D}  = & - (\P_{{\rm d},D}^\perp \np^{0,0}_{D}[\vp_i], \na \vp_j V)_{D}  + (\P_{{\rm d},D}^\perp\np^{0,0}_{D}[\vp_i], \n \dd V \vp_j)_{\p D} + \lad_0 (\n \dd V \vp_i , \vp_j)_{\p D} \notag \\
  &  - (\ddiv V \vp_i + \na \vp_i V, \P_{{\rm d},D}^\perp \np^{0,0}_{D}[\vp_j])_{D} + (\n \dd V \vp_i , \P_{{\rm d},D}^\perp \np^{0,0}_{D}[\vp_j])_{\p D}\,.
\end{align}

We next compute the first two terms in \eqref{auxeq6}. It is easy to see from \eqref{der_proj} that 
\begin{align*}
    \P_{{\rm d},D}^\perp \na w =  \P_{{\rm d},D}^\perp ( - \na V +  \ddiv V) \P_{{\rm d},D} E +  \P_{{\rm d},D}^\perp (\na V)^T \P_{{\rm d},D}^{\perp} E\,,
\end{align*}
which implies 
\begin{align} \label{auxder_1}
    \P_{{\rm d},D}^\perp  \frac{d}{d t}\Big|_{t = 0}\P_D^t[E] =  \P_{{\rm d},D}^\perp (\na V - \ddiv V) \P_{{\rm d},D} E\,.
\end{align}
It is also clear from \eqref{der_proj} that 
\begin{align} \label{auxder_2}
    \P_{{\rm d},D}\frac{d}{d t}\Big|_{t = 0}\P_D^t  = \P_{{\rm d},D} (\na V)^T  \P_{{\rm d},D}^\perp\,.
\end{align}
By \eqref{auxder_1}, we readily have 
\begin{align} \label{auxeq13}
   \Big(\vp_i,  \np^{0,0}_{D} \frac{d}{d t}\Big|_{t = 0} \P_D^t  [\vp_j]\Big)_{D} & = \Big(\vp_i,  \np^{0,0}_{D} (\P_{{\rm d},D} + \P_{{\rm d},D}^\perp)  \frac{d}{d t}\Big|_{t = 0} \P_D^t  [\vp_j]\Big)_{D} \notag\\ & = \Big(\lad_0 \vp_i , \frac{d}{d t}\Big|_{t = 0} \P_D^t \vp_j\Big)_{D} + \Big(\P_{{\rm d},D}^\perp \np_{D}^{0,0} \vp_i,  (\na V - \ddiv V) \vp_j\Big)_{D} \notag \\
    & = \Big(\P_{{\rm d},D}^\perp \np_{D}^{0,0} \vp_i,  (\na V - \ddiv V) \vp_j\Big)_{D}\,,
 \end{align}
where the first term in the second equality vanishes due to \eqref{auxder_2}. Similarly, we derive
 \begin{align} \label{eq15}
     \Big(\vp_i, \frac{d}{d t}\Big|_{t = 0} \P_D^t \np^{0,0}_{D} [\vp_j] \Big)_{D}  =  \Big(\vp_i, (\na V)^T \P_{{\rm d},D}^\perp \np^{0,0}_{D} \vp_j\Big)_{D}\,.
 \end{align}
 Collecting \eqref{auxeq12}, \eqref{auxeq13} and \eqref{eq15}, we obtain
 \begin{align} \label{eq:matrix_shape}
     \frac{d}{dt}\Big|_{t = 0} K_{ij} = & (\P_{{\rm d},D}^\perp \np_{D}^{0,0} \vp_i,  (\na V - \ddiv V) \vp_j- \na \vp_j V)_{D} + (\P_{{\rm d},D}^\perp \np^{0,0}_{D}[\vp_i], \n \dd V \vp_j)_{\p D} \notag \\
     &  + ( (\na V - \ddiv V) \vp_i - \na \vp_i V, \P_{{\rm d},D}^\perp \np^{0,0}_{D}[\vp_j])_{D} + (\n \dd V \vp_i , \P_{{\rm d},D}^\perp \np^{0,0}_{D}[\vp_j])_{\p D} \notag \\
     & + \lad_0 (\n \dd V \vp_i , \vp_j)_{\p D}\,.
 \end{align}
To simplify the above formula, we write $\na p_i \in {\bf L}^2(D)$ for $\P_{{\rm d},D}^\perp \np_{D}^{0,0} \vp_i$ and then have
\begin{align*}
    & (\P_{{\rm d},D}^\perp \np_{D}^{0,0} \vp_i,  (\na V - \ddiv V) \vp_j- \na \vp_j V)_{D} \\
= & (\na p_i,  \curl (V \t \vp_j))_{D} \\ =& (\na_{\p D} p_i, \n \t (V \t \vp_j))_{\p D}  \\ = & (\na_{\p D} p_i, - \n \dd V \vp_j)_{\p D}\,,
\end{align*}
by integration by parts with $\n \dd \vp_j = 0$ on $\p D$. It follows that
\begin{align*}
   (\P_{{\rm d},D}^\perp \np_{D}^{0,0} \vp_i,  (\na V - \ddiv V) \vp_j- \na \vp_j V)_{D} + (\P_{{\rm d},D}^\perp\np^{0,0}_{D}[\vp_i], \n \dd V \vp_j)_{\p D} = 0\,.
\end{align*} 
Similar calculation yields the vanishing of the term $( (\na V - \ddiv V) \vp_i - \na \vp_i V, \P_{{\rm d},D}^\perp \np^{0,0}_{D}[\vp_j])_{D} + (\n \dd V \vp_i , \P_{{\rm d},D}^\perp \np^{0,0}_{D}[\vp_j])_{\p D}$. 
The proof is complete by \eqref{eq:matrix_shape}.



\if \commentflag{} = \ct
Note that $A_{\tau,\ww}^{\mc{E}}$, as an unbounded form on ${\bf L}_p^2(W_p)$ with $\dom (A_{\tau,\ww}^{\mc{E}}) = {\bf H}^1(Y_h)$, is densely defined, closed and positive. By the first representation theorem \cite[Theorem VI-2.1]{kato2013perturbation}, there is a unique  
self-adjoint operator $\mbb{A}_{\tau,\ww}^{\mc{E}} $ on ${\bf L}_p^2(W_p)$ such that 
\begin{align*}
    A^{\mc{E}}_{\tau,\ww}(\vp, H) = (\vp, \mbb{A}_{\tau,\ww}^{\mc{E}} H)\,, \q H \in \dom(\mbb{A}_{\tau,\ww}^{\mc{E}})\,,\ \vp \in \dom (A_{\tau,\ww}^{\mc{E}})\,,
\end{align*}
with $\dom(\mbb{A}_{\tau,\ww}^{\mc{E}}) $ dense in $\dom (A_{\tau,\ww}^{\mc{E}})$ with respect to the $\norm{\dd}_{{\bf H}^1(Y_h)}$-norm.  We consider the high contrast limit of the form $ A^{\mc{E}}_{\tau,\ww}$ by using the monotonicity of $A_{\tau,\ww}^{\mc{E}}$ in $\tau$, following the arguments in \cite{simon1978canonical, hempel2000spectral}. We need the following spaces: \small
\begin{align*}
    \w{{\bf L}}^2_p(Y_h) = \{H \in {\bf L}^2(Y_h)\,;\ H = {\bf 0}\ a.e.\ on \ Y_h \backslash D\}\q \text{and}\q 
    \w{{\bf H}}^1_p(Y_h) = \{H \in {\bf H}^1(Y_h)\,;\ H = {\bf 0}\ a.e.\ on \ Y_h \backslash D\}\,.
\end{align*}
\normalsize
We define a closed limiting quadratic form on $\w{{\bf L}}^2_p(Y_h) \oplus \C^3$ by 
\begin{equation*}
    A^{\mc{E}}_\infty (H,H) =  \sup_{\tau > 0} A_{\tau,\ww}^{\mc{E}}(H,H) = \lim_{\tau \to +\infty} A_{\tau,\ww}^{\mc{E}}(H,H) \,, 
\end{equation*}
with the domain:
\begin{equation*}
  \dom (A^{\mc{E}}_\infty) := \{H \in {\bf H}_p^1(Y_h) \,;\ \sup_{\tau > 0} A_{\tau,\ww}^{\mc{E}}(H,H) < \infty \}\,.
\end{equation*}
To characterize $\dom (A^{\mc{E}}_\infty)$, it suffices to note that by \eqref{eq:formhh} and \eqref{eq:key_est_1}, the condition $\sup_{\tau > 0} A_{\tau,\ww}^{\mc{E}}(H,H) < \infty$ holds if and only if for some constant $C > 0$,
\begin{align*}
     \tau (\na H, \na H)_{Y_h \backslash D} + \tau \sum_{q \in \Lambda_\mc{E}^*} \sqrt{|q|^2 - \ww^2} |H_q|^2 \le C\,,\q \forall \tau >0\,.
\end{align*}
It follows that $\na H = 0$ a.e. on $Y_h \backslash D$. Hence, we have 
\begin{align*}
    A^{\mc{E}}_\infty (H,H) = \norm{\na H}_D^2\,,\q \forall H \in  \dom (A^{\mc{E}}_\infty) =  \w{{\bf H}}^1_p(Y_h) \oplus \C^3\,.
\end{align*}
Moreover, there exists a unique self-adjoint positive operator $\mbb{A}_{\infty}^{\mc{E}}$ on $\w{{\bf L}}^2_p(Y_h) \oplus \C^3$ corresponding to the form $A_\infty^{\mc{E}}$. It is also clear that the operators $\mbb{A}_{\tau,\ww}^{\mc{E}}$ and $\mbb{A}_{\infty}^{\mc{E}}$ have compact resolvents and thus purely discrete spectrum.
Then the following result is a direct consequence of Theorem VIII-3.5 and Theorem VIII-3.15 in \cite{kato2013perturbation}. 

\begin{proposition} \label{prop:conver_ladj}
For all $\xi \in \C \backslash \R$, we have, as $\tau \to \infty$,  
\begin{align*}
   R_{\mbb{A}_{\tau,\ww}^{\mc{E}}}(\xi): = (\mbb{A}_{\tau,\ww}^{\mc{E}} - \xi)^{-1} \longrightarrow  R_{\mbb{A}_{\infty}^{\mc{E}}}(\xi): = (\mbb{A}_{\infty}^{\mc{E}} - \xi)^{-1} \q \text{in} \q \mc{L}({\bf L}^2_p(Y_h))\,.
\end{align*}
Denote by $\{\mu_j\}_{j = 0}^\infty$ the eigenvalues of $\mbb{A}_{\infty}^{\mc{E}}$. Then, it holds that for $j \ge 0$, 
\begin{align*}
    \lad_{j}(\tau,\ww) \nearrow \mu_j \q \text{as}\q \tau \to \infty\,,
\end{align*}
uniformly on any compact subset of $[0,c_0)$, where $\lad_j$ are the eigenvalues of the operator $\mbb{A}_{\tau,\ww}^{\mc{E}}$ characterized by \eqref{eq:min_max}. 
\end{proposition}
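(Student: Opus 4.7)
The proposal is to combine the monotonicity in $\tau$ supplied by Lemma \ref{lem2} with Kato's monotone convergence theorem for closed sesquilinear forms. The preparatory ingredients are essentially already in place: Lemma \ref{lem2} tells us that $\tau \mapsto \w{A}_{\tau,\h{\ww}}^{\mc{E}}(H,H)$ is nondecreasing in $\tau$ (for $\tau$ sufficiently large) for every fixed $H \in {\bf H}_p^1(Y_h)$ and $\h{\ww} \in K_*$, while Proposition \ref{prop:basic_Prop} supplies the uniform lower bound \eqref{est:bound_below}, ensuring that each $\w{A}_{\tau,\h{\ww}}^{\mc{E}}$ is closed on $\dom (\w{A}_{\tau,\h{\ww}}^{\mc{E}}) = {\bf H}_p^1(Y_h)$ and bounded from below in ${\bf L}_p^2(Y_h)$.

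The first step will be to transfer the monotonicity of the quadratic forms to monotonicity of the eigenvalues. Because the denominator $\norm{H}_{Y_h}^2$ and the admissible test subspaces in the min-max representation \eqref{eq:min_max} are independent of $\tau$, the inequality $\w{A}_{\tau_1,\h{\ww}}^{\mc{E}}(H,H) \le \w{A}_{\tau_2,\h{\ww}}^{\mc{E}}(H,H)$ for $\tau_0 \le \tau_1 < \tau_2$ immediately propagates to $\lad_j(\tau_1,\h{\ww}) \le \lad_j(\tau_2,\h{\ww})$ for every $j \ge 0$. The second step will be to identify the pointwise limit: invoking \cite[Theorem VIII-3.15]{kato2013perturbation} for the monotone nondecreasing family $\{\w{A}_{\tau,\h{\ww}}^{\mc{E}}\}_{\tau \ge \tau_0}$, whose pointwise supremum is precisely the closed form $\w{A}_{\infty,\h{\ww}}^{\mc{E}}$ from \eqref{def:limit_form}, yields strong resolvent convergence $\mbb{A}_{\tau,\h{\ww}}^{\mc{E}} \to \mbb{A}_{\infty,\h{\ww}}^{\mc{E}}$. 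Since all these operators have compact resolvents — thanks to the Garding inequality \eqref{est:bound_below} combined with the compact embedding ${\bf H}_p^1(Y_h) \hookrightarrow {\bf L}_p^2(Y_h)$ — each eigenvalue converges: $\lad_j(\tau,\h{\ww}) \to \mu_j(\h{\ww})$, and because the sequence is already monotone nondecreasing with supremum bounded by $\mu_j(\h{\ww})$, the convergence is from below, giving \eqref{eq:pointconver}.

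The monotonicity of $\mu_j(\h{\ww})$ in $\h{\ww}$ is inherited for free: by Proposition \ref{prop:basic_Prop}, $\h{\ww} \mapsto \lad_j(\tau,\h{\ww})$ is decreasing on $K_*$ for each fixed large $\tau$, and the pointwise supremum $\mu_j(\h{\ww}) = \sup_{\tau \ge \tau_0} \lad_j(\tau,\h{\ww})$ of a family of decreasing functions remains decreasing. The only delicate point is to verify that the hypotheses of Kato's convergence theorem remain in force despite the monotonicity in $\tau$ only activating past some threshold $\tau_0$, but this is handled routinely by re-indexing the family to start from $\tau_0$; the uniform lower bound \eqref{est:bound_below} in $\tau \gg 1$ and $\h{\ww} \in K_*$ then guarantees that $\w{A}_{\infty,\h{\ww}}^{\mc{E}}$ is itself closed and bounded from below on its natural domain, so that $\mbb{A}_{\infty,\h{\ww}}^{\mc{E}}$ is a well-defined self-adjoint operator with compact resolvent and the eigenvalue counting above is unambiguous.
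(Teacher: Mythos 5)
Your argument follows the paper's own proof essentially verbatim: monotonicity of the forms in $\tau$ from Lemma \ref{lem2}, transferred to the eigenvalues via the min-max formula \eqref{eq:min_max}, then Kato's monotone convergence theorem \cite[Theorem VIII-3.15]{kato2013perturbation} to identify the limit, with the monotonicity of $\mu_j$ in the frequency read off as a supremum of decreasing functions; the re-indexing past the threshold $\tau_0$ is also how the paper implicitly proceeds, and the Garding bound \eqref{est:bound_below} plus the compact embedding indeed keep all the operators in the discrete-spectrum setting where the min-max values are genuine eigenvalues.

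Two refinements in the statement are not fully covered by what you wrote and each deserves a line. First, the resolvent convergence is asserted in $\mc{L}({\bf L}^2_p(Y_h))$, i.e.\ in operator norm, while Theorem VIII-3.15 only yields strong resolvent convergence; the upgrade uses that the resolvents are compact and, by the form monotonicity, monotonically decreasing, so that $\sup_{\norm{u}\le 1}\big(u,(R_{\mbb{A}_{\tau,\ww}^{\mc{E}}}(-M)-R_{\mbb{A}_{\infty}^{\mc{E}}}(-M))u\big)\to 0$ by a Dini-type argument on the weakly compact unit ball — this is exactly the role of the additional Theorem VIII-3.5 of \cite{kato2013perturbation} that the paper cites alongside VIII-3.15. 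Second, the uniform convergence of $\lad_j(\tau,\ww)\nearrow\mu_j$ on compact subsets does not follow from pointwise monotone convergence alone: one needs Dini's theorem for the family $\tau\mapsto\lad_j(\tau,\cdot)$, which in turn requires the continuity of the limit $\mu_j(\cdot)$ (a supremum of continuous decreasing functions is a priori only lower semicontinuous); this continuity can be extracted from the locally uniform dependence of the forms on the frequency, or from the norm resolvent convergence just established. Both points are routine, but they are precisely the content of the statement beyond the pointwise convergence your argument delivers, so they should be recorded.
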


\begin{corollary}
Let $\ww_j(\tau)$ be defined by \eqref{def:eigen_fre} for $j \ge 0$. Then it holds that for each $j$, as $\tau \to \infty$, 
\begin{align} \label{eq:conver_1}
    \tau \ww_j^2(\tau) \to \mu_j\,,
\end{align}
and 
\begin{align} \label{eq:conver_2}
{\rm E}_{\mbb{A}_{\tau,\ww_j(\tau)}^{\mc{E}}}(\tau \ww_j^2(\tau)) \to  {\rm  E}_{\mbb{A}_{\infty}^{\mc{E}}}(\mu_j)\,,
\end{align}
in the strong operator topology. 
\end{corollary}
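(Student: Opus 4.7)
The plan is to deduce both claims \eqref{eq:conver_1} and \eqref{eq:conver_2} from the two ingredients already provided by Proposition~\ref{prop:conver_ladj}: the uniform (in $\h{\ww}$) monotone convergence $\lad_j(\tau,\h{\ww})\nearrow \mu_j$ on compact subsets of $[0,c_0)$, and the norm-resolvent convergence $R_{\mbb{A}_{\tau,\ww}^{\mc{E}}}(\xi)\to R_{\mbb{A}_{\infty}^{\mc{E}}}(\xi)$ in $\mc{L}({\bf L}^2_p(Y_h))$ for every $\xi\in\C\setminus\R$. The only subtlety is that in~\eqref{eq:conver_2} the second parameter of the operator is itself $\h{\ww}_j(\tau)=\sqrt{\tau}\ww_j(\tau)$, which drifts with $\tau$, so we must combine the two ingredients carefully.

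For~\eqref{eq:conver_1}, Corollary~\ref{coro:est} tells us that $\h{\ww}_j(\tau)\in[0,\sqrt{\mu_j}]$ and $\lad_j(\tau,\h{\ww}_j(\tau))=\h{\ww}_j(\tau)^2$. Since the interval $[0,\sqrt{\mu_j}]$ is compact (and contained in $[0,c_0)$ by the hypothesis of Proposition~\ref{prop:conver_ladj}), the uniform convergence gives
\begin{equation*}
\bigl|\h{\ww}_j(\tau)^2-\mu_j\bigr|=\bigl|\lad_j(\tau,\h{\ww}_j(\tau))-\mu_j\bigr|\le \sup_{\h{\ww}\in[0,\sqrt{\mu_j}]}\bigl|\lad_j(\tau,\h{\ww})-\mu_j\bigr|\xrightarrow[\tau\to\infty]{}0,
\end{equation*}
which is exactly $\tau\ww_j^2(\tau)=\h{\ww}_j^2(\tau)\to\mu_j$.

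For~\eqref{eq:conver_2}, I would pick a circular contour $\Gamma\subset\C\setminus\R$ enclosing $\mu_j$ and no other eigenvalue of $\mbb{A}_{\infty}^{\mc{E}}$, and represent the spectral projections by the Riesz integrals $\tfrac{1}{2\pi i}\oint_\Gamma R_{A}(\xi)\,d\xi$. By~\eqref{eq:conver_1} and the uniform convergence of every $\lad_k(\tau,\cdot)$ to $\mu_k$, for large $\tau$ the contour $\Gamma$ still separates $\h{\ww}_j^2(\tau)$ from the remaining spectrum of $\mbb{A}_{\tau,\h{\ww}_j(\tau)}^{\mc{E}}$. It then suffices to show that $R_{\mbb{A}_{\tau,\h{\ww}_j(\tau)}^{\mc{E}}}(\xi)\to R_{\mbb{A}_{\infty}^{\mc{E}}}(\xi)$ strongly, uniformly in $\xi\in\Gamma$. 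I would split this as
\begin{equation*}
R_{\mbb{A}_{\tau,\h{\ww}_j(\tau)}^{\mc{E}}}(\xi)-R_{\mbb{A}_{\infty}^{\mc{E}}}(\xi)=\bigl(R_{\mbb{A}_{\tau,\h{\ww}_j(\tau)}^{\mc{E}}}(\xi)-R_{\mbb{A}_{\tau,0}^{\mc{E}}}(\xi)\bigr)+\bigl(R_{\mbb{A}_{\tau,0}^{\mc{E}}}(\xi)-R_{\mbb{A}_{\infty}^{\mc{E}}}(\xi)\bigr),
\end{equation*}
where the second bracket tends to $0$ by the norm-resolvent statement in Proposition~\ref{prop:conver_ladj}. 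For the first bracket I would use the second resolvent identity together with the explicit expression for the $\h{\ww}$-dependent part of the form, namely $-\tau\l H,\ms{T}^{\mc{E}} H\r_\Sigma = -\sum_{q\in\Lad^*_{\mc{E}}}\tau\sqrt{|q|^2-\h{\ww}^2/\tau}\,|H_q|^2$; a Taylor expansion of $\sqrt{|q|^2-\h{\ww}^2/\tau}$ around $\h{\ww}=0$ shows that the difference of the two sesquilinear forms is bounded by $C\h{\ww}_j(\tau)^2/\tau\cdot\|H\|_\Sigma^2$, which vanishes as $\tau\to\infty$ by the trace inequality~\eqref{eq:trace} and the bound $\h{\ww}_j(\tau)\le\sqrt{\mu_j}$. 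This controls the first bracket and yields strong (in fact uniform on $\Gamma$) convergence of the resolvents.

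The main obstacle, as anticipated, is precisely the joint limit in $\tau$ and the drifting parameter $\h{\ww}_j(\tau)$ in the operator~\eqref{eq:conver_2}: the norm-resolvent statement of Proposition~\ref{prop:conver_ladj} fixes $\h{\ww}$, while we must take $\h{\ww}=\h{\ww}_j(\tau)$. The saving point is that the $\h{\ww}$-dependence of $\mbb{A}_{\tau,\h{\ww}}^{\mc{E}}$ lives in a single quadratic boundary term whose Lipschitz constant in $\h{\ww}$ is $O(1/\tau)$ on the compact interval under consideration, so the moving parameter contributes a vanishing perturbation and the reduction to a fixed-$\h{\ww}$ convergence result closes the argument.
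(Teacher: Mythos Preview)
Your argument for \eqref{eq:conver_1} is correct and coincides with the paper's. For \eqref{eq:conver_2}, however, the Taylor-expansion step contains an arithmetic slip that breaks the argument as written. You display the $\h{\ww}$-dependent boundary term of the \emph{scaled} form as $\tau\sqrt{|q|^2-\h{\ww}^2/\tau}$; its variation around $\h{\ww}=0$ is
\[
\tau\Bigl(\sqrt{|q|^2-\h{\ww}^2/\tau}-|q|\Bigr)=-\frac{\h{\ww}^2}{2|q|}+O(\tau^{-1}),
\]
so the difference of the two sesquilinear forms is of order $\h{\ww}_j(\tau)^2\norm{H}_{\Sigma}^2$, \emph{not} $\h{\ww}_j(\tau)^2\tau^{-1}\norm{H}_{\Sigma}^2$ as you claim. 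Since $\h{\ww}_j(\tau)\to\sqrt{\mu_j}\neq 0$, this bound does not decay, and the second resolvent identity by itself does not force your first bracket to zero. The gap can be repaired---for instance by showing that the nonzero Fourier traces of $R_{\mbb{A}_{\tau,0}^{\mc{E}}}(\xi)u$ on $\Sigma$ vanish in the limit, using that $R_{\mbb{A}_\infty^{\mc{E}}}(\xi)u$ lies in $\w{{\bf H}}^1_p(Y_h)\oplus\C^3$ and is therefore constant on $\Sigma$---but this requires an extra compactness argument that is not in your sketch.

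The paper avoids this computation altogether. It simply notes that $\ww_j(\tau)\to 0$ and asserts directly that $\mbb{A}_{\tau,\ww_j(\tau)}^{\mc{E}}\to\mbb{A}_\infty^{\mc{E}}$ in the strong resolvent sense---relying on the fact that the limiting form and operator do not depend on the frequency parameter, so that the monotone convergence of Proposition~\ref{prop:conver_ladj} already absorbs the drift---and then invokes a standard Kato perturbation result (\cite[Theorem 1.15]{kato2013perturbation}) to pass from resolvents to spectral projections. In other words, rather than isolating the drifting parameter via a resolvent splitting, the paper treats it as covered by the uniform-in-$\ww$ convergence that Proposition~\ref{prop:conver_ladj} provides.
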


\begin{proof}
Note that $\lad_j(\tau,\ww) \le \mu_j$ for any $\tau > 0$ and $\ww \in [0, c_0)$, and $\tau \ww^2 > \mu_j$ if and only if $\ww > \sqrt{\tau^{-1}\mu_j}$. It follows from \eqref{def:eigen_fre} that $\ww_j(\tau) < c_0$ for large enough $\tau$, and thus the convergence \eqref{eq:conver_1} holds by Proposition \ref{prop:conver_ladj}. Clearly, the limit \eqref{eq:conver_1} yields $\ww_j(\tau) \to 0$, and we have $\mbb{A}_{\tau,\ww_j(\tau)}^{\mc{E}} \to \mbb{A}_{\infty}^{\mc{E}}$ in the strong resolvent sense as $\tau \to \infty$. Then a standard result in the perturbation theory \cite[Theorem 1.15]{kato2013perturbation} readily implies \eqref{eq:conver_2}. 
\end{proof}

\begin{proposition}
Suppose that $\sigma(\mbb{A}_{\infty}^{\mc{E}}) \bigcap \sigma(\mc{K}_D^a) = \emptyset$. Then for each $j$, when $\tau$ is large enough, there exists an eigenfunction $H_j(\tau)$ that is divergence-free.
\end{proposition}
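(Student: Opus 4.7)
The plan relies on the observation used in the proof of Theorem \ref{main_real_2}: if $H \in {\bf H}_p^1(Y_h)$ satisfies the variational eigenvalue problem \eqref{eq2} at a frequency $\ww$, then the scalar field $u := \ep^{-1} \ddiv H$ automatically solves the scalar eigenvalue problem \eqref{eq:scalar} at the same $\ww$. Consequently, if one can show that $\ww_j(\tau)$ fails to be a resonance of \eqref{eq:scalar} for all sufficiently large $\tau$, then $u_j(\tau) := \ep^{-1} \ddiv H_j(\tau)$ must vanish for any eigenfunction $H_j(\tau)$, and since $\ep > 0$ this yields $\ddiv H_j(\tau) = 0$, exactly as required.

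The first step is to invoke the asymptotic classification in Proposition \ref{prop:limit_scalar}: every subwavelength scalar resonance takes one of the two asymptotic forms in \eqref{asym_scalar}. The first form $\ww = 2/(i\tau|D|) + O(\tau^{-2})$ has a non-vanishing imaginary part and is automatically ruled out because the candidates $\ww_j(\tau)$ we care about are real. The second family $\ww = 1/\sqrt{\eta_k \tau} + O(\tau^{-1})$ clusters, under the scaling $\h{\ww} = \sqrt{\tau}\,\ww$, at the reciprocals $1/\eta_k$ of the eigenvalues of $\mc{K}_D^a$.

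The second step exploits the monotone convergence $\tau \ww_j^2(\tau) \to \mu_j$ established in Proposition \ref{prop:conver_ladj}, together with the hypothesis $\sigma(\mbb{A}_\infty^{\mc{E}}) \cap \sigma(\mc{K}_D^a) = \emptyset$. The latter ensures that $\mu_j$ is separated by some $\delta>0$ from every reciprocal eigenvalue $1/\eta_k$. Combining this with the asymptotic form of the scalar resonances, the scaled frequency $\sqrt{\tau}\,\ww_j(\tau)$ remains, for all sufficiently large $\tau$, at positive distance from every scaled scalar resonance $\sqrt{\tau}\,\w{\ww}_k(\tau)$.

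The third step is to lift this separation into a well-posedness statement. Using the Lippmann--Schwinger reformulation $u = \ww^2 \tau\, \mc{K}_D^{0,\ww}[u]$ of \eqref{eq:scalar} and the generalized Rouché / analytic Fredholm framework already employed in Lemma \ref{lem:spec_td} and the proof of Theorem \ref{asym:resonance}, one shows that $I - \ww^2 \tau\, \mc{K}_D^{0,\ww}$ is invertible at $\ww = \ww_j(\tau)$ for all $\tau$ large enough. Therefore $u_j(\tau) \equiv 0$, and $H_j(\tau)$ is divergence free, completing the proof.

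The main obstacle is the quantitative, uniform-in-$\tau$ separation in the third step: the zeros of $\det(I - \ww^2 \tau\, \mc{K}_D^{0,\ww})$ and the eigenfrequencies $\ww_j(\tau)$ both depend on $\tau$, so one needs a perturbation argument (for example, applying the generalized Rouché theorem to a small fixed complex neighborhood of each $1/\sqrt{\mu_j}$ in the $\h{\ww}$-plane) to guarantee that $\sqrt{\tau}\,\ww_j(\tau)$ eventually lies outside the union of small disks containing the scaled scalar resonances. Once this quantitative disjointness is in hand, the Fredholm deduction $u_j(\tau) = 0$ is routine.
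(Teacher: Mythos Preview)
Your proposal is correct and follows essentially the same approach as the paper: both arguments reduce to showing that $u_j(\tau) = \ep^{-1}\ddiv H_j(\tau)$ cannot be a nontrivial solution of the scalar problem \eqref{eq:scalar} at $\ww_j(\tau)$, by combining the convergence $\tau\ww_j^2(\tau) \to \mu_j \in \sigma(\mbb{A}_\infty^{\mc{E}})$ with the asymptotic classification of scalar resonances (which cluster, after scaling, at points of $\sigma(\mc{K}_D^a)$) and the disjointness hypothesis. The paper's version is slightly terser---it phrases the last step as a direct contradiction rather than your more explicit Rouch\'{e}-type separation argument---and it additionally constructs $H_j(\tau)$ as the normalized spectral projection of a fixed limiting eigenfunction $\w{H}_j$ (using the strong convergence of spectral projections), whereas your argument in fact shows that \emph{every} eigenfunction at $\ww_j(\tau)$ is divergence-free for large $\tau$.
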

 
\begin{proof}
Let $\w{H}_j \in \w{{\bf H}}^1_p(Y_h) \oplus \C^3$ be an eigenfunction of $\mbb{A}_\infty^{\mc{E}}$ for the eigenvalue $\mu_j$ with $\norm{\w{H}_j}_{Y_h} = 1$, i.e.,
\begin{align} \label{eq:limit_eig}
A_\infty^{\mc{E}}(\w{H}_j,\w{H}_j) = \mu_j (\w{H}_j, \w{H}_j)_{Y_h}\,.    
\end{align}
We define $H_j \in {\bf H}_p^1(Y_h)$ by, for $\tau > 0$,  
\begin{equation*}
    H_j(\tau) = \frac{{\rm E}_{\mbb{A}_{\tau,\ww_j(\tau)}^{\mc{E}}}(\tau \ww_j^2(\tau))[\w{H}_j] }{\norm{{\rm E}_{\mbb{A}_{\tau,\ww_j(\tau)}^{\mc{E}}}(\tau \ww_j^2(\tau))[\w{H}_j]}_{Y_h}}\,,
\end{equation*}
that satisfies \eqref{eig_probreal}. By the strong
convergence \eqref{eq:conver_2}, there holds $\norm{H_j(\tau) - \w{H}_j}_{Y_h} \to 0$ as $\tau \to \infty$. We claim that the field $H_j$ is divergence-free. For this, we consider $u_j(\tau) = \ep^{-1} \ddiv H_j(\tau)$. It is easy to show as in \cite[Theorem 4.3]{bao1997variational} that $u_j$ satisfies $\Delta u_j + \ww_j(\tau)^2 \ep u_j = 0$ with the boundary condition $\frac{\p}{\p \nu} u_j = \mc{T}^{\mc{E}} u_j$. If $\ddiv H_j \neq 0$, then $(\ww_j(\tau)^2, u_j)$ is an eigenpair for the problem. 
By Appendix \ref{app:a}, we know that $\tau \ww_j(\tau)^2 \to \d_{j'}$ for some $\d_{j'} \in \sigma(\mc{K}_D^a)$, which contradicts with the assumption $\sigma(\mbb{A}_{\infty}^{\mc{E}}) \bigcap \sigma(\mc{K}_D^a) = \emptyset$. The proof is complete. 
\end{proof} 

We are now ready to give the proof of Theorem \ref{mainthm:real}.
respectively. Similar notions apply to other function spaces of vector fields, e.g., ${\bf H}^1(Y_h)$.

\begin{proof}[Proof of Theorem \ref{mainthm:real}]
Let ${\bf H}_{p,sym}^1(Y_h)$ and ${\bf H}_{p, ant}^1(Y_h)$ be the
symmetric part and the antisymmetric part of ${\bf H}_{p}^1(Y_h)$, respectively. One can readily verify that these two function spaces are orthogonal with respect to the sesquilinear form $A_{\tau,\ww}$ in \eqref{eq:quadra_form} and the $L^2$-inner product. Therefore, if $H \in {\bf H}_{p, ant}^1(Y_h)$ satisfies \eqref{eq:quadra_form_eig} for all $\vp \in {\bf H}_{p, ant}^1(Y_h)$, then the equation \eqref{eq:quadra_form_eig} holds for all $ \vp \in {\bf H}_{p}^1(Y_h)$. We define $\lad_j^{ant}(\tau,\ww)$ as in \eqref{eq:min_max} by restricting the fields $H$ in ${\bf H}_{p,ant}^1(Y_h)$. By similar arguments as above, we have that for each $j$, when $\tau$ is large enough, there exists $\ww_j^{ant}$ such that $\lad_j(\tau, \ww_j^{ant}) = \tau (\ww_j^{ant})^2$ with $ \ww_j(\tau)^{ant} \to 0$ as $\tau \to 0$, and we can find an associated divergence-free field $H_j \in {\bf H}_{p,ant}^1(Y_h)$ satisfying $A_{\tau,\ww_j^{ant}}^{\mc{E}}(H_j,H_j) = \tau (\ww_j^{ant})^2 (H_j,H_j)_{Y_h}$. Then by Lemma \ref{}, to complete the proof, it suffices to show \eqref{}. 
\end{proof}
\fi


\end{document}